\theoremstyle{theorem}
\newtheorem{theorem}{Theorem}[subsection]
\newtheorem{corollary}[theorem]{Corollary}
\newtheorem{lemma}[theorem]{Lemma}
\newtheorem{prop}[theorem]{Proposition}
\newlist{propenum}{enumerate}{1}
\setlist[propenum]{label=(\arabic*), ref=\thetheorem(\arabic*)}
\crefname{propenumi}{Proposition}{Propositions}
\theoremstyle{definition}
\newtheorem{question}[theorem]{Question}
\newtheorem{example}[theorem]{Example}
\newtheorem{definition}[theorem]{Definition}
\newtheorem{note}[theorem]{Note}
\newlist{exmpenum}{enumerate}{1}
\setlist[exmpenum]{label=(\arabic*), ref=\thetheorem(\arabic*)}
\crefname{exmpenumi}{Example}{Examples}
\theoremstyle{remark}
\newtheorem{remark}{Remark}
\newtheorem*{notation}{Notation}
\newlist{steps}{enumerate}{1}
\setlist[steps, 1]{label = Step \arabic*:}
\DeclareMathOperator{\F}{\mathsf{Fg}}
\newcommand{\A}{\mathcal A}
\newcommand{\C}{\mathcal C}
\newcommand{\D}{\mathcal D}
\newcommand{\U}{\mathcal{U}}
\newcommand{\V}{\mathcal{V}}
\newcommand{\K}{\mathcal K}
\renewcommand{\L}{\mathcal L}
\newcommand{\RI}{\mathcal R}
\renewcommand{\P}{\mathcal P}
\renewcommand{\S}{\mathcal S}
\newcommand{\Z}{\mathbb Z}
\newcommand{\B}{\mathcal{B}}
\newcommand{\modal}{{\ensuremath{\ocircle}}}
\newcommand{\N}{\mathbb N}
\DeclareMathOperator{\cogap}{\mathsf{cogap}}
\DeclareMathOperator{\const}{\mathsf{const}}
\DeclareMathOperator{\im}{im}
\DeclareMathOperator{\rid}{\mathsf{idr}}
\DeclareMathOperator{\lid}{\mathsf{idl}}
\DeclareMathOperator{\postcomp}{\mathsf{postcomp}}
\DeclareMathOperator{\pstc}{\mathsf{pstc}}
\DeclareMathOperator{\ob}{Ob}
\DeclareMathOperator{\PW}{\mathsf{PW}}
\DeclareMathOperator{\obb}{\mathsf{Ob}}
\newcommand{\fact}{\mathsf{fact}}
\newcommand{\refl}{\mathsf{refl}}
\newcommand{\ind}{\mathsf{ind}}
\newcommand{\inl}{\mathsf{inl}}
\newcommand{\inr}{\mathsf{inr}}
\newcommand{\0}{\mathbf{0}}
\newcommand{\1}{\mathbf{1}}
\newcommand{\2}{\mathbf{2}}
\DeclareMathOperator{\filll}{\mathsf{fill}}
\DeclareMathOperator{\cofiber}{\mathsf{cof}}
\DeclareMathOperator{\assoc}{\mathsf{assoc}}
\DeclareMathOperator{\tip}{\mathsf{tip}}
\DeclareMathOperator{\transport}{\mathsf{transp}}
\DeclareMathOperator{\ap}{\mathsf{ap}}
\DeclareMathOperator{\apd}{\mathsf{apd}}
\DeclareMathOperator{\rec}{\mathsf{rec}}
\DeclareMathOperator{\happly}{\mathsf{happly}}
\DeclareMathOperator{\isequiv}{\mathsf{is\_equiv}}
\DeclareMathOperator{\pr}{\mathsf{pr}}
\DeclareMathOperator{\glue}{\mathsf{glue}}
\DeclareMathOperator{\homm}{\mathsf{hom}}
\DeclareMathOperator{\op}{op}
\DeclareMathOperator{\colimm}{\mathsf{colim}}
\DeclareMathOperator{\limm}{\mathsf{lim}}
\DeclareMathOperator{\hocolimm}{hocolim}
\DeclareMathOperator{\holimm}{holim}
\newcommand{\lN}{\left\lVert}
\newcommand{\rN}{\right\rVert}
\DeclareMathOperator{\idd}{\mathsf{id}}
\DeclareMathOperator{\fib}{\mathsf{fib}}
\DeclareMathOperator{\ho}{Ho}
\DeclareMathOperator{\PI}{\mathsf{PI}}
\newcommand{\changeoperator}[1]{%
  \csletcs{#1@saved}{#1@}%
  \csdef{#1@}{\changed@operator{#1}}%
}
\newcommand{\changed@operator}[1]{%
  \mathop{%
    \mathchoice{\textstyle\csuse{#1@saved}}
               {\csuse{#1@saved}}
               {\csuse{#1@saved}}
               {\csuse{#1@saved}}%
  }%
}
\newcommand{\hocolim}{\gen@colim{\hocolimm}}
\newcommand{\gen@hocolim}[1]{%
  \@ifnextchar_{\gen@@hocolim{#1}}{\mathbin{#1}}%
}
\def\gen@@hocolim#1_#2{%
  \mathpalette\gen@@@hocolim{{#1}{#2}}%
}
\newcommand\gen@@@hocolim[2]{\mathbin{\gen@@@@hocolim#1#2}}
\newcommand\gen@@@@hocolim[3]{%
  \ifx#1\displaystyle
    \mathop{#2}\limits_{#3}%
  \else
    {#2}_{#3}%
  \fi
}
\newcommand{\holim}{\gen@colim{\holimm}}
\newcommand{\gen@holim}[1]{%
  \@ifnextchar_{\gen@@holim{#1}}{\mathbin{#1}}%
}
\def\gen@@holim#1_#2{%
  \mathpalette\gen@@@holim{{#1}{#2}}%
}
\newcommand\gen@@@holim[2]{\mathbin{\gen@@@@holim#1#2}}
\newcommand\gen@@@@holim[3]{%
  \ifx#1\displaystyle
    \mathop{#2}\limits_{#3}%
  \else
    {#2}_{#3}%
  \fi
}
\newcommand{\subalign}[1]{%
  \vcenter{%
    \Let@ \restore@math@cr \default@tag
    \baselineskip\fontdimen10 \scriptfont\tw@
    \advance\baselineskip\fontdimen12 \scriptfont\tw@
    \lineskip\thr@@\fontdimen8 \scriptfont\thr@@
    \lineskiplimit\lineskip
    \ialign{\hfil$\m@th\scriptstyle##$&$\m@th\scriptstyle{}##$\hfil\crcr
      #1\crcr
    }%
  }%
}
\begin{document}

\title{Coslice Colimits in Homotopy Type Theory}

\author{Perry Hart}
\address{University of Minnesota, Twin Cities}
\email{hart1262@umn.edu}

\author{Kuen-Bang Hou (Favonia)}
\address{University of Minnesota, Twin Cities}
\email{kbh@umn.edu}

\date{\today}

\thanks{
This material is based upon work supported by the Air Force Office of Scientific Research under award number FA9550-21-1-0009. Any opinions, findings, and conclusions or recommendations expressed in this material are those of the author(s) and do not necessarily reflect the views of the US Air Force.
}

\maketitle

\begin{abstract}
  We contribute to the theory of (homotopy) colimits inside homotopy type theory. The heart of our work characterizes the connection between (graph-indexed) colimits in a type universe and colimits in coslices of the universe, called \emph{coslice colimits}. To derive this characterization, we give a construction of coslice colimits that is tailored to reveal the connection. We use the construction to prove that the forgetful functor from a coslice creates colimits over trees. We also use it to study how coslice colimits interact with orthogonal factorization systems and with cohomology theories. As a result of their interaction with orthogonal factorization systems, all colimits of pointed types preserve $n$-connectedness, which implies that higher groups---in the sense of Buchholtz, van Doorn, and Rijke---are closed under colimits. We have formalized major portions of this work in Agda (available \href{https://github.com/PHart3/colimits-agda/tree/v0.4.0}{here}), including our main construction of the coslice colimit functor.
\end{abstract}

\tableofcontents

\section{Introduction}\label{sec:intro}

Working in homotopy type theory (HoTT)~\cite{Uni13}, we study higher inductive types (HITs) arising as colimits (over graphs) in coslices of a universe, called \emph{coslice colimits}. Coslices of a universe are type-theoretic versions of coslice categories. Our study of coslice colimits is organized as follows.

\subsubsection*{The main connection (\cref{Constr})}

Suppose $\U$ is a universe and $A$ is a type in $\U$. We want to construct all colimits in $A/\U$, or \emph{$A$-colimits}. The wild category $A/\U$ has objects $\sum_{T : \U}A \to T$ with $X \to_A Y \coloneqq \sum_{k : \pr_1(X) \to \pr_1(Y)}k \circ \pr_2(X) \sim \pr_2(Y)$ as morphisms from $X$ to $Y$. HoTT has a general schema for HITs that would let us simply postulate $A$-colimits. We, however, explicitly construct $A$-colimits with just the machinery of Martin-L\"of type theory (MLTT) augmented with pushouts. We take this different approach to reveal the connection between $A$-colimits and their underlying colimits in $\U$. Therefore, we call the construction of \cref{Constr} the \emph{main connection}. In fact, our construction is \emph{not} a case of a general method to encode higher-dimensional HITs with pushouts (such as \cite[Section 5]{DRB}) but rather tailored to reveal this connection.

This connection sheds light on existing topics in synthetic homotopy theory, which we discuss now.

\subsubsection*{The universality of colimits (\cref{App1})}

The \emph{universality} of colimits is the defining feature of locally cartesian closed (LCC) $\infty$-categories, such as that of spaces. The main connection will establish a well-known classical result inside type theory: The forgetful functor $A/\U \to \U$ \emph{creates} (preserves and reflects) colimits of diagrams over contractible graphs (\cref{create}). We review such graphs, known as \emph{trees}, in \cref{trees}. With the forgetful functor creating colimits, we can transfer universality of ordinary colimits to coslice colimits in many cases (\cref{univcolim}). This result is notable as LCC $\infty$-categories are not closed under coslices.

\subsubsection*{Categories of higher groups are cocomplete (\cref{Presv})}

A striking feature of colimits is their interaction with (orthogonal) factorization systems. In \cref{Presv}, we use the main connection to show that colimits in $A/\U$ preserve left classes of maps of such systems on $\U$. It is significant that we consider factorization systems on $\U$ rather than $A/\U$. We could derive a similar preservation theorem for factorization systems on $A/\U$ directly from the universal property of an $A$-colimit. In practice, however, the factorization systems we tend to care about are on $\U$. Since the main connection relates the action of $A$-colimits on maps to the action of their underlying colimits on maps, we manage to deduce the preservation theorem for factorization systems on $\U$.

To prove this theorem, we find it useful to develop the theory of factorization systems in a more general setting than $\U$. In \cref{OFS}, we study such systems on \emph{wild categories}, which is the appropriate categorical framework for synthetic homotopy theory. We prove that if a functor $F$ of well-behaved wild categories with factorization systems has a right adjoint $G$, then---under a reasonable coherence condition on the adjunction---$F$ preserves the left class when $G$ preserves the right class (\cref{fspres}). We combine this result with the main connection to deduce the desired preservation property.

When we focus on the ($n$-connected, $n$-truncated) system on $\U$~\cite[Section 7.6]{Uni13} and take $A$ as the unit type, the main connection shows that the colimit of every diagram of pointed $n$-connected types is $n$-connected. One useful corollary of this is that the higher category ${\left(n,k\right)\mathsf{GType}}$ of $k$-tuply groupal $n$-groupoids considered by~\cite{Higher} is cocomplete (over graphs) for all truncation levels ${-2} \leq n \leq \infty$ and ${-1} \leq k < \infty$ (\cref{cocomp}). We also exploit the generality of the main connection to extend this cocompleteness result to categories of \emph{higher pointed abelian groups} (\cref{pthicol}).

\subsubsection*{Cohomology sends colimits to weak limits (\cref{contcohom})}

Finally, we examine how colimits interact with cohomology theories, which are important algebraic invariants of spaces. To do so, we consider \emph{weak limits}, which are key ingredients in the Brown representability theorem. A weak colimit in a category need not satisfy the uniqueness property required of a colimit. The Brown representability theorem specifies conditions for a presheaf on the homotopy category $\ho(\mathbf{Top}_{\ast,c})$ of pointed connected spaces to be representable. The known proof of this theorem requires the presheaf to send countable homotopy colimits in $\mathbf{Top}_{\ast,c}$ to weak limits in $\mathbf{Set}$. Eilenberg-Steenrod cohomology theories enjoy this property as set-valued functors.

In \cref{Contr2}, we use the main connection to establish a restricted, type-theoretic version of this property. From the main connection we derive another construction of $A$-colimits, as pushouts of coproducts (\cref{coeqcop}), which mirrors a well-known classical lemma. In \cref{cohomwklim}, we take $A$ as the unit type and combine the new construction with the Mayer-Vietoris sequence to find that cohomology takes finite colimits to weak limits assuming the internal axiom of choice.

\subsection*{Agda formalization}

Major portions of this work are mechanized in our Agda library on colimits and adjunctions~\cite{agda-colim-TR}, including but not limited to
\begin{itemize} 
\item the main construction of the coslice colimit functor as a left adjoint to the constant diagram functor (\cref{Constr})
\item the creation of colimits by the forgetful functor (\cref{create})
\item the fact that the coslice colimit functor preserves the left class of an orthogonal factorization system on a universe (\cref{Presv}).
\end{itemize} 
We will provide links to Agda code at relevant points in the paper.

\section{Type system}

We assume the reader is familiar with MLTT and HITs in the style of~\cite{Uni13}.
We primarily work in MLTT augmented with ordinary colimits, i.e., colimits in a universe, and denote this system by  $\mathsf{MLTT + Colim}$. (We review ordinary colimits in \cref{ordcol}.) In particular, all of \cref{Constr} takes place inside $\mathsf{MLTT + Colim}$. In fact, we need only augment MLTT with pushouts as they let us construct all nonrecursive $1$-HITs, including ordinary colimits, with all of their computational properties.
Notably, $\mathsf{MLTT + Colim}$ comes with strong function extensionality for free.

\subsection*{Remark on notation}

We point out two important conventions that we use throughout the paper.

\begin{itemize}
\item The symbol $=$ denotes the identity/path type. The symbol $\equiv$ denotes definitional equality. The symbol $\coloneqq$ denotes term definition.
\item For convenience, we may use the notation 
$\PI(p_1, \ldots, p_n) :  a = b$ to denote a path obtained by simultaneous or iterative path induction on paths $p_1$, \ldots, $p_n$. We only use this shorthand when the identity is constructed in an evident way.
\end{itemize}

\section{Categorical background} 

\subsection{Wild categories and functors}

\begin{definition}  Let $\U$ be a universe. A \textit{wild category} consists of a type $\obb$ of objects, a type family $\homm$ of morphisms twice indexed over $\obb$, and the following data:
\begin{itemize}
\item a composition operation $\circ  : \homm(Y,Z) \to \homm(X,Y) \to \homm(X,Z)$ for all objects $X,Y, Z$
\item identity morphisms  $\idd_X  : \homm(X,X)$ for every  object $X$
\item a path $\rid  : f \circ \idd_X = f$ for all morphisms $f : \homm(X ,Y)$
\item a path  $\lid : \idd_Y \circ f = f$ for all morphisms $f : \homm(X ,Y)$
\item a path $\assoc(h,g,f) : \left(h \circ g\right) \circ f = h \circ \left(g \circ f\right)$ for all composable morphisms $h$, $g$, and $f$.
\end{itemize}
\end{definition}

\begin{definition} \label{bistr}
A \textit{wild bicategory} is a wild category $\C$ equipped with 
\begin{itemize}
\item a \emph{triangle} identity 
\[\begin{tikzcd}[row sep = small]
	{\assoc(h, \idd, g ) \cdot \ap_{h \circ {-}}(\lid(g))} \\
	{\ap_{{-} \circ g}(\rid(h))}
	\arrow[Rightarrow, no head, from=1-1, to=2-1]
\end{tikzcd}\] for all composable morphisms $h$ and $g$
\item a \emph{pentagon} identity
\[\begin{tikzcd}[row sep = small]
	{\ap_{{-} \circ f}(\assoc(k,h, g)) \cdot \assoc(k, h \circ g,f) \cdot \ap_{k \circ {-}}(\assoc(h, g,f))} \\
	{ \assoc(k \circ h, g, f)  \cdot \assoc(k, h, g \circ f)  }
	\arrow[Rightarrow, no head, from=1-1, to=2-1]
\end{tikzcd}\]
for all composable morphisms $k$, $h$, $g$, and $f$.
\end{itemize}
Here, a wild bicategory is really a wild $\mleft(2,1\mright)$-category as its $2$-cells are paths, hence invertible.
\end{definition}

\begin{lemma} \label{kellycoh}
Let $\C$ be a bicategory. For all $A, B, C : \obb(\C)$, $f : \homm_{\C}(A,B)$, $g : \homm_{\C}(B,C)$,
\[
 \ap_{{-} \circ  f}(\lid(g)) \ = \ \assoc(\idd, g, f) \cdot \lid(g \circ f) 
\]
\end{lemma}
\begin{proof}
As $ \left(c = d \right) \xrightarrow{\ap_{\idd \circ {-}} } \left(\idd \circ c = \idd \circ d \right) $ is an equivalence for all morphisms $c$ and $d$, it suffices to prove 
$ \ap_{\idd \circ {-}} (\ap_{{-} \circ  f}(\lid(g))) = \ap_{\idd \circ {-}} (\assoc(\idd, g, f)) \cdot \ap_{\idd \circ {-}} (\lid(g \circ f))$. Consider the diagram
\[
\hspace*{-4mm}\begin{tikzcd}[column sep =60, row sep = 25]
	{\left(\left(\idd \circ \idd\right)\circ g \right) \circ f} & {\left( \idd \circ \left(\idd \circ g\right) \right) \circ f} & {\idd \circ \left( \left( \idd \circ g \right) \circ f\right)} & {\idd \circ \left(\idd \circ \left( g \circ f\right)\right)} \\
	& {\left( \idd \circ g\right) \circ f} & {\idd \circ \left(g \circ f\right)}
	\arrow["{\ap_{\idd \circ {-}}(\lid(g \circ f))}", curve={height=-6pt}, Rightarrow, no head, from=1-4, to=2-3]
	\arrow["{\ap_{\idd \circ {-}}(\ap_{{-} \circ f}(\lid(g)))}"{description}, Rightarrow, no head, from=1-3, to=2-3]
	\arrow["{\ap_{\idd \circ {-}}(\assoc(\idd, g, f))}", shift left, Rightarrow, no head, from=1-3, to=1-4]
	\arrow["{\assoc(\idd, \idd \circ g, f)}", shift left, Rightarrow, no head, from=1-2, to=1-3]
	\arrow["{\ap_{{-} \circ f}(\assoc(\idd, \idd, g))}", shift left, Rightarrow, no head, from=1-1, to=1-2]
	\arrow["{\ap_{{-} \circ f}(\ap_{{-} \circ g}(\rid(\idd)))}"', curve={height=6pt}, Rightarrow, no head, from=1-1, to=2-2]
	\arrow["{\ap_{{-} \circ f}(\ap_{\idd \circ {-}}(\lid(g)))}"{description}, Rightarrow, no head, from=1-2, to=2-2]
	\arrow["{\assoc(\idd, g, f)}"', Rightarrow, no head, from=2-2, to=2-3]
\end{tikzcd}\] Its left two subdiagrams commute, and we want to prove  the right one commutes. Hence it suffices to prove that this diagram's outer perimeter commutes. This follows from the commuting diagram
\[\begin{tikzcd}[column sep = 50, row sep = 25]
	& {\left( \idd \circ \left(\idd \circ g\right) \right) \circ f} & {\idd \circ \left( \left( \idd \circ g \right) \circ f\right)} \\
	{\left(\left(\idd \circ \idd\right)\circ g \right) \circ f} & {\left(\idd \circ \idd\right)\circ  \left( g  \circ f \right)} & {\idd \circ \left(\idd \circ \left( g \circ f\right)\right)} \\
	{\left( \idd \circ g\right) \circ f} & {\idd \circ \left(g \circ f\right)}
	\arrow["{\ap_{{-} \circ f}(\assoc(\idd, \idd, g))}", curve={height=-12pt}, Rightarrow, no head, from=2-1, to=1-2]
	\arrow["{\assoc(\idd, \idd \circ g, f)}", Rightarrow, no head, from=1-2, to=1-3]
	\arrow["{\ap_{\idd \circ {-}}(\assoc(\idd, g, f))}", Rightarrow, no head, from=1-3, to=2-3]
	\arrow["{\ap_{\idd \circ {-}}(\lid(g \circ f))}", curve={height=-12pt}, Rightarrow, no head, from=2-3, to=3-2]
	\arrow["{\assoc(\idd, g, f)}"', Rightarrow, no head, from=3-1, to=3-2]
	\arrow["{\ap_{{-} \circ f}(\ap_{{-} \circ g}(\rid(\idd)))}"', Rightarrow, no head, from=2-1, to=3-1]
	\arrow["{\assoc(\idd \circ \idd, g, f)}"', Rightarrow, no head, from=2-1, to=2-2]
	\arrow["{\ap_{{-} \circ \left( g \circ f\right)}(\rid(\idd))}"{description}, Rightarrow, no head, from=2-2, to=3-2]
	\arrow["{\assoc(\idd, \idd, g \circ f)}"', Rightarrow, no head, from=2-2, to=2-3]
\end{tikzcd}\]
\end{proof}

\begin{definition} \label{equivwild}
Let $\C$ be a wild category.
\begin{enumerate}
\item A morphism $f : \homm_{\C}(A,B)$ of $\C$ is an \textit{equivalence} if it is biinvertible, i.e.,
$\isequiv(f) \coloneqq \sum_{g,h : \homm_{\C}(B,A)}\left(g \circ f = \idd_A \right) \times \left(f \circ h = \idd_B\right)$. (Note that $\isequiv(f)$ is a proposition.)
We write $\simeq_{\C}$ for the type of equivalences.
\item We say that $\C$ is \textit{univalent} if for all $A, B : \obb(\C)$, the function
$\left(A =_{\obb(\C)} B\right) \to \left(A \simeq_{\C} B\right)$ sending $\refl_A$ to $\left(\idd_A, \idd_A, \idd_A, \lid(\idd_A), \lid(\idd_A)\right)$ is an equivalence.
\end{enumerate}
\end{definition}

\begin{example} 
The universe $\U$ is a bicategory and (given the univalence axiom) is univalent.
\end{example}

\begin{definition}
Let $\C$ and $\D$ be wild categories.
\begin{enumerate}
\item A \emph{wild functor} $F : \C \to \D$ from $\C$ to $\D$ is a tuple consisting of
\begin{align*}
F_0 & \ : \  \obb(\C) \to \obb(\D)
\\ F_1 & \ : \ \prod_{X,Y : \obb(\C)}\homm_{\C}(X,Y) \to \homm_{\D}(F_0(X), F_0(Y))
\\ F_{\circ} & \ : \ \prod_{X,Y,Z : \obb(\C)}\prod_{g : \homm_{\C}(Y,Z)}\prod_{f : \homm_{\C}(X,Y)} F_1(g \circ f) = F_1(g) \circ F_1(f)
\\ F_{\idd} & \ : \ \prod_{X : \obb(\C)}F_1(\idd_X) = \idd_{F_0(X)}
\end{align*}
We may refer to $F_0$ or $F_1$ by $F$. If the data $F_{\circ}$ and $F_{\idd}$ are omitted, then we call $F$ a \emph{$0$-functor}.
\item Let $F, G : \C \to D$ be $0$-functors. A \emph{natural transformation} $\tau : F \to G$ from $F$ to $G$ consists of 
\begin{align*}
\tau_0 & \ : \ \prod_{X : \obb(\C)}\homm_{\D}(F(X), G(X))
\\ \tau_1 & \ : \  \prod_{X, Y : \obb(\C)}\prod_{f : \homm_{\C}(X , Y)}G(f) \circ \tau_0(X) = \tau_0(Y) \circ F(f)
\end{align*}
We say that $\tau$ is a \emph{(natural) isomorphism} if $\tau_0(X)$ is an equivalence for each $X : \obb(C)$.  
\end{enumerate}
\end{definition}

Let $L : \C \to \D$ and $R: \D \to \C$ be $0$-functors of wild categories. 
\begin{definition}\label{adjdef}
An adjunction $L \dashv R$ consists of
\begin{align*}
\alpha & \ : \ \prod_{A : \obb(\C)}\prod_{ X : \obb(\D)}\homm_{\D}(L{A}, X) \simeq \homm_{\C}(A,R{X})
\\  V_1 & \ : \ \prod_{A :\obb(\C)}\prod_{ X, Y : \obb(\D)}\prod_{g : \homm_{\D}(X,Y)}\prod_{h : \homm_{\D}(L{A}, X)}R{g} \circ \alpha(h) = \alpha(g \circ h)
\\ V_2 & \ : \ \prod_{Y : \obb(\D)}\prod_{ A, B : \obb(\C)}\prod_{f : \homm_{\C}(A,B)}\prod_{h : \homm_{\D}(L{B}, Y)}\alpha(h) \circ f = \alpha(h \circ L{f}).
\end{align*} 
\end{definition}
\noindent Note that for each such triple, we also have naturality squares
\[\begin{tikzcd}[column sep = 27]
	{\homm_{\C}(A, R{X})} & {\homm_{\C}(A, R{Y})} & {\homm_{\C}(B, R{Y})} & {\homm_{\C}(A, R{Y})} \\
	{\homm_{\D}(L{A}, X)} & {\homm_{\D}(L{A}, Y)} & {\homm_{\D}(L{B}, Y)} & {\homm_{\D}(L{A},Y)}
	\arrow[""{name=0, anchor=center, inner sep=0}, "{{R{g} \circ {-}}}", from=1-1, to=1-2]
	\arrow["{{\alpha^{-1}}}"', from=1-1, to=2-1]
	\arrow["{{\alpha^{-1}}}", from=1-2, to=2-2]
	\arrow[""{name=1, anchor=center, inner sep=0}, "{{{-} \circ f}}", from=1-3, to=1-4]
	\arrow["{{\alpha^{-1}}}"', from=1-3, to=2-3]
	\arrow["{{\alpha^{-1}}}", from=1-4, to=2-4]
	\arrow[""{name=2, anchor=center, inner sep=0}, "{{g \circ {-}}}"', from=2-1, to=2-2]
	\arrow[""{name=3, anchor=center, inner sep=0}, "{{{-} \circ L{f}}}"', from=2-3, to=2-4]
	\arrow["{{\widetilde{V}_1(g)}}"{description}, draw=none, from=0, to=2]
	\arrow["{{\widetilde{V}_2(f)}}"{description}, draw=none, from=1, to=3]
\end{tikzcd}\]
Here, the homotopies witnessing these square commute are defined by the commuting squares
\[
\begin{tikzcd}[column sep = large]
	{g \circ \alpha^{-1}(h)} && {\alpha^{-1}(R{g} \circ h)} \\
	{\alpha^{-1}(\alpha(g \circ \alpha^{-1}(h)))} && {\alpha^{-1}(Rg \circ \alpha(\alpha^{-1}(h)))} \\
	{\alpha^{-1}(h) \circ L{f}} && {\alpha^{-1}(h \circ f)} \\
	{\alpha^{-1}(\alpha(\alpha^{-1}(h) \circ L{f}))} && {\alpha^{-1}(\alpha (\alpha^{-1}(h)) \circ f)}
	\arrow["{\widetilde{V}_1(g,h)}", Rightarrow, no head, from=1-1, to=1-3]
	\arrow["{\eta_{\alpha}(g \circ \alpha^{-1}(h))}"', Rightarrow, no head, from=1-1, to=2-1]
	\arrow["{\ap_{\alpha^{-1}}(V_1(g, \alpha^{-1}(h)))}"', Rightarrow, no head, from=2-1, to=2-3]
	\arrow["{\ap_{\alpha^{-1}}(\ap_{R{g} \circ {-}}(\epsilon_{\alpha}(h)))}"', Rightarrow, no head, from=2-3, to=1-3]
	\arrow["{\widetilde{V}_2(f,h)}", Rightarrow, no head, from=3-1, to=3-3]
	\arrow["{\eta_{\alpha}(\alpha^{-1}(h) \circ L{f})}"', Rightarrow, no head, from=3-1, to=4-1]
	\arrow["{\ap_{\alpha^{-1}}(V_2(f, \alpha^{-1}(h)))}"', Rightarrow, no head, from=4-1, to=4-3]
	\arrow["{\ap_{\alpha^{-1}}(\ap_{{-} \circ f}(\epsilon_{\alpha}(h)))}"', Rightarrow, no head, from=4-3, to=3-3]
\end{tikzcd}
\]
where $\eta_{\alpha}$ and $\epsilon_{\alpha}$ refer to the unit and counit, respectively, of $\alpha$'s half-adjoint equivalence data. 

The \emph{counit} of an adjunction $ \left(\alpha, V_1, V_2\right)$ is the natural transformation $\epsilon : L \circ R \to \idd_{\D}$ defined component-wise by $\epsilon_X \coloneqq \alpha^{-1}(\idd_{R{X}})$. This family of morphisms is natural by the naturality of $\alpha^{-1}$ (i.e., $\widetilde{V}_1$ and $\widetilde{V}_2$): for every morphism $f : \homm_{\D}(X ,Y)$, 
\[
f \circ \epsilon_X = \alpha^{-1}(R(f) \circ \idd_{R{X}}) = \alpha^{-1}(\idd_{R{Y}} \circ R(f)) =  \epsilon_Y \circ L(R(f))
\] For each $X : \obb(\D)$, the \emph{zigzag identity (at $X$)} of the adjunction is the chain of paths
\[
\label{zigzag}
\begin{tikzcd}[column sep = 50]
	{R(\epsilon_X) \circ \alpha(\idd_{L(R(X))})} & {\alpha(\epsilon_X \circ \idd_{L(R(X))})} & {\alpha(\epsilon_X)} & {\idd_{R{X}}}
	\arrow["{{\textit{naturality of $\alpha$}}}", equals, from=1-1, to=1-2]
	\arrow["{{\textit{unit law}}}", equals, from=1-2, to=1-3]
	\arrow["{{\textit{def.\ of inverse}}}", equals, from=1-3, to=1-4]
\end{tikzcd}
\tag{$\mathtt{zz\text{-}counit}$}
\]

\subsection{Reflective subcategories}

\begin{definition} \label{reflsubdef}
Let $\C$ be a wild category. A \textit{pre-reflective subcategory} of $\C$ is a predicate $P : \obb(\C) \to \mathsf{Prop}$ together with functions
\[
\modal  \ : \ \obb(\C) \to \obb(\C) \qquad  \eta \ : \ \prod_{X : \obb(\C)}\homm_{\C}(X, \modal{X})
\] called the \emph{modal operator} and \emph{modal unit}, respectively,
such that 
\begin{itemize}
\item for each $X : \obb(\C)$, $P(\modal{X})$
\item for each $X, Y : \obb(\C)$ with $P(Y)$, the function
  $\left({-} \circ \eta_X \right)  : \homm_{\C}(\modal{X},Y) \to \homm_{\C}(X, Y)$ is an equivalence.\footnote{When $\C \equiv \U$, a pre-reflective subcategory is known as a \emph{reflective subuniverse}~\cite[Definition 1.3]{RS}.}
We denote the inverse of this map by $\rec_{\modal}$, which enjoys the $\beta$-law $\beta_{\eta} : \prod_{X, Y : \obb(\C)}\prod_{\underline{\hspace{1.5mm}} : P(Y)}\prod_{f : \homm_{\C}(X, Y)}\rec_{\modal}(f) \circ \eta_X = f$.
\end{itemize}
\begin{notation}
We define $\C_P \coloneqq \sum_{X : \obb(\C)}P(X)$. 
\end{notation}
\end{definition}

Suppose that $\C$ is a wild category. Let $\left(P, \modal, \eta\right)$ be a pre-reflective subcategory of $\C$.

\begin{prop} \label{nateqmod}
For each $X : \obb(\C)$, the following square commutes in $\C$:
\[
\begin{tikzcd}[column sep = huge]
	X & Y \\
	{\modal{X}} & {\modal{Y}}
	\arrow["f", from=1-1, to=1-2]
	\arrow["{\eta_X}"', from=1-1, to=2-1]
	\arrow["{\eta_Y}", from=1-2, to=2-2]
	\arrow["{\rec_{\modal}(\eta_Y \circ f)}"', from=2-1, to=2-2]
\end{tikzcd}
\] where the bottom arrow is called the \emph{action of $\modal$ on $f$}.
\end{prop}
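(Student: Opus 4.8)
The plan is to recognize that the asserted commutativity is nothing more than a direct instance of the computation homotopy $\beta_{\eta}$ packaged into the reflective-subcategory data. Concretely, for a morphism $f : \homm_{\C}(X,Y)$, saying that the square commutes in $\C$ means producing a path $\rec_{\bigcirc}(\eta_Y \circ f) \circ \eta_X = \eta_Y \circ f$, since the two composites around the square are $\rec_{\bigcirc}(\eta_Y \circ f) \circ \eta_X$ (down then across) and $\eta_Y \circ f$ (across then down). So the whole task reduces to exhibiting this single identification.

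First I would record that the bottom-right vertex $\bigcirc{Y}$ lies in the subcategory: by the first defining clause of \cref{reflsubdef}, we have $P(\bigcirc{Y})$ for every $Y : \obb(\C)$. This is exactly the modality hypothesis required to apply $\rec_{\bigcirc}$ and its computation rule to morphisms whose codomain is $\bigcirc{Y}$. Next I would instantiate $\beta_{\eta}$ at the source object $X$, at the target $\bigcirc{Y}$ together with its witness of $P(\bigcirc{Y})$, and at the morphism $\eta_Y \circ f : \homm_{\C}(X, \bigcirc{Y})$. By its very type, $\beta_{\eta}$ then delivers the path $\rec_{\bigcirc}(\eta_Y \circ f) \circ \eta_X = \eta_Y \circ f$, which is precisely the witness that the naturality square commutes.

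I do not expect any real obstacle here: conceptually, $\rec_{\bigcirc}(\eta_Y \circ f)$ is defined to be the (essentially unique) factorization of $\eta_Y \circ f$ through $\eta_X$ supplied by the universal property of the modal unit, and $\beta_{\eta}$ is exactly the statement that this factorization does reproduce $\eta_Y \circ f$ after precomposition with $\eta_X$. The only point demanding a little care is the bookkeeping of implicit arguments when invoking $\beta_{\eta}$ — namely feeding it $\bigcirc{Y}$ (rather than $Y$) as the target and $\eta_Y \circ f$ (rather than $f$) as the map — but once these are matched the proof is immediate.
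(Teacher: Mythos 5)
Your proof is correct and is precisely the argument the paper leaves implicit (the proposition is stated there without proof, being immediate from the definitions): commutativity of the square amounts to the single path $\rec_{\bigcirc}(\eta_Y \circ f) \circ \eta_X = \eta_Y \circ f$, which is exactly $\beta_{\eta}$ instantiated at $X$, at the target $\bigcirc{Y}$ with the witness $P(\bigcirc{Y})$ from \cref{reflsubdef}, and at the morphism $\eta_Y \circ f$. Your bookkeeping of the instantiation is also right, so nothing is missing.
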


\noindent It's easy to check that this action on maps makes $\modal$ into a wild functor.

\begin{lemma} \label{modeq}
Suppose that $\C$ is univalent. For each $X : \obb(\C)$, $P(X) \to \isequiv(\eta_X)$.
\end{lemma}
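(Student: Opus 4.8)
The plan is to directly produce the biinvertibility data demanded by $\isequiv(\eta_X)$. Using the hypothesis $P(X)$, I form the candidate inverse $r \coloneqq \rec_{\bigcirc}(\idd_X) : \homm_{\C}(\bigcirc X, X)$, that is, the image of $\idd_X$ under the inverse of the reflection equivalence $({-} \circ \eta_X)$ at the modal object $X$. Unfolding $\isequiv(\eta_X)$, I must supply two morphisms $\bigcirc X \to X$ together with a left-inverse and a right-inverse witness; I take both morphisms to be $r$.

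The left-inverse witness $r \circ \eta_X = \idd_X$ is immediate: it is exactly $\beta_{\eta}$ instantiated at $Y \equiv X$ (licensed by $P(X)$) and $f \equiv \idd_X$.

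The right-inverse witness $\eta_X \circ r = \idd_{\bigcirc X}$ is the crux. Both sides are endomorphisms of $\bigcirc X$, and since $P(\bigcirc X)$ holds, the precomposition map $({-} \circ \eta_X) : \homm_{\C}(\bigcirc X, \bigcirc X) \to \homm_{\C}(X, \bigcirc X)$ is an equivalence of types, hence an embedding. It therefore suffices to equate the two images under this map, i.e.\ to show $(\eta_X \circ r) \circ \eta_X = \idd_{\bigcirc X} \circ \eta_X$. Rearranging the left-hand side with $\assoc$, rewriting along the left-inverse witness inside $\ap_{\eta_X \circ {-}}$, and simplifying with $\rid$ reduces it to $\eta_X$; the right-hand side reduces to $\eta_X$ by $\lid$. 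Pushing the resulting equality back through the inverse of $\ap_{({-} \circ \eta_X)}$ then yields $\eta_X \circ r = \idd_{\bigcirc X}$. Packaging the two witnesses gives $\isequiv(\eta_X)$.

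I expect the main obstacle to be the coherent bookkeeping of the path algebra in the right-inverse step. Because the hom-types of a wild category are not assumed to be sets, the deduction of an equality of morphisms from an equality of their precompositions genuinely rests on the fact that the concrete equivalence $({-} \circ \eta_X)$ induces an equivalence on identity types; and the reduction of $(\eta_X \circ r) \circ \eta_X$ to $\eta_X$ must be assembled explicitly from $\assoc$, $\rid$, $\lid$, and $\ap_{\eta_X \circ {-}}$ applied to the left-inverse witness, rather than held definitionally. Univalence does not appear to be essential to this particular argument; I take it to be a standing assumption invoked elsewhere.
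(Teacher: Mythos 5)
Your proof is correct, but it takes a genuinely different route from the paper's. The paper argues via uniqueness of reflections: it forms the type $T_{P,X}$ of tuples consisting of an object $Y$ with $P(Y)$, a map $f : \homm_{\C}(X,Y)$, and a witness that $\lambda g.\, g \circ f$ is an equivalence into every modal object; it asserts that this type is a mere proposition (this is exactly where univalence of $\C$ enters), and then identifies the two inhabitants $\mleft(X, \ldots, \idd_X, \ldots\mright)$ and $\mleft(\bigcirc{X}, \ldots, \eta_X, \ldots\mright)$ supplied by $P(X)$, extracting a commuting triangle that exhibits $\eta_X$ as equal to an equivalence. You instead run the direct argument: $r \coloneqq \rec_{\bigcirc}(\idd_X)$ is a two-sided inverse, with the left-inverse witness given by $\beta_{\eta}$, and the right-inverse witness obtained by cancelling precomposition with $\eta_X$ — valid because $P(\bigcirc{X})$ makes $\mleft({-} \circ \eta_X\mright) : \homm_{\C}(\bigcirc{X},\bigcirc{X}) \to \homm_{\C}(X,\bigcirc{X})$ an equivalence, hence an embedding — after reducing both $\mleft(\eta_X \circ r\mright) \circ \eta_X$ and $\idd_{\bigcirc{X}} \circ \eta_X$ to $\eta_X$ using $\assoc$, $\ap_{\eta_X \circ {-}}(\beta_{\eta})$, $\rid$, and $\lid$. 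Your closing observation is the main dividend of your approach and is accurate: univalence is never used, so your argument proves the lemma for an arbitrary wild category equipped with a reflective subcategory, whereas the paper's proof genuinely consumes the univalence hypothesis (without it, the propositionality of $T_{P,X}$ fails). What the paper's route buys in exchange is conceptual packaging — the lemma becomes an instance of "universal objects are unique," with no explicit path algebra — though at the cost of leaving the propositionality of $T_{P,X}$, itself a nontrivial univalence argument, unproved.
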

\begin{proof}
Let $X : \obb(\C)$. The type $T_{P,X}$ of tuples
\begin{align*}
Y & \ : \ \obb(\C)
\\ q & \ : \ P(Y)
\\ f & \ : \ \homm_{\C}(X,Y)
\\ I & \ : \ \prod_{Z : \obb(\C)}P(Z) \to \isequiv(\lambda(g : \homm_{\C}(Y,Z)).g \circ f)
\end{align*}
is a proposition. Suppose that $P(X)$. We have elements
$\left(X, \ldots, \idd_X, \ldots\right)$ and $\left(\modal{X}, \ldots, \eta_X, \ldots\right)$
of $T_{P,X}$, which must be equal. Therefore, we have a commuting triangle
\[\begin{tikzcd}
	& X \\
	X && {\modal{X}}
	\arrow["\simeq"', dashed, from=2-1, to=2-3]
	\arrow["{\eta_X}", from=1-2, to=2-3]
	\arrow["\idd"', from=1-2, to=2-1]
\end{tikzcd}\] in $\C$. This implies that $\eta_X$ is an equivalence.
\end{proof} 

Combined with \cref{nateqmod}, \cref{modeq} implies that when $\C$ is univalent, $\eta$ restricted to $\C_P$ is a \emph{natural} isomorphism $\idd_{\C_P} \xrightarrow{\cong} \modal \circ \mathcal{I}$ of wild functors where $\mathcal{I}$ denotes the inclusion of the full subcategory $\C_P$ into $\C$. Let $\epsilon$ denote the counit of the adjunction $\modal \dashv \mathcal{I}$ induced by $\eta$. By the zigzag identity \eqref{zigzag}, each component of $\epsilon$ is equal to an equivalence. Hence $\epsilon$ is also a natural isomorphism. 

Classically, given an adjunction $L \dashv R$, the condition that the counit is an isomorphism is well known to be equivalent to the condition that $R$ is fully faithful. In the wild setting, we impose an extra condition on $L$ to get a well-behaved notion of \emph{reflective subcategory}. The condition is called \emph{2-coherence} and is defined at \cref{2coher}.

\begin{definition}\label{reflin}
Let $\C$ and $\D$ be wild categories with a $0$-functor $\mathcal{I} : \D \to \C$. We say that $\D$ is a \emph{reflective subcategory of $\C$} if we have a $2$-coherent left adjoint $L : \C \to \D$ whose counit is an isomorphism. 
\end{definition} 

\subsection{Orthogonal factorization systems}\label{OFS}

\begin{definition} \label{OFSdefn}
Let $\C$ be a wild category. An \emph{orthogonal factorization system (OFS)} on $\C$ consists of predicates $\L, \RI :   \prod_{ A, B : \obb(\C)}\homm_{\C}(A,B)  \to \mathsf{Prop}$ such that
\begin{enumerate}
\item both $\L$ and $\RI$ are closed under composition and have all identity morphisms
\item for every $h : \homm_{\C}(A,B)$, the following type is contractible:
\[
\fact_{\L, \RI}(h) \ \coloneqq \ \sum_{D : \obb(\C)}\sum_{f : \homm_{\C}(A,D)}\sum_{g : \homm_{\C}(D, B)} \L(f) \times \RI(g) \times g \circ f = h
\]
\end{enumerate}
\end{definition}

\noindent In \cref{OFSdefn}, when $\C$ is univalent, both $\L$ and $\RI$ have all equivalences in $\C$ by $\simeq_{\C}$-induction.

\begin{example}
Rijke et al.\ use a particular indexed recursive $1$-HIT to show that every family $\prod_{a :A}F(a) \to G(a)$ of functions induces an OFS on $\U$~\cite[Section 2.4]{RS}.
\end{example}

\begin{definition}
Let $\C$ be a wild category. Let $l : \homm_{\C}(A, B)$ and $\mathcal{H}$ be a property of morphisms in $\C$. We say that $l$ has the \textit{(unique) left lifting property against $\mathcal{H}$} if for every commuting square
\[\begin{tikzcd}
	A & C \\
	B & D
	\arrow[""{name=0, anchor=center, inner sep=0}, "f", from=1-1, to=1-2]
	\arrow["l"', from=1-1, to=2-1]
	\arrow["r", from=1-2, to=2-2]
	\arrow[""{name=1, anchor=center, inner sep=0}, "g"', from=2-1, to=2-2]
	\arrow["S"{description}, draw=none, from=0, to=1]
\end{tikzcd}\] with $r \in \mathcal{H}$, the type of \emph{diagonal fillers $\filll(S)$} of $S$ is contractible where
\[
\filll(S) \  \coloneqq \  \sum_{d : \homm_{\C}(B, C)}\sum_{H_f : f = d \circ l}\sum_{H_g : g = r \circ d} \ap_{{-} \circ l}(H_g) \cdot \assoc(r,d,l)  =  S \cdot \ap_{r \circ {-}}(H_f) 
\] In this case, we write ${^{\perp}\mathcal{H}}(l)$. 
The predicate \emph{(unique) right lifting property} is defined similarly.
\end{definition}

Let $\C$ be a univalent wild bicategory and let $\left(\L, \RI\right)$ be an OFS on $\C$.

\begin{lemma}\label{fillid}
Let $h : \homm_{\C}(A, B)$ and $\left(U, s_U, t_U, p_U\right), \left(V, s_V, t_V, p_V\right) : \fact_{\L, \RI}(h)$. We have that 
\[\begin{tikzcd}[row sep = small]
	{\left(U, s_U, t_U, p_U\right) = \left(V, s_V, t_V, p_V\right)} \\
	{ \sum_{e : U \simeq_{\C} V}
	\sum_{H_{\L} : s_V  = e \circ s_U}\sum_{H_{\RI} : t_U = t_V \circ e} \ap_{{-} \circ s_U}(H_{\RI})  \cdot p_U = \assoc(t_V, e, s_U) \cdot \ap_{t_V \circ {-}}(H_{\L}) \cdot  p_V}
	\arrow["\simeq"{marking, allow upside down}, draw=none, from=1-1, to=2-1]
\end{tikzcd}\]
\end{lemma}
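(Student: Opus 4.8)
The plan is to treat this as a characterization of the identity type of an iterated $\Sigma$-type, trading the path between middle objects for the equivalence $e$ via univalence of $\C$. Since $\L$ and $\RI$ take values in $\mathsf{Prop}$, the map forgetting the witnesses $\L(s_U)$ and $\RI(t_U)$ is an embedding, so it suffices to characterize the identity type of the four-tuples $(U, s_U, t_U, p_U)$ regarded as inhabitants of $\sum_{D : \obb(\C)}\sum_{f : \homm_{\C}(N,D)}\sum_{g : \homm_{\C}(D,M)} g \circ f = h$. By the standard description of paths in a $\Sigma$-type, a path between the two four-tuples is equivalent to a dependent tuple consisting of a path $q : U =_{\obb(\C)} V$ together with dependent paths over $q$ relating $s_U$ to $s_V$, $t_U$ to $t_V$, and $p_U$ to $p_V$. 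Because $\C$ is univalent, $\mathsf{idtoequiv} : (U = V) \to (U \simeq_{\C} V)$ is an equivalence, so I can re-index this outermost $\Sigma$ by $e : U \simeq_{\C} V$ in place of $q$, after which it remains only to rewrite the three dependent-path components and then assemble the fiberwise equivalences into a total one (which is routine).

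The two governing computations are transport lemmas, each proved by path induction on $q$: transport of a morphism $c : \homm_{\C}(N, U)$ along $q$ equals $\mathsf{idtoequiv}(q) \circ c$ (covariant case, codomain varying), and transport of a morphism $c : \homm_{\C}(U, M)$ along $q$ equals $c \circ \mathsf{idtoequiv}(q)^{-1}$ (contravariant case, domain varying). Feeding the first in, the dependent path for the left factor becomes the condition $e \circ s_U = s_V$, i.e.\ $H_{\L} : s_V = e \circ s_U$ after inverting. Feeding the second in, the dependent path for the right factor first takes the form $t_U \circ e^{-1} = t_V$; I would then convert this to $H_{\RI} : t_U = t_V \circ e$ using that post-composition by the equivalence $e$ is itself an equivalence (equivalently, the biinvertibility data of $e$ together with $\assoc$ and the unit laws), which is a type-level equivalence and so preserves the total-space structure.

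The remaining and most delicate component is the coherence condition on $p_U$ and $p_V$. After re-indexing, the dependent path over $q$ relating $p_U$ to $p_V$ unfolds, via transport in identity types together with the two transport computations above, into an equation whose left-hand side must reassociate $t_V \circ (e \circ s_U)$ to $(t_V \circ e) \circ s_U$; this is precisely where the factor $\assoc(t_V, e, s_U)^{-1}$ enters, and where I expect the real work to lie. To check that the unfolded condition agrees on the nose with the stated one,
\[
\ap_{t_V \circ {-}}(H_{\L}) \cdot \assoc(t_V, e, s_U)^{-1} \cdot \ap_{{-} \circ s_U}(H_{\RI})^{-1} \cdot p_U = p_V,
\]
I would perform path induction on $q$, reducing to $e = \mathsf{idtoequiv}(\refl) = \idd_U$. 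There the associator specializes to $\assoc(t_V, \idd, s_U)$ while the two $\ap$-terms reduce to applications of the unit laws $\lid(s_U)$ and $\rid(t_V)$, and the required identity becomes exactly the \emph{triangle identity} of the bicategory $\C$ (modulo functoriality of $\ap_{t_V \circ {-}}$ and $\ap_{{-} \circ s_U}$). This reliance on the triangle coherence is the reason the lemma is stated for a univalent wild \emph{bicategory} rather than a bare wild category, and verifying it is the main obstacle; once it is in hand, the surrounding assembly of the fiberwise equivalences is mechanical.
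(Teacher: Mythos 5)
Your proposal is correct, but it takes a genuinely different route from the paper's proof. The paper never unfolds a single transport: it notes that univalence makes $V \mapsto (U \simeq_{\C} V)$ an identity system on $(\obb(\C), U)$, shows by reordering $\Sigma$'s and contracting singletons that the based total space $\sum_{s_V}\sum_{t_V}\sum_{p_V}\sum_{H_{\L}}\sum_{H_{\RI}}(\mathrm{coh})$ at $e = \idd$ is contractible, and concludes by \cref{SIP}. In that packaging the step you single out as ``the real work'' --- matching the transported coherence cell against the stated formula --- simply never arises: after contracting $(s_V, H_{\L})$ and then $(t_V, H_{\RI})$ (the latter re-indexed along $\rid(t_V)$), what is left is $\sum_{p_V}(\omega = p_V)$ for some fixed path $\omega$, and this is contractible no matter what $\omega$ is. Your route instead characterizes the identity type directly, and its payoff is that it makes explicit where the bicategory structure enters: your reduction of the $\refl$ case to the triangle identity $\assoc(t_V, \idd, s_U) \cdot \ap_{t_V \circ -}(\lid(s_U)) = \ap_{- \circ s_U}(\rid(t_V))$ is exactly right, and this ingredient is not an artifact of your method --- the paper's own appeal to \cref{SIP} requires a point of the family over the basepoint $(U, s_U, t_U, p_U)$, and producing that point from the unit laws, $(H_{\L}, H_{\RI}) \coloneqq (\lid(s_U)^{-1}, \rid(t_U)^{-1})$, demands precisely that triangle instance, a step the paper elides. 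Two minor streamlinings for your version: the detour through $t_U \circ e^{-1} = t_V$ is avoidable if you state the contravariant transport lemma directly as $(\transport(q, t_U) = t_V) \simeq (t_U = t_V \circ \mathsf{idtoequiv}(q))$, whose base case is concatenation with $\rid(t_V)^{-1}$, so that no inverse-of-$e$ data ever enters the coherence you must track; and the path induction on $q$ is legitimate only because $V$ and the data $(s_V, t_V, p_V)$ are generalized along with it, which your fiberwise-over-$q$ formulation does implicitly but is worth making explicit.
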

\begin{proof}
By \cref{SIP}.
\end{proof}

\begin{lemma} \label{ULP}
For each commuting square in $\C$
\[\begin{tikzcd}
	A & X \\
	B & Y
	\arrow[""{name=0, anchor=center, inner sep=0}, "f", from=1-1, to=1-2]
	\arrow["l"', from=1-1, to=2-1]
	\arrow["r", from=1-2, to=2-2]
	\arrow[""{name=1, anchor=center, inner sep=0}, "g"', from=2-1, to=2-2]
	\arrow["S"{description}, draw=none, from=0, to=1]
\end{tikzcd}\]  with $l \in \L$ and $r \in \RI$, the type $\filll(S)$ of diagonal fillers of $S$ is contractible.
\end{lemma}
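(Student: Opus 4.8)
The statement is the \emph{orthogonality} of $\L$ and $\RI$, so the plan is to manufacture the unique filler out of the contractibility of the factorization types together with the identity characterization of \cref{fillid}. Since $\iscont(\mathsf{fill}(S))$ is itself a proposition, I may freely assume chosen factorizations of $f$ and $g$: write $f = p \circ j$ with $j \in \L$ and $p \in \RI$, say $j : \homm_{\C}(A,Z)$ and $p : \homm_{\C}(Z,X)$, and $g = q \circ k$ with $k \in \L$ and $q \in \RI$, say $k : \homm_{\C}(B,W)$ and $q : \homm_{\C}(W,Y)$. These exist because $\mathsf{fact}_{\L, \RI}(f)$ and $\mathsf{fact}_{\L, \RI}(g)$ are contractible. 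Throughout, $S : g \circ l = r \circ f$ is the witness that the square commutes.

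First I would construct the center. The diagonal $g \circ l = r \circ f$ of the square carries two $(\L,\RI)$-factorizations: the triple $(Z, j, r \circ p)$, whose right leg $r \circ p$ lies in $\RI$ since $\RI$ is closed under composition; and the triple $(W, k \circ l, q)$, whose left leg $k \circ l$ lies in $\L$ since $\L$ is closed under composition. By contractibility of $\mathsf{fact}_{\L, \RI}(g \circ l)$ these are equal, so \cref{fillid} supplies an equivalence $e : Z \simeq_{\C} W$ together with paths $H_{\L} : k \circ l = e \circ j$ and $H_{\RI} : r \circ p = q \circ e$ and a coherence relating the witnesses $p_U, p_V$. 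I then set the filler to be $d \coloneqq p \circ (e^{-1} \circ k)$. Using $H_{\RI}$ and an inverse of $e$ one computes
\[
r \circ d \ = \ (q \circ e) \circ (e^{-1} \circ k) \ = \ q \circ k \ = \ g,
\]
and using $H_{\L}$ one computes $d \circ l = p \circ (e^{-1} \circ e) \circ j = p \circ j = f$; these give the components $H_g$ and $H_f$, and the remaining $2$-cell of $\mathsf{fill}(S)$ is read off from the coherence datum of \cref{fillid}.

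It remains to contract an arbitrary filler $(d, H_f, H_g, c)$ to this center. Here the idea is to factor the filler itself: write $d = n \circ m$ with $m \in \L$ and $n \in \RI$. Then $H_f$ exhibits $(V, m \circ l, n)$ as a second factorization of $f$ (note $m \circ l \in \L$), while $(Z, j, p)$ is the first; and $H_g$ exhibits $(V, m, r \circ n)$ as a second factorization of $g$ (note $r \circ n \in \RI$), while $(W, k, q)$ is the first. Contractibility of $\mathsf{fact}_{\L, \RI}(f)$ and $\mathsf{fact}_{\L, \RI}(g)$ forces both comparisons, and \cref{fillid} converts them into equivalences $\phi : V \simeq_{\C} Z$ and $\psi : V \simeq_{\C} W$ with $\psi \circ \phi^{-1}$ matching $e$; transporting along these identifies $d = n \circ m$ with $p \circ e^{-1} \circ k$ and the accompanying paths with those of the center.

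I expect the main obstacle to be the higher coherence rather than the $1$-dimensional bookkeeping: matching the associator- and unit-laden $2$-cell $c$ in the definition of $\mathsf{fill}(S)$ with the coherence condition of \cref{fillid} (the condition on $p_U$ and $p_V$) is a genuine bicategorical diagram chase. The plan is to discharge it by repeated path induction via the $\PI$ shorthand: once the comparison equivalences are in hand, every remaining goal is a path between parallel $2$-cells, which induction on $H_{\L}$, $H_{\RI}$, and the unit and counit of $e$ should reduce to reflexivity.
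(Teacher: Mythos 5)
Your existence half is the paper's own argument in different notation: the paper also forms the two factorizations of the common diagonal $r \circ f = g \circ l$ (namely $\mleft(\im(f), s_f, r \circ t_f\mright)$ and $\mleft(\im(g), s_g \circ l, t_g\mright)$, which are your $(Z, j, r\circ p)$ and $(W, k\circ l, q)$), invokes contractibility of $\mathsf{fact}_{\L,\RI}(r\circ f)$, applies \cref{fillid} to its identity type, and obtains the filler as the composite of the two half-factorizations through the comparison equivalence. Where you genuinely diverge is in the organization of uniqueness. The paper never separates existence from uniqueness: it writes down a single chain of equivalences from the \cref{fillid}-characterization of that identity type to $\mathsf{fill}(S)$ itself, in which the last step un-contracts $\mathsf{fact}_{\L,\RI}(d)$ for the filler $d$ --- this is your ``factor the arbitrary filler $d = n \circ m$'' move, but performed type-theoretically, so contractibility simply transports across the chain. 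This uniform treatment keeps all bookkeeping attached to factorizations of the one morphism $r \circ f$. Your version instead compares factorizations of $f$ and of $g$ separately, producing $\phi$ and $\psi$, and must then identify $\psi$, $\phi$, and $e$ with one another. That identification is an extra obligation the paper never incurs; it is provable (identity types of the contractible types $\mathsf{fact}_{\L,\RI}(f)$, $\mathsf{fact}_{\L,\RI}(g)$, $\mathsf{fact}_{\L,\RI}(r\circ f)$ are themselves contractible, so any two comparison data agree), but it has to be stated and used, not waved at.

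The concrete weak point is your final claim that the $2$-cell coherences can be discharged ``by repeated path induction'' on $H_{\L}$, $H_{\RI}$, and the unit and counit of $e$. Those paths all have both endpoints fixed (they are equalities between specific composites, e.g.\ $H_{\L} : k \circ l = e \circ j$ with $e$ occurring elsewhere in the data), so based path induction is not available, and the hom-types of a wild category carry no truncation hypothesis that would make such identifications automatic. This is precisely why the paper's chain of equivalences repeatedly says ``by the bicategory structure of $\C$'': the steps that collapse strings like $\ap_{(r\circ a_2)\circ -}(\mathsf{Lid}(\cdot))^{-1} \cdot \assoc(\cdot)^{-1} \cdot \ap_{-\circ(\cdot)}(\mathsf{Rid}(\cdot))$ to $\refl$ use the pentagon and triangle identities (and the auxiliary lemma on $\mathsf{Lid}$ and $\assoc$ proved right after \cref{bistr}), which are genuine extra structure on $\C$, not consequences of path induction. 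In a wild category lacking these coherences the lemma's $2$-cell condition could fail to be verifiable, so any correct proof must invoke them explicitly. Your skeleton survives if you replace ``path induction'' by ``contractibility of the relevant comparison types plus the bicategory coherence laws,'' but as written that step is where the proof breaks.
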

\begin{proof}
The argument is a wild-categorical extension of the proof of \cite[Lemma 1.44]{RS} (which just deals with a type universe). We have the commuting diagram
\[\begin{tikzcd}[row sep = large, column sep = large]
	A & {\im(f)} & X \\
	B & {\im(g)} & Y
	\arrow["{{{s_f}}}"{description}, from=1-1, to=1-2]
	\arrow[""{name=0, anchor=center, inner sep=0}, "f", curve={height=-30pt}, from=1-1, to=1-3]
	\arrow["l"', from=1-1, to=2-1]
	\arrow["{{{t_f}}}"{description}, from=1-2, to=1-3]
	\arrow["r", from=1-3, to=2-3]
	\arrow["{{{s_g}}}"{description}, from=2-1, to=2-2]
	\arrow[""{name=1, anchor=center, inner sep=0}, "g"', curve={height=30pt}, from=2-1, to=2-3]
	\arrow["{{{t_g}}}"{description}, from=2-2, to=2-3]
	\arrow["{{{p_f}}}"{description}, draw=none, from=0, to=1-2]
	\arrow["{{{p_g}}}"{description}, draw=none, from=1, to=2-2]
\end{tikzcd}\] Since $\fact_{\L, \RI}(r \circ f)$ is contractible, so is its identity type
\[
\left(\im(f), s_f , r \circ t_f,  \assoc(r, t_f, s_f)  \cdot \ap_{r \circ {-}}(p_f) \right)  \ = \ \left(\im(g), s_g \circ l, t_g, \assoc(t_g, s_g, l)^{-1} \cdot \ap_{{-} \circ l}(p_g) \cdot S  \right)
\] By \cref{fillid}, the following type is also contractible:
\[ 
\mathcal{T} \coloneqq \subalign{& \sum_{e : \im(f) \simeq_{\C} \im(g)}\sum_{H_{\L} : s_g \circ l  = e \circ s_f} \\ & \quad \sum_{H_{\RI} : r \circ t_f = t_g \circ e} \ap_{{-} \circ s_f}(H_{\RI})  \cdot \assoc(r, t_f, s_f)  \cdot \ap_{r \circ {-}}(p_f) = \assoc(t_g, e, s_f) \cdot \ap_{t_g \circ {-}}(H_{\L}) \cdot \assoc(t_g, s_g, l)^{-1} \cdot \ap_{{-} \circ l}(p_g) \cdot S}
\] By the univalence of $\C$ along with its bicategorical structure, we can simulate the proof of \cite[Lemma 1.44]{RS} to show that $\mathcal{T} \simeq \filll(S)$. Hence $\filll(S)$ is contractible.
\end{proof}

\begin{corollary}\label{LRcls}
Let $f : \homm_{\C}(A,B)$. We have that $\L(f) \leftrightarrow {^{\perp}\RI(f)}$ and $\L^{\perp}(f) \leftrightarrow \RI(f)$.
\end{corollary}
\begin{proof}
We just prove that $\L \leftrightarrow {^{\perp}\RI}$ as the other case is formally dual.\footnote{The opposite wild category of a univalant wild bicategory is itself a univalent bicategory.} By \cref{ULP}, we know that $\L(f) \to {^{\perp}\RI}(f)$. To prove the reverse implication, suppose that ${^{\perp}\RI}(f)$. Factor $f$ as $\left(\im(f), s_f, t_f, p_f\right)$ and consider the commuting square
\[\begin{tikzcd}[column sep =large]
	A & {\im(f)} \\
	B & B
	\arrow["{{s_f}}", from=1-1, to=1-2]
	\arrow[""{name=0, anchor=center, inner sep=0}, "f"', from=1-1, to=2-1]
	\arrow[""{name=1, anchor=center, inner sep=0}, "{{t_f}}", from=1-2, to=2-2]
	\arrow["\idd"', from=2-1, to=2-2]
	\arrow["{{\lid(f) \cdot p_f^{-1}}}"{description}, draw=none, from=0, to=1]
\end{tikzcd}\] Since ${^{\perp}\RI}(f)$, $\filll\mleft(\lid(f) \cdot p_f^{-1}\mright)$ is contractible, whose center we denote by $\left( d, H_{s_f}, H_{\idd}, K\right)$. The square
\[\begin{tikzcd}[column sep =large]
	A & {\im(f)} \\
	{\im(f)} & B
	\arrow["{{s_f}}", from=1-1, to=1-2]
	\arrow[""{name=0, anchor=center, inner sep=0}, "{{s_f}}"', from=1-1, to=2-1]
	\arrow[""{name=1, anchor=center, inner sep=0}, "{{t_f}}", from=1-2, to=2-2]
	\arrow["{{t_f}}"', from=2-1, to=2-2]
	\arrow["{{\refl_{t_f \circ s_f}}}"{description}, draw=none, from=0, to=1]
\end{tikzcd}\] has the following two diagonal fillers, which are equal because ${^{\perp}\RI}(s_f)$ by \cref{ULP}:
\begin{align*}
 & \left( \idd, \lid(s_f),  \rid(t_f), \textit{$\C$'s triangle identity}\right) 
 \\ & \left( d \circ t_f, \assoc(d, t_f, s_f) \cdot \ap_{d \circ {-}}(p_f) \cdot H_{s_f}, \assoc(t_f, d, t_f)^{-1} \cdot \ap_{{-} \circ t_f}(H_{\idd}) \cdot \lid(t_f), \upsilon \right)
\end{align*}
where $\upsilon$ is the path obtained via $K$, $p_f$, and standard bicategorical laws, including \cref{kellycoh}.
This implies that $t_f$ is an equivalence with inverse $d$, so that $t_f \in \L$. Thus, $f$ is the composite of two maps in $\L$.
\end{proof}

\begin{lemma} \label{leftstable}
Let $\left(\L, \RI\right)$ be an OFS on the wild category of types. Consider the pushout square
\[\begin{tikzcd}[row sep = large]
	C & B \\
	A & {A \sqcup_C B}
	\arrow["g", from=1-1, to=1-2]
	\arrow[""{name=0, anchor=center, inner sep=0}, "f"', from=1-1, to=2-1]
	\arrow[""{name=1, anchor=center, inner sep=0}, "\inr", from=1-2, to=2-2]
	\arrow["\inl"', from=2-1, to=2-2]
	\arrow["\lrcorner"{anchor=center, pos=0.125, rotate=180}, draw=none, from=2-2, to=1-1]
	\arrow["\glue"{description}, draw=none, from=0, to=1]
\end{tikzcd}\] defined as a HIT in the standard way (where $\glue : \inl \circ f \sim \inr \circ g$). If $f$ is in $\L$, then so is $\inr$.
\end{lemma}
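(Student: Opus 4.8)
The plan is to deduce this from the characterization $\L = {}^{\perp}\RI$ of \cref{LRcls}: it suffices to show that $\inr$ has the left lifting property against every $r \in \RI$. So fix $r : \homm_{\U}(X,Y)$ with $r \in \RI$, together with a commuting square with top edge $u : \homm_{\U}(B,X)$, bottom edge $v : \homm_{\U}(A \sqcup_C B, Y)$, and witness $S : r \circ u = v \circ \inr$, and aim to prove that the fibre $\mathsf{fill}(S)$ of diagonal fillers is contractible.

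First I would transfer this lifting problem across the pushout. Precomposing the square with $g : \homm_{\U}(C,B)$ and using the glue path $\inr \circ g = \inl \circ f$, I obtain a commuting square for $f$ against $r$ whose top edge is $u \circ g : \homm_{\U}(C,X)$, whose bottom edge is $v \circ \inl$, and whose witness $S'$ is assembled from $S$ whiskered by $g$ together with $\ap_v$ applied to $\glue$. Since $f \in \L$ and $r \in \RI$, \cref{ULP} immediately gives that the fibre $\mathsf{fill}(S')$ is contractible.

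The heart of the argument is to produce an equivalence $\mathsf{fill}(S) \simeq \mathsf{fill}(S')$, after which contractibility transports across it and we conclude $\inr \in {}^{\perp}\RI = \L$. Here I would invoke the (dependent) universal property of $A \sqcup_C B$. Unfolding $\mathsf{fill}(S)$, a filler is a map $d : \homm_{\U}(A \sqcup_C B, X)$ equipped with $H_{\inr} : d \circ \inr = u$, with $H_r : r \circ d = v$, and with a coherence $2$-path; because the associators and unitors of $\U$ are reflexivities, that coherence reduces to $\ap_{r \circ {-}}(H_{\inr}) = \ap_{{-} \circ \inr}(H_r) \cdot S$. Since maps out of the pushout are determined by their restrictions along $\inl$ and $\inr$ plus a path over $C$, I would reorganize the $\Sigma$-type defining $\mathsf{fill}(S)$ by a sequence of $\Sigma$-type manipulations in the style of \cref{fillid}: the datum $d$ paired with $H_{\inr}$ pins the $\inr$-component to $u$, so what survives is exactly the component $w \coloneqq d \circ \inl$, a path $w \circ f = u \circ g$ over $C$, and the transported coherence — precisely the data of a filler for $S'$ with witness $S'$ as constructed above.

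The main obstacle I anticipate is bookkeeping the coherence $2$-paths across this reorganization. The mapping-out property cleanly matches $(d, H_{\inr})$ with the pair consisting of $w$ and the over-$C$ path, but the homotopy $H_r : r \circ d = v$ is itself a map out of the pushout and so must be decomposed via the dependent universal property into its $\inl$- and $\inr$-parts together with a path over $\glue$; one then has to check that, once $S'$ is substituted, the filler coherence of $\mathsf{fill}(S)$ corresponds under the reshuffling to that of $\mathsf{fill}(S')$. Aligning these two coherences amounts to a naturality computation for $\ap_v(\glue)$ against $S$, and that is where the genuine work lies; the remaining steps are formal rearrangements of iterated $\Sigma$-types.
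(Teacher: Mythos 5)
Your proposal is correct and follows essentially the same route as the paper's proof: reduce via \cref{LRcls} to the left lifting property, compose the given square with $g$ and $\glue$ to obtain a lifting problem for $f$ (contractible by \cref{ULP}), and then use pushout induction to identify $\mathsf{fill}(S)$ with the type of fillers of that composite square. The coherence bookkeeping you flag is exactly the content the paper compresses into its final equivalence, so there is no gap in substance.
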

\begin{proof} 
 We want to prove that for every commuting square
\[
\begin{tikzcd}
	B & E \\
	{A \sqcup_C B} & H
	\arrow["t", from=1-1, to=1-2]
	\arrow[""{name=0, anchor=center, inner sep=0}, "{{\inr}}"', from=1-1, to=2-1]
	\arrow[""{name=1, anchor=center, inner sep=0}, "v", from=1-2, to=2-2]
	\arrow["b"', from=2-1, to=2-2]
	\arrow["S"{description}, draw=none, from=0, to=1]
\end{tikzcd}
\] with $v \in \RI$, the type $\filll(S)$ is contractible. Consider the composite diagram 
\[\begin{tikzcd}
	C & B & E \\
	A & {A \sqcup_C B} & H
	\arrow["g", from=1-1, to=1-2]
	\arrow["f"', from=1-1, to=2-1]
	\arrow["t", from=1-2, to=1-3]
	\arrow["v", from=1-3, to=2-3]
	\arrow["{\inl}"', from=2-1, to=2-2]
	\arrow["b"', from=2-2, to=2-3]
\end{tikzcd}\] which commutes via the path $\gamma(x) \coloneqq \ap_b(\glue(x)) \cdot S(g(x))$ for all $x : C$. The type of diagonal fillers
$\filll\mleft(\gamma\mright)$ is contractible because $f \in \L$. By the universal property of pushouts, letting $\rec_{\sqcup}$ denote the cogap map, we find that $\filll(S)$ is equivalent to the tuples consisting of
$k   : A \to E$, $K_1 :  k \circ f \sim t \circ g$, $K_2  :  b \sim v \circ \rec_{\sqcup}(k,t,K_1)$, and a path $ K_2(\inr(x)) = S(x)$ for each $x : B$.
By the same universal property, we can turn the last two fields into pairs consisting of $T :    b \circ \inl \sim v \circ k $ and a path $T(f(x))  \cdot \ap_v(K_1(x)) = \ap_b(\glue(x)) \cdot S(g(x))$ for each $x : C$.
The resulting type of tuples is clearly equivalent to $\filll\mleft(\gamma\mright)$, and thus $\filll(S)$ is contractible.
\end{proof}

Let $L : \C \to \D$ and $R: \D \to \C$ be $0$-functors of wild categories and let $\left(\alpha, V_1, V_2\right) : L \dashv R$. 
Suppose that for all $f : \homm_{\C}(A,B)$, $g : \homm_{\D}(X,Y)$, and $d : \homm_{\D}(L{B}, X)$, the following hexagon commutes:
\[\begin{tikzcd}[column sep = 60]
	{\alpha(\mleft(g \circ d\mright) \circ L{f})} & {\alpha(g \circ \mleft(d \circ L{f}\mright))} \\
	{\alpha(g \circ d) \circ f} & {R{g} \circ \alpha(d \circ L{f})} \\
	{\mleft(R{g} \circ \alpha(d)\mright) \circ f} & {R{g} \circ \mleft(\alpha(d) \circ f\mright)}
	\arrow["{\ap_{\alpha}(\assoc(g, d, L{f}))}", equals, from=1-1, to=1-2]
	\arrow["{V_2(f,g \circ d)}", equals, from=2-1, to=1-1]
	\arrow["{V_1(g, d \circ L{f})}"', equals, from=2-2, to=1-2]
	\arrow["{\ap_{{-} \circ f}(V_1(g , d))}", equals, from=3-1, to=2-1]
	\arrow["{\assoc(R{g},\alpha(d),f)}"', equals, from=3-1, to=3-2]
	\arrow["{\ap_{R{g} \circ {-}}(V_2(f , d))}"', equals, from=3-2, to=2-2]
\end{tikzcd}\label{natscohadj} \tag{$\texttt{$V_1$-$V_2$-hex}$} \]
(Note that this hexagon is different from that of \cref{2coher}.) Intuitively, this coherence condition expresses that the two evident ways (mediated by associativity) of combining $V_1$ and $V_2$ for a proof that $\alpha$ is natural in both variables simultaneously are equal. 
As we'll see in \cref{cocomp}, our chief example of an adjunction satisfying \eqref{natscohadj} is the ordinary colimit left adjoint (valued in $\U$).

\begin{lemma}\label{adjsq}
Let $f : \homm_{\C}(A,B)$ and $g : \homm_{\D}(X,Y)$. Consider a commuting square in $\D$:
\[
\begin{tikzcd}
	{L{A}} & X \\
	{L{B}} & Y
	\arrow[""{name=0, anchor=center, inner sep=0}, "u", from=1-1, to=1-2]
	\arrow["{L{f}}"', from=1-1, to=2-1]
	\arrow["g", from=1-2, to=2-2]
	\arrow[""{name=1, anchor=center, inner sep=0}, "v"', from=2-1, to=2-2]
	\arrow["S"{description}, draw=none, from=0, to=1]
\end{tikzcd}
\] 
The type $\filll(S)$ is equivalent to the type of diagonal fillers of the square
\[\begin{tikzcd}[column sep = large]
	A &&& {R{X}} \\
	B &&& {R{Y}}
	\arrow["{{\alpha(u)}}", from=1-1, to=1-4]
	\arrow[""{name=0, anchor=center, inner sep=0}, "f"', from=1-1, to=2-1]
	\arrow[""{name=1, anchor=center, inner sep=0}, "{{R{g}}}", from=1-4, to=2-4]
	\arrow["{{\alpha(v)}}"', from=2-1, to=2-4]
	\arrow["{{V_2(f,v) \cdot \ap_{\alpha}(S) \cdot V_1(g,u)^{-1}}}"{description}, draw=none, from=1, to=0]
\end{tikzcd}
\]
\end{lemma}
\begin{proof}
Letting $\zeta \coloneqq V_2(f,v) \cdot \ap_{\alpha}(S) \cdot V_1(g,u)^{-1}$, we have the chain of equivalences
\begin{align*}
 & \ \sum_{d : \homm_{\D}(L{B}, X)}\sum_{H_u : u = d \circ L{f}}\sum_{H_v : v = g \circ d} \ap_{{-} \circ L{f}}(H_v) \cdot \assoc(g,d,L{f})  =  S \cdot \ap_{g \circ {-}}(H_u)
\\ \simeq & \
\sum_{d : \homm_{\D}(L{B}, X)}\sum_{H_u : u = d \circ L{f}}\sum_{H_v : v = g \circ d} \ap_{{-} \circ f}(\ap_{\alpha}(H_v) \cdot V_1(g , d)) \cdot \assoc(R{g},\alpha(d),f)  =  \zeta \cdot \ap_{R{g} \circ {-}}( \ap_{\alpha}(H_u)  \cdot V_2(f , d)) 
\\ \simeq & \
\sum_{d : \homm_{\D}(L{B}, X)}\sum_{H_u : \alpha(u) = \alpha(d \circ L{f})}\sum_{H_v : \alpha(v) = \alpha (g \circ d)} \ap_{{-} \circ f}(H_v \cdot V_1(g , d)) \cdot \assoc(R{g},\alpha(d),f)  =  \zeta \cdot \ap_{R{g} \circ {-}}(H_u \cdot V_2(f , d))
\\ \simeq & \
\sum_{d : \homm_{\D}(L{B}, X)}\sum_{H_u : \alpha(u) = \alpha(d) \circ f}\sum_{H_v : \alpha(v) = R{g} \circ \alpha(d)} \ap_{{-} \circ f}(H_v) \cdot \assoc(R{g},\alpha(d),f)  =  \zeta \cdot \ap_{R{g} \circ {-}}(H_u)  
\\ \simeq & \
\sum_{d : \homm_{\C}(B, R{X})}\sum_{H_u : \alpha(u) = d \circ f}\sum_{H_v : \alpha(v) = R{g} \circ d} \ap_{{-} \circ f}(H_v) \cdot \assoc(R{g},d,f)  =  \zeta \cdot \ap_{R{g} \circ {-}}(H_u) 
\end{align*}
The final three equivalences in this chain are induced by equivalences in the base types, while the first one comes from a fiberwise equivalence: Let $d : \homm_{\D}(L{B}, X)$, $H_u : u = d \circ L{f}$, and $H_v : v = g \circ d$.  Since $\alpha$ is an equivalence (hence embedding), the type  $\mathcal{T}  \coloneqq  \ap_{{-} \circ L{f}}(H_v) \cdot \assoc(g,d,L{f})  =  S \cdot \ap_{g \circ {-}}(H_u)$ is equivalent  to its image $\mathcal{T}_{\alpha}$ under $\alpha$. Further, we can recast the two endpoints of $\mathcal{S} \coloneqq \ap_{{-} \circ f}(\ap_{\alpha}(H_v) \cdot V_1(g , d)) \cdot \assoc(R{g},\alpha(d),f)  =  \zeta \cdot \ap_{R{g} \circ {-}}( \ap_{\alpha}(H_u)  \cdot V_2(f , d))$ as follows:
\[\begin{tikzcd}
	{V_2(f, v)  \cdot \ap_{\alpha}(\ap_{{-} \circ L{f}}(H_v)) \cdot V_2(f,g \circ d)^{-1} \cdot \ap_{{-} \circ f}(V_1(g , d)) \cdot \assoc(R{g},\alpha(d),f) } \\
	{\ap_{\alpha({-}) \circ f}(H_v) \cdot \ap_{{-} \circ f}(V_1(g , d)) \cdot \assoc(R{g},\alpha(d),f) } \\
	{\zeta \cdot \ap_{R{g} \circ \alpha({-})}(H_u)  \cdot \ap_{R{g} \circ {-}}(V_2(f , d)) } \\
	{V_2(f,v) \cdot \ap_{\alpha}(S) \cdot  \ap_{\alpha}(\ap_{g \circ {-}}(H_u)) \cdot V_1(g,d\circ L{f})^{-1} \cdot \ap_{R{g} \circ {-}}(V_2(f , d)) }
	\arrow["{\textit{via homotopy naturality of $V_2$ at $H_v$}}", equals, from=1-1, to=2-1]
	\arrow[equals, from=2-1, to=3-1]
	\arrow["{\textit{via homotopy naturality of $V_1$ at $H_u$}}", equals, from=3-1, to=4-1]
\end{tikzcd}\] Thus, after some rearranging, we see that $\mathcal{S}$ is equivalent to $\mathcal{T}_{\alpha}$ by the coherence \eqref{natscohadj}.
\end{proof}

\begin{corollary}[{\cite[\href{https://github.com/PHart3/colimits-agda/blob/v0.4.0/HoTT-Agda/core/lib/wild-cats/Adj-OFS-wc.agda}{Adj-OFS-wc}]{agda-colim-TR}}]\label{fspres}
Let $\C$ and $\D$ be univalent wild bicategories endowed with OFS's $\left(\L_1, \RI_1\right)$ and $\left(\L_2, \RI_2\right)$, respectively. Then $R$ preserves $\RI$ if and only if $L$ preserves $\L$. 
\end{corollary}
\begin{proof}
Suppose that $R$ preserves $\RI$. Let $f : \homm_{\C}(A, B)$ and $f \in \L_1$. Consider a commuting square
\[
\begin{tikzcd}
	{L{A}} & X \\
	{L{B}} & Y
	\arrow[""{name=0, anchor=center, inner sep=0}, "u", from=1-1, to=1-2]
	\arrow["{L{f}}"', from=1-1, to=2-1]
	\arrow["g", from=1-2, to=2-2]
	\arrow[""{name=1, anchor=center, inner sep=0}, "v"', from=2-1, to=2-2]
	\arrow["S"{description}, draw=none, from=0, to=1]
\end{tikzcd}
\] where $g \in \RI_2$. By \cref{LRcls}, if $\filll(S)$ is contractible, then $L{f} \in \L_2$. By \cref{adjsq}, $\filll(S)$ is equivalent to the type of diagonal fillers of a square from $f$ to $R{g}$. By \cref{LRcls} again, this is contractible because $R{g} \in \RI_1$.

The converse is formally dual.
\end{proof}

\subsection{Coslices of a universe} \label{coscat}

Let $\U$ be a universe and $A$ be a type. Let $X, Y : A/\U \coloneqq \sum_{X : \U}\left(A \to X\right)$. Consider the type
\[
X \to_A Y \ \coloneqq \ \sum_{h :  \pr_1(X) \to \pr_1(Y)}h \circ \pr_2(X) \sim \pr_2(Y) 
\] of maps from $X$ to $Y$. (We sometimes call such maps \emph{$A$-maps}.) For example, 
$X \to_{\1} Y$ is equivalent to the type of pointed maps from $X$ to $Y$, i.e., $\left(\pr_1(X), \pr_2(X)(\ast)\right) \to_{\ast} \left(\pr_1(Y), \pr_2(Y)(\ast)\right)$.
Now, for all $g: X \to_A Y$ and $h : Y \to_A Z$. define their composite
\[
h \circ g \coloneqq \left(\pr_1(h) \circ \pr_1(g), \lambda{a}.\ap_{\pr_1(h)}(\pr_2(g)(a)) \cdot \pr_2(h)(a) \right) \ :  \ X \to_A Z
\] We also have an evident identity $A$-map $X \to_A X$ for each $X : A/\U$. The foregoing data gives us a  univalent wild bicategory, which we call the \textit{coslice of $\U$ under $A$}, written abusively as $A/\U$. 

\begin{example}
We call $\1/\U$ the wild category of \emph{pointed types}, sometimes denoted by $\U^{\ast}$.
\end{example}

\begin{prop}\label{Aiso}
For all $X,Y : A/\U$, we have an equivalence
\[
\left(X = Y\right)  \  \simeq   \  \sum_{k : \pr_1(X) \xrightarrow{\simeq} \pr_1(Y)}k \circ \pr_2(X) \sim \pr_2(Y)
\]
\end{prop}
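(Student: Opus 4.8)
The plan is to prove this by the structure identity principle for the $\Sigma$-type $A/\U \equiv \sum_{T : \U}(A \to T)$, unfolding the identity type one component at a time. First I would invoke the standard characterization of paths in a $\Sigma$-type to obtain
\[
(X = Y) \ \simeq \ \sum_{p \,:\, \pr_1(X) = \pr_1(Y)} \transport(p, \pr_2(X)) = \pr_2(Y),
\]
where $\transport$ is taken in the type family $T \mapsto (A \to T)$ over $\U$. The remaining work is to rewrite both the base $p : \pr_1(X) = \pr_1(Y)$ and the fiber into the claimed form.

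For the base, I would use that $\U$ is univalent (the cited example): the map $\equiveq : (\pr_1(X) = \pr_1(Y)) \to (\pr_1(X) \simeq \pr_1(Y))$ is an equivalence, so reindexing the total sum along $\equiveq^{-1}$ replaces the quantifier over paths $p$ by a quantifier over equivalences $k : \pr_1(X) \simeq \pr_1(Y)$. For the fiber, I would compute the transport: by path induction one shows $\transport(p, \iota) = \equiveq(p) \circ \iota$ for the family $T \mapsto (A \to T)$, i.e.\ transporting a map along $p$ is postcomposition with the underlying function of $\equiveq(p)$. Combining this with the univalence computation rule $\equiveq(\equiveq^{-1}(k)) = k$, under the reindexing the fiber over $k$ becomes $k \circ \pr_2(X) = \pr_2(Y)$, where $k$ denotes the underlying function of the equivalence.

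Finally, I would apply function extensionality pointwise in the fiber: the equality of maps $k \circ \pr_2(X) = \pr_2(Y)$ is equivalent, via $\happly$ (whose inverse is $\funext$), to the homotopy $k \circ \pr_2(X) \sim \pr_2(Y)$. Chaining these equivalences yields
\[
(X = Y) \ \simeq \ \sum_{k \,:\, \pr_1(X) \xrightarrow{\simeq} \pr_1(Y)} k \circ \pr_2(X) \sim \pr_2(Y),
\]
as required.

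I expect the only delicate step to be the bookkeeping that matches the transported map against postcomposition with $k$: one must carefully combine the transport computation for $T \mapsto (A \to T)$ with the univalence computation rule so that the underlying function of $k = \equiveq(p)$, rather than $\equiveq(p)$ as a bare equivalence, appears in the fiber. Everything else is the routine combination of the $\Sigma$-path characterization, univalence, and function extensionality, so no genuinely hard obstacle arises.
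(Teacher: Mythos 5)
Your proof is correct: the $\Sigma$-path characterization, reindexing along univalence, the transport computation for the family $T \mapsto (A \to T)$, and function extensionality chain together exactly as you describe, and the bookkeeping issue you flag (matching $\transport(\equiveq^{-1}(k), \pr_2(X)) = \pr_2(Y)$ against postcomposition with the underlying function of $k$) is genuine but is settled by path induction together with the computation rule $\equiveq(\equiveq^{-1}(k)) = k$. However, your route is not the paper's. The paper states \cref{Aiso} without proof, but its uniform method for all such path-space characterizations (see \cref{ptdext}, \cref{nattreq}, \cref{fillid}, each proved ``By \cref{SIP}'') is the structure identity principle with identity systems (\cref{idsysdef}, \cref{idsys}): univalence makes the family $T \mapsto \mleft(\pr_1(X) \simeq T\mright)$ an identity system on $\mleft(\U, \pr_1(X)\mright)$, function extensionality makes $\sum_{f : A \to \pr_1(X)} \mleft(\pr_2(X) \sim f\mright)$ contractible, and \cref{SIP} then immediately exhibits $\mleft(T,f\mright) \mapsto \sum_{k : \pr_1(X) \simeq T}\mleft(k \circ \pr_2(X) \sim f\mright)$ as an identity system on $\mleft(A/\U, X\mright)$, which is precisely the proposition. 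The trade-off: the SIP argument never computes a transport and decomposes into two independent contractibility facts, a pattern that scales to the far more elaborate structures later in the paper (cocones, natural transformations, factorizations); your direct argument is more explicit about what the transported basepoint becomes, which is valuable when one needs the computational behavior of the resulting equivalence rather than merely its existence.
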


\begin{definition} 
Let $f,g : X \to_A Y$.  An  \textit{$A$-homotopy $f \sim_A g$} between $f$ and $g$ is a homotopy
$H : \pr_1(f) \sim \pr_1(g)$ along with, for all $a : A$, a commuting triangle
\[\begin{tikzcd}
	{\pr_1(f)(\pr_2(X)(a))} && {\pr_1(g)(\pr_2(X)(a))} \\
	& {\pr_2(Y)(a)}
	\arrow["{H(\pr_2(X)(a))}", equals, from=1-1, to=1-3]
	\arrow["{\pr_2(f)(a)}"', equals, from=1-1, to=2-2]
	\arrow["{\pr_2(g)(a)}", equals, from=1-3, to=2-2]
\end{tikzcd}\]
\end{definition}

\begin{lemma}\label{ptdext}
 The canonical function $
 \happly_A  :  \left(f = g\right) \to \left(f \sim_A g\right)$ is an equivalence For all $f,g$.
\end{lemma}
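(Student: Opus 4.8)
The plan is to treat this as an instance of the structure identity principle for the $\Sigma$-type $X \to_A Y = \sum_{h : \pr_1(X) \to \pr_1(Y)}h \circ \pr_2(X) \sim \pr_2(Y)$: I will build an equivalence $(f = g) \simeq (f \sim_A g)$ out of the standard characterizations of identity types in $\Sigma$-types and in function types, and then identify this composite with the canonical map $\happly_A$ by a single path induction. Throughout I abbreviate $f_0 \coloneqq \pr_1(f)$, $f_1 \coloneqq \pr_2(f)$, $a_X \coloneqq \pr_2(X)$, $a_Y \coloneqq \pr_2(Y)$, and likewise for $g$, and I write $Q(h) \coloneqq \prod_{a : A}\mleft(h(a_X(a)) = a_Y(a)\mright)$ for the fibre family.

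First I would apply the path characterization for $\Sigma$-types to obtain
\[
(f = g) \ \simeq \ \sum_{p : f_0 = g_0}\transport^{Q}(p, f_1) = g_1.
\]
Applying strong function extensionality (available for free in $\mathsf{MLTT + Colim}$) to the base, I replace $p : f_0 = g_0$ by a homotopy $H \coloneqq \happly(p) : f_0 \sim g_0$. I then compute the transported term pointwise: since the index type $A$ does not vary, transport in the dependent-function family $Q$ is given componentwise by transport in each fibre family $h \mapsto \mleft(h(a_X(a)) = a_Y(a)\mright)$, in which the right endpoint $a_Y(a)$ is constant, so the transport-in-identity-types lemma gives
\[
\transport^{Q}(p, f_1)(a) \ = \ \ap_{\ev_{a_X(a)}}(p)^{-1} \cdot f_1(a) \ = \ H(a_X(a))^{-1} \cdot f_1(a).
\]
Finally, applying function extensionality to the fibre equation $\transport^{Q}(p, f_1) = g_1$ converts it into $\prod_{a : A}\transport^{Q}(p, f_1)(a) = g_1(a)$, which after the displayed computation is exactly $\prod_{a : A}H(a_X(a))^{-1} \cdot f_1(a) = g_1(a)$, i.e.\ the coherence datum of an $A$-homotopy. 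Assembling these steps produces an equivalence $e : (f = g) \simeq (f \sim_A g)$.

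It remains to identify $e$ with $\happly_A$. Both are functions out of the identity type $(f = g)$, so for the family $C(g, p) \coloneqq \mleft(\happly_A(p) = e(p)\mright)$ path induction reduces the goal to $C(f, \refl_f)$. Tracing $\refl_f$ through the chain gives $H = \happly(\refl_f) = \lambda a.\,\refl$ with coherence datum $\refl^{-1} \cdot f_1(a) = f_1(a)$, which is precisely the reflexive $A$-homotopy and hence matches $\happly_A(\refl_f)$ after the left unit law. Thus $\happly_A \sim e$, so $\happly_A$ is an equivalence. The only genuine bookkeeping is the transport computation above --- getting the inversion and the side on which $H(a_X(a))$ sits correct, and confirming that transport over the fixed index $A$ is pointwise; every other step is a mechanical composition of standard equivalences, exactly as in the analogous characterization $X = Y$ of \cref{Aiso}.
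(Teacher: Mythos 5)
Your proof is correct. The crucial step is the transport computation: transport of $\pr_2(f)$ along $p : \pr_1(f) = \pr_1(g)$ in the fibre family is pointwise $\happly(p)(\pr_2(X)(a))^{-1} \cdot \pr_2(f)(a)$, and this lands exactly on the coherence datum in the definition of $f \sim_A g$, with the inverse on the correct side. However, your route is genuinely different from the paper's, whose entire proof is a one-line appeal to \cref{SIP}: there one exhibits the family $g \mapsto (f \sim_A g)$, pointed by the reflexive $A$-homotopy, as an identity system on $\mleft(X \to_A Y, f\mright)$ (the required contractibility of the total space being exactly your function-extensionality-plus-singleton computation), and then \cref{idsys} says the path-induction-defined map $\happly_A$ is \emph{automatically} a fibrewise equivalence. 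You instead assemble an explicit equivalence $e$ by composing the $\Sigma$- and $\Pi$-type path characterizations and must then separately identify $e$ with $\happly_A$ by path induction; that final identification --- matching two canonical inhabitants of $\refl^{-1} \cdot \pr_2(f)(a) = \pr_2(f)(a)$ at $\refl_f$ --- is the only place where real $2$-path bookkeeping occurs, and it is precisely the step that the identity-system packaging renders unnecessary. If you want to streamline your own argument, note that your chain of equivalences already shows that $\sum_{g : X \to_A Y}\mleft(f \sim_A g\mright) \simeq \sum_{g : X \to_A Y}\mleft(f = g\mright)$ is contractible, so you can invoke \cref{idsys} directly and never trace $\refl_f$ through $e$ at all; this recovers the paper's proof from your computation.
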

\begin{proof}
By \cref{SIP}.
\end{proof}

\begin{notation}
Define $\left\langle{H,p}\right\rangle \coloneqq \happly_A^{-1}(H,p)$.
\end{notation}

\subsubsection*{Inherited pre-reflective subcategories}

Let $\left(P, \modal, \eta\right)$ be a pre-reflective subcategory of $\U$ (\cref{reflsubdef}).

\begin{lemma}
The data
\begin{align*}
P_A(X) & \ \coloneqq \ P(\pr_1(X))
\\ \modal_A(X) & \ \coloneqq \ \left(\modal(\pr_1(X)), \eta_{\pr_1(X)} \circ \pr_2(X) \right)
\\ \eta_A(X) & \ \coloneqq \ \left(\eta_{\pr_1(X)}, \refl_{\eta_{\pr_1(X)}(\pr_2(X)({-})) } \right)
\end{align*}
forms a pre-reflective subcategory of $A/\U$, which we denote by $\left(A/\U\right)_P$.
\end{lemma}
\begin{proof}
For all $Y$ with $P(\pr_1(Y))$, the following two functions are mutually inverse:
\begin{align*}
& \alpha_{P,A} \ : \ \left(\modal_A{X} \to_A Y\right)  \to  \left(X \to_A Y\right)
\\ & \alpha_{P,A}(f, f_p) \ \coloneqq \ \left(f \circ \eta_{\pr_1(X)}, f_p \right)
\\ & \rec_{P,A} \ : \ \left(X \to_A Y\right)   \to  \left(\modal_A{X} \to_A Y\right)
\\ &  \rec_{P,A}(g, g_p)   \ \coloneqq \ \left(\rec_{\modal}(g), \lambda{a}.\beta_{\eta}(g,\pr_2(X)(a)) \cdot g_p(a) \right)
\end{align*}
(where $\beta_{\eta}$ has type $\prod_{X, Y : \U}\prod_{\underline{\hspace{1.5mm}} : P(Y)}\prod_{g : X \to Y}\rec_{\modal}(g) \circ \eta_{\pr_1(X)} \sim g$).
\end{proof}

Now consider the prime example of a modal operator: the $n$-truncation $\lN{-}\rN_n$ for each $n \geq {-2}$~\cite[Example 1.6]{RS}. The wild functor $\lN{-}\rN_n : A/\U \to \left(A/\U\right)_{\leq n}$ is left adjoint to the forgetful functor. By \cref{LAPC}, we find that it preserves colimits on coslices of $\U$---a fact we record here.

\begin{prop} \label{modpres}
The left adjoint $\lN{-}\rN_n : A/\U \to \left(A/\U\right)_{\leq n}$ is $2$-coherent (\cref{2coher}), hence preserves colimits.
\end{prop}

\subsection{Diagrams in coslices}

Let $\Gamma$ be a graph. An  \emph{$A$-diagram over $\Gamma$} is a family $F : \Gamma_0 \to A/\U$ of objects in $A/\U$ along with a map $F_{i,j,g} : F_i \to_A F_j$ for all $i,j : \Gamma_0$ and $g : \Gamma_1(i,j)$. 

\begin{example}
  For each $D: A/\U$, the \emph{constant diagram $\const_{\Gamma}(D)$ at $D$} is defined by $\left(\const_{\Gamma}(D)\right)_0(i) \coloneqq D$ and $\left(\const_{\Gamma}(D)\right)_1(i,j,g) \coloneqq \idd_D$. We often write just $D$ for $\const_{\Gamma}(D)$.
\end{example}

\noindent Let $F$ be an $A$-diagram over $\Gamma$ and $C : A/\U$. A \emph{cocone under $F$ on $C$ / with tip $C$} is a  family of maps $r : \prod_i F_i \to_A C$ together with
\begin{enumerate}[label=(\roman*)]
\item for all $i,j : \Gamma_0$ and $g : \Gamma_1(i,j)$, a homotopy $h_{i,j,g} : \pr_1(r_j) \circ \pr_1(F_{i,j,g}) \sim \pr_1(r_i)  $
\item for each $a : A$, a path $h_{i,j,g}(\pr_2(F_i)(a))^{-1} \cdot \ap_{\pr_1(r_j)}(\pr_2(F_{i,j,g})(a)) \cdot \pr_2(r_j)(a) = \ap_{\pr_1(r_i)}(a)$
\end{enumerate}
By \cref{ptdext}, the second collection of data here is equivalent to a path $r_j \circ F_{i,j,g} = r_i$. We call $C$ the \emph{tip} of the cocone, denoted by $\tip$. We denote the type of cocones under $F$ on $C$ by $\mathsf{Cocone}_F(C)$.

\begin{lemma}\label{nattreq}
For all $\left(\alpha, p\right), \left(\rho, q\right) : \mathsf{Cocone}_F(C)$, 
$\left(\alpha, p\right) = \left(\rho, q\right)$ is equivalent to the type of tuples
\begin{align*}
& W \ : \ \prod_{i : \Gamma_0} \pr_1(\alpha_i) \sim \pr_1(\rho_i)
\\ &  u \ : \ \prod_{i : \Gamma_0}\prod_{a : A} W_i(\pr_2(F_i)(a))^{-1} \cdot \pr_2(\alpha_i)(a) = \pr_2(\rho_i)(a) 
\\ &   \textit{for all $i,j : \Gamma_0$ and $g: \Gamma_1(i,j)$,} 
\\ & \ S_1 \ : \ \prod_{x : \pr_1(F_i)}W_j(\pr_1(F_{i,j,g})(x))^{-1} \cdot \pr_1(p_{i,j,g})(x) \cdot W_i(x) = \pr_1(q_{i,j,g})(x) 
\\ & \ S_2 \ : \   \prod_{a :A}  \Xi(W, u, p_{i,j,g}, a)  = \ap_{{-}^{-1} \cdot \ap_{\pr_1(\rho_j)}(\pr_2(F_{i,j,g})(a))  \cdot \pr_2(\rho_j)(a)}(S_1(\pr_2(F_i)(a))) \cdot \pr_2(q_{i,j,g})(a) 
\end{align*} Here, $\Xi(W, u, p_{i,j,g}, a)$ is the chain of paths
\[
\hspace*{-4mm}
\begin{tikzcd}
	{\left( W_j(\pr_1(F_{i,j,g})(\pr_2(F_i)(a)))^{-1} \cdot \pr_1(p_{i,j,g})(\pr_2(F_i)(a)) \cdot W_i(\pr_2(F_i)(a)) \right)^{-1} \cdot \ap_{\pr_1(\rho_j)}(\pr_2(F_{i,j,g})(a))  \cdot \pr_2(\rho_j)(a)} \\
	{ W_i(\pr_2(F_i)(a))^{-1} \cdot \pr_1(p_{i,j,g})(\pr_2(F_i)(a))^{-1} \cdot \ap_{\pr_1(\alpha_j)}(\pr_2(F_{i,j,g})(a)) \cdot \pr_2(\alpha_j)(a)} \\
	{ W_i(\pr_2(F_i)(a))^{-1} \cdot \pr_2(\alpha_i)(a)} \\
	{\pr_2(\rho_i)(a)}
	\arrow["{{\textit{via $u_j(a)$}}}", equals, from=1-1, to=2-1]
	\arrow["{\textit{via $\pr_2(p_{i,j,g})(a)$}}", equals, from=2-1, to=3-1]
	\arrow["{{u_i(a)}}", equals, from=3-1, to=4-1]
\end{tikzcd}
\]
\end{lemma}
\begin{proof}
By \cref{SIP}.
\end{proof}

\begin{note} \label{Interch} Let $\Gamma$ be a graph. Let $F$ be an $A$-diagram over $\Gamma$ and let $C : A/\U$. We have the following two equivalent descriptions of $\mathsf{Cocone}_F(C)$.
\begin{enumerate}[label=(\arabic*)]
\item For all $A$-diagrams $F$ and $G$ over $\Gamma$, the type of \textit{natural transformations} from $G$ to $H$ is
\[
G \Rightarrow_A H \  \coloneqq \ \sum_{\alpha : \prod_{i : \Gamma_0}\pr_1(G_i) \to_A \pr_1(H_i)}\prod_{i,j : \Gamma_0}\prod_{g : \Gamma_1(i,j)} H_{i,j,g} \circ \alpha_i \sim_A \alpha_j \circ G_{i,j,g}
\]
We have an evident equivalence $\mathsf{Cocone}_F(C) \simeq \left(F \Rightarrow_A \const_{\Gamma}(C) \right)$. 
\item For every $\U$-valued diagram $G$ over $\Gamma$, recall the \emph{standard limit} of $F$~\cite[Definition 4.2.7]{Avi}:
\[
\limm(G) \  \coloneqq \  \sum_{\alpha : \prod_{i : \Gamma_0}G_i}\prod_{i,j : \Gamma_0}\prod_{g : \Gamma_1(i,j)} G_{i,j,g}(\alpha_i) = \alpha_j  
\] We remark that this is functorial in $G$: the action on maps sends $\left(k, K\right) : G \Rightarrow H$ to the function $\limm(k ,K) : \limm(G) \to \limm(H)$ defined by $\left(\alpha, D\right) \mapsto \left(\lambda{i}.k_i(\alpha_i), \lambda{i}\lambda{j}\lambda{g}.D_{i,j,g}(\alpha_i) \cdot \ap_{k_j}(K_{i,j,g}) \right)$. We have an evident equivalence $\mathsf{Cocone}_F(C) \simeq \limm_{i : \Gamma^{\op}}(F_i \to_A C)$.
\end{enumerate}
\end{note}

\section{Graphs and trees}\label{trees}

Let $\U$ and $\V$ be universes. A \textit{(directed) graph (relative to $\U$ and $\V$)} is a pair $\Gamma$ consisting of a type $\Gamma_0 : \U$ and a type family $\Gamma_1 : \Gamma_0 \to \Gamma_0 \to \V$.

\begin{definition} 
Let  $\Gamma$  be a graph. 
 The \emph{graph quotient $\left\lvert{\Gamma}\right\rvert$ of $\Gamma$} is the HIT  generated by 
 $\left\lvert{-}\right\rvert  :  \Gamma_0 \to \left\lvert{\Gamma}\right\rvert$ and 
 $ \glue :  \prod_{x,y : \Gamma_0}\Gamma_1(x,y) \to \left\lvert{x}\right\rvert = \left\lvert{y}\right\rvert$. We say that $\Gamma$ is a \textit{tree} if $\left\lvert{\Gamma}\right\rvert$ is contractible.
\end{definition}

\noindent In \cref{Constr}, we will see that $A$-colimits interact nicely with trees.

\begin{example} $ $
\begin{enumerate}[label=(\arabic*)]
\item Both $\N$ and $\Z$ are trees when viewed as graphs.
\item The span $l \leftarrow m \rightarrow r$ is a tree where $l, m, r$ denote the elements of $\mathsf{Fin}(3)$.
\end{enumerate}
\end{example}

Let $\Gamma$ be a graph. For all $i, j : \Gamma_0$, define the type $\mathcal{W}_{\Gamma}(i,j)$ of \textit{walks from $i$ to $j$} as the indexed inductive type with constructors  $\mathsf{nil} : \prod_{i : \Gamma_0}\mathcal{W}_{\Gamma}(i, i)$ and
$\mathsf{cons}  :  \prod_{i,j, k : \Gamma_0} \Gamma_1(i, j) \to \mathcal{W}_{\Gamma}(j, k)  \to  \mathcal{W}_{\Gamma}(i, k)$.

\begin{definition}
Let $j_0 : \Gamma_0$. We say that $\Gamma$ is a \textit{combinatorial tree (at $j_0$)} if
\begin{itemize}
\item for every $i : \Gamma_0$, we have an element $\nu(i, j_0) : \mathcal{W}_{\Gamma}(i, j_0)$
\item for all $i, j : \Gamma_0$ and $g : \Gamma_1(i, j)$, we have an element
$\sigma_g  :   \nu(i, j_0) = \mathsf{cons}(g, \nu(j, j_0))$.
\end{itemize}
\end{definition}
 
\begin{lemma}\label{stree-eq}
For all $ i, j : \Gamma_0$ and $z : \mathcal{W}_{\Gamma}(i, j)$, we have an element
$\tau(z) :   \left\lvert{i}\right\rvert = \left\lvert{j}\right\rvert$.
\end{lemma}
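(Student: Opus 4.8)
The plan is to define $\tau$ by the recursion principle of the indexed inductive family $\mathcal{W}_{\Gamma}$. Note first that the intended output type $\mleft\lvert{i}\mright\rvert = \mleft\lvert{j}\mright\rvert$ depends on the walk $z$ only through its endpoints $i$ and $j$, so the motive is non-dependent and it suffices to invoke the (non-dependent) recursor, supplying one clause per constructor of $\mathcal{W}_{\Gamma}$.

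For the base constructor $\mathsf{nil}_i : \mathcal{W}_{\Gamma}(i, i)$, the two endpoints coincide, so I set $\tau(\mathsf{nil}_i) \coloneqq \refl_{\mleft\lvert{i}\mright\rvert} : \mleft\lvert{i}\mright\rvert = \mleft\lvert{i}\mright\rvert$. For the inductive constructor $\mathsf{cons}(g, w)$, where $g : \Gamma_1(i, j)$ and $w : \mathcal{W}_{\Gamma}(j, k)$ so that $\mathsf{cons}(g, w) : \mathcal{W}_{\Gamma}(i, k)$, the recursor hands me the path $\tau(w) : \mleft\lvert{j}\mright\rvert = \mleft\lvert{k}\mright\rvert$ associated to the tail, while the path constructor of the graph quotient $\sfrac{\Gamma_0}{\Gamma_1}$ supplies $\quotrel(g) : \mleft\lvert{i}\mright\rvert = \mleft\lvert{j}\mright\rvert$. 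I then set $\tau(\mathsf{cons}(g, w)) \coloneqq \quotrel(g) \cdot \tau(w) : \mleft\lvert{i}\mright\rvert = \mleft\lvert{k}\mright\rvert$, the concatenation being well-typed precisely because the endpoint $j$ is shared.

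I expect no genuine obstacle here: the construction is a direct structural recursion whose only ingredients are the path constructor $\quotrel$ and concatenation of paths in $\sfrac{\Gamma_0}{\Gamma_1}$. The single point requiring care is that the recursion is over the indexed family, so one must confirm that the endpoints match in each clause; but this is forced by the typing of $\mathsf{nil}$ and $\mathsf{cons}$ and so is automatic.
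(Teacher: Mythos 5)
Your proposal is correct and is essentially identical to the paper's proof: both proceed by structural recursion on $\mathcal{W}_{\Gamma}$, setting $\tau(\mathsf{nil}_i) \coloneqq \refl_{\mleft\lvert{i}\mright\rvert}$ and $\tau(\mathsf{cons}(g, w)) \coloneqq \quotrel(g) \cdot \tau(w)$. No differences worth noting.
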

\begin{proof}
We proceed by induction on $\mathcal{W}_{\Gamma}$. 
For every $i : \Gamma_0$, define $\tau(\mathsf{nil}_i) \coloneqq \refl_{\left\lvert{i}\right\rvert}$. 
Next, let $i, j, k : \Gamma_0$, $g : \Gamma_1(i,j)$, and $z : \mathcal{W}_{\Gamma}(j, k)$. Suppose we've defined $\tau(z)$. Define $\tau(\mathsf{cons}(g, z)) \coloneqq  \glue(g) \cdot \tau(z)$.
\end{proof}

\begin{lemma}
Every combinatorial tree is a tree.
\end{lemma}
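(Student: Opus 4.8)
The plan is to show directly that the geometric realization $\mleft\lvert{\Gamma}\mright\rvert = \sfrac{\Gamma_0}{\Gamma_1}$ is contractible, taking $\mleft\lvert{j_0}\mright\rvert$ as the center of contraction and constructing a term of type $\prod_{w : \mleft\lvert{\Gamma}\mright\rvert}\mleft(\mleft\lvert{j_0}\mright\rvert = w\mright)$ by induction on the graph quotient. Since the family $w \mapsto \mleft(\mleft\lvert{j_0}\mright\rvert = w\mright)$ lands in ordinary types and $\mleft\lvert{\Gamma}\mright\rvert$ is a $1$-HIT, this induction requires only a point case and a path case, with no higher coherence to verify. The engine of the argument is the function $\tau$ from \cref{stree-eq}, which turns walks into paths in $\mleft\lvert{\Gamma}\mright\rvert$, together with the combinatorial-tree data $\nu$ and $\sigma$.

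For the point case, I would send each $i : \Gamma_0$ to $\tau(\nu(i, j_0))^{-1}$, which has type $\mleft\lvert{j_0}\mright\rvert = \mleft\lvert{i}\mright\rvert$ since $\nu(i,j_0) : \mathcal{W}_{\Gamma}(i, j_0)$ and $\tau(\nu(i,j_0)) : \mleft\lvert{i}\mright\rvert = \mleft\lvert{j_0}\mright\rvert$. For the path case, given $x, y : \Gamma_0$ and $g : \Gamma_1(x,y)$, the induction principle demands that the two point-case values agree over $\quotrel(g)$, i.e.\ that $\transport^{w \mapsto \mleft(\mleft\lvert{j_0}\mright\rvert = w\mright)}\mleft(\quotrel(g), \tau(\nu(x,j_0))^{-1}\mright) = \tau(\nu(y,j_0))^{-1}$. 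Computing transport in an identity type of this shape (post-composition with the path), this reduces to the equation
\[
\tau(\nu(x,j_0))^{-1} \cdot \quotrel(g) = \tau(\nu(y, j_0))^{-1}.
\]

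To discharge this equation I would apply $\ap_{\tau}$ to the combinatorial-tree witness $\sigma_g : \nu(x,j_0) = \cons(g, \nu(y,j_0))$, obtaining $\tau(\nu(x,j_0)) = \tau(\cons(g, \nu(y,j_0)))$. The computation rule for $\tau$ on $\cons$, established in the proof of \cref{stree-eq}, rewrites the right-hand side as $\quotrel(g) \cdot \tau(\nu(y,j_0))$, yielding $\tau(\nu(x,j_0)) = \quotrel(g) \cdot \tau(\nu(y,j_0))$. Inverting both sides and cancelling $\quotrel(g)$ gives exactly the displayed equation.

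I expect no genuine obstacle here: the content is the correct choice of center and point-case paths, after which the path case is a routine transport-in-identity-type simplification followed by a short groupoid manipulation driven by $\sigma_g$ and the $\tau$-on-$\cons$ rule. The only point demanding care is bookkeeping the direction of the paths (so that the inverses and the cancellation of $\quotrel(g)$ line up), which is why I fix the convention $\prod_{w}\mleft(\mleft\lvert{j_0}\mright\rvert = w\mright)$ at the outset rather than its reverse.
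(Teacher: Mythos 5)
Your proposal is correct and matches the paper's own proof essentially step for step: quotient induction with point case $\tau(\nu(i,j_0))$ and path case discharged by $\sigma_g$ together with the computation rule $\tau(\cons(g,z)) \equiv \quotrel(g) \cdot \tau(z)$. The only (cosmetic) difference is that the paper proves $\prod_{x}\mleft(x = \mleft\lvert{j_0}\mright\rvert\mright)$ rather than $\prod_{w}\mleft(\mleft\lvert{j_0}\mright\rvert = w\mright)$, so its transport acts by precomposition with $\quotrel(g)^{-1}$ instead of your postcomposition with $\quotrel(g)$.
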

\begin{proof}
Let $\Gamma$ be a combinatorial tree. It suffices to prove that for every $x : \left\lvert{\Gamma}\right\rvert$, $ x = \left\lvert{j_0}\right\rvert $. We proceed by induction on graph quotients. For each $i : \Gamma_0$, we have
$\tau(\nu(i, j_0))  :   \left\lvert{i}\right\rvert = \left\lvert{j_0}\right\rvert$  by \cref{stree-eq}. Since $\Gamma$ is a combinatorial tree, we also see that for all $i, j : \Gamma_0$ and $g : \Gamma_1(i,j)$, 
\begin{align*}
& \ \transport^{x \mapsto x = \left\lvert{j_0}\right\rvert}(\glue(g), \tau(\nu(i, j_0)))  &
\\ = & \ \glue(g)^{-1} \cdot \tau(\nu(i, j_0))
\\  = & \ \glue(g)^{-1} \cdot \tau(\mathsf{cons}(g, \nu(j, j_0)))
\\  \equiv & \ \glue(g)^{-1} \cdot \glue(g) \cdot  \tau(\nu(j, j_0))
\\  = & \ \tau(\nu(j, j_0)) \qedhere
\end{align*}
\end{proof}

\begin{corollary}
Every \emph{directed tree}---in the sense of Rijke~\cite[\href{https://unimath.github.io/agda-unimath/trees.directed-trees.html}{Directed trees}]{agda-unimath}---is a tree.
\end{corollary}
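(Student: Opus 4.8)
The plan is to reduce to the preceding lemma by showing that every directed tree, in the sense of agda-unimath, carries the structure of a combinatorial tree. Recall that a directed tree there is a directed graph $\Gamma$ equipped with a root $r : \Gamma_0$ such that, for every vertex $i : \Gamma_0$, the type of walks from $i$ to $r$ is contractible. The first thing I would observe is that the notion of walk used in that definition agrees definitionally with our $\mathcal{W}_{\Gamma}$: both are the reflexive--transitive closure of $\Gamma_1$ generated by a nil constructor and a $\mathsf{cons}$ constructor of the same variance, taking an edge $\Gamma_1(i,j)$ and a walk $\mathcal{W}_{\Gamma}(j,k)$ to a walk $\mathcal{W}_{\Gamma}(i,k)$. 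Hence the hypothesis furnishes, for each $i$, a proof that $\mathcal{W}_{\Gamma}(i, r)$ is contractible.

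Next I would equip $\Gamma$ with combinatorial-tree data by taking $j_0 \coloneqq r$. For the walks, I set $\nu(i, j_0)$ to be the center of contraction of $\mathcal{W}_{\Gamma}(i, r)$. For the coherence paths, fix an edge $g : \Gamma_1(i, j)$. Then $\mathsf{cons}(g, \nu(j, j_0))$ is again an element of the contractible type $\mathcal{W}_{\Gamma}(i, r)$, so the contraction supplies an identification $\sigma_g : \nu(i, j_0) = \mathsf{cons}(g, \nu(j, j_0))$, which is exactly the datum required. This exhibits $\Gamma$ as a combinatorial tree, and applying the preceding lemma (every combinatorial tree is a tree) finishes the argument.

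There is no serious obstacle here; the content is entirely in matching the two sets of definitions. The one point that deserves care is the variance of $\mathsf{cons}$ together with the orientation convention for directed trees: one must confirm that edges are oriented so that walks run toward the root, so that $\mathsf{cons}(g, \nu(j, r))$ genuinely lands in $\mathcal{W}_{\Gamma}(i, r)$ rather than in a walk type with the wrong endpoints. Once this is checked, contractibility delivers both the chosen walks $\nu$ and the paths $\sigma_g$ for free, and no further computation (in particular nothing involving $\mathcal{W}_{\Gamma}$-induction or the geometric realization directly, as in \cref{stree-eq}) is needed.
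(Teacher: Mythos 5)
Your proof is correct and is exactly the paper's argument: the paper's proof is the one-line observation that every directed tree is a combinatorial tree (via the preceding lemma), and your proposal simply spells out that observation—taking $j_0$ to be the root, $\nu(i,j_0)$ the center of contraction of the walk type, and $\sigma_g$ from contractibility. Your added care about the variance of $\mathsf{cons}$ and the orientation of walks toward the root is a reasonable check, and it does hold for the agda-unimath definition.
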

\begin{proof}
Just notice that every directed tree is a combinatorial tree.
\end{proof}

\begin{example}
Trees are abundant in HoTT. Indeed, consider a coalgebra for a polynomial endofunctor $\P_{A,B}$ for a signature $\left(A, B\right)$:
\[
\mathcal{X} \ \coloneqq \ \left(X, \ \alpha : X \to \sum_{a : A}\left(B(a) \to X\right) \right)
\]
All elements of $X$ can be made into a directed tree~\cite[\href{https://unimath.github.io/agda-unimath/trees.underlying-trees-elements-coalgebras-polynomial-endofunctors.html}{The underlying trees of elements of coalgebras of polynomial endofunctors}]{agda-unimath}. Hence every element of the $\mathsf{W}$-type for $\left(A, B\right)$ is a tree as $\mathsf{W}(A, B)$ has a canonical coalgebra structure~\cite[\href{https://unimath.github.io/agda-unimath/trees.w-types.html\#w-types-as-coalgebras-for-a-polynomial-endofunctor}{W-types as coalgebras for a polynomial endofunctor}]{agda-unimath}. Also, every element of the coinductive type $\mathsf{M}(A, B)$, the terminal coalgebra for $\P_{A,B}$, is a tree.
\end{example}

\section{Colimits}

\subsection{Ordinary colimits}\label{ordcol}

Let $\U$ be a universe. For each graph $\Gamma$, the \emph{(ordinary) colimit $\colimm(F)$} of a $\Gamma$-shaped diagram $F$ in $\U$ is the HIT generated by the following constructors (for which we write the $\Pi$-type in Agda notation):
\[
  \begin{tikzcd}[row sep = -5pt, ampersand replacement=\&]
    \& {F_i} \&\& {F_j} \\
    {\begin{aligned} \iota  & \  : \ \left( i : \Gamma_0\right) \to F_i \to \colimm(F) \\  \kappa & \ : \ \left(i,j : \Gamma_0\right)  \left(g : \Gamma_1(i,j)\right) \to \iota_j \circ F_{i,j,g} \sim \iota_i \end{aligned}} \\
    \&\& {\colimm(F)}
    \arrow[""{name=0, anchor=center, inner sep=0}, "{{F_{i,j,g}}}", from=1-2, to=1-4]
    \arrow["{{\iota_i}}"', from=1-2, to=3-3]
    \arrow["{{\iota_j}}", from=1-4, to=3-3]
    \arrow["{{\kappa_{i,j,g}}}"{description}, draw=none, from=0, to=3-3]
  \end{tikzcd}
\]
The induction principle for $\colimm(F)$ states that for every type family $E$ over $\colimm(F)$ together with data 
\begin{equation*}
\begin{aligned}[c]
& r  \ : \  \prod_{i : \Gamma_0}\prod_{x : F_i}E(\iota_i(x))
\end{aligned}
\qquad
\begin{aligned}[c]
 & K  \ : \ \prod_{i,j : \Gamma_0}\prod_{g : \Gamma_1(i,j)}\prod_{x : F_i}\transport^E(\kappa_{i,j,g}(x), r(j,F_{i,j,g}(x))) = r(i,x)
\end{aligned}
\end{equation*}
we have a function $\ind(E, r, K) : \prod_{z : \colimm(F)}E(z)$ that satisfies $\ind(E, r, K)(\iota_i(x)) \equiv r(i,x)$ and is equipped with a path 
\[
\beta_{\ind(E, r, K)}(i,j,g,x) \ : \ \apd_{\ind(E, r, K)}(\kappa_{i,j,g}(x)) = K(i,j,g,x)
\]
In the non-dependent case, we derive from $\ind$ a recursion principle $\rec_{\colimm}(E, r, K) : \colimm(F) \to E$, known as the \emph{cogap map} for the cocone $\left(r,K\right)$ under $F$.

\begin{example} \label{colimexmps} $ $
\begin{exmpenum}
\item \label{colimexmps:p1}
If $\Gamma_0 \equiv \N$ and $\Gamma_1(i,j) \equiv  \mleft(i+1 = j\mright) $, then we refer to $\Gamma$ as $\omega$ since it defines the first infinite ordinal. (We often abuse notation by referring to $\omega$ as just $\N$.)  
For every type family $F: \N \to \U$, we have an equivalence  
\begin{align*}
& \sigma \ : \ \left(\prod_{n,m : \N}\left(n+1 = m\right) \to F_n \to F_m\right) \xrightarrow{\simeq} \left( \prod_{n : \N}F_n \to F_{n+1}\right)
\\ & \sigma(F, n) \  \coloneqq \   F_{n, n+1}(\refl_{n+1})
\end{align*}
Further, if $F$ is a diagram over $\omega$, we have an equivalence between $\colimm(F)$ and the sequential colimit of $\sigma(F)$~\cite[Section 3]{SDR}. Indeed, we have an inverse of $\sigma$ by sending each $f : \prod_{n : \N}F_n \to F_{n+1}$ to
\[
\lambda{n}\lambda{m}\lambda{g}.\transport^{k \mapsto F_n \to F_k}(g, f_n)  \ : \ \prod_{n,m : \N}\left(n+1 = m\right) \to F_n \to F_m
\]
\item \label{colimexmps:p2}
Suppose $\Gamma_0 \equiv \left\{l, r, m\right\}$ (the three-element type) and $\Gamma_1(m,l)  \equiv \1$, $\Gamma_1(m,r)  \equiv \1$, and $\Gamma_1(i,j)  \equiv \0$ otherwise.
Then we have an equivalence $\colimm(F) \simeq F(l) \sqcup_{F(m)} F(r)$ defined by colimit recursion in the forward direction.
\item \label{colimexmps:p3}
If $\Gamma_0$ is a type and $\Gamma_1(i,j) \equiv \0$ for all $i,j : \Gamma_0$, then $\Gamma$ is called the \textit{discrete graph on $\Gamma_0$}. In this case,  $\colimm(F)$ is equivalent to the coproduct $\sum_{i : \Gamma_0}F_i$.
\end{exmpenum}
\end{example}

\begin{prop} \label{treecolim}
For every graph $\Gamma$, $\colimm{\1} \simeq \left\lvert{\Gamma}\right\rvert$.
\end{prop}

\begin{corollary} \label{constcol}
Let $\Gamma$ be a tree and $A$ be a type. The function $\left[\idd_A\right]_{i : \Gamma_0} : \colimm{A} \to A$ is an equivalence. 
\end{corollary}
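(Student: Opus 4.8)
The plan is to identify $\colimm_{\Gamma}(A)$, the colimit of the constant diagram, with $A \times \colimm_{\Gamma}(\1)$ and then to collapse the second factor using the tree hypothesis. Throughout, $\mleft[\idd_A\mright]_{i : \Gamma_0}$ denotes the map obtained by colimit recursion from the cocone all of whose legs are $\idd_A$ (the homotopy witnesses being reflexivities, since each edge map of $\mathsf{const}_{\Gamma}(A)$ is $\idd_A$); by the uniqueness clause of the recursion principle, it is the \emph{unique} map $\colimm_{\Gamma}(A) \to A$ whose composite with each coprojection $\iota_i$ is $\idd_A$.

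First I would observe that applying the functor $A \times ({-}) : \U \to \U$ to the diagram $\mathsf{const}_{\Gamma}(\1)$ yields the diagram $i \mapsto A \times \1$ with identity edge maps, which is naturally isomorphic to $\mathsf{const}_{\Gamma}(A)$ via $A \times \1 \simeq A$. Since $A \times ({-})$ is left adjoint to $A \to ({-})$ by currying, it preserves colimits by \cref{LAPC}, so
\[
\colimm_{\Gamma}(A) \ \simeq \ \colimm_{\Gamma}(A \times \1) \ \simeq \ A \times \colimm_{\Gamma}(\1).
\]
Unwinding the construction, this composite equivalence sends $\iota_i(x)$ to $\mleft(x, \iota_i\mright)$, where on the right $\iota_i$ is the $i$-th coprojection point of $\colimm_{\Gamma}(\1)$. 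Next, the preceding lemma gives $\colimm_{\Gamma}(\1) \simeq \sfrac{\Gamma_0}{\Gamma_1} = \lvert \Gamma \rvert$, which is contractible because $\Gamma$ is a tree; hence the first projection $\pr_1 : A \times \colimm_{\Gamma}(\1) \to A$ is an equivalence.

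It remains to recognize $\mleft[\idd_A\mright]_{i : \Gamma_0}$ as the composite of $\pr_1$ with the equivalence above. Both $\mleft[\idd_A\mright]_{i : \Gamma_0}$ and $\pr_1$ composed with that equivalence are maps $\colimm_{\Gamma}(A) \to A$, and each sends $\iota_i(x)$ to $x$; by the uniqueness clause of the recursion principle they therefore agree. Then $\mleft[\idd_A\mright]_{i : \Gamma_0}$ is an equivalence by two-out-of-three, being an equivalence composed with an equivalence. I expect the main obstacle to be purely bookkeeping: confirming that the diagram isomorphism and the colimit-preservation equivalence act as claimed on the path constructor $\kappa$, so that the identification of $\mleft[\idd_A\mright]_{i : \Gamma_0}$ with $\pr_1$ holds on all of $\colimm_{\Gamma}(A)$ rather than merely on point constructors. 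The cleanest way to discharge this is to appeal to the universal property (uniqueness of maps out of a colimit) instead of computing the $\kappa$-case by hand, which sidesteps the higher path algebra entirely.
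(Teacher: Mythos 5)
Your chain of equivalences is exactly the paper's proof: the paper establishes the corollary via the commuting diagram $\colimm_{\Gamma}{A} \simeq \colimm_{\Gamma}(A \times \1) \simeq A \times \colimm_{\Gamma}{\1} \simeq A$, using contractibility of $\colimm_{\Gamma}{\1} \simeq \sfrac{\Gamma_0}{\Gamma_1}$, so your decomposition and use of the tree hypothesis match the paper. The problem is the step that identifies $\mleft[\idd_A\mright]$ with the composite. You invoke, twice, a ``uniqueness clause of the recursion principle'' according to which $\mleft[\idd_A\mright]$ is the unique map whose composite with each $\iota_i$ is $\idd_A$. No such clause exists: maps out of $\colimm_{\Gamma}$ correspond to \emph{cocones}, and a cocone consists of legs \emph{together with} $2$-cell data over each $\kappa_{i,j,g}$, so agreement on point constructors alone does not force equality of maps. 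This is exactly why \cref{colimmapeq} requires, beyond the pointwise homotopies $p_i$, a commuting square relating $\ap_{h_1}(\kappa_{i,j,g}(x))$ and $\ap_{h_2}(\kappa_{i,j,g}(x))$, and why \cref{nattreq} characterizes equality of cocones by homotopies together with coherence of the path components. For a concrete failure of your principle, take $\Gamma$ to be the one-vertex, one-loop graph of \cref{npres}(b): then $\colimm_{\Gamma}{\1} \simeq S^1$, and the identity map and the constant map at the basepoint agree on the unique point constructor yet are distinct, since they act differently on $\kappa$.

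It is true that for \emph{trees} point-constructor agreement does imply equality of maps, but only as a consequence of the corollary you are proving: once $\mleft[\idd_A\mright]$ is known to be an equivalence, $\iota_{i_0}$ is its inverse, and then $h_1 \sim h_1 \circ \iota_{i_0} \circ \mleft[\idd_A\mright] \sim h_2 \circ \iota_{i_0} \circ \mleft[\idd_A\mright] \sim h_2$; invoking this here is circular. For the same reason your proposed remedy --- appealing to the universal property to ``sidestep the higher path algebra entirely'' --- cannot work: the universal property identifies maps with cocones, and comparing cocones means comparing their path data, so you are back to computing what the composite equivalence does to $\kappa_{i,j,g}(x)$ and checking that the result is identified with $\refl_x$ (via the propositional computation rules $\rho$). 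That computation is routine, and it is precisely the implicit content of the paper's one-line commuting diagram, but it can only be carried out, not avoided. A secondary point: your appeal to \cref{LAPC} to see that $A \times ({-})$ preserves colimits requires the currying adjunction to be $2$-coherent in the sense of \cref{2coher}; this hypothesis also needs to be checked (or the equivalence $\colimm_{\Gamma}(A \times \1) \simeq A \times \colimm_{\Gamma}{\1}$ constructed directly by colimit recursion).
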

\begin{proof} 
Suppose that $\Gamma$ is a tree. We have a commuting diagram
\[\begin{tikzcd}
	{\colimm{A}} & {\colimm(A \times \1)} & {A \times \colimm{\1}} & A
	\arrow["\simeq", from=1-1, to=1-2]
	\arrow["\simeq", from=1-2, to=1-3]
	\arrow["\simeq", from=1-3, to=1-4]
	\arrow["{\left[\idd_A\right]}"', curve={height=24pt}, from=1-1, to=1-4]
\end{tikzcd}\]
\end{proof}

\begin{lemma} \label{colimmapeq}
Let $\Gamma$ be a graph. Suppose that $F$ is a diagram over $\Gamma$. Let $Z$ be a type and $h_1, h_2 :  \colimm(F) \to Z$. If we have a homotopy $p_i(x) : h_1 \circ \iota_i \sim h_2 \circ \iota_i$ for all $i : \Gamma_0$  along with a commuting square
\[\begin{tikzcd}[row sep = large, column sep = large]
	{h_1(\iota_j(F_{i,j,g}(x)))} && {h_1(\iota_i(x))} \\
	{h_2(\iota_j(F_{i,j,g}(x)))} && {h_2(\iota_i(x))}
	\arrow["{\ap_{h_1}(\kappa_{i,j,g}(x))}", Rightarrow, no head, from=1-1, to=1-3]
	\arrow["{p_j(F_{i,j,g}(x))}"', Rightarrow, no head, from=1-1, to=2-1]
	\arrow["{p_i(x)}", Rightarrow, no head, from=1-3, to=2-3]
	\arrow["{\ap_{h_2}(\kappa_{i,j,g}(x))}"', Rightarrow, no head, from=2-1, to=2-3]
\end{tikzcd}\]
for all $i, j : \Gamma_0$, $g : \Gamma_1(i,j)$, and $x : F_i$, then $h_1 \sim h_2$.
\end{lemma}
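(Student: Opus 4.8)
The plan is to construct the homotopy $h_1 \sim h_2$ directly by the induction principle for $\colimm_{\Gamma}(F)$, applied to the type family $E(z) \coloneqq (h_1(z) = h_2(z))$. Since a dependent function $\prod_{z : \colimm_{\Gamma}(F)} E(z)$ is by definition exactly a homotopy $h_1 \sim h_2$, it suffices to supply the point data and path data required by $\ind(E, r, R)$. This is the routine ``check on the generators'' strategy for proving that two maps out of a HIT agree.

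For the point data I take $r(i,x) \coloneqq p_i(x) : h_1(\iota_i(x)) = h_2(\iota_i(x))$, which inhabits $E(\iota_i(x))$ on the nose. For the path data, over each generator $\kappa_{i,j,g}(x) : \iota_j(F_{i,j,g}(x)) = \iota_i(x)$, I must produce
\[
R(i,j,g,x) \ : \ \transport^E(\kappa_{i,j,g}(x), p_j(F_{i,j,g}(x))) = p_i(x).
\]
The key step is the standard computation of transport in a family of identifications between two maps (see \cite[\S 2.11]{Uni13}): for any $\kappa : z = z'$ and $q : h_1(z) = h_2(z)$ one has $\transport^{z \mapsto (h_1(z) = h_2(z))}(\kappa, q) = \ap_{h_1}(\kappa)^{-1} \cdot q \cdot \ap_{h_2}(\kappa)$. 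Instantiating at $\kappa = \kappa_{i,j,g}(x)$ and $q = p_j(F_{i,j,g}(x))$ reduces the goal to
\[
\ap_{h_1}(\kappa_{i,j,g}(x))^{-1} \cdot p_j(F_{i,j,g}(x)) \cdot \ap_{h_2}(\kappa_{i,j,g}(x)) = p_i(x).
\]
Cancelling the leading $\ap_{h_1}(\kappa_{i,j,g}(x))^{-1}$ (equivalently, concatenating $\ap_{h_1}(\kappa_{i,j,g}(x))$ on the left) turns this into $p_j(F_{i,j,g}(x)) \cdot \ap_{h_2}(\kappa_{i,j,g}(x)) = \ap_{h_1}(\kappa_{i,j,g}(x)) \cdot p_i(x)$, which is precisely the commuting square assumed in the hypothesis (with its two composite sides exchanged). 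This produces $R$, and $\ind(E, r, R)$ then delivers the desired homotopy $h_1 \sim h_2$.

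There is no genuine obstacle here; the content of the lemma is exactly the universal property of the colimit expressed at the level of paths. The only points requiring a little care are invoking the transport-in-identity-type formula with the correct orientation and matching the resulting equation against the given square, both of which are mechanical rearrangements of path concatenations.
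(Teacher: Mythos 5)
Your proposal is correct and is exactly the argument the paper intends: its entire proof is ``By colimit induction,'' and your write-up simply spells out that induction, with the family $E(z) \coloneqq (h_1(z) = h_2(z))$, the point data $p_i$, and the transport-in-identity-type computation matching the hypothesized square. The path algebra is handled with the correct orientation, so nothing is missing.
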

\begin{proof}
By colimit induction.
\end{proof}

\begin{lemma}\label{funcpo}
Consider a pushout square
\[\begin{tikzcd}[row sep = large, column sep = large]
	C & B \\
	A & {A \sqcup_C B}
	\arrow[""{name=0, anchor=center, inner sep=0}, "f"', from=1-1, to=2-1]
	\arrow["g", from=1-1, to=1-2]
	\arrow[""{name=1, anchor=center, inner sep=0}, "\inr", from=1-2, to=2-2]
	\arrow["\inl"', from=2-1, to=2-2]
	\arrow["\lrcorner"{anchor=center, pos=0.125, rotate=180}, draw=none, from=2-2, to=1-1]
	\arrow["{\glue}"{description}, draw=none, from=0, to=1]
\end{tikzcd}\] Let $Z$ be a type and $h_1, h_2 : {A \sqcup_C B} \to Z$. If we have homotopies
$p_1 :  h_1 \circ \inl \sim h_2 \circ \inl$ and $p_2   :  h_1 \circ \inr \sim h_2 \circ \inr$ along with a commuting square
\[\begin{tikzcd}[row sep = large, column sep = large]
	{h_1(\inl(f(c)))} && {h_1(\inr(g(c)))} \\
	{h_2(\inl(f(c)))} && {h_2(\inr(g(c)))}
	\arrow["{p_2(g(c))}", Rightarrow, no head, from=1-3, to=2-3]
	\arrow["{p_1(f(c))}"', Rightarrow, no head, from=1-1, to=2-1]
	\arrow["{\ap_{h_1}(\glue(c))}", Rightarrow, no head, from=1-1, to=1-3]
	\arrow["{\ap_{h_2}(\glue(c))}"', Rightarrow, no head, from=2-1, to=2-3]
\end{tikzcd}\]
of paths in $Z$ for every $c :C$, then $h_1 \sim h_2$.
\end{lemma}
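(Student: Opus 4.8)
The plan is to prove the homotopy $h_1 \sim h_2$ by the dependent induction principle for the pushout $A \sqcup_C B$, in direct analogy with the colimit-induction proof of \cref{colimmapeq}. I take as motive the type family $E(w) \coloneqq \mleft(h_1(w) = h_2(w)\mright)$ over $A \sqcup_C B$, so that a section $\prod_{w : A \sqcup_C B} E(w)$ is exactly the desired homotopy. The two point constructors are handled immediately by the given data: on the image of $\inl$ I use $p_1(a) : h_1(\inl(a)) = h_2(\inl(a))$ for each $a : A$, and on the image of $\inr$ I use $p_2(b) : h_1(\inr(b)) = h_2(\inr(b))$ for each $b : B$.

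The content of the proof is the path constructor. For each $c : C$, the eliminator demands the path-over condition
\[
\transport^{E}(\glue(c), p_1(f(c))) = p_2(g(c)).
\]
Here I would invoke the standard computation of transport in an identity-type family: for a path $q : x = y$ and $p : h_1(x) = h_2(x)$, one has $\transport^{E}(q, p) = \ap_{h_1}(q)^{-1} \cdot p \cdot \ap_{h_2}(q)$. Specializing to $q \coloneqq \glue(c)$ and $p \coloneqq p_1(f(c))$ reduces the obligation to
\[
\ap_{h_1}(\glue(c))^{-1} \cdot p_1(f(c)) \cdot \ap_{h_2}(\glue(c)) = p_2(g(c)).
\]

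This is precisely the hypothesized commuting square. Reading that square from the top-left corner, the two routes to the bottom-right corner give $\ap_{h_1}(\glue(c)) \cdot p_2(g(c)) = p_1(f(c)) \cdot \ap_{h_2}(\glue(c))$, and whiskering on the left by $\ap_{h_1}(\glue(c))^{-1}$ together with inverse-cancellation yields exactly the displayed equation. Thus the path obligation is discharged and the induction is complete.

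The only delicate point is this final reconciliation: one must orient the square in the hypothesis correctly and rearrange by whiskering and inverse-cancellation so that it lands on the nose in the form produced by the transport formula. This is routine groupoid path algebra rather than a genuine obstacle, and it is the pushout analogue of the glue verification implicit in the colimit-induction argument of \cref{colimmapeq}. (One could alternatively deduce the statement from \cref{colimmapeq} through the equivalence $\colimm_{\Gamma}(F) \simeq A \sqcup_C B$ of \cref{colimexmps}, but transporting the data across that equivalence is more cumbersome than the direct induction above.)
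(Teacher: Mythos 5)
Your proof is correct and matches the paper's approach exactly: the paper's proof is simply ``By pushout induction,'' and your argument carries out precisely that induction, with the identity-type motive, the standard transport computation, and the whiskering step that reconciles the path-over obligation with the hypothesized square.
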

\begin{proof}
By pushout induction.
\end{proof}

Note that $\colimm$ is a wild functor from the category of diagrams over $\Gamma$ to $\U$. In particular, for each $\left(\alpha, p\right) : F \Rightarrow G$, the function $\colimm(\alpha, p) : \colimm(F) \to \colimm(G)$ is the canonical map induced by the following cocone under $F$:
\[
\begin{tikzcd}[ampersand replacement=\&]
	{F_i} \&\& {F_j} \\
	{G_i} \&\& {G_j} \\
	\& {\colimm(G)}
	\arrow["{\alpha_i}"', from=1-1, to=2-1]
	\arrow["{\alpha_j}", from=1-3, to=2-3]
	\arrow["{F_{i,j,g}}", from=1-1, to=1-3]
	\arrow["{\iota_i}"', from=2-1, to=3-2]
	\arrow["{\iota_j}", from=2-3, to=3-2]
	\arrow["{G_{i,j,g}}", dotted, from=2-1, to=2-3]
\end{tikzcd}  \tag{$\lambda{x}.\ap_{\iota_j}(p_{i,j,g}(x))^{-1} \cdot \kappa^G_{i,j,g}(\alpha_i(x))$}
\] 
Likewise, the pushout HIT is a wild functor on spans. For each map $\left(\psi, S\right)$ of spans
\[\begin{tikzcd}[column sep = large, row sep = 20]
	{A_1} & {C_1} & {B_1} \\
	{A_2} & {C_2} & {B_2}
	\arrow["{\psi_1}"', from=1-1, to=2-1]
	\arrow["{\psi_2}"{description}, from=1-2, to=2-2]
	\arrow["{\psi_3}", from=1-3, to=2-3]
	\arrow[""{name=0, anchor=center, inner sep=0}, "{g_1}", from=1-2, to=1-3]
	\arrow[""{name=1, anchor=center, inner sep=0}, "{g_2}"', from=2-2, to=2-3]
	\arrow[""{name=2, anchor=center, inner sep=0}, "{f_2}", from=2-2, to=2-1]
	\arrow[""{name=3, anchor=center, inner sep=0}, "{f_1}"', from=1-2, to=1-1]
	\arrow["{S_2}"{description}, draw=none, from=0, to=1]
	\arrow["{S_1}"{description}, draw=none, from=3, to=2]
\end{tikzcd}\] the function $\mathsf{po}(\psi, S) : A_1 \sqcup_{C_1} B_1 \to A_2 \sqcup_{C_2} B_2$ is the canonical map induced by
\[\begin{tikzcd}[row sep = small]
	{C_1} && {B_1} \\
	\\
	{A_1} && {A_2 \sqcup_{C_2}B_2}
	\arrow[""{name=0, anchor=center, inner sep=0}, "{f_1}"', from=1-1, to=3-1]
	\arrow["{g_1}", from=1-1, to=1-3]
	\arrow[""{name=1, anchor=center, inner sep=0}, "{\inr \circ \psi_3}", from=1-3, to=3-3]
	\arrow["{\inl \circ \psi_1}"', from=3-1, to=3-3]
\end{tikzcd}
\tag{$\lambda{x}.\ap_{\inl}(S_1(x))^{-1} \cdot \glue_2(\psi_2(x)) \cdot \ap_{\inr}(S_2(x))$}
\]

By \cref{colimmapeq}, for every map of spans $\delta : F \Rightarrow G$, the equivalence from \cref{colimexmps:p2} fits into a commuting square
\[
\label{pocolcsq}
\begin{tikzcd}
	{\colimm(F)} & {\colimm(G)} \\
	{F(l) \sqcup_{F(m)} F(r)} & {G(l) \sqcup_{G(m)} G(r)}
	\arrow["{{\colimm( \delta)}}", from=1-1, to=1-2]
	\arrow["{{\rec_{\colimm}}}"', from=1-1, to=2-1]
	\arrow["\simeq", from=1-1, to=2-1]
	\arrow["{{\rec_{\colimm}}}", from=1-2, to=2-2]
	\arrow["\simeq"', from=1-2, to=2-2]
	\arrow["{{\mathsf{po}(\delta)}}"', from=2-1, to=2-2]
\end{tikzcd}
\tag{$\texttt{po-colim}$}
\] 

\subsection{Coslice colimits}

Let $A$ be a type and $\Gamma$ be a graph. Let $F$ be an $A$-diagram over $\Gamma$. A cocone $\mathcal{A} \coloneqq \left(C, r, K\right)$ under $F$ is \emph{colimiting} if the following function is an equivalence for every $T : A/\U$:
\begin{align*}
& \postcomp_{\A}(T) \ : \ \left(C \to_A T\right)  \to \mathsf{Cocone}_F(T)
\\ & \postcomp_{\A}(T, \left(f, f_p\right))  \ \coloneqq \ \left(c_1 , c_2\right)
\\ & \quad c_1(i) \ \coloneqq \ \left(f \circ \pr_1(r_i), \lambda{a}.\ap_f(\pr_2(r_i)(a)) \cdot f_p(a)\right)  
\\ & \quad c_2(i,j,g)  \ \coloneqq \ \left(\lambda{x}.\ap_f(\pr_1(K_{i,j,g})(x)), \lambda{a}.\Theta_{\pr_1(K_{i,j,g})}(f^{\ast},a) \cdot \ap_{\ap_f({-}) \cdot f_p(a)}(\pr_2(K_{i,j,g})(a))  \right)
\end{align*} where $\Theta_{\pr_1(K_{i,j,g})}(f^{\ast},a) $ is the evident path of type
\[\begin{tikzcd}
	{\ap_f(\pr_1(K_{i,j,g})(\pr_2(F_i)(a)))^{-1} \cdot \ap_{f \circ \pr_1(r_j)}(\pr_2(F_{i,j,g})(a)) \cdot \ap_f(\pr_2(r_j)(a)) \cdot f_p(a)} \\
	{\ap_f(\pr_1(K_{i,j,g})(\pr_2(F_i)(a))^{-1} \cdot \ap_{\pr_1(r_j)}(\pr_2(F_{i,j,g})(a)) \cdot \pr_2(r_j)(a)) \cdot f_p(a)}
	\arrow[Rightarrow, no head, from=1-1, to=2-1]
\end{tikzcd}\]

For each $m : C \to_A T$, the first component of $\postcomp_{\A}(T)(m)$ is exactly $m$ composed with $r_i$. We also can put its second component into a polished form. Indeed, it equals the right whiskering of $K_{i,j,g}$ by $m$ adjusted by associativity of $A$-maps. This description matches the definition of a colimiting cocone in an arbitrary wild category (\cref{colimwildc}).

If $\A$ is colimiting, then for each $\B : \mathsf{Cocone}_F(T)$, the function $\postcomp_{\A}^{-1}(\B) : C \to_A T$ is called the \emph{cogap map of $\B$}.

\begin{note}[Forgetful functor]
 The (wild) forgetful functor $\F : A/\U \to \U$ induces a functor  $\F : \mathsf{Diag}(\Gamma, A/\U) \to \mathsf{Diag}(\Gamma, \U)$ from the wild category of diagrams in $A/\U$ to that of diagrams in $\U$. It also induces a functor $\F : \mathsf{Cocone}(F) \to \mathsf{Cocone}(\pr_1 \circ F)$ between categories of cocones for each diagram $F : \Gamma \to A/\U$. In this case, $\F$ maps a cocone $\left(C, r, K\right)$ under $F$ to the following cocone under $\F(F)$:
\[\begin{tikzcd}[column sep = large, row sep = small]
	{\pr_1(F_i)} && {\pr_1(F_j)} \\
	\\
	& {\pr_1(C)}
	\arrow[""{name=0, anchor=center, inner sep=0}, "{\pr_1(F_{i,j,g})}", from=1-1, to=1-3]
	\arrow["{\pr_1(r_i)}"', from=1-1, to=3-2]
	\arrow["{\pr_1(r_j)}", from=1-3, to=3-2]
	\arrow["{\pr_1(K_{i,j,g})}"{description}, draw=none, from=0, to=3-2]
\end{tikzcd}\]
\end{note}

Let $\A$ and $\B$ be cocones under $F$. A \emph{morphism $\A \to \B$} is a map $\varphi :  \tip(A) \to_A \tip(B)$ with a path $\postcomp_{\A}(h) = \B$. The cogap map of a cocone under $F$ has an evident cocone morphism structure.

\begin{definition}
A morphism $\varphi : \A \to \B$ of cocones under $F$ is an \emph{isomorphism} if the map of types $\pr_1(\varphi) : \pr_1(\tip(\A)) \to \pr_1(\tip(B))$ is an equivalence.
\end{definition}

\noindent The following proposition can be proved purely in the language of wild bicategories.

\begin{prop} \label{coluniq}
Let $\A$ and $\B$ be cocones under $F$.
\begin{propenum}
\item  If they are both colimiting, then there is a unique cocone morphism $\A \to \B$, and this is an isomorphism. \label{coluniq:p1}
\item  If we have a cocone isomorphism between them, then one is colimiting if and only if the other is colimiting. \label{coluniq:p2}
\end{propenum}
\end{prop}

\subsubsection*{Intuition for colimit in $A/\U$}

For all $i, j : \Gamma_0$ and $g : \Gamma_1(i,j)$, the commuting triangle of a cocone
\[\begin{tikzcd}
	{F_i} && {F_j} \\
	& C
	\arrow["{F_{i,j,g}}", from=1-1, to=1-3]
	\arrow["{r_i}"', from=1-1, to=2-2]
	\arrow["{r_j}", from=1-3, to=2-2]
\end{tikzcd}\] under $F$ is equivalent to a homotopy $\eta_{i,j,g} : \pr_1(r_j) \circ \pr_1(F_{i,j,g}) \sim \pr_1(r_i)$
equipped with a commuting square
\[
\begin{tikzcd}[column sep = 35]
	{\pr_1(r_j)(\pr_1(F_{i,j,g})(\pr_2(F_i)(a)))} && {\pr_1(r_i)(\pr_2(F_i)(a))} \\
	{\pr_1(r_j)(\pr_2(F_j)(a))} && {\pr_2(Y)(a)}
	\arrow["{\eta_{i,j,g}(\pr_2(F_i)(a))}", Rightarrow, no head, from=1-1, to=1-3]
	\arrow["{\ap_{\pr_1(r_j)}(\pr_2(F_{i,j,g})(a))}"', Rightarrow, no head, from=1-1, to=2-1]
	\arrow["{\pr_2(r_i)(a)}", Rightarrow, no head, from=1-3, to=2-3]
	\arrow["{\pr_2(r_j)(a) }"', Rightarrow, no head, from=2-1, to=2-3]
\end{tikzcd}
\label{etaeq} \tag{$\texttt{$2$-c}$}  \] of paths for each $a :A$.
It is this family of $2$-cells which distinguishes the colimit of $F$, in $A/\U$, from $\colimm(\F(F))$. The $2$-cells affect $\colimm(\F(F))$ by collapsing its nontrivial loops formed by paths of the form $\eta(\pr_2(F_i)(a))$. We call such loops \emph{distinguished loops} in $\colimm(\F(F))$. For example, if $i \equiv j$ and $F_{i,j,g} \equiv \idd_{F_i}$, then \eqref{etaeq} is equivalent to $ \eta(\pr_2(F_i)(a)) = \refl_{\pr_1(r_i)(\pr_2(F_i)(a))}$. In this case, it fills the loop $\eta(\pr_2(F_i)(a))$.

\subsection{Coslice coproducts} \label{coprodcos}

Coslice coproducts admit a special construction: as wedge sums. Let $A$ be a type. Let $\Delta$ be a graph and $G$ be an $A$-diagram over $\Delta$.
If $\Delta$ is discrete, then  the pushout
  \[\begin{tikzcd}
      {\Delta_0 \times A} & {\sum_{i : \Delta_0}\pr_1(G_i)} \\
      A & D
      \arrow["{t}", from=1-1, to=1-2]
      \arrow["{\pr_2}"', from=1-1, to=2-1]
      \arrow["\inr", from=1-2, to=2-2]
      \arrow["\inl"', from=2-1, to=2-2]
      \arrow["\lrcorner"{anchor=center, pos=0.125, rotate=180}, draw=none, from=2-2, to=1-1]
    \end{tikzcd} \tag{$t(i,a) \coloneqq  \left(i, \pr_2(G_i)(a)\right)$}\] together with $\inl$
is the coproduct of the $G_i$ in $A/\U$, whose cocone structure is the family of $A$-maps
\[\begin{tikzcd}[row sep = 30, column sep = 30]
	& A \\
	{\pr_1(G_i)} & {\sum_{i : \Delta_0}\pr_1(G_i)} & D
	\arrow["{{\pr_2(G_i)}}"', from=1-2, to=2-1]
	\arrow["{{\glue_D(i,a)^{-1}}}"{description, pos=0.6}, draw=none, from=1-2, to=2-2]
	\arrow["\inl", from=1-2, to=2-3]
	\arrow["{{\left(i, {-}\right)}}"', from=2-1, to=2-2]
	\arrow["\inr"', from=2-2, to=2-3]
\end{tikzcd}
 \] 
\begin{notation}
We write $\bigvee_{\Delta}G$ for $D$.
\end{notation} 
\noindent Indeed, for each $X: A/\U$, the canonical map
\begin{align*}
& \postcomp \ : \ \left( \left(D , \inl\right) \to_A X\right) \to \left(\prod_{i : \Delta_0}G_i \to_A X\right)
\\ & \postcomp(f, i) \ \coloneqq \  f \circ \left( \inr(i, {-}) , \glue_D(i,{-})^{-1}\right) 
\end{align*} has inverse taking $m : \prod_{i : \Delta_0}G_i \to_A X$ to the  $A$-map $\left(h_m , \refl_{\pr_2(X)({-})}\right)$ where $h_m : \pr_1(D) \to \pr_1(X)$ is defined by pushout recursion via the cocone
\[\begin{tikzcd}
	{\Delta_0 \times A} & {\sum_{i : \Delta_0}\pr_1(G_i)} \\
	A & X
	\arrow[from=1-1, to=1-2]
	\arrow[""{name=0, anchor=center, inner sep=0}, from=1-1, to=2-1]
	\arrow[""{name=1, anchor=center, inner sep=0}, "{\left(i,x\right) \mapsto \pr_1(m_i)(x)}", from=1-2, to=2-2]
	\arrow["{\pr_2(X)}"', from=2-1, to=2-2]
	\arrow["{\pr_2(m_i)(a)}"{description}, draw=none, from=0, to=1]
\end{tikzcd}\] A direct proof of this equivalence is not hard, but we omit it as the equivalence will follow from our general construction of coslice colimits: \cref{mainequiv} (below).

We also can describe the coproduct in $A/\U$ as the colimit of an augemented diagram in $\U$: If $\Delta$ is any graph, we define a graph $I(\Delta)$ along with a diagram $\zeta(\Delta, G)$ over it: 
\begin{align*}
\begin{split}
I(\Delta)_0 & \ \coloneqq \ \Delta_0 + \1
\\ I(\Delta)_1(\inl(i), \inl(j)) &  \ \coloneqq \ \Delta_1(i,j)
\\ I(\Delta)_1(\inl(i), \inr(\ast)) & \ \coloneqq \ \0
 \\ I(\Delta)_1(\inr(\ast), \inl(i)) & \ \coloneqq \ \1
  \\ I(\Delta)_1(\inr(\ast), \inr(\ast)) & \ \coloneqq \ \0
\end{split}
\begin{split}
 \\ \zeta(\Delta, G)_{\inl(i)} & \ \coloneqq \ \pr_1(G_i)
\\ \zeta(\Delta, G)_{\inr(\ast)} & \ \coloneqq \ A
 \\ \zeta(\Delta, G)_{\inl(i), \inl(j), g} & \ \coloneqq \ \pr_1(G_{i,j,g})
  \\ \zeta(\Delta, G)_{\inr(\ast), \inl(i), \ast} & \ \coloneqq \ \pr_2(G_i)
\end{split}
\end{align*}

\begin{lemma}[{\cite[\href{https://github.com/PHart3/colimits-agda/blob/v0.4.0/Colimit-coslice/Coproduct/CosWedge.agda\#L101}{cos-wedge-colim-iso}]{agda-colim-TR}}] \label{coscoprodeqv}
Suppose that $\Delta$ is discrete. The coproduct $\bigvee_{\Delta}{G}$ fits into a cocone isomorphism in $A/\U$:
\[ 
\begin{tikzcd}
	& {G_i} \\
	{\mleft(\bigvee_{\Delta}{G}, \inl\mright)} && {\mleft(\colimm(\zeta(\Delta, G)), \iota_{\inr(\ast)}\mright)}
	\arrow["{\simeq}"', from=2-1, to=2-3]
	\arrow["{\left(\inr(i, {-}), \lambda{a}.\glue(i,a)^{-1}\right)}"', from=1-2, to=2-1]
	\arrow["{\left(\iota_{\inl(i)}, \kappa_{\zeta(\Delta,G)}(\inr(\ast), \inl(i), \ast )\right)}", from=1-2, to=2-3]
\end{tikzcd} \label{eq:trico} \tag{$\texttt{tri-$\bigvee$}$}\]
\end{lemma}
\begin{proof}
Define
\begin{align*}
& \left(\varphi, \alpha\right)  \ : \ \left(\bigvee_{\Delta}{G}\right) \to_A \colimm(\zeta(\Delta, G)) 
\\ & \varphi(\inl(a))  \ \coloneqq \ \iota_{\inr(\ast)}(a)
\\ & \varphi(\inr(i, x))  \ \coloneqq \ \iota_{\inl(i)}(x)
\\ & \ap_{\varphi}(\glue(i,a)) \ = \  \kappa_{\inr(\ast), \inl(i), \ast}(a)^{-1} 
\\ & \alpha  \ \coloneqq \ \refl_{\iota_{\inr(\ast)}({-})}
\end{align*}
Conversely, define
\begin{align*}
& \nu  \ : \ \colimm(\zeta(\Delta, G)) \to \bigvee_{\Delta}{G}
\\ & \nu(\iota_{\inr(\ast)}(a))  \ \coloneqq \ \inl(a)
\\ & \nu(\iota_{\inl(i)}(x))  \ \coloneqq \ \inr(i,x)
\\ & \ap_{\nu}(\kappa_{\inr(\ast), \inl(i), \ast}(a))  \ = \  \glue(i, a)^{-1}
\end{align*}
It is easy to prove that $\varphi$ and $\nu$ are mutual inverses as ordinary functions, as is checking that the triangle \eqref{eq:trico} commutes in $A/\U$.
\end{proof}

\begin{remark}
It is \emph{not} the case that $\colimm^A(G)$ and $\colimm(\zeta(\Delta, G))$ are equivalent for general graphs $\Delta$. For example, the pointed colimit (i.e., colimit in the wild category of pointed types) of $\1 \xrightarrow{\idd} \1$ is contractible, but the colimit of the augmented diagram
\[\begin{tikzcd}
	& \1 \\
	\1 && \1
	\arrow["\idd"', from=1-2, to=2-1]
	\arrow["\idd"', from=2-1, to=2-3]
	\arrow["\idd", from=1-2, to=2-3]
\end{tikzcd}\]
equals $S^1$. This situation may seem different from classical category theory, wherein colimits in coslice categories can be computed as colimits of augmented diagrams in the underlying category. Note, however, that the internal augmented diagram may add ``composites'' that are \emph{not} treated as such by the free category generated by the graph. Rather, it treats them as unrelated arrows.

\end{remark}

\subsection{Quotient construction of coslice colimits}\label{Constr}

This section describes the main connection between ordinary colimits and coslice colimits. We build the colimit of $F$ in a way that never produces an augmented diagram. We start with the ordinary colimit $\colimm(\F(F))$ which ignores the coslice structure of $F$. Then, we glue onto this colimit the $2$-cells required by the coslice colimit. We do this via a quotient of $\colimm(\F(F))$ that fills its distinguished loops. For convenience, we recall the following two standard results of HoTT.

\begin{lemma} \label{transfunc}
Let $X$ be a type and $P : X \to \U$. Let $f, g : \prod_{x : X}P(x)$. For all $x, y : X$, $p : x = y$, and $H : f \sim g$, we have the commuting square
\[\begin{tikzcd}
	{\transport^P(p, f(x))} && {\transport^P(p, g(x))} \\
	{f(y)} && {g(y)}
	\arrow["{{\ap_{p_{\ast}}(H(x))}}", Rightarrow, no head, from=1-1, to=1-3]
	\arrow["{{\apd_f(p)}}"', Rightarrow, no head, from=1-1, to=2-1]
	\arrow["{{\apd_g(p)}}", Rightarrow, no head, from=1-3, to=2-3]
	\arrow["{{\transport^{z \mapsto f(z) = g(z)}(p, H(y))}}"', Rightarrow, no head, from=2-1, to=2-3]
\end{tikzcd}
\] If $P$ is constant, then this becomes the commuting square
\[\begin{tikzcd}[column sep = 50]
	{f(x)} && {g(x)} \\
	{f(y)} && {g(y)}
	\arrow["{H(x)}", Rightarrow, no head, from=1-1, to=1-3]
	\arrow["{\ap_f(p)}"', Rightarrow, no head, from=1-1, to=2-1]
	\arrow["{\ap_g(p)}", Rightarrow, no head, from=1-3, to=2-3]
	\arrow["{\transport^{z \mapsto f(z) = g(z)}(p, H(y))}"', Rightarrow, no head, from=2-1, to=2-3]
\end{tikzcd}
\] 
\end{lemma}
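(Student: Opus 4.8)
The plan is to prove both squares by path induction on $p$, which collapses every transport occurring in the diagram to an identity and thereby reduces each square to an essentially trivial instance. No auxiliary lemmas beyond the computation rules for $\transport$ and $\apd$ are needed.

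First I would treat the general square. By path induction it suffices to consider $p \equiv \refl_x$. Then $\transport^P(\refl_x, -) \equiv \idd_{P(x)}$ holds definitionally, so the two vertical edges $\apd_f(\refl_x)$ and $\apd_g(\refl_x)$ compute to $\refl_{f(x)}$ and $\refl_{g(x)}$, the bottom edge $\transport^{z \mapsto f(z) = g(z)}(\refl_x, H(x))$ computes to $H(x)$, and the top edge $\ap_{p_*}(H(x))$ becomes $\ap_{\idd_{P(x)}}(H(x))$. The required commutativity is therefore $\ap_{\idd_{P(x)}}(H(x)) \cdot \refl_{g(x)} = \refl_{f(x)} \cdot H(x)$, which follows from the unit laws together with the standard fact that $\ap_{\idd}$ acts as the identity on paths. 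In the paper's notation this residual identity is an instance of $\PI(H(x))$.

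Second, for the case in which $P$ is constant I would again induct directly on $p$. With $P$ constant the displayed square no longer mentions any transport: the vertical edges are $\ap_f(p)$ and $\ap_g(p)$ and the top edge is $H(x)$. Setting $p \equiv \refl_x$ makes $\ap_f(\refl_x) \equiv \refl_{f(x)}$ and $\ap_g(\refl_x) \equiv \refl_{g(x)}$, and $\transport^{z \mapsto f(z) = g(z)}(\refl_x, H(x)) \equiv H(x)$, so both routes around the square reduce to $H(x)$ and the witness is $\refl$. One could instead derive this square from the general one by specializing $P$ to a constant family, but that route forces one to carry the coherences relating $\apd_f$ to $\ap_f$ over a constant family, so the direct induction is cleaner.

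I do not anticipate any genuine obstacle. The only delicacy is bookkeeping: after the induction one must correctly read off the definitional reductions of $\transport^P(\refl_x, -)$, of $\apd_f$ and $\apd_g$ at $\refl_x$, and of the family-transport $\transport^{z \mapsto f(z) = g(z)}(\refl_x, -)$, and then record the single propositional step $\ap_{\idd}(H(x)) = H(x)$ in the general case. Everything else is forced by the computation rule for path induction.
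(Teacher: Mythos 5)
Your proof is correct and is exactly the standard argument: the paper states \cref{transfunc} as a recalled ``standard result of HoTT'' with no written proof, precisely because it reduces to this routine path induction on $p$. Your handling of the one residual propositional step, $\ap_{\idd_{P(x)}}(H(x)) = H(x)$ (itself by induction on $H(x)$), together with the unit laws, is the right bookkeeping, so there is nothing to add.
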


\begin{corollary}\label{apdnat}
Let $X$ be a type and $P : X \to \U$. Let $f, g : \prod_{x : X}P(x)$. For all $x,y : X$, $p : x = y$, and $H : f \sim g$, we have the commuting square
\[ \begin{tikzcd}[column sep = large]
	{\transport^P(p, f(x))} & {\transport^P(p, g(x))} \\
	{f(y)} & {g(y)}
	\arrow["{{{\ap_{p_{\ast}}(H(x))}}}", equals, from=1-1, to=1-2]
	\arrow["{{{\apd_f(p)}}}"', equals, from=1-1, to=2-1]
	\arrow["{{{\apd_g(p)}}}", equals, from=1-2, to=2-2]
	\arrow["{{{H(y)}}}"', equals, from=2-1, to=2-2]
\end{tikzcd} \]
\end{corollary}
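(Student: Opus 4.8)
The plan is to recognize that the square of \cref{apdnat} is nothing more than the first square of \cref{transfunc} rewritten, so I would derive it directly from that lemma rather than from scratch. Spelling out the commutativity asserted by the first square of \cref{transfunc}, it reads
\[
\ap_{p_{\ast}}(H(x)) \cdot \apd_g(p) \ = \ \apd_f(p) \cdot \transport^{z \mapsto f(z) = g(z)}(p, H(x)),
\]
whereas the square of \cref{apdnat} asserts
\[
\apd_f(p) \cdot H(y) \ = \ \ap_{p_{\ast}}(H(x)) \cdot \apd_g(p).
\]
These two equations coincide once the bottom edge $\transport^{z \mapsto f(z) = g(z)}(p, H(x))$ is identified with $H(y)$.

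That identification is exactly $\apd_H(p)$: since a homotopy $H : f \sim g$ is literally a dependent function $H : \prod_{x : X}\mleft(f(x) = g(x)\mright)$, applying $\apd$ to $H$ along $p$ yields a path $\apd_H(p) : \transport^{z \mapsto f(z) = g(z)}(p, H(x)) = H(y)$. So the second step is to transport the commuting square provided by \cref{transfunc} along $\apd_H(p)$, replacing its bottom edge by $H(y)$ while leaving the remaining three edges untouched; the result is precisely the square of \cref{apdnat}.

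I expect this to be routine, and indeed one could skip the appeal to \cref{transfunc} altogether by path induction on $p$: taking $p \equiv \refl_x$ makes $\apd_f(p)$ and $\apd_g(p)$ reduce to reflexivities and $\ap_{p_{\ast}}(H(x))$ reduce (propositionally, via $\ap_{\idd}(H(x)) = H(x)$) to $H(x)$, so the claim degenerates to $\refl \cdot H(x) = H(x) \cdot \refl$, which holds after the unit laws. The only point demanding care either way is the bookkeeping of path orientations and unit laws when pasting the two squares (equivalently, confirming that the $\refl$ case reduces as stated); I anticipate no genuine obstacle.
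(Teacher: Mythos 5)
Your proposal is correct and matches the paper's (implicit) derivation: the paper states \cref{apdnat} as an immediate consequence of \cref{transfunc}, which is exactly the pasting you describe, rewriting the bottom edge $\transport^{z \mapsto f(z) = g(z)}(p, H(x))$ to $H(y)$ along $\apd_H(p)$. Your alternative one-line argument by path induction on $p$ (with the propositional identification $\ap_{\idd}(H(x)) = H(x)$) is also valid.
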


Let $A$ be a type. Consider a graph $\Gamma$ and a diagram $F : \Gamma \to A/\U$ over $\Gamma$.
Define $\psi : \colimm{A} \to \colimm(\F(F))$ as the function induced by the cocone
\[\begin{tikzcd}
	A && A \\
	& {\colimm(\F(F))}
	\arrow["{{{{\idd_A}}}}", from=1-1, to=1-3]
	\arrow["{{{{\iota_i \circ \pr_2(F_i) }}}}"', from=1-1, to=2-2]
	\arrow["{{{{\iota_j \circ \pr_2(F_j)}}}}", from=1-3, to=2-2]
\end{tikzcd}
\tag{$\lambda{a}.\ap_{\iota_j}(\pr_2(F_{i,j,g})(a))^{-1} \cdot \kappa_{i,j,g}(\pr_2(F_i)(a))$} \] 
under the constant diagram at $A$. Then form the following pushout square (which we think of as a quotient of $\colimm(\F(F))$):
\[
\begin{tikzcd}[column sep =large]
	{\colimm{A}} & {\colimm(\F(F))} \\
	A & \P_A(F)
	\arrow["{\left[\idd_A\right]_{i : \Gamma_0}}"', from=1-1, to=2-1]
	\arrow["\psi", from=1-1, to=1-2]
	\arrow["\inr", from=1-2, to=2-2]
	\arrow["\inl"', from=2-1, to=2-2]
	\arrow["\lrcorner"{anchor=center, pos=0.125, rotate=180}, draw=none, from=2-2, to=1-1]
\end{tikzcd}
\]
We build an cocone, which we call $\K(\P_A(F))$, on $\left(\P_A(F), \inl\right)$ under $F$ as follows:
\[
\begin{tikzcd}[column sep = large]
	{F_i} && {F_j} \\
	& {\mleft(\P_A(F), \inl\mright)}
	\arrow["{\left(\inr \circ \iota_i, \tau_i\right) }"', from=1-1, to=2-2]
	\arrow["{\left(\inr \circ \iota_j, \tau_j\right)}", from=1-3, to=2-2]
	\arrow[""{name=0, anchor=center, inner sep=0}, "{F_{i,j,g}}", from=1-1, to=1-3]
	\arrow["{\left\langle{ \delta_{i,j,g}  , \epsilon_{i,j,g}}\right\rangle}"{description}, draw=none, from=0, to=2-2]
\end{tikzcd} \tag{$ \tau_i(a) \coloneqq \glue_{\P_A(F)}(\iota_i(a))^{-1}$}
\] Here, we define $
\delta_{i,j,g} \coloneqq \lambda{x}.\ap_{\inr}(\kappa_{i,j,g}(x))  :  \inr \circ \iota_j  \circ F_{i,j,g} \sim \inr \circ \iota_i$, and we define, for each $a : A$, $\epsilon_{i,j,g}(a)$ as the chain 
\[  \begin{tikzcd}
	{\ap_{\inr}(\kappa_{i,j,g}(\pr_2(F_i)(a)))^{-1} \cdot \ap_{\inr \circ \iota_j}(\pr_2(F_{i,j,g})(a))   \cdot \tau_j(a)} \\
	{\ap_{\inr}(\ap_{\iota_j}(\pr_2(F_{i,j,g})(a))^{-1} \cdot \kappa_{i,j,g}(\pr_2(F_i)(a)))^{-1} \cdot \tau_j(a) \cdot \refl_{\inl(a)}} \\
	{\ap_{\inr}(\ap_{\iota_j}(\pr_2(F_{i,j,g})(a))^{-1} \cdot \kappa_{i,j,g}(\pr_2(F_i)(a)))^{-1} \cdot \tau_j(a) \cdot \ap_{\inl}(\ap_{\left[\idd_A\right]}(\kappa_{i,j,g}(a)))} \\
	{\ap_{\inr}(\ap_{\psi}(\kappa_{i,j,g}(a)))^{-1} \cdot \tau_j(a) \cdot \ap_{\inl}(\ap_{\left[\idd_A\right]}(\kappa_{i,j,g}(a)))} \\
	{\left(\kappa_{i,j,g}(a)\right)_{\ast}(\tau_j(a))} \\
	{ \tau_i(a)}
	\arrow[Rightarrow, no head, from=1-1, to=2-1]
	\arrow["{\textit{via $\beta_{\left[\idd_A \right]}(i,j,g,a)$}}", Rightarrow, no head, from=2-1, to=3-1]
	\arrow["{\textit{via $\beta_{\psi}(i,j,g,a)$}}", Rightarrow, no head, from=3-1, to=4-1]
	\arrow["{{{{{\textit{via \cref{transfunc}}}}}}}", Rightarrow, no head, from=4-1, to=5-1]
	\arrow["{{{{{\apd_{\glue({-})^{-1}}(\kappa_{i,j,g}(a))}}}}}", Rightarrow, no head, from=5-1, to=6-1]
\end{tikzcd}   \] For \cref{mainequiv}, it will be convenient to decompose $\epsilon_{i,j,g}(a)$ into the following chains of paths:
\begin{enumerate}
\item $E_1(i,j,g,a)$, the first path of $\epsilon_{i,j,g}(a)$
\item $E_2(i,j,g,a)$, the second path of $\epsilon_{i,j,g}(a)$
\item $E_3(i,j,g,a)$, the final three paths of $\epsilon_{i,j,g}(a)$.
\end{enumerate} 

\begin{theorem}[{\cite[\href{https://github.com/PHart3/colimits-agda/blob/v0.4.0/Colimit-coslice/Main-Theorem/CosColim-Iso.agda}{CosColim-Iso}]{agda-colim-TR}}] \label{mainequiv} 
Let $\left(T, f_T\right) : A/\U$. The $\postcomp$ function
\begin{align*}
& \pstc_{F,T} \ : \ \left(\left(\P_A(F), \inl\right) \to_A \left(T, f_T\right)\right) \to \mathsf{Cocone}_F(T, f_T)
\\ & \pstc_{F,T}(f, f_p) \ \coloneqq \  
\\ & \ \left(\lambda{i}. \left(f \circ \inr \circ \iota_i, \lambda{a}.\ap_f(\tau_i(a)) \cdot f_p(a) \right), \lambda{i}\lambda{j}\lambda{g}.\left(\lambda{x}.\ap_f(\delta_{i,j,g}(x)), \lambda{a}.\Theta_{\delta_{i,j,g}}(f^{\ast},a) \cdot \ap_{\ap_f({-}) \cdot f_p(a)}(\epsilon_{i,j,g}(a))  \right) \right)
 \end{align*}
is an equivalence, i.e., the cocone $\K(\P_A(F))$ is colimiting in $A/\U$.
\end{theorem}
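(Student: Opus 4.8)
The plan is to prove that $\pstc_{F,T}$ is an equivalence by peeling off, one at a time, the two higher inductive types out of which $\P_A(F)$ is built, and recognizing the result as $\mathsf{Cocone}_F(T,f_T)$. Concretely, I would display $\pstc_{F,T}$ as a composite of three equivalences; its inverse (the cogap map) will be the map obtained by pushout recursion whose $A$-component is $f_T$ and whose $\colimm_{\Gamma}(\F(F))$-component is the ordinary colimit recursor of the underlying cocone. The advantage of the chain-of-equivalences route is that each link is an instance of a universal property already available, so the only real work is matching the induced formula against the stated one.

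First I would invoke the universal property of the defining pushout of $\P_A(F)$. A map $\P_A(F)\to T$ is the same as a map $A\to T$, a map $\colimm_{\Gamma}(\F(F))\to T$, and a homotopy over $\colimm_{\Gamma}A$ witnessing compatibility along $[\idd_A]$ and $\psi$; under this correspondence the $A$-component is $f\circ\inl$. The coslice condition $f\circ\inl\sim f_T$ therefore pins the $A$-component to $f_T$ up to a contractible choice, collapsing the domain to
\[
((\P_A(F),\inl)\to_A(T,f_T)) \ \simeq\ \sum_{c\,:\,\colimm_{\Gamma}(\F(F))\to T} f_T\circ[\idd_A] \sim c\circ\psi .
\]
Next I would apply the universal property of the ordinary colimit to $c$, converting it into a cocone under $\F(F)$ with legs $s_i\coloneqq c\circ\iota_i$ and edge data recording $\ap_c(\kappa_{i,j,g})$. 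The residual homotopy $f_T\circ[\idd_A]\sim c\circ\psi$ is a homotopy between two maps out of $\colimm_{\Gamma}A$, so by \cref{colimmapeq} it is equivalent to a per-vertex family together with one square per edge. Unwinding $\psi$ and $[\idd_A]$ on the generators $\iota_i$ and $\kappa_{i,j,g}$, the per-vertex homotopy at $i$ is the basepoint datum $\pr_2(r_i) : s_i\circ\pr_2(F_i)\sim f_T$ up to inversion, and the edge square is exactly the $2$-cell \eqref{etaeq} of a coslice cocone. Assembling these pieces reconstitutes $\mathsf{Cocone}_F(T,f_T)$.

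The final, and hardest, step is to verify that the composite of these three equivalences actually equals $\pstc_{F,T}$ as written, rather than merely some equivalence with the same endpoints. On first components this is immediate from the $\beta$-rules of pushout and colimit recursion applied to $\inr\circ\iota_i$ and to $\delta_{i,j,g}=\ap_{\inr}(\kappa_{i,j,g})$. The obstacle is the $2$-cell component: I must show that the square produced by \cref{colimmapeq} out of the residual homotopy coincides with the term built from $\Theta_{\delta_{i,j,g}}$ and $\epsilon_{i,j,g}(a)$. This is precisely where the definition of $\epsilon_{i,j,g}$ earns its keep, since it was engineered --- via \cref{transfunc} and the $\apd$ of $\glue({-})^{-1}$ --- to transport the colimit coherence $\kappa_{i,j,g}$ across the pushout $\glue$, absorbing the generators $\rho_{[\idd_A]}$ and $\rho_{\psi}$ that appear when $[\idd_A]$ and $\psi$ act on $\kappa$. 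To discharge this cleanly I would reduce the claimed equality of cocones using \cref{nattreq}, taking the homotopies $W$ and $u$ of that lemma to be reflexivity so that only the coherences $S_1$ and $S_2$ remain; $S_1$ then falls out of the colimit $\beta$-rule, and $S_2$ becomes exactly the chain $\epsilon_{i,j,g}(a)$ modulo the associativity corrections bookkept by $\Theta$ and $\Xi$. The bulk of the effort, as reflected in the formalization, lies in this $2$-cell verification.
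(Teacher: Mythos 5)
Your proposal is correct, and it takes a genuinely different route from the paper. The paper proves \cref{mainequiv} by exhibiting an explicit quasi-inverse: given a cocone $(r,K)$ it builds $h_{r,K} : \P_A(F) \to T$ by pushout recursion, with $\colimm$-component $\rec_{\colimm}(\mathcal{F}(r,K))$ and gluing homotopy $\sigma$ constructed by colimit induction from the chains $\eta_{i,j,g}(a)$, and then verifies \emph{both} round trips ($\pstc \circ h = \idd$ and $h \circ \pstc = \idd$) by long path computations, largely delegated to the Agda formalization. You instead factor the mapping type through a chain of universal-property equivalences --- pushout UP, contraction of $\sum_{u}(u \sim f_T)$, ordinary colimit UP, and the homotopy-analysis of maps out of $\colimm_{\Gamma}A$ --- so that invertibility is automatic, and the only obligation is to identify the resulting composite with the stated map $\pstc_{F,T}$, which you correctly flag as the crux (an equivalence with the right endpoints does not by itself prove the theorem). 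What each approach buys: yours trades the paper's two round-trip verifications for a single formula-matching argument via \cref{nattreq}, at the cost of needing the \emph{equivalence} forms of the universal properties rather than just the recursion principles (in particular the two-sided version of \cref{colimmapeq}, i.e., that a homotopy between maps out of a colimit is equivalent to per-vertex homotopies plus per-edge squares); the paper's construction, conversely, needs only induction/recursion and their $\rho$-rules. The irreducible difficulty is the same in both: the $2$-cell bookkeeping that reconciles $\epsilon_{i,j,g}(a)$, $\Theta$, and $\Xi$ with the transport of $\kappa_{i,j,g}$ across $\glue$, which is exactly the content of \cref{transfunc}-style manipulations that the paper's Agda development carries out, and your sketch of that step (reflexivity $W$, $u$ in \cref{nattreq}, $S_1$ from the colimit $\beta$-rule, $S_2$ reducing to the $\epsilon$-chain) is the right shape for it.
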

\begin{proof}
We define an inverse of $\pstc_{F,T}$ as follows. Consider a cocone $\left(r, K\right) : \mathsf{Cocone}_F(T,f_T)$ under $F$ on $\left(T, f_T\right)$. For all $i : \Gamma_0$ and $a : A$, $\pr_2(r_i)(a)^{-1}$ witnesses that $f_T(a) = \rec_{\colimm}(\F(r, K))(\pr_2(F_i(a)))$. Also, for each edge $g : \Gamma_1(i,j)$ and $a :A$, we have a chain $\eta_{i,j,g}(a)$ of paths
\[
\begin{tikzcd}
	{\transport^{x \mapsto  f_T(\left[\idd_A\right](x)) = \rec_{\colimm}(\F(r, K))(\psi(x))}\mleft(\kappa_{i,j,g}(a), \pr_2(r_j)(a)^{-1}\mright)} \\
	{\ap_{f_T}(\ap_{\left[\idd_A\right]}(\kappa_{i,j,g}(a))) ^{-1} \cdot \pr_2(r_j)(a)^{-1} \cdot  \ap_{\rec_{\colimm}(\F(r, K))}(\ap_{\psi}(\kappa_{i,j,g}(a)) ) } \\
	{\ap_{f_T}(\ap_{\left[\idd_A\right]}(\kappa_{i,j,g}(a))) ^{-1} \cdot \pr_2(r_j)(a)^{-1} \cdot  \ap_{\rec_{\colimm}(\F(r, K))}(\ap_{\iota_j}(\pr_2(F_{i,j,g})(a))^{-1} \cdot \kappa_{i,j,g}(\pr_2(F_i)(a)) ) } \\
	{\ap_{f_T}(\ap_{\left[\idd_A\right]}(\kappa_{i,j,g}(a)))^{-1} \cdot \pr_2(r_j)(a)^{-1} \cdot  \ap_{\pr_1(r_j)}(\pr_2(F_{i,j,g})(a))^{-1} \cdot \pr_1(K_{i,j,g})(\pr_2(F_i)(a))} \\
	{\left(\pr_1(K_{i,j,g})(\pr_2(F_i)(a))^{-1}  \cdot  \ap_{\pr_1(r_j)}(\pr_2(F_{i,j,g})(a)) \cdot \pr_2(r_j)(a)  \right)^{-1}} \\
	{\pr_2(r_i)(a)^{-1}}
	\arrow["{{{{\textit{via \cref{transfunc}}}}}}", equals, from=1-1, to=2-1]
	\arrow["{{{{\textit{via $\beta_{\psi}(i,j,g,a)$}}}}}", equals, from=2-1, to=3-1]
	\arrow["{{{{\textit{via $\beta_{\rec_{\colimm}(\F(r,K))}(i,j,g,\pr_2(F_i)(a))$}}}}}", equals, from=3-1, to=4-1]
	\arrow["{{\textit{via $\beta_{\left[\idd_A\right]}(i,j,g,a)$}}}", equals, from=4-1, to=5-1]
	\arrow["{{\ap_{{-}^{-1}}(\pr_2(K_{i,j,g})(a))}}", equals, from=5-1, to=6-1]
\end{tikzcd} \]
This gives us a function 
\[ \label{sigmafun}
\sigma \ : \ \prod_{x : \colimm{A}}f_T(\left[\idd_A\right](x)) = \rec_{\colimm}(\F(r,K))(\psi(x))
\tag{$\texttt{coc-forg}$}
\] and thus the cogap map $h_{r,K}: \P_A(F) \to T$:
\[
\begin{tikzcd}[row sep = large,  column sep = 40]
	{\colimm{A}} & {\colimm(\F(F))} \\
	A & {\P_A(F)} \\
	&& T
	\arrow[from=1-1, to=1-2]
	\arrow[from=1-1, to=2-1]
	\arrow[from=1-2, to=2-2]
	\arrow["{{\rec_{\colimm}(\F(r, K))}}", curve={height=-12pt}, from=1-2, to=3-3]
	\arrow[from=2-1, to=2-2]
	\arrow["{{f_T}}"', curve={height=12pt}, from=2-1, to=3-3]
	\arrow["{{h_{r,K}}}"{pos=0.4}, dashed, from=2-2, to=3-3]
\end{tikzcd}
\]
Since $h(\inl(a)) \equiv f_T(a)$, we can form the map
$\cogap_A(r,K) \coloneqq \left(h_{r,K}, \refl_{f_T({-})}\right)  :  \P_A(F) \to_A T$.  
\begin{remark}
The cogap map for the $A$-colimit is quite tractable. On $\inr$ it is definitionally equal to the cogap map of $\colimm$ in $\U$, and on $\inl$ to the given function $A \to T$.
\end{remark}

\noindent \textbf{The homotopy $\pstc_{F,T} \circ \cogap_A(r,K) \sim \idd_{\mathsf{Cocone}_F(T, f_T)}$~\cite[\href{https://github.com/PHart3/colimits-agda/tree/v0.4.0/Colimit-coslice/R-L-R}{R-L-R}]{agda-colim-TR}:}

\medskip

We have the definitional equalities
\[
\hspace*{-2.1cm}
\begin{tikzcd}[row sep = small]
	{\pstc_{F,T}(h_{r,K}, \refl_{f_T({-})})} \\
	{\left( \lambda{i}.\left(h_{r,K} \circ \inr \circ \iota_i, \ap_{h_{r,K}}(\tau_i(a)) \cdot \refl_{f_T(a)} \right) ,  \lambda{i}\lambda{j}\lambda{g}.\left(\lambda{x}.\ap_{h_{r,K}}(\delta_{i,j,g}(x)), \lambda{a}. \Theta_{\delta_{i,j,g}}(h_{r,K}^{\ast},a) \cdot  \ap_{\ap_{h_{r,K}}({-}) \cdot \refl_{f_T(a)}}(\epsilon_{i,j,g}(a)) \right) \right)}
	\arrow[Rightarrow, scaling nfold=3, no head, from=1-1, to=2-1]
\end{tikzcd}\]
and $h_{r,K} \circ \inr \circ \iota_i \equiv \pr_1(r_i)$. For each $i: \Gamma_0$ and $a : A$, we have a chain $P_i(a)$ of paths
{\allowdisplaybreaks
\begin{align*}
& \ \ap_{h_{r,K}}(\tau_i(a)) \cdot \refl_{f_T(a)} 
\\ = & \ \ap_{h_{r,K}}(\glue_{\P_A(F)}(\iota_i(a)))^{-1} \tag{$\Delta_i(a) \coloneqq \PI(\glue_{\P_A(F)}(\iota_i(a)))$}
\\ = & \ \left(\pr_2(r_i)(a)^{-1}\right)^{-1} \tag{$\ap_{{-}^{-1}}(\beta_{h_{r,K}}(\iota_i(a)))$}
\\ = & \ \pr_2(r_i)(a)
\end{align*}}
Moreover, for all $g : \Gamma_1(i,j)$ and $x : \pr_1(F_i)$, we have a chain $Q_{i,j,g}(x)$
{\allowdisplaybreaks
\begin{align*}
& \ \ap_{h_{r,K}}(\delta_{i,j,g}(x)) \cdot \refl_{h_{r, K}(\inr(\iota_i(x)))} 
\\ \equiv & \  \ap_{h_{r,K}}(\ap_{\inr}(\kappa_{i,j,g}(x))) \cdot \refl_{h_{r, K}(\inr(\iota_i(x)))}
\\ = & \ \ap_{\rec_{\colimm}(\F(r,K))}(\kappa_{i,j,g}(x))
\\  = & \ \pr_1(K_{i,j,g})(x) \tag{$ \beta_{\rec_{\colimm}(\F(r,K))}(i,j,g,x) $}
\end{align*}}
By \cref{nattreq}, we want to prove that for all $g : \Gamma_1(i,j)$ and $a : A$,
\[ 
\begin{tikzcd}[ampersand replacement = \&]
	{\subalign{ & \ap_{{-}^{-1} \cdot \ap_{\pr_1(r_j)}(\pr_2(F_{i,j,g})(a)) \cdot \pr_2(r_j)(a) }(Q_{i,j,g}(\pr_2(F_i)(a)))^{-1} \cdot \\ &  \Xi(P, \left(\ap_{h_{r,K}}(\delta_{i,j,g}(x)), \Theta_{\delta_{i,j,g}}(h_{r,K}^{\ast},a) \cdot  \ap_{\ap_{h_{r,K}}({-}) \cdot \refl_{f_T(a)}}(\epsilon_{i,j,g}(a))\right), a)}} \\
	{\pr_2(K_{i,j,g})}
	\arrow[from=1-1, to=2-1, equal]
\end{tikzcd}
\]
To this end, recalling the function \eqref{sigmafun}, note that
\[
\begin{tikzcd}[ampersand replacement = \&]
    {\Delta_i(a) \cdot \ap_{{-}^{-1}}(\beta_{h_{r,K}}(\iota_i(a)))} \\
	{\transport^{x \mapsto \ap_{h_{r,K}}\mleft(\glue_{\P_A(F)}(x)^{-1}\mright) \cdot \refl_{f_T(\left[\idd_A\right](x))} = \sigma(x)^{-1}}(\kappa_{i,j,g}(a), \Delta_j(a) \cdot \ap_{{-}^{-1}}(\beta_{h_{r,K}}(\iota_j(a)) ))} \\
	{\subalign{& \apd_{\ap_{h_{r,K}}\mleft(\glue_{\P_A(F)}({-})^{-1}\mright) \cdot \refl_{f_T(\left[\idd_A\right]({-}))}}( \kappa_{i,j,g}(a))^{-1} \cdot \\ &  \ap_{\transport^{x \mapsto \rec_{\colimm}(\F(r,K))(\psi(x)) = f_T(\left[\idd_A\right](x))}(\kappa_{i,j,g}(a) , {-} )}(\Delta_j(a) \cdot \ap_{{-}^{-1}}(\beta_{h_{r,K}}(\iota_j(a)) )) \cdot \apd_{\sigma({-})^{-1} }( \kappa_{i,j,g}(a))}}
	\arrow[Rightarrow, no head, from=2-1, to=1-1]
	\arrow[Rightarrow, no head, from=3-1, to=2-1]
\end{tikzcd}
\] and that the triangle
\[
\begin{tikzcd}[column sep = large, row sep = large]
	{\transport^{x \mapsto \rec_{\colimm}(\F(r,K))(\psi(x)) = f_T(\left[\idd_A\right](x))}\mleft(\kappa_{i,j,g}(a), \left(\pr_2(r_j)(a)^{-1}\right)^{-1}\mright)} && {\left(\pr_2(r_i)(a)^{-1}\right)^{-1}} \\
	{\transport^{x \mapsto f_T(\left[\idd_A\right](x))= \rec_{\colimm}(\F(r,K))(\psi(x))}\mleft(\kappa_{i,j,g}(a), \pr_2(r_j)(a)^{-1}\mright)^{-1}}
	\arrow["{{\apd_{\sigma({-})^{-1} }( \kappa_{i,j,g}(a))}}", Rightarrow, no head, from=1-1, to=1-3]
	\arrow["{{\PI(\kappa_{i,j,g}(a))}}"', Rightarrow, no head, from=1-1, to=2-1]
	\arrow["{{\ap_{{-}^{-1}}(\apd_{\sigma}(\kappa_{i,j,g}(a)))}}"', Rightarrow, no head, from=2-1, to=1-3]
\end{tikzcd}
\] commutes, where $\apd_{\sigma}(\kappa_{i,j,g}(a)) = \eta_{i,j,g}(a)$ by the induction principle for $\sigma$.
Therefore, after unfolding $\Xi$, we want to show that for each $a : A$, $\pr_2(K_{i,j,g})(a)$ equals the chain $C_{\Xi}(a)$, displayed by \cref{bigcoher2}. We can reduce $C_{\Xi}(a)$ to $\pr_2(K_{i,j,g})(a)$, which appears in $\eta_{i,j,g}(a)$, in a bottom-up fashion. This process iteratively removes the $\beta$-rules appearing in $C_{\Xi}(a)$. We refer the reader to the Agda formalization for the full reduction.
\begin{figure}
\centering
\begin{tikzcd}
	{\pr_1(K_{i,j,g})(\pr_2(F_i)(a))^{-1} \cdot \ap_{\pr_1(r_j)}(\pr_2(F_{i,j,g})(a)) \cdot \pr_2(r_j)(a} \\
	{\ap_{h_{r,K}}(\delta_{i,j,g}(\pr_2(F_i)(a)))^{-1} \cdot \ap_{\pr_1(r_j)}(\pr_2(F_{i,j,g})(a)) \cdot \pr_2(r_j)(a)} \\
	{\ap_{h_{r,K}}(\delta_{i,j,g}(\pr_2(F_i)(a)))^{-1} \cdot \ap_{\pr_1(r_j)}(\pr_2(F_{i,j,g})(a)) \cdot \ap_{h_{r,K}}(\tau_j(a)) \cdot \refl_{f_T(a)}} \\
	{\ap_{h_{r,K}}(\tau_i(a)) \cdot \refl_{f_T(a)}} \\
	{\transport^{x \mapsto \rec_{\colimm}(\F(r,K))(\psi(x)) = f_T(\left[\idd_A\right](x))}(\kappa_{i,j,g}(a) , \ap_{h_{r,K}}(\tau_j(a)) \cdot \refl_{f_T(a)} )} \\
	{\left(\kappa_{i,j,g}(a)\right)_{\ast}\mleft( \left(\pr_2(r_j)(a)^{-1}\right)^{-1}\mright)} \\
	{\left(\kappa_{i,j,g}(a)\right)_{\ast}\mleft(\pr_2(r_j)(a)^{-1}\mright)^{-1}} \\
	{\pr_2(r_i)(a)}
	\arrow["{{{{\textit{via $Q_{i,j,g}(\pr_2(F_i)(a))$}}}}}", equals, from=1-1, to=2-1]
	\arrow[""{name=0, anchor=center, inner sep=0}, "{{{\textit{via $P_j(a)$}}}}", equals, from=2-1, to=3-1]
	\arrow["{{{\textit{via $\epsilon_{i,j,g}(a)$}}}}", equals, from=3-1, to=4-1]
	\arrow["{{{\apd_{\ap_{h_{r,K}}\mleft(\glue_{\P_A(F)}({-})^{-1}\mright) \cdot \refl_{f_T(\left[\idd_A\right]({-}))}}( \kappa_{i,j,g}(a))}}}", equals, from=4-1, to=5-1]
	\arrow["{{{\ap_{\left(\kappa_{i,j,g}(a)\right)_{\ast}}(\Delta_j(a) \cdot \ap_{{-}^{-1}}(\beta_{h_{r,K}}(\iota_j(a)) ))}}}", equals, from=5-1, to=6-1]
	\arrow["{{{\PI(\kappa_{i,j,g}(a))}}}", equals, from=6-1, to=7-1]
	\arrow["{\textit{via $\eta_{i,j,g}(a)$}}", equals, from=7-1, to=8-1]
	\arrow[equals, from=2-1, to=0]
\end{tikzcd}
\caption{$C_{\Xi}(a)$}
\label{bigcoher2}
\end{figure}

\medskip

\noindent \textbf{The homotopy $\cogap_A(r,K) \circ \pstc_{F,T} \sim  \idd_{\P_A(F) \to_A T}$~\cite[\href{https://github.com/PHart3/colimits-agda/tree/v0.4.0/Colimit-coslice/L-R-L}{L-R-L}]{agda-colim-TR}:}

\medskip

Suppose that $\left(f, f_p\right) : \P_A(F) \to_A T$ and let $\zeta_1 \coloneqq \pr_1(\pstc_{F,T}(f,f_p))$ and $\zeta_2 \coloneqq \pr_2(\pstc_{F,T}(f,f_p))$. Letting $\tilde{h} \coloneqq h_{\zeta_1, \zeta_2}$, we want to construct functions
$\alpha  :  \prod_{x : \P_A(F)}f(x) = \tilde{h}(x)$ and $\widehat{\alpha}  : \prod_{a : A}\alpha(\inl(a))^{-1} \cdot f_p(a) = \refl_{f_T(a)}$.
To construct $\alpha$, we use \cref{funcpo}. For each $a : A$, $f_p(a)$ witnesses that $f(\inl(a)) = \tilde{h}(\inl(a))$.
Already, we see that once $\alpha$ is constructed, it is easy to derive $\widehat{\alpha}$ from it.

Continuing with $\alpha$, we see
$f(\inr(\iota_i(x))) \equiv \tilde{h}(\inr(\iota_i(x)))$. We also have a chain $V_{i,j,g}(x)$ of paths:
\[\begin{tikzcd}
	{\transport^{y \mapsto f(\inr(y)) = \tilde{h}(\inr(y))  }(\kappa_{i,j,g}(x), \refl_{f(\inr(\iota_j(F_{i,j,g}(x))))  } )} \\
	{\ap_f(\ap_{\inr}(\kappa_{i,j,g}(x)))^{-1} \cdot \ap_{\rec_{\colimm}(\F(\zeta_1, \zeta_2 )) }(\kappa_{i,j,g}(x))} \\
	{\refl_{f(\inr(\iota_i(x)))}}
	\arrow["{{{\textit{via \cref{transfunc}}}}}", equals, from=1-1, to=2-1]
	\arrow["{{{\textit{via $\beta_{\rec_{\colimm}(\F(\zeta_1,\zeta_2)}$}(i,j,g,x)}}}", equals, from=2-1, to=3-1]
\end{tikzcd} \]
By induction on $\colimm(\F(F))$, this gives us a term
$\gamma : \prod_{x : \colimm(\F(F))}f(\inr(x)) = \tilde{h}(\inr(x))$.
For all $i : \Gamma_0$ and $a :A$, we have a chain $R_i(a)$ of paths:
{\allowdisplaybreaks
\begin{align*}
&  \ \left(\ap_f(\glue_{\P_A(F)}(\iota_i(a)))^{-1} \cdot f_p(a) \right) \cdot \ap_{\tilde{h}}(\glue_{\P_A(F)}(\iota_i(a)))  
\\ =  & \ \left( \ap_f(\glue_{\P_A(F)}(\iota_i(a)))^{-1} \cdot f_p(a) \right ) \cdot \left(\ap_f(\tau_i(a)) \cdot f_p(a)\right)^{-1} 
\tag{\textit{via $\beta_{\tilde{h}}(\iota_i(a)) $}}
\\ =  & \ \refl_{f(\inr(\iota_i(\pr_2(F_i)(a))))} \tag{$M_i(a) \coloneqq \PI(\glue_{\P_A(F)}(\iota_i(a)), f_p(a) )$}
\\ \equiv  & \  \gamma(\psi(\iota_i(a) ))
\end{align*}}

Further, by \cref{transfunc} again, for all $g : \Gamma_1(i,j)$ and $a : A$,
\[
\begin{tikzcd}[ampersand replacement =\&]
	{\transport^{x \mapsto \left(\ap_f(\glue_{\P_A(F)}(x))^{-1} \cdot f_p(\left[\idd_A\right](x)) \right) \cdot \ap_{\tilde{h}}(\glue_{\P_A(F)}(x)) =_{f(\inr(\psi(x))) = \tilde{h}(\inr(\psi(x)))} \gamma(\psi(x))}(\kappa_{i,j,g}(a), R_j(a) )} \\
	{\subalign{& \apd_{\left(\ap_f(\glue_{\P_A(F)}({-}))^{-1} \cdot  f_p(\left[\idd_A\right]({-})) \right) \cdot \ap_{\tilde{h}}(\glue_{\P_A(F)}({-}))}(\kappa_{i,j,g}(a) )^{-1} \cdot \\ & \ap_{\transport^{x \mapsto f(\inr(\psi(x))) = \tilde{h}(\inr(\psi(x)))}(\kappa_{i,j,g}(a) ,{-})}(R_j(a) ) \cdot \apd_{\gamma(\psi({-}))}(\kappa_{i,j,g}(a))}}
	\arrow[from=1-1, to=2-1, equal]
\end{tikzcd} 
\] We want to prove the bottom path equals $R_i(a)$. By \cref{apdnat}, we have the commuting square
\[\adjustbox{scale=.88}
{
\hspace*{-2.5cm}
\begin{tikzcd}[row sep=large, column sep = 23]
	{\left(\kappa_{i,j,g}(a)\right)_{\ast}\mleft(\left(\ap_f(\glue_{\P_A(F)}(\iota_j(a)))^{-1} \cdot f_p(a) \right) \cdot \ap_{\tilde{h}}(\glue_{\P_A(F)}(\iota_j(a)))\mright)} && {\left(\ap_f(\glue_{\P_A(F)}(\iota_i(a)))^{-1} \cdot f_p(a) \right) \cdot \ap_{\tilde{h}}(\glue_{\P_A(F)}(\iota_i(a)))} \\
	\\
	{\left(\kappa_{i,j,g}(a)\right)_{\ast}\mleft(\left(\ap_f(\glue_{\P_A(F)}(\iota_j(a)))^{-1} \cdot f_p(a) \right) \cdot \left(\ap_f(\tau_j(a)) \cdot f_p(a)\right)^{-1}\mright)} && {\left(\ap_f(\glue_{\P_A(F)}(\iota_i(a)))^{-1} \cdot f_p(a) \right) \cdot \left(\ap_f(\tau_i(a)) \cdot f_p(a)\right)^{-1}}
	\arrow["{\apd_{\left(\ap_f(\glue_{\P_A(F)}({-}))^{-1} \cdot f_p(\left[\idd_A\right]({-})) \right) \cdot \ap_{\tilde{h}}(\glue_{\P_A(F)}({-}))}(\kappa_{i,j,g}(a) )}", shift left=2, Rightarrow, no head, from=1-1, to=1-3]
	\arrow["{\textit{via $\beta_{\tilde{h}}(\iota_j(a))$}}"', Rightarrow, no head, from=1-1, to=3-1]
	\arrow["{\apd_{\left(\ap_f(\glue_{\P_A(F)}({-}))^{-1} \cdot f_p(\left[\idd_A\right]({-})) \right) \cdot \sigma({-})}(\kappa_{i,j,g}(a) )}"', shift right=2, Rightarrow, no head, from=3-1, to=3-3]
	\arrow["{\textit{via $\beta_{\tilde{h}}(\iota_i(a))$}}"', Rightarrow, no head, from=3-3, to=1-3]
\end{tikzcd}  } 
\]
Therefore, it suffices to show that
\[\begin{tikzcd}
	{\ap_{\transport^{x \mapsto f(\inr(\psi(x))) = \tilde{h}(\inr(\psi(x)))}(\kappa_{i,j,g}(a) ,{-})}(M_j(a) ) \cdot \apd_{\gamma(\psi({-}))}(\kappa_{i,j,g}(a))} \\
	{\apd_{\left(\ap_f( \glue_{\P_A(F)}({-}))^{-1} \cdot  f_p(\left[\idd_A\right]({-})) \right) \cdot \sigma({-})}(\kappa_{i,j,g}(a)) \cdot M_i(a)}
	\arrow[equals, from=1-1, to=2-1]
\end{tikzcd} \tag{$\texttt{$M$-coher}$} \label{Mcoher} \]
We begin with the two $\apd$ terms appearing in \eqref{Mcoher}. We have the following two commuting triangles:
\[ 
\hspace*{-2.5cm}
\adjustbox{scale=.88} {
\begin{tikzcd}[row sep = 35,  /tikz/row 2/.append style={row sep=1pt}, column sep = {140, between origins}]
	& {\left(\kappa_{i,j,g}(a)\right)_{\ast}\mleft(\mleft(\ap_f(\glue_{\P_A(F)}(\iota_j(a)))^{-1} \cdot f_p(a) \mright) \cdot \sigma(\iota_j(a))\mright)} \\
	{\left(\ap_f(\glue_{\P_A(F)}(\iota_i(a)))^{-1} \cdot f_p(a) \right) \cdot \left(\kappa_{i,j,g}(a)\right)_{\ast}( \sigma(\iota_j(a)))} && {\left(\ap_f(\glue_{\P_A(F)}(\iota_i(a)))^{-1} \cdot f_p(a) \right) \cdot \sigma(\iota_i(a))} \\
	\\
	& {\left(\kappa_{i,j,g}(a) \right)_{\ast}(\gamma(\psi(\iota_j(a))))} \\
	{\ap_{\psi}(\kappa_{i,j,g}(a))_{\ast}(\gamma(\psi(\iota_j(a))))} && {\gamma(\psi(\iota_i(a)))}
	\arrow["{{{\PI(\kappa_{i,j,g}(a))}}}"', Rightarrow, no head, from=1-2, to=2-1]
	\arrow["{{{\apd_{\left(\ap_f( \glue_{\P_A(F)}({-}))^{-1} \cdot f_p(\left[\idd_A\right]({-})) \right) \cdot \sigma({-})}(\kappa_{i,j,g}(a))}}}", Rightarrow, no head, from=1-2, to=2-3]
	\arrow["{{{\ap_{\left(\ap_f(\glue_{\P_A(F)}(\iota_i(a)))^{-1} \cdot f_p(a) \right) \cdot {-}}( \apd_{\sigma}(\kappa_{i,j,g}(a)))}}}"', Rightarrow, no head, from=2-1, to=2-3]
	\arrow["{{{\PI(\kappa_{i,j,g}(a))}}}"', Rightarrow, no head, from=4-2, to=5-1]
	\arrow["{{{\apd_{\gamma(\psi({-}))}(\kappa_{i,j,g}(a))}}}", Rightarrow, no head, from=4-2, to=5-3]
	\arrow["{{{\apd_{\gamma}(\ap_{\psi}(\kappa_{i,j,g}(a)))}}}"', Rightarrow, no head, from=5-1, to=5-3]
\end{tikzcd}}
\]
 Note that $\apd_{\sigma}(\kappa_{i,j,g}(a)) =  \eta_{i,j,g}(a)$ by the induction principle for $\colimm{A}$. In addition, 
\[
\begin{tikzcd}
	{\apd_{\gamma}(\ap_{\psi}(\kappa_{i,j,g}(a)))} \\
	{\ap_{{-}_{\ast}(\gamma(\psi(\iota_j(a))))}(\beta_{\psi}(i,j,g,a)) \cdot \apd_{\gamma}(\ap_{\iota_j}(\pr_2(F_{i,j,g})(a))^{-1} \cdot \kappa_{i,j,g}(\pr_2(F_i)(a)))} \\
	{\ap_{{-}_{\ast}(\refl_{f(\inr(\iota_j(\pr_2(F_j)(a))))})}(\beta_{\psi}(i,j,g,a)) \cdot \PI(\pr_2(F_{i,j,g})(a)) \cdot \apd_{\gamma}(\kappa_{i,j,g}(\pr_2(F_i)(a)))}
	\arrow[Rightarrow, no head, from=1-1, to=2-1]
	\arrow[Rightarrow, no head, from=2-1, to=3-1]
\end{tikzcd}
\] where $\PI(\pr_2(F_{i,j,g})(a))$ has type
\[\begin{tikzcd}
	{\left(\ap_{\iota_j}(\pr_2(F_{i,j,g})(a))^{-1} \cdot \kappa_{i,j,g}(\pr_2(F_i)(a))\right)_{\ast}(\refl_{f(\inr(\iota_j(\pr_2(F_j)(a))))})} \\
	{ \left(\kappa_{i,j,g}(\pr_2(F_i)(a)) \right)_{\ast}(\gamma(\iota_j(F_{i,j,g}(\pr_2(F_i)(a)))) )}
	\arrow[Rightarrow, no head, from=1-1, to=2-1]
\end{tikzcd}\]
Note that $\apd_{\gamma}(\kappa_{i,j,g}(\pr_2(F_i)(a))) = V_{i,j,g}(\pr_2(F_i)(a))$ by the induction principle for $\colimm(\F(F))$.

Now, let $Y_{i,j,g}(a)  \coloneqq \Theta_{\ap_{\inr}(\kappa_{i,j,g}({-}))}(f^{\ast},a) \cdot \ap_{\ap_f({-}) \cdot f_p(a)}(\epsilon_{i,j,g}(a))$ and consider the following chain $\chi(s)$ of paths for each $s : f(\inr(\psi(\iota_j(a)))) = \tilde{h}(\inr(\psi(\iota_j(a))))$:
\[ 
\hspace*{-3cm}\begin{tikzcd}[ampersand replacement = \&]
	{\transport^{x \mapsto f(\inr(\psi(x))) = \tilde{h}(\inr(\psi(x)))}(\kappa_{i,j,g}(a) ,s)} \\
	{\ap_f(\ap_{\inr}(\ap_{\psi}(\kappa_{i,j,g}(a))))^{-1} \cdot s \cdot \ap_{\rec_{\colimm}(\F(\zeta_1, \zeta_2))}(\ap_{\psi}(\kappa_{i,j,g}(a)))} \\
	{\ap_f(\ap_{\inr}(\ap_{\psi}(\kappa_{i,j,g}(a))))^{-1} \cdot s \cdot \ap_{\rec_{\colimm}(\F(\zeta_1, \zeta_2))}(\ap_{\iota_j}(\pr_2(F_{i,j,g})(a))^{-1} \cdot \kappa_{i,j,g}(\pr_2(F_i)(a)))} \\
	{ \ap_f(\ap_{\inr}(\ap_{\psi}(\kappa_{i,j,g}(a))))^{-1} \cdot s \cdot \left( \ap_f(\tau_j(a)) \cdot f_p(a) \right) \cdot \left(\ap_f(\ap_{\inr}(\kappa_{i,j,g}(\pr_2(F_i)(a))))^{-1} \cdot \ap_{f \circ \inr \circ \iota_j}(\pr_2(F_{i,j,g})(a)) \cdot \ap_f(\tau_j(a)) \cdot f_p(a)  \right)^{-1}} \\
	{\ap_f(\ap_{\inr}(\ap_{\psi}(\kappa_{i,j,g}(a))))^{-1} \cdot s \cdot \left(\ap_f(\tau_j(a)) \cdot f_p(a)\right) \cdot \left(\ap_f(\tau_i(a)) \cdot f_p(a)\right)^{-1} } \\
	{\ap_f(\ap_{\inr}(\ap_{\psi}(\kappa_{i,j,g}(a))))^{-1} \cdot s \cdot \left(\ap_f(\ap_{\inr}(\ap_{\psi}(\kappa_{i,j,g}(a)))) \cdot \left(\ap_f(\tau_i(a)) \cdot f_p(a)\right) \cdot \refl_{f_T(a)}\right) \cdot \left(\ap_f(\tau_i(a)) \cdot f_p(a)\right)^{-1} } \\
	{
					  \ap_f(\ap_{\inr}(\ap_{\psi}(\kappa_{i,j,g}(a))))^{-1} \cdot s \cdot \ap_f(\ap_{\inr}(\ap_{\psi}(\kappa_{i,j,g}(a))))
					} \\
	{\transport^{x \mapsto f(\inr(\psi(x))) = f(\inr(\psi(x)))}(\kappa_{i,j,g}(a) , s)}
	\arrow["{\textit{via $\cref{transfunc}$}}", Rightarrow, no head, from=1-1, to=2-1]
	\arrow["{\textit{via $\beta_{\psi}(i,j,g, a)$}} ", Rightarrow, no head, from=2-1, to=3-1]
	\arrow["{\textit{via $\mu_1(i,j,g,a)$} }", Rightarrow, no head, from=3-1, to=4-1]
	\arrow["{\textit{via $Y_{i,j,g}(a)$}}", Rightarrow, no head, from=4-1, to=5-1]
	\arrow["{\textit{via $\mu_2(i,j,g,a)$}}", Rightarrow, no head, from=5-1, to=6-1]
	\arrow["{{{{\PI(\tau_i(a),f_p(a), \kappa_{i,j,g}(a))}}}}", Rightarrow, no head, from=6-1, to=7-1]
	\arrow["{\textit{via $\cref{transfunc}$}}", Rightarrow, no head, from=7-1, to=8-1]
\end{tikzcd}
\] where $\mu_1(i,j,g,a)$ and $\mu_2(i,j,g,a)$ denote, respectively, the chains of paths
\[
\begin{tikzcd}[/tikz/row 4/.append style={row sep=5pt}, /tikz/row 5/.append style={row sep=5pt}]
	{\ap_{\rec_{\colimm}(\F(\zeta_1, \zeta_2))}(\ap_{\iota_j}(\pr_2(F_{i,j,g})(a))^{-1} \cdot \kappa_{i,j,g}(\pr_2(F_i)(a)))} \\
	{\ap_{\rec_{\colimm}(\F(\zeta_1,\zeta_2)) \circ \iota_j}(\pr_2(F_{i,j,g})(a))^{-1} \cdot \ap_{\rec_{\colimm}(\F(\zeta_1, \zeta_2))}( \kappa_{i,j,g}(\pr_2(F_i)(a)))} \\
	{\ap_{f \circ \inr \circ \iota_j}(\pr_2(F_{i,j,g})(a))^{-1} \cdot   \ap_f(\ap_{\inr}(\kappa_{i,j,g}(\pr_2(F_i)(a))))} \\
	{\left( \ap_f(\tau_j(a)) \cdot f_p(a) \right) \cdot \left(\ap_f(\ap_{\inr}(\kappa_{i,j,g}(\pr_2(F_i)(a))))^{-1} \cdot \ap_{f \circ \inr \circ \iota_j}(\pr_2(F_{i,j,g})(a)) \cdot \ap_f(\tau_j(a)) \cdot f_p(a)  \right)^{-1}} \\
	\\
	{\ap_f(\tau_j(a)) \cdot f_p(a)} \\
	{\transport^{y \mapsto f(\inr(\psi(y))) = f_T(\left[\idd_A\right](y))}(\kappa_{i,j,g}(a)^{-1}, \ap_f(\tau_i) \cdot f_p(a))} \\
	{\ap_f(\ap_{\inr}(\ap_{\psi}(\kappa_{i,j,g}(a)))) \cdot \left(\ap_f(\tau_i) \cdot f_p(a) \right) \cdot \ap_{f_T}(\ap_{\left[\idd_A\right]}(\kappa_{i,j,g}(a)))^{-1}} \\
	{\ap_f(\ap_{\inr}(\ap_{\psi}(\kappa_{i,j,g}(a)))) \cdot \left(\ap_f(\tau_i(a)) \cdot f_p(a)\right) \cdot \refl_{f_T(a)}}
	\arrow[equals, from=1-1, to=2-1]
	\arrow["{{\textit{via $\beta_{\rec_{\colimm}(\F(\zeta_1, \zeta_2))}(i,j,g, \pr_2(F_i)(a))$}}}", equals, from=2-1, to=3-1]
	\arrow[equals, from=3-1, to=4-1]
	\arrow["{{{{\apd_{\ap_f\mleft(\glue_{\P_A(F)}({-})^{-1}\mright) \cdot f_p(\left[\idd_A\right]({-}))}\mleft(\kappa_{i,j,g}(a)^{-1}\mright)}}}}", equals, from=6-1, to=7-1]
	\arrow["{{\textit{via $\cref{transfunc}$}}}", equals, from=7-1, to=8-1]
	\arrow["{{\textit{via $\beta_{\left[\idd_A\right]}(i,j,g,a)$}}}", equals, from=8-1, to=9-1]
\end{tikzcd}  \]
By homotopy naturality, we have the commuting square
\[ 
\hspace*{-.5cm}
\begin{tikzcd}[column sep = 35, row sep = 60]
	{\transport^{x \mapsto f(\inr(\psi(x))) = \tilde{h}(\inr(\psi(x)))}(\kappa_{i,j,g}(a) ,s_1)} && {\transport^{x \mapsto f(\inr(\psi(x))) = \tilde{h}(\inr(\psi(x)))}(\kappa_{i,j,g}(a) ,s_2)} \\
	{\transport^{x \mapsto f(\inr(\psi(x))) = f(\inr(\psi(x)))}(\kappa_{i,j,g}(a) ,s_1)} && {\transport^{x \mapsto f(\inr(\psi(x))) = f(\inr(\psi(x)))}(\kappa_{i,j,g}(a) ,s_2)}
	\arrow["{\textit{via $M_j(a)$}}", equals, from=1-1, to=1-3]
	\arrow["{ \chi\mleft(\left( \ap_f( \glue_{\P_A(F)}(\iota_j(a)))^{-1} \cdot f_p(a) \right ) \cdot \left(\ap_f(\tau_j(a)) \cdot f_p(a)\right)^{-1}\mright)}"{description}, equals, from=1-1, to=2-1]
	\arrow["{\chi(\refl_{f(\inr(\iota_j(\pr_2(F_j)(a))))})}"{description}, equals, from=1-3, to=2-3]
	\arrow["{\textit{via $M_j(a)$}}"', equals, from=2-1, to=2-3]
\end{tikzcd}
\]
with $s_1 \coloneqq \left( \ap_f( \glue_{\P_A(F)}(\iota_j(a)))^{-1} \cdot f_p(a) \right ) \cdot \left(\ap_f(\tau_j(a)) \cdot f_p(a)\right)^{-1}$ and $s_2 \coloneqq \refl_{f(\inr(\iota_j(\pr_2(F_j)(a))))}$.
One can also prove that the following square commutes:
\[ 
\hspace*{-2.5cm}
\begin{tikzcd}[row sep = 60, ampersand replacement=\&]
	{\left(\ap_f(\glue_{\P_A(F)}(\iota_i(a)))^{-1} \cdot f_p(a)  \right) \cdot \left(\ap_f(\tau_i(a)) \cdot f_p(a) \right)^{-1}} \&[9 mm] {\subalign{& \transport^{x \mapsto f(\inr(\psi(x))) = f(\inr(\psi(x)))} \\ & \quad \mleft(\kappa_{i,j,g}(a) , \left( \ap_f( \glue_{\P_A(F)}(\iota_j(a)))^{-1} \cdot f_p(a) \right ) \cdot \left(\ap_f(\tau_j(a)) \cdot f_p(a)\right)^{-1}\mright)}} \\
	{\subalign{ & \left(\ap_f(\glue_{\P_A(F)}(\iota_i(a)))^{-1} \cdot f_p(a)  \right) \cdot \\ & \transport^{x \mapsto f_T(\left[\idd_A\right](x)) =  \tilde{h}(\inr(\psi(x))) }\mleft(\kappa_{i,j,g}(a), \left(\ap_f(\tau_j(a))\cdot f_p(a)\right)^{-1}\mright)}} \&[9 mm] {\subalign{& \transport^{x \mapsto f(\inr(\psi(x))) = \tilde{h}(\inr(\psi(x)))} \\ & \quad \mleft(\kappa_{i,j,g}(a) , \left( \ap_f( \glue_{\P_A(F)}(\iota_j(a)))^{-1} \cdot f_p(a) \right ) \cdot \left(\ap_f(\tau_j(a)) \cdot f_p(a)\right)^{-1}\mright)}}
	\arrow["{{{\PI(\kappa_{i,j,g}(a))}}}", Rightarrow,  no head, from=1-1, to=1-2]
	\arrow["{\textit{via $\eta_{i,j,g}(a)$}}"{description}, Rightarrow, no head, from=1-1, to=2-1]
	\arrow["{{{\PI(\kappa_{i,j,g}(a))}}}"', Rightarrow, no head, from=2-1, to=2-2]
	\arrow["{{{\chi\mleft(\left( \ap_f( \glue_{\P_A(F)}(\iota_j(a)))^{-1} \cdot f_p(a) \right ) \cdot \left(\ap_f(\tau_j(a)) \cdot f_p(a)\right)^{-1}\mright)}}}"{description}, Rightarrow, no head, from=2-2, to=1-2]
\end{tikzcd}
\]
At this point, we have put the path
\[
\hspace*{-.2cm} \apd_{\left(\ap_f( \glue_{\P_A(F)}({-}))^{-1} \cdot f_p(\left[\idd_A\right]({-})) \right) \cdot \sigma({-})}(\kappa_{i,j,g}(a))^{-1} \cdot \chi\mleft(\left( \ap_f( \glue_{\P_A(F)}(\iota_j(a)))^{-1} \cdot f_p(a) \right ) \cdot \left(\ap_f(\tau_j(a)) \cdot f_p(a)\right)^{-1}\mright)
\] into a form that will be useful. 
We want to do the same for the path 
\[
\chi\mleft(\refl_{f(\inr(\iota_j(\pr_2(F_j)(a))))}\mright)^{-1} \cdot \apd_{\gamma(\psi({-}))}(\kappa_{i,j,g}(a))
\] To this end, consider the following three chains of paths:

\[
\hspace*{-2.9cm}
\begin{tikzcd}[/tikz/row 3/.append style={row sep=7pt},
/tikz/row 4/.append style={row sep=7pt},
/tikz/row 7/.append style={row sep=7pt},
/tikz/row 8/.append style={row sep=7pt}]
	{\ap_f(\ap_{\inr}(\ap_{\psi}(\kappa_{i,j,g}(a))))^{-1}  \cdot \left(\ap_f(\ap_{\inr}(\ap_{\psi}(\kappa_{i,j,g}(a)))) \cdot \left(\ap_f(\tau_i(a)) \cdot f_p(a)\right) \cdot \refl_{f_T(a)}\right) \cdot \left(\ap_f(\tau_i(a)) \cdot f_p(a)\right)^{-1} } \\
	{\ap_f(\ap_{\inr}(\ap_{\psi}(\kappa_{i,j,g}(a))))^{-1}  \cdot \left(\ap_f(\tau_j(a)) \cdot f_p(a)\right) \cdot \left(\ap_f(\tau_i(a)) \cdot f_p(a)\right)^{-1} } \\
	{\ap_f(\ap_{\inr}(\ap_{\psi}(\kappa_{i,j,g}(a))))^{-1}  \cdot \left( \ap_f(\tau_j(a)) \cdot f_p(a) \right) \cdot \left(\ap_f(\ap_{\inr}(\ap_{\psi}(\kappa_{i,j,g}(a)))^{-1} \cdot \tau_j(a) \cdot \refl_{\inl(a)}) \cdot f_p(a)  \right)^{-1}} \\
	\\
	{\ap_f(\ap_{\inr}(\ap_{\psi}(\kappa_{i,j,g}(a))))^{-1}  \cdot \left( \ap_f(\tau_j(a)) \cdot f_p(a) \right) \cdot \left(\ap_f(\ap_{\inr}(\ap_{\psi}(\kappa_{i,j,g}(a)))^{-1} \cdot \tau_j(a) \cdot \refl_{\inl(a)}) \cdot f_p(a)  \right)^{-1}} \\
	{\ap_f(\ap_{\inr}(\ap_{\psi}(\kappa_{i,j,g}(a))))^{-1}  \cdot \left( \ap_f(\tau_j(a)) \cdot f_p(a) \right) \cdot \left(\ap_f(\ap_{\inr}(\ap_{\iota_j}(\pr_2(F_{i,j,g})(a))^{-1} \cdot \kappa_{i,j,g}(\pr_2(F_i)(a)))^{-1} \cdot \tau_j(a)) \cdot f_p(a)  \right)^{-1}} \\
	{\ap_f(\ap_{\inr}(\ap_{\psi}(\kappa_{i,j,g}(a))))^{-1} \cdot \left( \ap_f(\tau_j(a)) \cdot f_p(a) \right) \cdot \left(\ap_f(\ap_{\inr}(\kappa_{i,j,g}(\pr_2(F_i)(a))))^{-1} \cdot \ap_{f \circ \inr \circ \iota_j}(\pr_2(F_{i,j,g})(a)) \cdot \ap_f(\tau_j(a)) \cdot f_p(a)  \right)^{-1}} \\
	\\
	{\ap_f(\ap_{\inr}(\ap_{\psi}(\kappa_{i,j,g}(a))))^{-1}  \cdot \left( \ap_f(\tau_j(a)) \cdot f_p(a) \right) \cdot \left(\ap_f(\ap_{\inr}(\kappa_{i,j,g}(\pr_2(F_i)(a))))^{-1} \cdot \ap_{f \circ \inr \circ \iota_j}(\pr_2(F_{i,j,g})(a)) \cdot \ap_f(\tau_j(a)) \cdot f_p(a)  \right)^{-1}} \\
	{\ap_f(\ap_{\inr}(\ap_{\psi}(\kappa_{i,j,g}(a))))^{-1}  \cdot \ap_{\rec_{\colimm}(\F(\zeta_1, \zeta_2))}(\ap_{\iota_j}(\pr_2(F_{i,j,g})(a))^{-1} \cdot \kappa_{i,j,g}(\pr_2(F_i)(a)))} \\
	{\ap_f(\ap_{\inr}(\ap_{\psi}(\kappa_{i,j,g}(a))))^{-1}  \cdot \ap_{\rec_{\colimm}(\F(\zeta_1, \zeta_2))}(\ap_{\psi}(\kappa_{i,j,g}(a)))} \\
	{\transport^{x \mapsto f(\inr(\psi(x))) = \tilde{h}(\inr(\psi(x)))}(\kappa_{i,j,g}(a) ,\refl_{f(\inr(\iota_j(\pr_2(F_j)(a))))})} \\
	{\ap_{\psi}(\kappa_{i,j,g}(a))_{\ast}(\gamma(\psi(\iota_j(a))))} \\
	{\left(\ap_{\iota_j}(\pr_2(F_{i,j,g})(a))^{-1} \cdot \kappa_{i,j,g}(\pr_2(F_i)(a))\right)_{\ast}(\refl_{f(\inr(\iota_j(\pr_2(F_j)(a))))})} \\
	{\left(\kappa_{i,j,g}(\pr_2(F_i)(a)) \right)_{\ast}(\gamma(\iota_j(F_{i,j,g}(\pr_2(F_i)(a)))) )} \\
	{\ap_f(\ap_{\inr}(\kappa_{i,j,g}(\pr_2(F_i)(a))))^{-1} \cdot \ap_{\rec_{\colimm}(\F(\zeta_1, \zeta_2 ))}(\kappa_{i,j,g}(\pr_2(F_i)(a)))} \\
	{\ap_f(\ap_{\inr}(\kappa_{i,j,g}(\pr_2(F_i)(a))))^{-1} \cdot \ap_f(\ap_{\inr}(\kappa_{i,j,g}(\pr_2(F_i)(a))))}
	\arrow["{{{{{{{{\textit{via $\mu_2(i,j,g,a)$}}}}}}}}}", equals, from=1-1, to=2-1]
	\arrow["{{{\textit{via $E_3(i,j,g,a)$}}}}", equals, from=2-1, to=3-1]
	\arrow["{{{\textit{via $E_2(i,j,g,a)$}}}}", equals, from=5-1, to=6-1]
	\arrow["{{{\textit{via $E_1(i,j,g,a)$}}}}", equals, from=6-1, to=7-1]
	\arrow["{{\textit{via $\mu_1(i,j,g,a)$} }}", equals, from=9-1, to=10-1]
	\arrow["{{\textit{via $\beta_{\psi}(i,j,g, a)$  }}}", equals, from=10-1, to=11-1]
	\arrow["{{\textit{via \cref{transfunc}}}}", equals, from=11-1, to=12-1]
	\arrow["{{{{{{{\PI(\kappa_{i,j,g}(a))}}}}}}}", equals, from=12-1, to=13-1]
	\arrow["{{{\ap_{{-}_{\ast}(\refl_{f(\inr(\iota_j(\pr_2(F_j)(a))))})}(\beta_{\psi}(i,j,g,a))}}}", equals, from=13-1, to=14-1]
	\arrow["{{{\PI(\pr_2(F_{i,j,g})(a))}}}", equals, from=14-1, to=15-1]
	\arrow["{{\textit{via \cref{transfunc}}}}", equals, from=15-1, to=16-1]
	\arrow["{{\textit{via $\beta_{\rec_{\colimm}(\F(\zeta_1,\zeta_2))}(i,j,g,\pr_2(F_i)(a)) $}}}", equals, from=16-1, to=17-1]
\end{tikzcd} \]

\noindent We denote these chains by $P_1(i,j,g,a)$, $P_2(i,j,g,a)$, and $P_3(i,j,g,a)$, respectively. We can show that all three are equal to canonical paths $\PI$. As a consequence, we have the commuting diagram displayed by \cref{bigpent}. It's not hard to check that the bottom string of paths in \cref{bigpent} equals $\chi\mleft(\refl_{f(\inr(\iota_j(\pr_2(F_j)(a))))}\mright)^{-1} \cdot \apd_{\gamma(\psi({-}))}(\kappa_{i,j,g}(a))$, which therefore equals the top path: $\PI(\kappa_{i,j,g}(a))$.
Thus, we have produced a chain of paths
\[
\begin{tikzcd}[ampersand replacement=\&] 
	{\subalign{& \apd_{\left(\ap_f( \glue_{\P_A(F)}({-}))^{-1} \cdot f_p(\left[\idd_A\right]({-})) \right) \cdot \sigma({-})}(\kappa_{i,j,g}(a))^{-1}  \cdot \\ & \ap_{\transport^{x \mapsto f(\inr(\psi(x))) = \tilde{h}(\inr(\psi(x)))}(\kappa_{i,j,g}(a) ,{-})}(M_j(a)) \cdot \apd_{\gamma(\psi({-}))}(\kappa_{i,j,g}(a))}} \\
	{\PI(\kappa_{i,j,g}(a)) \cdot \ap_{\transport^{x \mapsto f(\inr(\psi(x))) = f(\inr(\psi(x)))}(\kappa_{i,j,g}(a) ,{-})}(M_j(a))  \cdot \PI(\kappa_{i,j,g}(a))} \\
	{M_i(a)}
	\arrow[Rightarrow, no head, from=1-1, to=2-1]
	\arrow["{\PI(\kappa_{i,j,g}(a)) }", Rightarrow, no head, from=2-1, to=3-1]
\end{tikzcd} \] which fulfills our goal: \eqref{Mcoher}.
\begin{figure}
\centering
 \adjustbox{scale=.88} {
 \hspace*{-2.5cm}
\begin{tikzcd}[column sep = -65, row sep = large, ampersand replacement=\&]  
	{\transport^{x \mapsto f(\inr(\psi(x))) = f(\inr(\psi(x)))}(\kappa_{i,j,g}(a) , \refl_{f(\inr(\iota_j(\pr_2(F_j)(a))))})} \&\& {\refl_{f(\inr(\iota_i(\pr_2(F_i)(a))))}} \\
	{ 					  \ap_f(\ap_{\inr}(\ap_{\psi}(\kappa_{i,j,g}(a))))^{-1} \cdot  \ap_f(\ap_{\inr}(\ap_{\psi}(\kappa_{i,j,g}(a)))) } \&\& {\ap_f(\ap_{\inr}(\kappa_{i,j,g}(\pr_2(F_i)(a))))^{-1} \cdot \ap_f(\ap_{\inr}(\kappa_{i,j,g}(\pr_2(F_i)(a))))} \\
	{\subalign{& \ap_f(\ap_{\inr}(\ap_{\psi}(\kappa_{i,j,g}(a))))^{-1} \cdot \\ & \left(\ap_f(\ap_{\inr}(\ap_{\psi}(\kappa_{i,j,g}(a)))) \cdot \left(\ap_f(\tau_i(a)) \cdot f_p(a)\right) \cdot \refl_{f_T(a)}\right) \cdot \left(\ap_f(\tau_i(a)) \cdot f_p(a)\right)^{-1}}} \&\& {\subalign{& \ap_f(\ap_{\inr}(\ap_{\psi}(\kappa_{i,j,g}(a))))^{-1} \cdot \left( \ap_f(\tau_j(a)) \cdot f_p(a) \right) \cdot \\ &  \left(\ap_f(\ap_{\inr}(\kappa_{i,j,g}(\pr_2(F_i)(a))))^{-1} \cdot \ap_{f \circ \inr \circ \iota_j}(\pr_2(F_{i,j,g})(a)) \cdot \ap_f(\tau_j(a)) \cdot f_p(a)  \right)^{-1}}} \\
	\&  {\subalign{& \ap_f(\ap_{\inr}(\ap_{\psi}(\kappa_{i,j,g}(a))))^{-1}  \cdot   \left( \ap_f(\tau_j(a)) \cdot f_p(a) \right) \cdot \\ & \left(\ap_f(\ap_{\inr}(\ap_{\psi}(\kappa_{i,j,g}(a)))^{-1} \cdot \tau_j(a) \cdot \refl_{\inl(a)}) \cdot f_p(a)  \right)^{-1}}}
	\arrow["{{{{{{\PI(\kappa_{i,j,g}(a))}}}}}}", Rightarrow, from=1-1, to=1-3]
	\arrow["{{\PI(\kappa_{i,j,g}(a))}}"', Rightarrow, from=1-1, to=2-1]
	\arrow["{{\PI(\tau_i(a),f_p(a), \kappa_{i,j,g}(a))}}"', Rightarrow, from=2-1, to=3-1]
	\arrow["{{{{{{\PI(\kappa_{i,j,g}(\pr_2(F_i)(a)))}}}}}}"', Rightarrow, from=2-3, to=1-3]
	\arrow["{{{P_1(i,j,g,a)}}}"', Rightarrow, from=3-1, to=4-2]
	\arrow["{{{P_3(i,j,g,a)}}}"', Rightarrow, from=3-3, to=2-3]
	\arrow["{{{P_2(i,j,g,a)}}}"', Rightarrow, from=4-2, to=3-3]
\end{tikzcd}}
\caption{reduction to canonical $\PI$ term}
\label{bigpent}  
 \end{figure}
\end{proof}
For our first application of \cref{mainequiv}, recall that a type is \emph{acyclic} if its suspension is contractible~\cite{acyc}.

\begin{corollary}
Pointed acyclic types are closed under pointed colimits $\colimm^{\ast}$.
\end{corollary}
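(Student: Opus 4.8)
The plan is to combine \cref{mainequiv} with the observation that suspension sends the coslice colimit to a pointed colimit of contractible types. Throughout I take $A \equiv \1$, so that $\1/\U$ is the category of pointed types and $\colimm_{\Gamma}^{\ast}$ is the pointed colimit. For a pointed type $X$, the reduced suspension $\Sigma X$ (the pointed pushout of $\1 \leftarrow X \to \1$) has underlying type equivalent to the unreduced suspension of $X$; hence $X$ is acyclic exactly when $\Sigma X \simeq_{\1} \1$. So, given an $A$-diagram $F$ over $\Gamma$ with each $\pr_1(F_i)$ acyclic, it suffices to show $\Sigma(\colimm_{\Gamma}^{\ast}(F)) \simeq_{\1} \1$.

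First I would invoke that $\Sigma \colon \1/\U \to \1/\U$ is left adjoint to the loop-space functor $\Omega$, so by \cref{LAPC} it preserves colimits; in particular
\[
\Sigma(\colimm_{\Gamma}^{\ast}(F)) \ \simeq_{\1} \ \colimm_{\Gamma}^{\ast}(\Sigma \circ F).
\]
Because each $\pr_1(F_i)$ is acyclic, every $\Sigma F_i$ has contractible underlying type and is thus isomorphic to $\1$ in $\1/\U$. Hence $\Sigma \circ F$ is naturally isomorphic to the constant diagram $\mathsf{const}_{\Gamma}(\1)$, and by functoriality of the coslice colimit together with \cref{coluniq} we obtain $\colimm_{\Gamma}^{\ast}(\Sigma \circ F) \simeq_{\1} \colimm_{\Gamma}^{\ast}(\mathsf{const}_{\Gamma}(\1))$.

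It remains to compute $\colimm_{\Gamma}^{\ast}(\mathsf{const}_{\Gamma}(\1))$, which is where the main connection enters. By \cref{mainequiv} the tip of this colimit is $\P_{\1}(\mathsf{const}_{\Gamma}(\1))$, the pushout of $\1 \longleftarrow \colimm_{\Gamma}(\1) \xrightarrow{\ \psi\ } \colimm_{\Gamma}(\F(\mathsf{const}_{\Gamma}(\1)))$. Now $\F(\mathsf{const}_{\Gamma}(\1))$ is literally the constant $\U$-diagram at $\1$, so the right-hand vertex is again $\colimm_{\Gamma}(\1)$ and the map $\psi$, being induced by the defining cocone $(\iota_i)$, is the identity. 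A pushout along an equivalence is its opposite vertex, so $\P_{\1}(\mathsf{const}_{\Gamma}(\1)) \simeq \1$. Chaining the equivalences yields $\Sigma(\colimm_{\Gamma}^{\ast}(F)) \simeq_{\1} \1$, as desired.

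The crux is the one appeal to $\Sigma$ preserving pointed colimits. To apply \cref{LAPC} cleanly one must present the suspension--loop adjunction $\Sigma \dashv \Omega$ as a $2$-coherent adjunction in the sense of \cref{2coher}, and one must make precise the equivalence between the reduced and unreduced suspensions so that the stated definition of acyclicity is the one being verified. If one prefers to avoid checking $2$-coherence, the alternative is to stay with \cref{mainequiv} directly: unreduced suspension preserves the (connected) pushout defining $\P_{\1}(F)$, giving
\[
\Sigma(\P_{\1}(F)) \ \simeq \ \1 \sqcup_{\Sigma\colimm_{\Gamma}(\1)} \Sigma\colimm_{\Gamma}(\F(F)),
\]
after which one shows the right leg is an equivalence because each basepoint inclusion $\1 \to \pr_1(F_i)$ is an acyclic map. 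The difficulty in that route is that unreduced suspension does \emph{not} preserve the ordinary colimit $\colimm_{\Gamma}$, so it requires closure of acyclic maps under $\colimm_{\Gamma}$.
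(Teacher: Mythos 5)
Your proof is correct and follows essentially the same route as the paper: suspension preserves pointed colimits over graphs (the paper cites \cref{suspcol}, which already packages the $2$-coherence of $\Sigma \dashv \Omega$ that you flag as the crux), the diagram is then replaced by the constant pointed diagram at $\1$, and that colimit is contractible — you see this via \cref{mainequiv} as the pushout of $\1 \leftarrow \colimm_{\Gamma}{\1} \xrightarrow{\psi} \colimm_{\Gamma}{\1}$ with $\psi$ an equivalence, which is exactly what the paper means by ``the cofiber of the identity function on $\colimm_{\Gamma}{\1}$.'' No gaps; your reduced-versus-unreduced worry is moot since in HoTT the suspension functor on $\U^{\ast}$ is the unreduced suspension pointed at north, and this is the left adjoint to $\Omega$ used throughout.
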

\begin{proof}
Since $\Sigma : \U^{\ast} \to \U^{\ast}$ preserves colimits over graphs (\cref{suspcol}), $\Sigma(\colimm^{\ast}(F)) \simeq \colimm^{\ast}(\Sigma \circ F)$. If each $F_i$ is acyclic, then the second colimit is the colimit of the constant pointed diagram at $\1$, which is contractible as the cofiber of the identity function on $\colimm{\1}$.
\end{proof}

We should expect $\colimm^A$ to be left adjoint to the constant diagram functor. Before building the additional machinery to prove this, we record the easiest ingredient of the proof: naturality in the codomain.

\begin{lemma}[{\cite[\href{https://github.com/PHart3/colimits-agda/blob/v0.4.0/Colimit-coslice/Map-Nat/CosColimitPstCmp.agda}{CosColimitPstCmp}]{agda-colim-TR}}]\label{pstcompnatsq}
Let $F$ be an $A$-diagram over $\Gamma$. For every map $h^{\ast} : T \to_A U$, the square
\[
\begin{tikzcd}[column sep = small]
	{\left(\colimm(F) \to_A T \right)} &&&& {\left(\colimm(F) \to_A U\right)} \\
	{\mathsf{Cocone}_F(T)} &&&& { \mathsf{Cocone}_F(U)}
	\arrow["{{h^{\ast} \circ {-}}}", from=1-1, to=1-5]
	\arrow["{{\pstc_{F,T}}}"', from=1-1, to=2-1]
	\arrow["{{\pstc_{F,U}}}", from=1-5, to=2-5]
	\arrow["{{\mathsf{Cocone}_F(h^{\ast} \circ {-})}}"', shift right, from=2-1, to=2-5]
\end{tikzcd} 
\] commutes where $\mathsf{Cocone}_F(h^{\ast} \circ {-})$ is defined by right whiskering a given cocone by $h^{\ast}$.
\end{lemma}

\subsubsection*{Action on maps}

We now describe the action of $\colimm^A({-}) \coloneqq \left(\P_A({-}), \inl\right)$ on morphisms.
Suppose that $F$ and $G$ are $A$-diagrams over $\Gamma$. Consider a morphism $\delta \coloneqq \left(d, \left\langle{\xi, \tilde{\xi}}\right\rangle\right) : F \Rightarrow_A G$ of $A$-diagrams:
\[\begin{tikzcd}[column sep =huge]
	{F_i} & {F_j} \\
	{G_i} & {G_j}
	\arrow[""{name=0, anchor=center, inner sep=0}, "{F_{i,j,g}}", from=1-1, to=1-2]
	\arrow["{d_j}", from=1-2, to=2-2]
	\arrow["{d_i}"', from=1-1, to=2-1]
	\arrow[""{name=1, anchor=center, inner sep=0}, "{G_{i,j,g}}"', from=2-1, to=2-2]
	\arrow["{\left\langle{\xi_{i,j,g}, \tilde{\xi}_{i,j,g}}\right\rangle}"{description}, draw=none, from=0, to=1]
\end{tikzcd}\]
 The action on $\delta$ fits into a commuting square like so:
\[\begin{tikzcd}
	{F_i} && {G_i} \\
	{\colimm^A(F)} && {\colimm^A(G)}
	\arrow["{\colimm^A(\delta)}"', dashed, from=2-1, to=2-3]
	\arrow["{d_i}", from=1-1, to=1-3]
	\arrow["{\iota_i^G}", from=1-3, to=2-3]
	\arrow["{\iota_i^F}"', from=1-1, to=2-1]
\end{tikzcd}\]
Indeed, we have a function $\hat{\delta} : \colimm(\F(F)) \to \colimm(\F(G))$ induced by the diagram map over $\Gamma$
\[\begin{tikzcd}[column sep =huge]
	{\pr_1(F_i)} & {\pr_1(F_j)} \\
	{\pr_1(G_i)} & {\pr_1(G_j)}
	\arrow["{\pr_1(d_i)}"', from=1-1, to=2-1]
	\arrow[""{name=0, anchor=center, inner sep=0}, "{\pr_1(F_{i,j,g})}", from=1-1, to=1-2]
	\arrow["{\pr_1(d_j)}", from=1-2, to=2-2]
	\arrow[""{name=1, anchor=center, inner sep=0}, "{\pr_1(G_{i,j,g})}"', from=2-1, to=2-2]
	\arrow["{\xi_{i,j,g}}"{description}, draw=none, from=0, to=1]
\end{tikzcd}\] Note that for each $a : A$,
$\tilde{\xi}_{i,j,g}(a) :  \xi_{i,j,g}(\pr_2(F_i)(a))^{-1} \cdot \ap_{\pr_1(G_{i,j,g})}(\pr_2(d_i)(a)) \cdot \pr_2(G_{i,j,g})(a) = \ap_{\pr_1(d_j)}(\pr_2(F_{i,j,g})(a)) \cdot \pr_2(d_j)(a)$.
Letting $E_{i,j,g}(a) \coloneqq \ap_{\pr_1(G_{i,j,g})}(\pr_2(d_i)(a)) \cdot \pr_2(G_{i,j,g})(a) \cdot \pr_2(d_j)(a)^{-1} \cdot \ap_{\pr_1(d_j)}(\pr_2(F_{i,j,g})(a))^{-1}$, we may assume that $\tilde{\xi}_{i,j,g}(a)$ instead has the equivalent type
$\xi_{i,j,g}(\pr_2(F_i)(a))  = E_{i,j,g}(a)$.

We have a commuting triangle
\[ \label{tripsi}
\begin{tikzcd}
	& {\colimm{A}} & \\
	{\colimm(\F(F))} && {\colimm(\F(G))}
	\arrow["{{\psi_F}}"', from=1-2, to=2-1]
	\arrow["{{\psi_G}}", from=1-2, to=2-3]
	\arrow[""{name=0, anchor=center, inner sep=0}, "{{\hat{\delta}}}"', from=2-1, to=2-3]
	\arrow["C"{description}, draw=none, from=1-2, to=0]
\end{tikzcd} \tag{$\texttt{tri-$\psi$}$}\]
by induction on $\colimm{A}$. Indeed, for all $i : \Gamma_0$ and $a : A$, we have 
\[\begin{tikzcd}[column sep = 17]
	{\hat{\delta}(\psi_F(\iota_i(a)))} & {\hat{\delta}(\iota_i(\pr_2(F_i)(a)))} & {\iota_i(\pr_1(d_i)(\pr_2(F_i)(a)))} & {\iota_i(\pr_2(G_i)(a))} & {\psi_G(\iota_i(a))}
	\arrow[Rightarrow, scaling nfold=3, no head, from=1-1, to=1-2]
	\arrow[Rightarrow, scaling nfold=3, no head, from=1-2, to=1-3]
	\arrow["{C_i(a)}", Rightarrow, no head, from=1-3, to=1-4]
	\arrow[Rightarrow, scaling nfold=3, no head, from=1-4, to=1-5]
\end{tikzcd}\] where $C_i(a)$ is defined as $\ap_{\iota_i}(\pr_2(d_i)(a))$. By homotopy naturality, we have a path
$S_{i,j,g}(a)  :  \kappa_{i,j,g}(\pr_1(d_i)(\pr_2(F_i)(a)))^{-1} \cdot \ap_{\iota_j \circ \pr_1(G_{i,j,g})}(\pr_2(d_i)(a))  \cdot \kappa_{i,j,g}(\pr_2(G_i)(a)) = C_i(a)$. Hence we have a chain $\gamma_{i,j,g}(a)$ of paths
\[ 
\hspace*{-10mm}\begin{tikzcd}
	{\left(\kappa_{i,j,g}(a)\right)_{\ast}(C_j(a))} \\
	{\ap_{\hat{\delta}}(\ap_{\psi_F}(\kappa_{i,j,g}(a)))^{-1} \cdot C_j(a) \cdot \ap_{\psi_G}(\kappa_{i,j,g}(a))} \\
	{\ap_{\hat{\delta}}(\kappa_{i,j,g}(\pr_2(F_i)(a)))^{-1} \cdot \ap_{\iota_j \circ \pr_1(d_j)}(\pr_2(F_{i,j,g})(a)) \cdot C_j(a) \cdot \ap_{\psi_G}(\kappa_{i,j,g}(a))} \\
	{\left( \ap_{\iota_j}(\xi_{i,j,g}(\pr_2(F_i)(a)))^{-1} \cdot \kappa_{i,j,g}(\pr_1(d_i)(\pr_2(F_i)(a)))\right)^{-1} \cdot \ap_{\iota_j \circ \pr_1(d_j)}(\pr_2(F_{i,j,g})(a)) \cdot C_j(a) \cdot \ap_{\psi_G}(\kappa_{i,j,g}(a))} \\
	{\left( \ap_{\iota_j}(E_{i,j,g}(\pr_2(F_i)(a)))^{-1} \cdot \kappa_{i,j,g}(\pr_1(d_i)(\pr_2(F_i)(a)))\right)^{-1} \cdot \ap_{\iota_j \circ \pr_1(d_j)}(\pr_2(F_{i,j,g})(a)) \cdot C_j(a) \cdot \ap_{\psi_G}(\kappa_{i,j,g}(a))} \\
	{\kappa_{i,j,g}(\pr_1(d_i)(\pr_2(F_i)(a)))^{-1} \cdot \ap_{\iota_j \circ \pr_1(G_{i,j,g})}(\pr_2(d_i)(a))  \cdot \kappa_{i,j,g}(\pr_2(G_i)(a))} \\
	{C_i(a)}
	\arrow["{{\textit{via $\cref{transfunc}$}}}", equals, from=1-1, to=2-1]
	\arrow["{{{\textit{via $\beta_{\psi_F}(i,j,g,a)$}}}}", equals, from=2-1, to=3-1]
	\arrow["{{{\textit{via $\beta_{\hat{\delta}}(i,j,g,\pr_2(F_i)(a))$}}}}", equals, from=3-1, to=4-1]
	\arrow["{{{\textit{via $\tilde{\xi}_{i,j,g}(a)$}}}}", equals, from=4-1, to=5-1]
	\arrow["{{{\textit{via $\beta_{\psi_G}(i,j,g,a)$}}}}", equals, from=5-1, to=6-1]
	\arrow["{\textit{via $S_{i,j,g}(a)$}}", equals, from=6-1, to=7-1]
\end{tikzcd}
\] for all $g : \Gamma_1(i,j)$ and $a :A$. Thus, \eqref{tripsi} commutes, and we get a map of spans
\[\begin{tikzcd}[column sep = large, row sep = large]
    A &&[-2mm] {\colimm{A}} & {\colimm(\F(F))} \\
    A &&[-2mm] {\colimm{A}} & {\colimm(\F(G))}
    \arrow["\idd"', from=1-1, to=2-1]
    \arrow[""{name=0, anchor=center, inner sep=0}, from=1-3, to=1-1]
    \arrow[""{name=1, anchor=center, inner sep=0}, from=1-3, to=1-4]
    \arrow["\idd", from=1-3, to=2-3]
    \arrow["{{{{\hat{\delta}}}}}", from=1-4, to=2-4]
    \arrow[""{name=2, anchor=center, inner sep=0}, from=2-3, to=2-1]
    \arrow[""{name=3, anchor=center, inner sep=0}, from=2-3, to=2-4]
    \arrow["{\refl}"{description}, draw=none, from=0, to=2]
    \arrow["{{{C^{-1}}}}", draw=none, from=1, to=3]
  \end{tikzcd}\]
   This gives us
\begin{align*}
\label{colaction} \colimm^A(\delta) & \ \coloneqq \ \left( \Psi_{\delta}, \refl_{\inl({-})} \right) : \P_A(F) \to_A \P_A(G) 
\tag{$\texttt{col-act}$}
\\ \Psi_{\delta}(\inr(\iota_i(x))) & \ \equiv \ \inr(\iota_i(\pr_1(d_i)(x)))
\\ \beta_{\Psi_{\delta}}(x) & \ : \ \ap_{\Psi_{\delta}}(\glue_{\P_A(F)}(x))  = \glue_{\P_A(G)}(x) \cdot \ap_{\inr}(C^{-1}(x)) 
\end{align*}

Now, let $F : \C \to \D$ be a functor---either wild or classical. We say that $F$ \emph{creates colimits} if it both preserves and reflects co\
limiting cocones. By \emph{reflects colimits}, we mean that for any cocone $K$, if $F(K)$ is colimiting in $\D$, then $K$ is colimiting i\
n $\C$.

\begin{corollary}[{\cite[\href{https://github.com/PHart3/colimits-agda/tree/v0.4.0/Colimit-coslice/Create}{Create}]{agda-colim-TR}}] \label{create}
The forgetful functor $A/\U \to \U$ creates colimits over trees.
\end{corollary}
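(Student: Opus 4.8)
The plan is to read ``creates colimits over trees'' as the conjunction of three facts for every $A$-diagram $F$ over a tree $\Gamma$: the $A$-colimit of $F$ exists, the forgetful functor $\pr_1 : A/\U \to \U$ sends its colimiting $A$-cocone to a colimiting cocone in $\U$, and every colimiting cocone of $\F(F)$ in $\U$ lifts, essentially uniquely, to a colimiting $A$-cocone of $F$. Existence is already settled by \cref{mainequiv}, which exhibits the colimiting $A$-cocone $\K(\P_A(F))$ on $\mleft(\P_A(F), \inl\mright)$. So the real work is to exploit the tree hypothesis to control the underlying type $\P_A(F)$.

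First I would use \cref{constcol}: since $\Gamma$ is a tree, the left leg $[\idd_A]_{i : \Gamma_0} : \colimm_{\Gamma}{A} \to A$ of the pushout square defining $\P_A(F)$ is an equivalence. Then I would invoke the standard fact that pushouts preserve equivalences (the cobase change of an equivalence is an equivalence): applied to that square, the opposite leg $\inr : \colimm_{\Gamma}(\F(F)) \to \P_A(F)$ is an equivalence. Consequently $\pr_1\mleft(\P_A(F), \inl\mright) \equiv \P_A(F)$ is equivalent, via $\inr$, to the ordinary colimit $\colimm_{\Gamma}(\F(F))$; this is the whole point of the tree hypothesis, that it collapses the pushout.

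Next, for preservation, I would observe that the forgetful image $\mathcal{F}(\K(\P_A(F)))$ has legs $\pr_1(r_i) \equiv \inr \circ \iota_i$ and coherences $\pr_1\mleft(\langle \delta_{i,j,g}, \epsilon_{i,j,g}\rangle\mright) \equiv \delta_{i,j,g} = \lambda x.\ap_{\inr}(\kappa_{i,j,g}(x))$; that is, it is exactly the standard colimiting cocone $\mleft(\iota, \kappa\mright)$ of $\colimm_{\Gamma}(\F(F))$ postcomposed with $\inr$. Since postcomposing a colimiting cocone with an equivalence again yields a colimiting cocone---its associated postcomposition map factors as the standard one precomposed with the equivalence $\mleft({-}\mright) \circ \inr$, so it is a composite of equivalences---the cocone $\mathcal{F}(\K(\P_A(F)))$ is colimiting in $\U$. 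This discharges the preservation half.

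Finally, for the lifting/uniqueness half, I would use that colimits are unique: any colimiting cocone $\mathcal{D}$ of $\F(F)$ in $\U$ is uniquely isomorphic to $\colimm_{\Gamma}(\F(F))$, and transporting the coslice data (the basepoint of the tip together with the witnesses making the legs into $A$-maps) along that isomorphism produces a colimiting $A$-cocone lifting $\mathcal{D}$, while \cref{coluniq} forces this lift to be unique up to a unique cocone isomorphism. I expect the only genuine obstacle to be bookkeeping rather than conceptual: one must check that the identification of $\mathcal{F}(\K(\P_A(F)))$ with the postcomposed standard cocone is coherent all the way down to the $2$-cell data $\epsilon_{i,j,g}$, and that the transported coslice structure is itself coherent---precisely the verifications carried out in the Agda \texttt{Create} module.
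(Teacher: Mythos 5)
Your proposal is correct in its core and rests on the same two pillars as the paper's proof: \cref{constcol} (the tree hypothesis makes $[\idd_A]_{i : \Gamma_0}$ an equivalence, collapsing the pushout defining $\P_A(F)$) and \cref{coluniq} (uniqueness of colimiting cocones). Your appeal to "cobase change of an equivalence is an equivalence" is exactly what the paper proves inline: it exhibits a second pushout square on the same span, with legs $\psi \circ [\idd_A]^{-1}$ and $\idd$, and thereby gets an equivalence $\gamma : \P_A(F) \simeq \colimm_{\Gamma}(\F(F))$ in the opposite direction with the judgmental computation rule $\gamma(\inr(\iota_i(x))) \equiv \iota_i(x)$, which reduces the check that $\gamma$ is a morphism of cocones under $\F(F)$ to a one-line $\ap$ computation. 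Your preservation argument---that the forgetful image of $\K(\P_A(F))$ is the standard cocone on $\colimm_{\Gamma}(\F(F))$ postcomposed with the equivalence $\inr$---is an equivalent rephrasing of the paper's use of \cref{coluniq}(2).

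The one place where your route is not merely cosmetic is the third clause. You prove a lifting property (every colimiting $\U$-cocone lifts, by transport of structure, to a colimiting $A$-cocone) with uniqueness among colimiting lifts, whereas the paper proves reflection: if $\C$ is \emph{any} $A$-cocone with $\F(\C)$ colimiting, then $\C$ is colimiting. Reflection is part of what "creates" means here, and it does not follow formally from your three clauses: given such a $\C$, your argument produces a colimiting lift $\tilde{\mathcal{D}}$ of $\F(\C)$, but nothing yet identifies $\C$ with $\tilde{\mathcal{D}}$ as $A$-cocones---two $A$-cocones can share an underlying $\U$-cocone while carrying different coslice data, and \cref{coluniq}(1) applies only once both are known to be colimiting. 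The fix is short and is exactly the paper's second half: take the cogap morphism $\tau : (\P_A(F), \inl) \to \C$ supplied by \cref{mainequiv}; then $\F(\tau)$ is a morphism between colimiting cocones in $\U$, hence an isomorphism by \cref{coluniq}(1); since isomorphisms of $A$-cocones are by definition detected on underlying maps, $\tau$ is an isomorphism of $A$-cocones, and \cref{coluniq}(2) gives that $\C$ is colimiting. All ingredients for this are already in your write-up; you should include it in place of (or in addition to) the transport-of-structure step, which is heavier and only yields the weaker lifting statement.
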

\begin{proof}
  Suppose that $\Gamma$ is a tree and let $F$ be an $A$-diagram over $\Gamma$. By \cref{constcol}, the function $\left[\idd_A\right] : \colimm{A} \to A$ is an equivalence. One can check that
  \[\begin{tikzcd}[column sep = large]
      {\colimm{A}} & {\colimm(\F(F))} \\
      A & {\colimm(\F(F))}
      \arrow["\psi", from=1-1, to=1-2]
      \arrow["{\left[\idd_A\right]}"', from=1-1, to=2-1]
      \arrow["{\psi \circ \left[\idd_A\right]^{-1}}"', from=2-1, to=2-2]
      \arrow["\idd", from=1-2, to=2-2]
    \end{tikzcd}\] is a pushout square in $\U$. By uniqueness of pushouts, this gives us an equivalence $\gamma : \P_A(F) \xrightarrow{\simeq} \colimm(\F(F))$ such that  $\gamma(\inr(\iota_i(x))) \equiv \iota_i(x)$ for all $i : \Gamma_0$ and $x : \pr_1(F_i)$.
  Further,
  \[
    \ap_{\gamma}(\ap_{\inr}(\kappa_{i,j,g}(x))) \ = \ \ap_{\gamma \circ \inr}(\kappa_{i,j,g}(x)) \ \equiv \ \ap_{\idd}(\kappa_{i,j,g}(x)) \ = \ \kappa_{i,j,g}(x)
  \] for all $g : \Gamma_1(i,j)$ and $x : \pr_1(F_i)$.  This means that $\gamma$ is a morphism of cocones under $\F(F)$. It follows from \cref{coluniq:p2} that the forgetful functor preserves colimits over $\Gamma$.

  It remains to prove that the forgetful functor reflects colimits over $\Gamma$. Consider an $F$-cocone $\K$:
  \[\begin{tikzcd}
      {F_i} && {F_j} \\
      & C
      \arrow[""{name=0, anchor=center, inner sep=0}, "{F_{i,j,g}}", from=1-1, to=1-3]
      \arrow["{r_i}"', from=1-1, to=2-2]
      \arrow["{r_j}", from=1-3, to=2-2]
      \arrow["{\left\langle{H,K}\right\rangle}"{description}, draw=none, from=0, to=2-2]
    \end{tikzcd}\]
 as well as the cocone $\F(\K) \coloneqq \left(\pr_1(C), \pr_1 \circ r, H\right)$ under $\F(F)$ obtained by applying the forgetful functor to $\K$. Suppose that $\F(\K)$ is colimiting in $\U$. By the universal property of colimits in $A/\U$, we have a morphism $\tau : \left(\P_A(F), \inl\right) \to C$ of cocones, which induces a morphism $\F(\tau) : \P_A(F) \to \pr_1(C)$ of cocones in $\U$. By \cref{coluniq:p1}, $\F(\tau)$ is an isomorphism. Thus, $\tau$ is a cocone isomorphism, so that $\K$ is colimiting by \cref{coluniq:p2}.
\end{proof}

\noindent We pose the converse to \cref{create} as the following question.

\begin{question}
Let $\Delta$ be a graph and $G$ be an $A$-diagram over $\Delta$. If the canonical function $\colimm(\F(G)) \to \pr_1(\colimm^A(G))$ is an equivalence, then is $\Delta$ a tree?
\end{question}

\begin{corollary}
If $\Gamma$ is a tree, then for each $X : A/\U$, the colimit $\colimm^A$ of the constant diagram at $X$ is the canonical cocone on $X$.
\end{corollary}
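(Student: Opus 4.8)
The plan is to reduce the statement entirely to \cref{create} together with \cref{constcol}, exploiting that the forgetful functor $\pr_1 : A/\U \to \U$ creates (in particular reflects) colimits over trees. Write $\C$ for the canonical $A$-cocone on $X$ under $\mathsf{const}_{\Gamma}(X)$: its tip is $X$, each leg $r_i : X \to_A X$ is $\idd_X$, and its coherence data are the evident trivial homotopies (these typecheck because $(\mathsf{const}_{\Gamma}(X))_{i,j,g} \equiv \idd_X$, so the required homotopy $r_j \circ \idd_X \sim r_i$ is reflexivity). The goal is precisely to show that $\C$ is colimiting in $A/\U$, i.e.\ that $\colimm_{\Gamma}^A(\mathsf{const}_{\Gamma}(X))$ is presented by $\C$.

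First I would apply the forgetful functor to $\C$. Since $\pr_1$ sends $\idd_X$ to $\idd_{\pr_1(X)}$ and sends the trivial coherence data to trivial data, $\F(\C)$ is exactly the constant cocone on $\pr_1(X)$ under $\F(\mathsf{const}_{\Gamma}(X)) \equiv \mathsf{const}_{\Gamma}(\pr_1(X))$. Its induced cogap map $\colimm_{\Gamma}(\pr_1(X)) \to \pr_1(X)$ out of the standard colimit is, by definition, $\mleft[\idd_{\pr_1(X)}\mright]_{i : \Gamma_0}$.

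Next I would invoke \cref{constcol}: because $\Gamma$ is a tree, $\mleft[\idd_{\pr_1(X)}\mright]_{i : \Gamma_0}$ is an equivalence. A cocone in $\U$ is colimiting exactly when its cogap map out of $\colimm_{\Gamma}$ is an equivalence; phrased through \cref{coluniq}(2), the constant cocone $\F(\C)$ and the standard colimiting cocone on $\colimm_{\Gamma}(\pr_1(X))$ are related by the cocone isomorphism $\mleft[\idd_{\pr_1(X)}\mright]$, so one is colimiting iff the other is. Hence $\F(\C)$ is colimiting in $\U$. Finally, since by \cref{create} the forgetful functor reflects colimits over trees, the fact that $\F(\C)$ is colimiting forces $\C$ to be colimiting in $A/\U$, which is the claim.

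The hard part will be the bookkeeping in the second step: carefully checking that applying $\pr_1$ to the canonical $A$-cocone yields the constant $\U$-cocone whose cogap map is (up to the evident definitional unfolding) $\mleft[\idd_{\pr_1(X)}\mright]$, so that \cref{constcol} applies on the nose. Everything after that is a formal consequence of reflection, with no further calculation required.
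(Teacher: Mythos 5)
Your proposal is correct and is essentially the paper's own argument: the paper's proof is the one-liner ``an easy consequence of \cref{constcol},'' and your elaboration---apply the forgetful functor to the canonical $A$-cocone, identify its cogap map with $\mleft[\idd_{\pr_1(X)}\mright]_{i : \Gamma_0}$, conclude via \cref{constcol} and \cref{coluniq}(2) that the underlying cocone is colimiting in $\U$, and then reflect along \cref{create}---is precisely the natural way to fill in that proof, using exactly the machinery the paper develops immediately beforehand. No gaps.
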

\begin{proof} 
By \cref{create,constcol}.
\end{proof}

\begin{note}
By \cref{leftstable}, we can refine \cref{create} as follows. If $ \left\lvert{\Gamma}\right\rvert$ is $n$-connected, then so is the function $\colimm(\F(F)) \xrightarrow{\inr} \P_A(F)$ by virtue of the commuting triangle
\[\begin{tikzcd}
        {\colimm{A}} && {A \times \left\lvert{\Gamma}\right\rvert } \\
	& A
        \arrow["\simeq", from=1-1, to=1-3]
        \arrow["{\mleft[\idd_A\mright]}"', from=1-1, to=2-2]
        \arrow["{\pr_1}", from=1-3, to=2-2]
\end{tikzcd}\] In this way, the degree to which $\F$ approximates $\colimm^A(F)$ increases linearly with how close $\Gamma$ is to a tree. 
\end{note} 

\subsubsection*{Adjunction with the constant diagram functor}

Next, we verify that our action on maps \eqref{colaction} is correct by showing that the resulting $0$-functor $\colimm^A$ is left adjoint (in the sense of \cref{adjdef}) to the constant diagram functor.  Consider again a morphism $\delta \coloneqq \left(d, \left\langle{\xi, \tilde{\xi}}\right\rangle\right) : F \Rightarrow_A G$ of $A$-diagrams.

\begin{note} \label{cancocmap}
We have the cocone $\K(\delta) \coloneqq \left(c_1, c_2\right)$ under $F$ with tip $\P_A(G)$ where
 \begin{align*}
& c_1(i)  \ \coloneqq \ \left(\inr \circ \iota_i \circ \pr_1(d_i) , \lambda{a}.\ap_{\inr \circ \iota_i}(\pr_2(d_i)(a)) \cdot \tau^G_i(a)\right)
\\ & c_2(i,j,g)  \ \coloneqq \ \left(\lambda{x}. \ap_{\inr \circ \iota_j}(\xi_{i,j,g}(x))^{-1} \cdot \ap_{\inr}(\kappa_{i,j,g}^G(\pr_1(d_i)(x))) , \lambda{a}.\Theta(\epsilon_{i,j,g}(a), \tilde{\xi}_{i,j,g}(a)) \right) 
\end{align*} where $\Theta(\epsilon_{i,j,g}(a), \tilde{\xi}_{i,j,g}(a))$ is the chain of paths
\[
\hspace*{-2.1cm}
\begin{tikzcd}
	{\left(\ap_{\inr \circ \iota_j}(\xi_{i,j,g}(\pr_2(F_i)(a)))^{-1} \cdot \ap_{\inr}( \kappa_{i,j,g}^G(\pr_1(d_i)(\pr_2(F_i)(a))))\right)^{-1} \cdot \ap_{\inr \circ \iota_j \circ \pr_1(d_j)}(\pr_2(F_{i,j,g})(a)) \cdot \ap_{\inr \circ \iota_j}(\pr_2(d_j)(a)) \cdot \tau^G_j(a)} \\
	{\left(\ap_{\inr \circ \iota_j}(E_{i,j,g}(a))^{-1} \cdot \ap_{\inr}( \kappa_{i,j,g}^G(\pr_1(d_i)(\pr_2(F_i)(a))))\right)^{-1} \cdot \ap_{\inr \circ \iota_j \circ \pr_1(d_j)}(\pr_2(F_{i,j,g})(a)) \cdot \ap_{\inr \circ \iota_j}(\pr_2(d_j)(a)) \cdot \tau^G_j(a)} \\
	{\ap_{\inr \circ \iota_i}(\pr_2(d_i)(a)) \cdot \ap_{\inr}(\kappa^G_{i,j,g}(\pr_2(G_i)(a)))^{-1} \cdot \ap_{\inr \circ \iota_j}(\pr_2(G_{i,j,g})(a)) \cdot \tau^G_j(a)} \\
	{\ap_{\inr \circ \iota_i}(\pr_2(d_i)(a)) \cdot \tau_i^G(a)}
	\arrow["{{{\textit{via $\tilde{\xi}_{i,j,g}(a)$}}}}", equals, from=1-1, to=2-1]
	\arrow["{{{{\textit{via homotopy naturality of $\kappa_{i,j,g}^G$ at $\pr_2(d_i)(a)$}}}}}", equals, from=2-1, to=3-1]
	\arrow["{{\textit{via $\epsilon_{i,j,g}(a)$}}}", equals, from=3-1, to=4-1]
\end{tikzcd} 
 \] 
We claim that our action on morphisms equals the cogap map of $\K(\delta)$, i.e.,
\[
\label{delteq}  \colimm(\delta) = \pstc^{-1}_{F, \colimm(G)}(\K(\delta))
\tag{$\texttt{map-eq}$} \] This goal amounts to showing that $\colimm(\delta)$ belongs to the fiber of $\pstc_{F, \colimm(G)}$ over $\K(\delta)$. The proof closely resembles the first half of the proof of \cref{mainequiv}. We again leave it to the Agda formalization~\cite[\href{https://github.com/PHart3/colimits-agda/blob/v0.4.0/Colimit-coslice/Map-Nat/CosColimitMap16.agda\#L155}{fib-inhab}]{agda-colim-TR}.
\end{note} 

\noindent The cocone $K(\delta)$ from \cref{cancocmap} is an instance of the following more general operation.

\begin{definition}
For each $T : A/\U$, define $\mathsf{Cocone}^T({-} \circ \delta) : \mathsf{Cocone}_G(T) \to \mathsf{Cocone}_F(T)$ by pre-composing $\delta$ with a given cocone $K$ under $G$ like so:
\[\begin{tikzcd}
	{F_i} && {F_j} \\
	{G_i} && {G_j} \\
	& T
	\arrow[""{name=0, anchor=center, inner sep=0}, from=1-1, to=1-3]
	\arrow[from=1-1, to=2-1]
	\arrow[from=1-3, to=2-3]
	\arrow[""{name=1, anchor=center, inner sep=0}, dashed, from=2-1, to=2-3]
	\arrow[from=2-1, to=3-2]
	\arrow[from=2-3, to=3-2]
	\arrow["\delta"{description}, Rightarrow, from=0, to=1, shorten =2mm]
	\arrow["K"{description}, draw=none, from=1, to=3-2]
\end{tikzcd}\]
\end{definition}

\begin{lemma}[{\cite[\href{https://github.com/PHart3/colimits-agda/blob/v0.4.0/Colimit-coslice/Map-Nat/CosColimitPreCmp.agda}{CosColimitPreCmp}]{agda-colim-TR}}] \label{precompnatsq}
The following square commutes:
\[\begin{tikzcd}
	{\left(\colimm(G) \to_A T \right)} && {\left(\colimm(F) \to_A T\right)} \\
	{ \mathsf{Cocone}_G(T)} && { \mathsf{Cocone}_F(T)}
	\arrow["{{\pstc_{G,T}}}"', from=1-1, to=2-1]
	\arrow["{\mathsf{Cocone}^T({-} \circ \delta)}"', from=2-1, to=2-3]
	\arrow["{{{-} \circ \colimm^A(\delta)}}", from=1-1, to=1-3]
	\arrow["{{\pstc_{F,T}}}", from=1-3, to=2-3]
\end{tikzcd}\]
\end{lemma}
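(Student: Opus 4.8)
The plan is to establish this square---the precomposition counterpart of \cref{natpost}---pointwise. Fix a map $m \coloneqq \mleft(f, f_p\mright) : \colimm(G) \to_A T$; I will show that the two cocones $\pstc_{F,T}\mleft(m \circ \colimm_{\Gamma}^A(\delta)\mright)$ and $\mathsf{Cocone}^T({-}\circ\delta)\mleft(\pstc_{G,T}(m)\mright)$ of $\mathsf{Cocone}_F(T)$ coincide. By \cref{nattreq}, such an identity is equivalent to a tuple $\mleft(W, u, S_1, S_2\mright)$. The underlying maps of the two cocones agree definitionally: function composition is strictly associative, and $\Psi_{\delta}\mleft(\inr(\iota_i^F(x))\mright) \equiv \inr(\iota_i^G(\pr_1(d_i)(x)))$ by \eqref{colaction}, so the $i$-th leg of each cocone has underlying map $f \circ \inr \circ \iota_i^G \circ \pr_1(d_i)$. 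Hence each $W_i$ may be taken to be the reflexivity homotopy, leaving only the pointing data $u$ and the coherences $S_1, S_2$ to produce.

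Rather than grind these out by hand, I would collapse most of the coherence using results already in place. Instantiating \cref{natpost} at $h^{\ast} \coloneqq m$ and evaluating at $\colimm_{\Gamma}^A(\delta) : \P_A(F) \to_A \P_A(G)$ yields
\[
\pstc_{F,T}\mleft(m \circ \colimm_{\Gamma}^A(\delta)\mright) \ = \ \mathsf{Cocone}_F(m \circ {-})\mleft(\pstc_{F, \P_A(G)}(\colimm_{\Gamma}^A(\delta))\mright),
\]
and by \eqref{delteq} the inner term is exactly the cocone $\K(\delta)$, so the left-hand column equals $\mathsf{Cocone}_F(m \circ {-})(\K(\delta))$. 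On the other side, the universal $G$-cocone is $\K(\P_A(G)) = \pstc_{G, \P_A(G)}(\idd)$, so a second application of \cref{natpost} (at $h^{\ast}\coloneqq m$, evaluated at $\idd$, and using the unit law $m \circ \idd = m$) gives $\pstc_{G,T}(m) = \mathsf{Cocone}_G(m \circ {-})(\K(\P_A(G)))$; and comparing leg-maps (with a short check on the $2$-cells) shows $\K(\delta) = \mathsf{Cocone}^{\P_A(G)}({-}\circ\delta)(\K(\P_A(G)))$. The claim therefore reduces to the \emph{interchange law}
\[
\mathsf{Cocone}_F(m \circ {-}) \circ \mathsf{Cocone}^{\P_A(G)}({-}\circ\delta) \ = \ \mathsf{Cocone}^T({-}\circ\delta) \circ \mathsf{Cocone}_G(m \circ {-}),
\]
which says that postcomposing the tip of a $G$-cocone by $m$ commutes with restricting that cocone along $\delta$.

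The interchange law is where the genuine work sits, and I expect it to be the main obstacle. Its leg-maps again agree definitionally by associativity, so by \cref{nattreq} it comes down to matching $2$-cells: one must check that the chain $\Theta\mleft(\epsilon_{i,j,g}(a), \tilde{\xi}_{i,j,g}(a)\mright)$ defining the $2$-cell of $\K(\delta)$, whiskered on the left by $\ap_{f}$, agrees with the chain $V_{r, R_1, R_2}(i,j,g,a)$ prescribed by $\mathsf{Cocone}^T({-}\circ\delta)$, after rewriting along the glue computation rule $\rho_{\Psi_{\delta}}$ from \eqref{colaction} and the identity $\tilde{\xi}_{i,j,g}$. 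Concretely this is a path-algebra coherence of the same flavor as---and no harder than---the reductions in the proof of \cref{mainequiv}: one pushes the postcomposition by $m$ through each constituent path of $\Theta$ and of $V$ and cancels the $\rho$- and $\epsilon$-terms in a bottom-up fashion. As with \cref{mainequiv}, I would present here only the decomposition into $W$, $u$, $S_1$, $S_2$ together with the reduction of $S_2$, deferring the fully expanded chain of path identities to the Agda formalization.
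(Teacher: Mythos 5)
Your reduction of the left-hand composite is exactly the paper's: both invoke \eqref{delteq} to identify $\colimm_{\Gamma}^A(\delta)$ with the cogap map of $\K(\delta)$ and then \cref{natpost} to rewrite $\pstc_{F,T}\mleft(m \circ \colimm_{\Gamma}^A(\delta)\mright)$ as $\mathsf{Cocone}_F(m \circ {-})(\K(\delta))$. Where you diverge is on the residual identity $\mathsf{Cocone}_F(m \circ {-})(\K(\delta)) = \mathsf{Cocone}^T({-} \circ \delta)(\pstc_{G,T}(m))$: the paper proves it head-on and defers the path algebra to the Agda formalization, whereas you factor it through the universal cocone $\K(\P_A(G))$ via a second use of \cref{natpost} at the identity map, the identity $\K(\delta) = \mathsf{Cocone}^{\P_A(G)}({-} \circ \delta)(\K(\P_A(G)))$, and a general interchange law stating that postcomposing cocone tips by $m$ commutes with restricting cocones along $\delta$. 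This decomposition is sound, and it buys modularity: the interchange law mentions only cocones, not the construction $\P_A$, so the construction-specific content is isolated in your two auxiliary identities. Two caveats, though. Your step (a) silently uses the non-definitional facts $m \circ \idd = m$ and $\pstc_{G,\P_A(G)}(\idd) = \K(\P_A(G))$, each needing a (genuinely small) check. More substantively, step (b) is not merely a short check: the legs and pointing paths of $\K(\delta)$ and of the restricted universal cocone do agree definitionally, but their $2$-cells differ by $\ap$-functoriality (e.g.\ $\ap_{\inr}(p^{-1} \cdot q)$ versus $\ap_{\inr \circ \iota_j}(p)^{-1} \cdot \ap_{\inr}(q)$), and matching the second components $\Theta(\epsilon_{i,j,g}(a), \tilde{\xi}_{i,j,g}(a))$ against $V_{r,R_1,R_2}(i,j,g,a)$ is path algebra of roughly the same size as what the paper's proof defers. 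So your route redistributes, rather than shrinks, the deferred coherence work --- which is fine, and arguably yields cleaner bookkeeping.
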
 
\begin{proof}
For each $f^{\ast} : \colimm(G) \to_A T$, note that 
\begin{align*}
& \ \pstc_{F,T}(f^{\ast} \circ \pstc^{-1}_{F, \colimm(G)}(\K(\delta)))  
\\ = &  \ \mathsf{Cocone}_F(f^{\ast} \circ {-})(\pstc_{F, \colimm(G)}(\pstc^{-1}_{F, \colimm(G)}(\K(\delta)))) \tag{\cref{pstcompnatsq}}
\\ = & \ \mathsf{Cocone}_F(f^{\ast} \circ {-})(\K(\delta)).
\end{align*}
By \eqref{delteq}, it thus suffices to prove that 
$\mathsf{Cocone}_F(f^{\ast} \circ {-})(\K(\delta)) = \mathsf{Cocone}^T({-} \circ \delta)(\pstc_{G,T}(f^{\ast}))$. We leave such a proof, which is messy yet routine, to the Agda formalization~\cite[\href{https://github.com/PHart3/colimits-agda/blob/v0.4.0/Colimit-coslice/Map-Nat/CosColimitMap18.agda\#L207}{CosColim-NatSq2}]{agda-colim-TR}.
\end{proof}

\begin{corollary}[{\cite[\href{https://github.com/PHart3/colimits-agda/blob/v0.4.0/Colimit-coslice/Main-Theorem/CosColim-Adjunction.agda}{CosColim-Adjunction}]{agda-colim-TR}}]\label{colimadj}
We have an adjunction $\colimm^A \dashv \const_{\Gamma}$, where 
$\const_{\Gamma}$ denotes the constant diagram functor $A/\U \to \mathsf{Diag}(\Gamma, A/\U)$.
\end{corollary}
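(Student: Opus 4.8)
The plan is to read off the three pieces of adjunction data demanded by \cref{adjdef} from results already in hand, taking $\pstc$ as the natural isomorphism. Recall from \cref{Interch}(a) that for each $T : A/\U$ there is an evident equivalence $\mathsf{Cocone}_F(T) \simeq \mleft(F \Rightarrow_A \mathsf{const}_{\Gamma}(T)\mright)$, and that a morphism of the diagram category $\mathsf{Diag}_A(\Gamma)$ from $F$ to $\mathsf{const}_{\Gamma}(T)$ is exactly a natural transformation $F \Rightarrow_A \mathsf{const}_{\Gamma}(T)$. Composing $\pstc_{F,T}$ with this equivalence yields a map
\[
\alpha_{F,T} \ : \ \mleft(\colimm_{\Gamma}^A(F) \to_A T\mright) \ \to \ \homm_{\mathsf{Diag}_A(\Gamma)}\mleft(F, \mathsf{const}_{\Gamma}(T)\mright),
\]
which is an equivalence by \cref{mainequiv}. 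This supplies the component $\alpha$ of the adjunction.

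Next I would produce the datum $V_1$, which governs postcomposition in the $A/\U$-variable. For $g : X \to_A Y$ and $h : \colimm_{\Gamma}^A(F) \to_A X$, unwinding the definitions shows that $\mathsf{const}_{\Gamma}(g) \circ \alpha_{F,X}(h)$ corresponds, under the identification of \cref{Interch}(a), to postcomposing the cocone $\pstc_{F,X}(h)$ by $g$, i.e.\ to $\mathsf{Cocone}_F(g \circ {-})(\pstc_{F,X}(h))$. The required identity $\mathsf{const}_{\Gamma}(g) \circ \alpha_{F,X}(h) = \alpha_{F,Y}(g \circ h)$ is then exactly the commuting square of \cref{natpost}, instantiated at $h$. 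Dually, for the datum $V_2$, which governs precomposition in the diagram variable, a morphism $\delta : F \Rightarrow_A G$ and a map $h : \colimm_{\Gamma}^A(G) \to_A Y$ give $\alpha_{G,Y}(h) \circ \delta$, which corresponds to $\mathsf{Cocone}^Y({-} \circ \delta)(\pstc_{G,Y}(h))$; the needed identity $\alpha_{G,Y}(h) \circ \delta = \alpha_{F,Y}(h \circ \colimm_{\Gamma}^A(\delta))$ is precisely the commuting square of the lemma immediately preceding this corollary. Since \cref{adjdef} asks for nothing beyond the triple $\mleft(\alpha, V_1, V_2\mright)$, assembling these three data completes the adjunction.

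The hard part is not any of the cited squares themselves — those are the technical hearts of \cref{mainequiv}, \cref{natpost}, and the preceding lemma, all already discharged and formalized — but rather verifying that the equivalence $\mathsf{Cocone}_F(T) \simeq \mleft(F \Rightarrow_A \mathsf{const}_{\Gamma}(T)\mright)$ of \cref{Interch}(a) intertwines the two notions of composition on either side. Concretely, I must check that (i) for $V_1$, postcomposition of a cocone by $g$ matches composition of the corresponding natural transformation with $\mathsf{const}_{\Gamma}(g)$, and (ii) for $V_2$, the reindexing operation $\mathsf{Cocone}^T({-} \circ \delta)$ matches precomposition of the corresponding natural transformation by $\delta$. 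Both are bookkeeping about how the homotopy ($2$-cell) data of a cocone correspond to the naturality data of a natural transformation into a constant diagram; once these two identifications are in place, $V_1$ and $V_2$ follow by transporting the two commuting squares along $\alpha^{-1}$.
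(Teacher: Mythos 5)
Your proposal is correct and follows exactly the route the paper intends: the paper itself defers the assembly to its Agda formalization, but the two lemmas placed immediately before the corollary --- \cref{natpost} and the (unlabeled) CosColimitPreCmp lemma --- are precisely the $V_1$ and $V_2$ data of \cref{adjdef} that you extract, with $\alpha$ given by $\pstc_{F,T}$ composed with the equivalence $\mathsf{Cocone}_F(T) \simeq \mleft(F \Rightarrow_A \mathsf{const}_{\Gamma}(T)\mright)$ of \cref{Interch}(a), which is an equivalence by \cref{mainequiv}. Your closing observation --- that the residual work is verifying this equivalence intertwines cocone postcomposition and the reindexing $\mathsf{Cocone}^T({-} \circ \delta)$ with composition of natural transformations against $\mathsf{const}_{\Gamma}(g)$ and precomposition by $\delta$, respectively --- correctly locates the bookkeeping that the formalization discharges.
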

\begin{proof}
Combine \cref{mainequiv,pstcompnatsq,precompnatsq}. 
\end{proof}

\subsection{Pushout-coproduct construction of coslice colimits}\label{Contr2}

In this section, we apply the $3 \times 3 $ lemma to our first construction of $\colimm^A(F)$ to obtain the familiar construction of $\colimm^A(F)$ as a pushout of coproducts in $A/\U$.

To begin, consider the following grid of commuting squares:
\[ 
\hspace*{-2.7cm}
\adjustbox{scale=.81} {
\begin{tikzcd}[column sep = 38, row sep= 45, /tikz/column 1/.append style={column sep=23pt}]
	{\sum_{\left(i,j,g\right) : \sum_{\left(i,j\right) : \Gamma_0 \times \Gamma_0}\Gamma_1(i,j)}\pr_1(F_i)} && {\left( \sum_{\left(i,j,g\right) : \sum_{\left(i,j\right) : \Gamma_0 \times \Gamma_0}\Gamma_1(i,j)}\pr_1(F_i)\right) + \left( \sum_{\left(i,j,g\right) : \sum_{\left(i,j\right) : \Gamma_0 \times \Gamma_0}\Gamma_1(i,j)}\pr_1(F_i) \right)} && {\sum_{i : \Gamma_0}\pr_1(F_i)} \\
	{\left(\sum_{\left(i,j\right) : \Gamma_0 \times \Gamma_0}\Gamma_1(i,j) \right)\times A} && {\left( \left(\sum_{\left(i,j\right) : \Gamma_0 \times \Gamma_0}\Gamma_1(i,j) \right)\times A\right)  + \left( \left(\sum_{\left(i,j\right) : \Gamma_0 \times \Gamma_0}\Gamma_1(i,j) \right)\times A\right) } && {\Gamma_0 \times A} \\
	A && A && A
	\arrow["{{{{{\idd +   \idd}}}}}"', from=1-3, to=1-1]
	\arrow["{{{{{\left(i, x\right)  +  \left(j, \pr_1(F_{i,j,g})(x)\right)}}}}}", from=1-3, to=1-5]
	\arrow[""{name=0, anchor=center, inner sep=0}, "{{{{{\left(i,j,g,\pr_2(F_i)(a)\right)}}}}}"{description}, from=2-1, to=1-1]
	\arrow[""{name=1, anchor=center, inner sep=0}, "{{{{\pr_2}}}}"{description}, from=2-1, to=3-1]
	\arrow[""{name=2, anchor=center, inner sep=0}, "{{{{{\left(i,j,g,\pr_2(F_i)(a)\right)  +  \left(i,j,g,\pr_2(F_i)(a)\right) }}}}}"{description}, from=2-3, to=1-3]
	\arrow["{{{{{\idd +   \idd}}}}}"{description}, from=2-3, to=2-1]
	\arrow["{{{{{\left(i,a\right)  +   \left(j,a\right)}}}}}"{description}, from=2-3, to=2-5]
	\arrow[""{name=3, anchor=center, inner sep=0}, "{{{{\pr_2  +  \pr_2}}}}"{description}, from=2-3, to=3-3]
	\arrow[""{name=4, anchor=center, inner sep=0}, "{{{{{\left(i, \pr_2(F_i)(a)\right)}}}}}"{description}, from=2-5, to=1-5]
	\arrow[""{name=5, anchor=center, inner sep=0}, "{{{{\pr_2}}}}"{description}, from=2-5, to=3-5]
	\arrow["{{{{\idd_A}}}}", from=3-3, to=3-1]
	\arrow["{{{{\idd_A}}}}"', from=3-3, to=3-5]
	\arrow["{{{{{{\refl_{\left(i,j,g, \pr_2(F_i)(a)\right)}  +  \refl_{\left(i,j,g, \pr_2(F_i)(a)\right)}}}}}}}"{description}, draw=none, from=2, to=0]
	\arrow["{{{{{{\refl_{\left(i, \pr_2(F_i)(a)\right)}  + \ap_{\left(j, {-}\right)}(\pr_2(F_{i,j,g})(a))}}}}}}"{description, pos=0.57}, draw=none, from=2, to=4]
	\arrow["{{{{\refl_a  +  \refl_a}}}}"{description}, draw=none, from=3, to=1]
	\arrow["{{{{\refl_a  +  \refl_a}}}}"{description}, draw=none, from=3, to=5]
\end{tikzcd}}  \]
Call the pushouts of the left, middle, and right vertical spans $V_1$, $V_2$, and $V_3$, respectively. Call the pushouts of the top, middle, and bottom horizontal spans $H_1$, $H_2$, and $H_3$, respectively. We form two additional pushouts
\[\begin{tikzcd}
	{V_2} & {V_3} && {H_2} & {H_1} \\
	{V_1} & {P_V} && {H_3} & {P_H}
	\arrow["{\eta_2}"', from=1-4, to=2-4]
	\arrow["{\eta_1}", from=1-4, to=1-5]
	\arrow[from=1-5, to=2-5]
	\arrow[from=2-4, to=2-5]
	\arrow["\lrcorner"{anchor=center, pos=0.125, rotate=180}, draw=none, from=2-5, to=1-4]
	\arrow["{\delta_1}"', from=1-1, to=2-1]
	\arrow["{\delta_2}", from=1-1, to=1-2]
	\arrow[from=2-1, to=2-2]
	\arrow[from=1-2, to=2-2]
	\arrow["\lrcorner"{anchor=center, pos=0.125, rotate=180}, draw=none, from=2-2, to=1-1]
\end{tikzcd} \]
where  
\begin{itemize}
\item $\delta_1$ denotes the function induced by the middle-to-left map of spans
\item  $\delta_2$ the function induced by the middle-to-right map of spans
\item $\eta_1$ the function induced by the middle-to-top map of spans
\item $\eta_2$ the function induced by the middle-to-bottom map of spans.
\end{itemize}
Licata and Brunerie construct an equivalence $\tau_1 : P_H \xrightarrow{\simeq} P_V$ of types by double induction on pushouts~\cite[Section VII]{3x3}, which in particular satisfies
\begin{align*}
\tau_1(\inl(\inl(a))) & \  \equiv \  \inl(\inl(a))
\\ \tau_1(\inr(\inr(i,x))) & \ \equiv \ \inr(\inr(i,x)) .
\end{align*}

\begin{lemma}
We have an equivalence
\[
\xi \ : \ V_2 \xrightarrow{\simeq} \left(\bigvee_{i,j,g}\pr_1(F_i) \right) \vee \left(\bigvee_{i,j,g}\pr_1(F_i)\right)
\] 
\end{lemma}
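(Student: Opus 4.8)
The plan is to exhibit $V_2$ as a two‑step iterated pushout and then match it, on the nose, against the coproduct (wedge) construction in $A/\U$. Write $E \coloneqq \sum_{(i,j):\Gamma_0 \times \Gamma_0}\Gamma_1(i,j)$ and $X \coloneqq \sum_{e : E}\pr_1(F_i)$, and let $s : E \times A \to X$ be the map $(e,a) \mapsto (e, \pr_2(F_i)(a))$, where $i$ is the source of the edge $e$. Unfolding the middle column of the grid, $V_2$ is the pushout of
\[
(X + X) \xleftarrow{\,s + s\,} (E\times A) + (E\times A) \xrightarrow{\,[\pr_2, \pr_2]\,} A ,
\]
while, by its own defining pushout, $\bigvee_{i,j,g}\pr_1(F_i)$ is precisely the pushout $V_1$ of the single span $X \xleftarrow{s} E\times A \xrightarrow{\pr_2} A$. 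The one move that drives everything is to factor the right leg as $[\pr_2,\pr_2] = \nabla \circ (\pr_2 + \pr_2)$ through the fold map $\nabla : A + A \to A$.

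First I would stack two squares:
\[\begin{tikzcd}[column sep = large, row sep = large]
	{(E\times A) + (E\times A)} & {X + X} \\
	{A + A} & {P'} \\
	A & {V_2}
	\arrow["{s + s}", from=1-1, to=1-2]
	\arrow["{\pr_2 + \pr_2}"', from=1-1, to=2-1]
	\arrow[from=1-2, to=2-2]
	\arrow[from=2-1, to=2-2]
	\arrow["\nabla"', from=2-1, to=3-1]
	\arrow[from=2-2, to=3-2]
	\arrow[from=3-1, to=3-2]
\end{tikzcd}\]
The top square is a pushout by construction, and by the factorization the outer rectangle is exactly the pushout defining $V_2$. Next I would identify the intermediate pushout $P'$: the top span is the coproduct of two copies of the single span $X \xleftarrow{s} E\times A \xrightarrow{\pr_2} A$, so by distributivity of pushouts over binary coproducts (provable directly by pushout induction) $P' \simeq V_1 + V_1$, and the induced map $A+A \to P'$ becomes the pair of structure maps $\sigma : A \to V_1$ on the two summands.

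Then I would invoke the pushout pasting lemma: since the top square is a pushout, the outer rectangle is a pushout iff the bottom square is, so $V_2$ is the pushout of $(V_1 + V_1) \xleftarrow{\,\sigma + \sigma\,} (A + A) \xrightarrow{\,\nabla\,} A$. But this is verbatim the span whose pushout is the binary coproduct in $A/\U$, i.e.\ the wedge construction applied to the discrete two‑point graph sending both vertices to $V_1$. Hence $V_2 \simeq V_1 \vee V_1 = \bigl(\bigvee_{i,j,g}\pr_1(F_i)\bigr) \vee \bigl(\bigvee_{i,j,g}\pr_1(F_i)\bigr)$, and composing the three equivalences yields $\xi$; its values on the point constructors ($\inl(\inl(a))$, $\inr(\inr(i,x))$, and so on) are read off stage by stage. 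As an abstract sanity check, a map $V_2 \to Z$ is a pair of maps $X \to Z$, a single map $A \to Z$, and two coherence homotopies — which is exactly the data of a map out of $V_1 \vee V_1$.

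The hard part will not be this conceptual skeleton but the coherence bookkeeping. Both the distributivity equivalence $P' \simeq V_1 + V_1$ and the pasting lemma are established by induction on the pushout HITs, and each contributes path‑over‑path witnesses that must be transported through the composite so that $\xi$ acquires the definitional behavior needed downstream in the $3\times 3$ argument (in particular the compatibility with $\tau_1$). If one instead builds $\xi$ directly by double pushout induction in the Licata–Brunerie style, the difficulty becomes the large but mechanical case split over the two HITs' point and path constructors; either way the real cost lies in the glue $2$‑cells rather than in the structural argument.
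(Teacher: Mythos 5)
Your proof is correct, but it takes a genuinely different route from the paper's. The paper argues by hand: it defines $\xi$ directly by pushout recursion on $V_2$ (sending $\inl(a)$ to $\inl(\inl(a))$ and the two summands of $\inr$ to $\inl \circ \inr$ and $\inr \circ \inr$, with the images of the $\glue$ constructors chosen accordingly), defines a candidate inverse $\tilde{\xi}$ by recursion on the wedge of wedges via two auxiliary maps $\epsilon_1, \epsilon_2 : \bigvee_{i,j,g}\pr_1(F_i) \to V_2$, and then establishes both round-trip homotopies by pushout induction, carrying out the transport-along-$\glue$ path computations explicitly. You instead factor the right leg of the span defining $V_2$ through the fold map $\nabla : A + A \to A$, identify the intermediate pushout $P'$ with $V_1 + V_1$ by commutation of pushouts with binary coproducts, and apply the pasting lemma to conclude that $V_2$ is the pushout of $\mleft(V_1 + V_1\mright) \leftarrow \mleft(A + A\mright) \xrightarrow{\nabla} A$, which is precisely the paper's coproduct construction in $A/\U$ for the discrete two-point graph (modulo $\2 \times A \simeq A + A$). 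Both structural lemmas you invoke are correct and standard, each provable by pushout induction, so there is no gap. The trade-off is real, though: the paper's bare-hands $\xi$ computes definitionally on point constructors ($\xi(\inl(a)) \equiv \inl(\inl(a))$, $\xi(\inr(\inr(i,j,g,x))) \equiv \inr(\inr(i,j,g,x))$, and so on), and those reductions are used downstream --- the homotopies $(\idd \vee \idd) \circ \xi \sim \delta_1$ and $\sigma \circ \xi \sim \delta_2$ witnessing the map of spans onto $\mathsf{PW}$, and the stated computation rules of $T_F$ in \cref{coeqcop}, are all checked against them --- whereas your composite equivalence satisfies these only propositionally, so those later verifications would need extra coherence paths threaded through. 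What you gain is modularity: the path algebra is quarantined inside two reusable general lemmas, and the identification of $V_2$ with a coproduct in $A/\U$ becomes a purely structural argument rather than a page of $\glue$ bookkeeping.
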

\begin{proof}
Letting 
$W_1 \coloneqq \left( \left( \sum_{\left(i,j\right) : \Gamma_0 \times \Gamma_0}\Gamma_1(i,j) \right) \times A\right) + \left(\left( \sum_{\left(i,j\right) : \Gamma_0 \times \Gamma_0}\Gamma_1(i,j) \right) \times A\right)$ and 
$W_2 \coloneqq \left( \sum_{\left(i,j,g\right) : \sum_{\left(i,j\right) : \Gamma_0 \times \Gamma_0}\Gamma_1(i,j)}\pr_1(F_i)\right) + \left( \sum_{\left(i,j,g\right) : \sum_{\left(i,j\right) : \Gamma_0 \times \Gamma_0}\Gamma_1(i,j)}\pr_1(F_i) \right)$, define $\xi$ by pushout recursion on the square
\[\begin{tikzcd}
	{W_1} && {W_2} \\
	A && {\left( \bigvee_{i,j,g} \pr_1(F_i)\right) \vee \left( \bigvee_{i,j,g} \pr_1(F_i)\right)}
	\arrow[from=1-1, to=1-3]
	\arrow[""{name=0, anchor=center, inner sep=0}, "{{\inl \circ \inr +   \inr \circ \inr}}", from=1-3, to=2-3]
	\arrow[""{name=1, anchor=center, inner sep=0}, from=1-1, to=2-1]
	\arrow["{{a \mapsto \inl(\inl(a))}}"', from=2-1, to=2-3]
\end{tikzcd} 
 \] that commutes via the path
 \[
 \ap_{\inl}(\glue_{\bigvee_{i,j,g}\pr_1(F_i)})(i,j,g,a) \ + \  \glue_{\bigvee \vee \bigvee}(a) \cdot \ap_{\inr}(\glue_{\bigvee_{i,j,g}\pr_1(F_i)})(i,j,g,a)
 \] for all $g : \Gamma_1(i,j)$ and $a : A$.
Define an inverse $\tilde{\xi}$ of $\xi$ by recursion on $\bigvee \vee \bigvee$ with the commuting square
\[\begin{tikzcd}
	A & {\bigvee_{i,j,g}\pr_1(F_i)} \\
	{\bigvee_{i,j,g}\pr_1(F_i)} & {V_2}
	\arrow["{\epsilon_2}", from=1-2, to=2-2] 
	\arrow[""{name=0, anchor=center, inner sep=0}, from=1-1, to=1-2]
	\arrow[""{name=1, anchor=center, inner sep=0}, "{\epsilon_1}"', from=2-1, to=2-2]
	\arrow[from=1-1, to=2-1] 
	\arrow["{\refl_{\inl(a)}}"{description}, draw=none, from=0, to=1]
\end{tikzcd} \]
Here, $\epsilon_1$ and $\epsilon_2$ are defined, respectively, by pushout recursion on the commuting squares
\[\begin{tikzcd}[ /tikz/row 3/.append style={row sep=10pt}]
	{ \left(\sum_{\left(i,j\right) : \Gamma_0 \times \Gamma_0}\Gamma_1(i,j)\right) \times A} && {\sum_{\left(i,j,g\right) : \sum_{\left(i,j\right) : \Gamma_0 \times \Gamma_0}\Gamma_1(i,j)}\pr_1(F_i)} \\
	\\
	A && {V_2} \\
	{ \left( \sum_{\left(i,j\right) : \Gamma_0 \times \Gamma_0}\Gamma_1(i,j)\right) \times A} && {\sum_{\left(i,j,g\right) : \sum_{\left(i,j\right) : \Gamma_0 \times \Gamma_0}\Gamma_1(i,j)}\pr_1(F_i)} \\
	\\
	A && {V_2}
	\arrow[""{name=0, anchor=center, inner sep=0}, "{\inl}"', from=3-1, to=3-3]
	\arrow[from=1-1, to=3-1]
	\arrow[""{name=1, anchor=center, inner sep=0}, from=1-1, to=1-3]
	\arrow["{\inr \circ \inl}", from=1-3, to=3-3]
	\arrow[from=4-1, to=6-1]
	\arrow[""{name=2, anchor=center, inner sep=0}, from=4-1, to=4-3]
	\arrow[""{name=3, anchor=center, inner sep=0}, "{\inl}"', from=6-1, to=6-3]
	\arrow["{\inr \circ \inr}", from=4-3, to=6-3]
	\arrow["{\glue_{V_2}(\inl(i,j,g,a))}"{description}, draw=none, from=1, to=0]
	\arrow["{ \glue_{V_2}(\inr(i,j,g,a))}"{description}, draw=none, from=2, to=3]
\end{tikzcd}\] 

We prove that $\tilde{\xi} \circ \xi \sim \idd_{V_2}$ by \cref{funcpo}. We first see that 
\begin{align*}
& \tilde{\xi}(\xi(\inl(a)))  \ \equiv \ \tilde{\xi}(\inl(\inl(a))) \ \equiv \ \epsilon_1(\inl(a)) \ \equiv \ \inl(a)
\\ & \tilde{\xi}(\xi(\inr(\inl(i,j,g,x)))) \ \equiv \ \tilde{\xi}(\inl(\inr(i,j,g,x))) \ \equiv \ \epsilon_1(\inr(i,j,g,x)) \ \equiv \ \inr(\inl(i,j,g,x))
\\ & \tilde{\xi}(\xi(\inr(\inr(i,j,g,x)))) \ \equiv \ \tilde{\xi}(\inr(\inr(i,j,g,x))) \ \equiv \ \epsilon_2(\inr(i,j,g,x))
 \ \equiv \ \inr(\inr(i,j,g,x))
\end{align*}
Then by the path $\beta$-rules for $\tilde{\xi}$ and $\xi$, we easily have that $\ap_{\tilde{\xi}}(\ap_{\xi}(\glue(\inl(i,j,g,a)))) = \glue(\inl(i,j,g,a))$ and that
$\ap_{\tilde{\xi}}(\ap_{\xi}(\glue(\inr(i,j,g,a)))) = \glue(\inr(i,j,g,a))$ for all $g : \Gamma_1(i,j)$ and $a : A$. 

Next, we prove that $\xi \circ \tilde{\xi} \sim \idd_{\bigvee \vee \bigvee}$ by \cref{funcpo} again. We first see that
\begin{align*}
& \xi(\tilde{\xi}(\inl(\inr(i,j,g,x)))) \ \equiv  \ \xi(\inr(\inl(i,j,g,x))) \ \equiv \ \inl(\inr(i,j,g,x))
\\ & \xi(\tilde{\xi}(\inl(\inl(a)))) \ \equiv \ \xi(\inl(a)) \ \equiv \ \inl(\inl(a))
\\ & \xi(\tilde{\xi}(\inr(\inr(i,j,g,x))))  \ \equiv \ \xi(\inr(\inr(i,j,g,x))) \ \equiv \ \inr(\inr(i,j,g,x))
\\ & \xi(\tilde{\xi}(\inr(\inl(a)))) \ \equiv \ \xi(\inl(a)) \ \equiv \ \inl(\inl(a)) \ = \ \inr(\inl(a)) \tag{$\glue_{\bigvee \vee \bigvee}(a)$}
\end{align*}
On $\inl$, the path $\beta$-rules for $\epsilon_1$ and $\xi$ imply that $ \ap_{\xi}(\ap_{\tilde{\xi} \circ \inl}(\glue(i,j,g,a))) = \ap_{\inl}(\glue(i,j,g,a)) $ for all $g : \Gamma_1(i,j)$ and $a : A$. On $\inr$, the $\beta$-rules for $\epsilon_2$ and $\xi$ easily imply that $\ap_{\xi}(\ap_{\tilde{\xi} \circ \inr}(\glue(i,j,g,a))) = \glue(a) \cdot \ap_{\inr}(\glue(i,j,g,a))$. Finally, $\ap_{\xi}(\ap_{\tilde{\xi}}(\glue(a))) = \refl_{\inl(\inl(a))}$ for all $a : A$.
\end{proof}

Now, define $\sigma : \left(\bigvee_{i,j,g}\pr_1(F_i) \right) \vee \left(\bigvee_{i,j,g}\pr_1(F_i)\right) \to \bigvee_i\pr_1(F_i)$ as the cogap map for
\[\begin{tikzcd}
	A & {\bigvee_{i,j,g}\pr_1(F_i)} \\
	{\bigvee_{i,j,g}\pr_1(F_i)} & {\bigvee_i\pr_1(F_i)}
	\arrow["{\alpha_1}"', from=2-1, to=2-2]
	\arrow[""{name=0, anchor=center, inner sep=0}, "{\alpha_2}", from=1-2, to=2-2]
	\arrow[""{name=1, anchor=center, inner sep=0}, from=1-1, to=2-1]
	\arrow[from=1-1, to=1-2]
	\arrow["{\refl_{\inl(a)}}"{description}, draw=none, from=1, to=0]
\end{tikzcd}\]
Here, $\alpha_1$ and $\alpha_2$ are defined, respectively, by pushout recursion on the commuting squares
\[\begin{tikzcd}[  /tikz/row 3/.append style={row sep=10pt}]
	{\left(\sum_{\left(i,j\right) : \Gamma_0 \times \Gamma_0}\Gamma_1(i,j)\right) \times A} && {\sum_{\left(i,j,g\right) : \sum_{\left(i,j\right) : \Gamma_0 \times \Gamma_0}\Gamma_1(i,j)}\pr_1(F_i)} \\
	\\
	A && {\bigvee_i\pr_1(F_i)} \\
	{\left(\sum_{\left(i,j\right) : \Gamma_0 \times \Gamma_0}\Gamma_1(i,j)\right) \times A} && {\sum_{\left(i,j,g\right) : \sum_{\left(i,j\right) : \Gamma_0 \times \Gamma_0}\Gamma_1(i,j)}\pr_1(F_i)} \\
	\\
	A && {\bigvee_i\pr_1(F_i)}
	\arrow[""{name=0, anchor=center, inner sep=0}, from=1-1, to=3-1]
	\arrow[from=1-1, to=1-3]
	\arrow[""{name=1, anchor=center, inner sep=0}, "{\inr(i,x)}", from=1-3, to=3-3]
	\arrow["{\inl}"', from=3-1, to=3-3]
	\arrow[from=4-1, to=4-3]
	\arrow[""{name=2, anchor=center, inner sep=0}, from=4-1, to=6-1]
	\arrow["{\inl}"', from=6-1, to=6-3]
	\arrow[""{name=3, anchor=center, inner sep=0}, "{\inr(j, \pr_1(F_{i,j,g})(x))}", from=4-3, to=6-3]
	\arrow["{\glue_{\bigvee_i\pr_1(F_i)}(i,a)}"{description}, draw=none, from=0, to=1]
	\arrow["{\glue_{\bigvee_i\pr_1(F_i)}(j,a) \cdot \ap_{\inr(j,{-})}(\pr_2(F_{i,j,g})(a))^{-1}}"{description}, draw=none, from=2, to=3]
\end{tikzcd} \] 
 We have a map of spans:
\[\begin{tikzcd}
	{V_1} && {V_2} && {V_3} \\
	{\bigvee_{i,j,g}\pr_1(F_i)} && {\left(\bigvee_{i,j,g}\pr_1(F_i) \right) \vee \left(\bigvee_{i,j,g}\pr_1(F_i)\right)} && {\bigvee_i\pr_1(F_i)}
	\arrow["\idd"', from=1-1, to=2-1]
	\arrow["{{\delta_1}}"', from=1-3, to=1-1]
	\arrow["{{\delta_2}}", from=1-3, to=1-5]
	\arrow["\xi", from=1-3, to=2-3]
	\arrow["\simeq"', from=1-3, to=2-3]
	\arrow["\idd", from=1-5, to=2-5]
	\arrow["{{\idd \vee \idd}}", from=2-3, to=2-1]
	\arrow["\sigma"', from=2-3, to=2-5]
\end{tikzcd}\] 
\begin{notation}
Denote the pushout of the lower span by $\PW(F)$ (for ``pushout of wedges'').
\end{notation}
\noindent Indeed, we use \cref{funcpo} on $V_2$ to get two homotopies making these two subsquares commute. For the left sub-square, we first see that
\begin{align*}
& \left(\idd \vee \idd\right)(\xi(\inl(a))) \ \equiv \ \inl(a) \ \equiv \ \delta_1(\inl(a))
\\ & \left(\idd \vee \idd\right)(\xi(\inr(\inl(i,j,g,x)))) \ \equiv \  \inr(i,j,g,x) \  \equiv \  \delta_1(\inr(\inl(i,j,g,x)))
\\ & \left(\idd \vee \idd\right)(\xi(\inr(\inr(i,j,g,x)))) \ \equiv \ \inr(i,j,g,x) \ \equiv \ \delta_1(\inr(\inr(i,j,g,x)))
\end{align*}
To complete the homotopy, we easily check that $\ap_{\idd \vee \idd}(\ap_{\xi}(\glue(\inl(i,j,g,a)))) = \ap_{\delta_1}(\glue(\inl(i,j,g,a)))$ and that $\ap_{\idd \vee \idd}(\ap_{\xi}(\glue(\inr(i,j,g,a)))) = \ap_{\delta_1}(\glue(\inr(i,j,g,a)))$.
For the right sub-square, we define the homotopy in the same way.
We now have an isomorphism of spans, which induces an equivalence of pushouts $\tau_2 : P_V \xrightarrow{\simeq} \PW(F)$.

\begin{lemma}
We have an equivalence between $\psi$ and $\eta_1$:
\[ \label{cofmap}
\begin{tikzcd}
	{\colimm{A}} & {\colimm(\F(F))} \\
	{H_2} & {H_1}
	\arrow["\psi", from=1-1, to=1-2]
	\arrow["\simeq", from=1-1, to=2-1]
	\arrow["{{{w_0}}}"', from=1-1, to=2-1]
	\arrow["\simeq"', from=1-2, to=2-2]
	\arrow["{{{w_1}}}", from=1-2, to=2-2]
	\arrow["{\eta_1}"', from=2-1, to=2-2]
\end{tikzcd} \tag{$\texttt{$\psi$-$\eta_1$-sq}$}
\]
\end{lemma}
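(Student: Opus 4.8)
The plan is to recognize $H_1$ and $H_2$ as the pushout-of-coproducts (coequalizer) presentations of the ordinary colimits $\colimm_\Gamma(\F(F))$ and $\colimm_\Gamma A$, and then to check that under these identifications the functorial map $\eta_1$ becomes exactly $\psi$.

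First I would build $w_1$. The top horizontal span has apex $\bigl(\sum_{(i,j,g)}\pr_1(F_i)\bigr)+\bigl(\sum_{(i,j,g)}\pr_1(F_i)\bigr)$, left leg the fold map into $\sum_{(i,j,g)}\pr_1(F_i)$, and right leg the map sending the two copies by $(i,j,g,x)\mapsto(i,x)$ and $(i,j,g,x)\mapsto(j,\pr_1(F_{i,j,g})(x))$. This is precisely the standard presentation of a coequalizer as a pushout, so $H_1$ is the coequalizer of the source and target maps $\sum_{(i,j,g)}\pr_1(F_i)\rightrightarrows\sum_i\pr_1(F_i)$, which is $\colimm_\Gamma(\F(F))$. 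Concretely I would define $w_1$ by colimit recursion with $w_1(\iota_i(x))\coloneqq\inr(i,x)$ and $\ap_{w_1}(\kappa_{i,j,g}(x))\coloneqq\glue_{H_1}(\inr(i,j,g,x))^{-1}\cdot\glue_{H_1}(\inl(i,j,g,x))$, and its inverse by pushout recursion, checking that the two are mutually inverse on generators. Running the same argument verbatim for the constant diagram (all $F_i=A$ and $F_{i,j,g}=\idd_A$) identifies the middle span with the coequalizer of $\bigl(\sum_{(i,j)}\Gamma_1(i,j)\bigr)\times A\rightrightarrows\Gamma_0\times A$, giving the equivalence $w_0:\colimm_\Gamma A\xrightarrow{\simeq}H_2$ with $w_0(\iota_i(a))=\inr(i,a)$.

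Next I would unfold $\eta_1$. By construction $\eta_1=\mathsf{po}$ applied to the middle-to-top map of spans, whose legs are $(i,j,g,a)\mapsto(i,j,g,\pr_2(F_i)(a))$ on the coproducts over edges and $(i,a)\mapsto(i,\pr_2(F_i)(a))$ on $\Gamma_0\times A$, and whose two filler $2$-cells are read off the grid: $\refl+\refl$ on the left and $\refl_{(i,\pr_2(F_i)(a))}+\ap_{(j,{-})}(\pr_2(F_{i,j,g})(a))$ on the right. The glue computation rule for $\mathsf{po}(\psi,S)$ recorded in \cref{ordcol} then gives, on the two apex copies,
\[
\ap_{\eta_1}(\glue_{H_2}(\inl(i,j,g,a)))=\glue_{H_1}(\inl(i,j,g,\pr_2(F_i)(a))),
\]
\[
\ap_{\eta_1}(\glue_{H_2}(\inr(i,j,g,a)))=\glue_{H_1}(\inr(i,j,g,\pr_2(F_i)(a)))\cdot\ap_{\inr}(\ap_{(j,{-})}(\pr_2(F_{i,j,g})(a))).
\]

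Finally I would verify $\eta_1\circ w_0\sim w_1\circ\psi$ using \cref{colimmapeq}, so that it suffices to match the two sides on $\iota_i(a)$ and on $\kappa_{i,j,g}(a)$. On points both composites reduce definitionally to $\inr(i,\pr_2(F_i)(a))$. On $\kappa_{i,j,g}(a)$, the left-hand side is $\ap_{\eta_1}$ of $w_0(\kappa_{i,j,g}(a))=\glue_{H_2}(\inr(i,j,g,a))^{-1}\cdot\glue_{H_2}(\inl(i,j,g,a))$, which by the displayed rules equals $\ap_{\inr}(\ap_{(j,{-})}(\pr_2(F_{i,j,g})(a)))^{-1}\cdot\glue_{H_1}(\inr(\ldots))^{-1}\cdot\glue_{H_1}(\inl(\ldots))$; the right-hand side is $\ap_{w_1}$ of $\ap_{\iota_j}(\pr_2(F_{i,j,g})(a))^{-1}\cdot\kappa_{i,j,g}(\pr_2(F_i)(a))$, which unfolds to the very same composite since $w_1\circ\iota_j=\inr\circ(j,{-})$ and $w_1$ sends $\kappa$ to the corresponding difference of glue paths. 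The main obstacle is exactly this last $2$-dimensional bookkeeping: one must keep the two apex copies, the orientation of each $\glue$, and the single nontrivial filler $\ap_{(j,{-})}(\pr_2(F_{i,j,g})(a))$ aligned so that the naturality term produced by $\eta_1$ matches the $\ap_{\iota_j}(\pr_2(F_{i,j,g})(a))^{-1}$ built into $\psi$; everything else is $\refl$ or functoriality of $\ap$.
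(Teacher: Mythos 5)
Your proposal is correct and follows essentially the same route as the paper: you define $w_0$ and $w_1$ by the same colimit recursion (with $\iota_i(x)\mapsto\inr(i,x)$ and $\kappa$ sent to the difference of glue paths), construct their inverses by pushout recursion, and verify $\eta_1\circ w_0\sim w_1\circ\psi$ by colimit induction with exactly the same cancellation of $\glue_{H_1}$ terms against the $\ap_{\inr(j,-)}(\pr_2(F_{i,j,g})(a))$ filler. The only cosmetic difference is your observation that $w_0$ is the instance of $w_1$ at the constant diagram, which the paper handles by running the two constructions in parallel.
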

\begin{proof}
Define $w_0$ and $w_1$ by the following cocones under $A$ and $\F(F)$, respectively:
\[
\begin{tikzcd}[/tikz/row 1/.append style={row sep=-2pt}, /tikz/row 2/.append style={row sep=-2pt},
/tikz/row 4/.append style={row sep=-2pt}, /tikz/row 5/.append style={row sep=-2pt}]
	A && A \\
	&&& {\left(\lambda{a}.\glue_{H_2}(\inr(i,j,g,a))^{-1} \cdot \glue_{H_2}(\inl(i,j,g,a))\right)} \\
	& {H_2} \\
	{\pr_1(F_i)} && {\pr_1(F_j)} \\
	&&& {\left(\lambda{x}.\glue_{H_1}(\inr(i,j,g,x))^{-1} \cdot \glue_{H_1}(\inl(i,j,g,x))\right)} \\
	& {H_1}
	\arrow["{{{\idd_A}}}", from=1-1, to=1-3]
	\arrow["{{{\inr(i,{-})}}}"', from=1-1, to=3-2]
	\arrow["{{{\inr(j,{-})}}}", from=1-3, to=3-2]
	\arrow["{{{\pr_1(F_{i,j,g})}}}", from=4-1, to=4-3]
	\arrow["{{{\inr(i,{-})}}}"', from=4-1, to=6-2]
	\arrow["{{{\inr(j,{-})}}}", from=4-3, to=6-2]
\end{tikzcd}\]
We prove that $\eqref{cofmap}$ commutes by \cref{colimmapeq}: We have that $\eta_1(w_0(\iota_i(a)))  \equiv   \eta_1(\inr(i,a)) \equiv  \inr(i,\pr_2(F_i)(a)) \equiv   w_1(\iota_i(\pr_2(F_i)(a))) \equiv  w_1(\psi(\iota_i(a)))$. We also see that $\ap_{\eta_1}(\ap_{w_0}(\kappa_{i,j,g}(a))) = \ap_{w_1}(\ap_{\psi}(\kappa_{i,j,g}(a)))$ by the relevant path $\beta$-rules.

To see that $w_0$ and $w_1$ are equivalences, define inverses $y_0$ and $y_1$ of $w_0$ and $w_1$, respectively, by recursion on puhsouts:
\[
\begin{tikzcd}[column sep = large]
	{W_1} && {\Gamma_0 \times A} & \\
	{\left( \sum_{\left(i,j\right) : \Gamma_0 \times \Gamma_0}\Gamma_1(i,j) \right) \times A} && {H_2} \\
	&&& {\colimm{A}} \\
	{W_2} && {\sum_{i : \Gamma_0}\pr_1(F_i)} \\
	{\sum_{\left(i,j,g\right) : \sum_{\left(i,j\right) : \Gamma_0 \times \Gamma_0}\Gamma_1(i,j)}\pr_1(F_i)} && {H_1} \\
	&&& {\colimm(\F(F))}
	\arrow[from=1-1, to=1-3]
	\arrow[""{name=0, anchor=center, inner sep=0}, from=1-1, to=2-1]
	\arrow[""{name=1, anchor=center, inner sep=0}, "{{{\left(i,a\right) \mapsto \iota_i(a)}}}", curve={height=-12pt}, from=1-3, to=3-4]
	\arrow["{{{\left(i,j,g,a\right) \mapsto \iota_j(a)}}}"', curve={height=12pt}, from=2-1, to=3-4]
	\arrow["{{{y_0}}}", dashed, from=2-3, to=3-4]
	\arrow[from=4-1, to=4-3]
	\arrow[""{name=2, anchor=center, inner sep=0}, from=4-1, to=5-1]
	\arrow[""{name=3, anchor=center, inner sep=0}, "{{{\left(i,x\right) \mapsto \iota_i(x)}}}", curve={height=-12pt}, from=4-3, to=6-4]
	\arrow["{{{\left(i,j,g,x\right) \mapsto \iota_j(\pr_1(F_{i,j,g})(x))}}}"', curve={height=12pt}, from=5-1, to=6-4]
	\arrow["{{{y_1}}}", dashed, from=5-3, to=6-4]
	\arrow["{{{\kappa_{i,j,g}(a) \ + \ \refl_{\iota_j(a)}}}}"{description}, draw=none, from=0, to=1]
	\arrow["{{{\kappa_{i,j,g}(x) \ + \ \refl_{\iota_j(F_{i,j,g}(x))}}}}"{description}, draw=none, from=2, to=3]
\end{tikzcd} \]
By \cref{funcpo,colimmapeq}, it is routine to check  $y_0$ and $y_1$ are inverses to $w_0$ and $w_1$, respectively.
\end{proof}

Note that \eqref{cofmap} fits into an isomorphism of spans
\[
\begin{tikzcd}[column sep = large]
	A &[6mm] {\colimm{A}} & {\colimm(\F(F))} \\
	{H_3} &[6mm] {H_2} & {H_1}
	\arrow["\psi", from=1-2, to=1-3]
	\arrow["{{{\left[\idd_A\right]}}}"', from=1-2, to=1-1]
	\arrow["\simeq", from=1-1, to=2-1]
	\arrow[""{name=0, anchor=center, inner sep=0}, "{{{\inl}}}"', from=1-1, to=2-1]
	\arrow["\simeq", from=1-2, to=2-2]
	\arrow[""{name=1, anchor=center, inner sep=0}, "{{{{{w_0}}}}}"', from=1-2, to=2-2]
	\arrow["\simeq"', from=1-3, to=2-3]
	\arrow["{{{{{w_1}}}}}", from=1-3, to=2-3]
	\arrow["{{{\eta_1}}}"', from=2-2, to=2-3]
	\arrow["{{{\eta_2}}}", from=2-2, to=2-1]
	\arrow["{\mathsf{lhs}}"{description}, draw=none, from=0, to=1]
\end{tikzcd}  \]  Here, the homotopy $\mathsf{lhs}$ is defined by \cref{colimmapeq}: We see that $\eta_2(w_0(\iota_i(a))) \equiv  \eta_2(\inr(i,a))  \equiv   \inr(a)  =  \inl(a)  \equiv  \inl(\left[\idd_A\right](\iota_i(a)))$ by $\glue_{H_3}(a)^{-1}$, and it's easy to check that $ \ap_{\eta_2}(\ap_{w_0}(\kappa_{i,j,g}(a))) \cdot \glue(a)^{-1} = \glue(a)^{-1} \cdot \ap_{\inl}(\ap_{\left[\idd_A\right]}(\kappa_{i,j,g}(a)))$.
The pushout of the upper span here is exactly $\P_A(F)$, so the pushout action on maps yields an equivalence $\tau_0 : \colimm^A(F) \xrightarrow{\simeq} P_H$.

\begin{corollary}\label{coeqcop}
We have an equivalence $T_F : \colimm^A(F) \xrightarrow{\simeq} \PW(F)$ such that 
$T_F(\inl(a))  \equiv \inl(\inl(a))$ and
$T_F(\inr(\iota_i(x))) \equiv \inr(\inr(i,x))$
\end{corollary}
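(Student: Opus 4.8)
The plan is to build $T_F$ as the composite of the three equivalences already assembled just above the statement,
\[
\colimm_{\Gamma}^A{F} \xrightarrow{\ \tau_0\ } P_H \xrightarrow{\ \tau_1\ } P_V \xrightarrow{\ \tau_2\ } \mathsf{PW},
\]
and set $T_F \coloneqq \tau_2 \circ \tau_1 \circ \tau_0$ (recalling that $\colimm_{\Gamma}^A{F}$ denotes $\P_A(F)$ with its colimiting cocone). Since each $\tau_k$ is an equivalence, so is $T_F$, which settles that half of the statement immediately. Everything then reduces to verifying the two displayed judgmental equalities, and I would do this by chasing the point constructors $\inl(a)$ and $\inr(\iota_i(x))$ through each of the three maps in turn, keeping careful track of which $\inl$/$\inr$ belongs to which pushout.

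First I would record how $\tau_0$ acts on point constructors. Because $\tau_0$ is the map of pushouts induced by the isomorphism of spans whose vertical legs are $\inl : A \xrightarrow{\simeq} H_3$, $w_0$, and $w_1$, the recursion $\beta$-rules for the pushout $\P_A(F)$ give $\tau_0(\inl(a)) \equiv \inl(\inl(a))$ and $\tau_0(\inr(c)) \equiv \inr(w_1(c))$. Specializing the second to $c \equiv \iota_i(x)$ and using the defining cocone of $w_1$, which satisfies $w_1(\iota_i(x)) \equiv \inr(i,x)$, yields $\tau_0(\inr(\iota_i(x))) \equiv \inr(\inr(i,x))$. Next I would invoke the computation rules of the Licata--Brunerie equivalence $\tau_1$ quoted above, namely $\tau_1(\inl(\inl(a))) \equiv \inl(\inl(a))$ and $\tau_1(\inr(\inr(i,x))) \equiv \inr(\inr(i,x))$, so that the images under $\tau_0$ are carried to the syntactically identical terms in $P_V$.

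Finally, $\tau_2$ is induced by the map of spans whose outer vertical legs are the constructor-preserving equivalences $V_1 \simeq \bigvee_{i,j,g}\pr_1(F_i)$ and $V_3 \simeq \bigvee_i\pr_1(F_i)$ (they identify pushouts over the same span, hence act as the identity on $\inl(a)$ and $\inr(i,x)$). Its $\beta$-rules therefore give $\tau_2(\inl(\inl(a))) \equiv \inl(\inl(a))$ and $\tau_2(\inr(\inr(i,x))) \equiv \inr(\inr(i,x))$ in $\mathsf{PW}$, where the inner $\inl : A \to \bigvee_{i,j,g}\pr_1(F_i)$ and $\inr : \sum_i\pr_1(F_i) \to \bigvee_i\pr_1(F_i)$ are now the wedge coprojections. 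Concatenating the three chains of judgmental equalities produces exactly the two equations in the statement.

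The only genuine obstacle is bookkeeping rather than mathematics: each equation asserts a judgmental equality ($\equiv$), so every intermediate reduction must be definitional, not merely up to a path. This forces me to confirm that all three pushout-functoriality maps reduce on point constructors through their recursion $\beta$-rules, and in particular that the span-iso leg $\inl : A \to H_3$, the map $w_1$, and the two wedge identifications preserve the relevant constructors on the nose. These behaviors are precisely the definitional computation rules recorded when those maps were constructed, so no difficulty arises beyond tracking the nesting of $\inl$ and $\inr$ across the successive pushouts.
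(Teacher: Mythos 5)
Your proposal matches the paper's proof exactly: the paper also defines $T_F \coloneqq \tau_2 \circ \tau_1 \circ \tau_0$ and relies on the point-constructor $\beta$-rules of the three pushout maps (the span-isomorphism legs $\inl : A \to H_3$ and $w_1$, the Licata--Brunerie equivalence $\tau_1$, and the identity outer legs of the span inducing $\tau_2$) to yield the two judgmental equalities. Your constructor-chasing simply makes explicit what the paper leaves implicit in its one-line proof.
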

\begin{proof}
Define $T_F \coloneqq \tau_2 \circ \tau_1 \circ \tau_0$.
\end{proof}

\section{Universality of colimits}\label{App1}

Let $\U$ be a universe and $A$ be a type. Let $\Gamma$ be a graph and $F$ be an $A$-diagram over $\Gamma$. We say that $\colimm^A(F)$ is \emph{universal}, or \emph{pullback-stable}, if 
for every pullback square
\[ \label{pbs}
\begin{tikzcd}
	{\colimm^A(F) \times_V Y} & Y \\
	{\colimm^A(F)} & V
	\arrow["{\pi_2}", from=1-1, to=1-2]
	\arrow["{\pi_1}"', from=1-1, to=2-1]
	\arrow["\lrcorner"{anchor=center, pos=0.125}, draw=none, from=1-1, to=2-2]
	\arrow["h", from=1-2, to=2-2]
	\arrow["f"', from=2-1, to=2-2]
\end{tikzcd} \tag{$\texttt{pb}$}
\] in $A/\U$, the cogap map $\sigma_{f,h} : \colimm^A(F \times_V Y) \to_A  \colimm^A(F) \times_V Y$
is an isomorphism (as a cocone morphism).\footnote{The cocone under $F \times_V Y$ that induces this cogap map is actually nontrivial to construct and relies on the bicategorical structure of $A/\U$. See \cite[\href{https://github.com/PHart3/colimits-agda/blob/v0.4.0/Pullback-stability/Stability-cos-coc.agda}{Stability-cos-coc}]{agda-colim-TR} for a mechanized construction of the map.} (See below \cref{pbcoslice} for an explicit construction of pullback squares in $A/\U$ from those in $\U$.)

\begin{lemma} \label{crlim}
 The forgetful functor $\F : A/\U \to \U$ preserves limits.
\end{lemma}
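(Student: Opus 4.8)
The plan is to compute the limit of an arbitrary $A$-diagram $F$ over $\Gamma$ explicitly inside $A/\U$ and to observe that its underlying type is, by definition, the limit of $\F(F)$ in $\U$. Since any two limiting cones over $F$ are isomorphic (by the limit-analogue of \cref{coluniq}) and $\F$ sends isomorphisms to equivalences — an isomorphism in $A/\U$ being precisely a map whose underlying function of types is an equivalence, cf.\ \cref{Aiso} — it suffices to exhibit a single limiting cone in $A/\U$ whose $\F$-image is limiting in $\U$.

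First I would take as apex the object $L \coloneqq \mleft(\limm_{\Gamma}(\F(F)), s\mright)$, where the structure map $s : A \to \limm_{\Gamma}(\F(F))$ sends $a$ to $\mleft(\lambda{i}.\pr_2(F_i)(a), \lambda{i}\lambda{j}\lambda{g}.\pr_2(F_{i,j,g})(a)\mright)$. This is well-typed exactly because the coherence datum demanded by $\limm_{\Gamma}(\F(F))$ at the tuple $\lambda{i}.\pr_2(F_i)(a)$ — namely a path $\pr_1(F_{i,j,g})(\pr_2(F_i)(a)) = \pr_2(F_j)(a)$ — is the point-preservation path $\pr_2(F_{i,j,g})(a)$ carried by the $A$-diagram. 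The projections $\pi_i : L \to_A F_i$ have as underlying map the $\limm$-projection $(\alpha, D) \mapsto \alpha_i$, with point-preservation homotopy $\refl$ since $\pr_1(\pi_i)(s(a)) \equiv \pr_2(F_i)(a)$; the cone $2$-cells $F_{i,j,g} \circ \pi_i \sim_A \pi_j$ are supplied by the tautological paths $D_{i,j,g}$ together with a routine $\PI$-calculation for the $A$-coherence. Applying $\F$ to this cone returns verbatim the standard limiting cone of $\F(F)$ in $\U$.

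It then remains to verify that this cone is limiting in $A/\U$, i.e.\ that for each $T : A/\U$ precomposition with the $\pi_i$ is an equivalence from $T \to_A L$ onto the type of cones from $T$, which by the dual of \cref{Interch}(b) is $\limm_{i : \Gamma^{\op}}(T \to_A F_i)$. I would unfold a map $T \to_A L$, by $\Sigma$-reassociation and the universal property of $\limm_{\Gamma}(\F(F))$ in $\U$, into a family of underlying maps $\pr_1(T) \to \pr_1(F_i)$ commuting with the $\pr_1(F_{i,j,g})$ plus a single point-preservation datum over $A$, and then match the latter against the coherences of an $A$-cone via \cref{SIP}. Packaging this as a chain of equivalences, in the style of \cref{fillid}, yields the universal property.

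The main obstacle is the higher bookkeeping in this last step: I must check that the identifications extracted from the universal property of the ordinary limit are compatible, at the level of the $A$-structure, with the $A$-homotopy data of cones — this is the limit-side counterpart of \cref{nattreq}. As in the $\Xi$- and $\eta$-computations of \cref{mainequiv}, it amounts to a careful transport/$\PI$ argument, but introduces no idea beyond function extensionality and the structure identity principle.
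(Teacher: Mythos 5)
Your proof is correct, but it takes a genuinely different route from the paper's. The paper never constructs a limit in $A/\U$ at all: it observes that $\F$ has a left adjoint, namely the free functor $\mleft(({-})+A\mright) : \U \to A/\U$, and then, given an \emph{arbitrary} limiting $A$-cone $\mleft(C,r,K\mright)$ over $F$, shows that postcomposition $\mleft(\F(C,r,K) \circ {-}\mright)$ is an equivalence by identifying it with the composite
\[
\mleft(X \to \pr_1(C)\mright) \simeq \mleft(\mleft(X+A\mright) \to_A C\mright) \simeq \limm_{i : \Gamma}\mleft(\mleft(X+A\mright) \to_A F_i\mright) \simeq \limm_{i : \Gamma}\mleft(X \to \pr_1(F_i)\mright),
\]
i.e.\ it is a hand-rolled instance of ``right adjoints preserve limits.'' You instead build an explicit limiting cone with apex $\mleft(\limm_{\Gamma}(\F(F)), s\mright)$ --- which is exactly the graph-shaped generalization of the paper's own construction of pullbacks in $A/\U$ in \cref{pbcoslice} --- check that its $\F$-image is the standard limit cone in $\U$, and then transfer to arbitrary limiting cones via uniqueness of limits (the cone analogue of \cref{coluniq}) together with the fact that $\F$ sends cone isomorphisms to cone isomorphisms. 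The paper's route buys brevity: the only nontrivial verification is that the displayed composite equals $\mleft(\F(C,r,K) \circ {-}\mright)$, and no universal-property bookkeeping inside $A/\U$ is needed. Your route buys strictly more: it establishes that $A/\U$ actually \emph{has} limits over graphs (the paper's lemma by itself is silent on existence) and avoids any appeal to the adjunction, at the cost of the SIP/transport computations you flag in your last paragraph --- these are routine, and your sketches of the structure map $s$, the projections, and the cone $2$-cells are all type-correct, but they do have to be carried out to make the argument complete.
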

\begin{proof}
Consider the free functor $\mleft(\mleft({-}\mright) + A\mright): \U \to A/\U$. This is left adjoint to $\F$. Let $F$ be an $A$-diagram over a graph $\Gamma$. Let $\left(C, r, K\right)$ be a limiting cone over $F$ (the dual notion to colimiting cocone). We must show that the function $ \left(\F(C, r, K) \circ {-}\right)$ is an equivalence. But for each $X : \U$, it is easy to check that this equals the composite of equivalences
\begin{align*}
& \ X \to \pr_1(C)
\\ \simeq & \ \left(X + A\right) \to_A C
\\ \simeq & \ \limm(\left(X + A\right) \to_A F)
\\ \simeq & \ \limm(X \to \F(F)) \qedhere
\end{align*}
\end{proof}

\begin{theorem}[{\cite[\href{https://github.com/PHart3/colimits-agda/blob/v0.4.0/Pullback-stability/Stability-ord.agda}{Stability-ord}]{agda-colim-TR}}] \label{LCC}
All colimits in $\U$ are universal.
\end{theorem}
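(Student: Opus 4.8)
The plan is to realize the pullback $\colimm_\Gamma(F) \times_V Y$ as the total space of a type family over $\colimm_\Gamma(F)$ and then to redistribute that total space over the diagram $F$ by the flattening lemma. Fix the pullback square \eqref{pbs} in $\U$, with $f : \colimm_\Gamma(F) \to V$ and $h : Y \to V$, and define
\[
P \ : \ \colimm_\Gamma(F) \to \U, \qquad P(z) \ \coloneqq \ \sum_{y : Y}\mleft(f(z) = h(y)\mright).
\]
By the definition of the pullback, $\colimm_\Gamma(F) \times_V Y \equiv \sum_{z : \colimm_\Gamma(F)}P(z)$. First I would compute the descent data of $P$ along the colimit: over each $\iota_i$ it restricts to $x \mapsto P(\iota_i(x)) \equiv \sum_{y:Y}(f(\iota_i(x)) = h(y))$, and over each edge $g : \Gamma_1(i,j)$ the gluing is $\transport$ along $\kappa_{i,j,g}$, which on the fiber amounts to composition with $\ap_f(\kappa_{i,j,g}(x))$.

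Next, the flattening lemma for graph colimits --- a special case of descent \cite[Section 6.12]{Uni13}, derivable directly from the induction principle of \cref{ordcol} --- yields an equivalence
\[
\sum_{z : \colimm_\Gamma(F)}P(z) \ \simeq \ \colimm_\Gamma\mleft(\hat F\mright),
\]
where $\hat F_i \coloneqq \sum_{x : F_i}P(\iota_i(x))$ and the edge maps are induced by the descent data above. Since $\hat F_i$ is definitionally the pullback $F_i \times_V Y$ of $f \circ \iota_i$ against $h$, and the computed gluing agrees with the map on pullbacks induced by $F_{i,j,g}$ together with the cocone, $\hat F$ is the diagram $i \mapsto F_i \times_V Y$ (at worst up to a componentwise equivalence of diagrams). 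Composing, I obtain an equivalence $\colimm_i(F_i \times_V Y) \simeq \colimm_\Gamma(F) \times_V Y$.

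Finally, it remains to identify this equivalence with the canonical map $\sigma_{f,h}$, and this is where I expect the real work to lie. Both are maps out of the colimit $\colimm_i(F_i \times_V Y)$, so by \cref{colimmapeq} it suffices to compare them on the generators $\iota_i$ --- where each sends a triple $(x,y,p)$ to the evident element of $\colimm_\Gamma(F) \times_V Y$, so the comparison is definitional --- and to supply the filler over each edge $\kappa_{i,j,g}$. The edge coherence is a $\transport$/naturality calculation recording how $P$ is glued along $\kappa_{i,j,g}$; it is routine in principle but delicate to carry out, and it is essentially the content of the mechanized argument. Once the two maps agree on all generators, $\sigma_{f,h}$ inherits being an equivalence from the flattening equivalence, completing the proof.
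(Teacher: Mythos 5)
Your proposal is correct and takes essentially the same route as the paper's proof: the paper offers no prose argument for \cref{LCC} at all, deferring entirely to its Agda formalization, and the descent/flattening argument you outline---realizing the pullback as the total space of the family $z \mapsto \sum_{y:Y}\bigl(f(z)=h(y)\bigr)$ over $\colimm_{\Gamma}(F)$, flattening it to the colimit of the fiberwise pullbacks, and then identifying the resulting equivalence with the cogap map $\sigma_{f,h}$ by colimit induction---is precisely the standard content of such a formalization, with the deferred edge-coherence computations playing the same role that the mechanized path algebra plays for the paper. One point worth making explicit in your write-up: because the family is given directly over the colimit (rather than assembled from descent data), the flattening step needs only the colimit induction principle and function extensionality, not univalence, so your argument indeed stays within $\mathsf{MLTT + Colim}$ as the statement of the theorem requires.
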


\begin{corollary} \label{univcolim}
For each tree $\Gamma$ and each $A$-diagram $F$ over $\Gamma$, the colimit $\colimm^A(F)$ is universal.
\end{corollary}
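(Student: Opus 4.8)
The plan is to reduce universality in $A/\U$ to universality in $\U$ (\cref{LCC}) by pushing all the data through the forgetful functor $\F : A/\U \to \U$. First I would observe that an $A$-map $\mleft(h, h_p\mright) : X \to_A Y$ is biinvertible in the wild category $A/\U$ exactly when its underlying function $h : \pr_1(X) \to \pr_1(Y)$ is an equivalence of types; this follows from \cref{Aiso} and the definition of equivalence in a wild category. Consequently, to prove that the cogap map $\sigma_{f,h}$ is an equivalence in $A/\U$, it suffices to prove that $\pr_1(\sigma_{f,h})$ is an equivalence of types.

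Next I would apply $\F$ to the data defining $\sigma_{f,h}$. By \cref{crlim} the functor $\F$ preserves pullbacks, so that $\pr_1\mleft(\colimm_{\Gamma}^A(F) \times_V Y\mright) \simeq \pr_1(\colimm_{\Gamma}^A(F)) \times_{\pr_1(V)} \pr_1(Y)$ and $\pr_1(F_i \times_V Y) \simeq \pr_1(F_i) \times_{\pr_1(V)} \pr_1(Y)$ for each $i : \Gamma_0$. Because $\Gamma$ is a tree, \cref{create} tells us that $\F$ preserves the relevant colimits, so that $\pr_1(\colimm_{\Gamma}^A(F)) \simeq \colimm_{\Gamma}(\F(F))$ and $\pr_1\mleft(\colimm_{i : \Gamma}^A(F_i \times_V Y)\mright) \simeq \colimm_{i : \Gamma}(\pr_1(F_i) \times_{\pr_1(V)} \pr_1(Y))$. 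Under these equivalences the domain and codomain of $\pr_1(\sigma_{f,h})$ become precisely the domain and codomain of the canonical map $\sigma_{\F(f), \F(h)}$ that witnesses universality of $\colimm_{\Gamma}(\F(F))$ in $\U$.

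The crux is then to check that $\pr_1(\sigma_{f,h})$ itself corresponds, under the equivalences above, to this canonical cogap map $\sigma_{\F(f), \F(h)}$ of the underlying pullback square in $\U$. Since $\F$ preserves cocone structure — it carries $\K(\P_A(F))$ to a colimiting cocone and, by \cref{crlim}, carries the $A$-cocone inducing $\sigma_{f,h}$ to the canonical cocone on $\colimm_{\Gamma}(\F(F)) \times_{\pr_1(V)} \pr_1(Y)$ — this is a naturality statement to the effect that $\F$ sends cogap maps to cogap maps. Granting this identification, \cref{LCC} shows that $\sigma_{\F(f), \F(h)}$ is an equivalence, hence $\pr_1(\sigma_{f,h})$ is an equivalence, and therefore $\sigma_{f,h}$ is an equivalence in $A/\U$, as required.

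I expect the main obstacle to be exactly this last identification. The $A$-cocone inducing $\sigma_{f,h}$ is assembled using the bicategorical structure of $A/\U$ (as flagged in the footnote to \eqref{pbs}), so matching its image under $\F$ against the canonical cocone in $\U$ requires tracking the $2$-cells that appear in the construction of \cref{mainequiv}. By contrast, the preservation of pullbacks and of colimits over trees are comparatively routine inputs, supplied directly by \cref{crlim} and \cref{create}.
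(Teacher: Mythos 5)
Your proposal is correct and follows essentially the same route as the paper: reduce to the underlying function via the coslice notion of equivalence, use \cref{create} (for trees) together with \cref{crlim} to match domains and codomains, identify $\pr_1(\sigma_{f,h})$ with the pullback-stability map in $\U$, and conclude by \cref{LCC}. The identification you flag as the main obstacle is exactly where the paper's proof is also concentrated, and it is dispatched there without chasing $2$-cells: both maps are cocone morphisms out of the same colimiting cocone into the same cocone, so they are equal by the uniqueness furnished by \cref{coluniq}.
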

\begin{proof}
Suppose that $\Gamma$ is a tree and consider the pullback square \eqref{pbs}. By \cref{create,crlim}, we have a cocone isomorphism under the $\U$-valued diagram $\F(F \times_V Y)$:
\[\begin{tikzcd}[row sep = small]
	& {\F(F \times_V Y)} \\
	{\pr_1(\colimm^A(F) \times_V Y)} && {\pr_1(\colimm^A(F)) \times_{\pr_1(V)} \pr_1(Y)}
	\arrow[from=1-2, to=2-1]
	\arrow[from=1-2, to=2-3]
	\arrow["\simeq"', from=2-1, to=2-3]
\end{tikzcd}\]
 By \cref{coluniq:p2,LCC}, it follows that $\pr_1(\colimm^A(F) \times_V Y)$ is a colimit of $\F(F \times_V Y)$. By \cref{create} again, $\colimm^A(F) \times_V Y$ is a colimit of $F \times_V Y$. Finally, by \cref{coluniq:p1}, $\sigma_{f,h}$ is an isomorphism.
\end{proof}

\begin{note}[{\cite[\href{https://github.com/PHart3/colimits-agda/blob/v0.4.0/Pullback-stability/Cos-pullback.agda}{Cos-pullback}]{agda-colim-TR}}]  \label{pbcoslice}
We can construct pullbacks in $A/\U$ as follows. Consider a cospan $\S \coloneqq X \xrightarrow{f} Z \xleftarrow{g} Y$ in $A/\U$ and form the standard pullback of $\F(\S)$ in $\U$~\cite[Definition 4.1.1]{Avi}:
\[
\Phi(\pr_1(f),\pr_1(g)) \ \coloneqq \ \sum_{x:\pr_1(X)}\sum_{y:\pr_1(Y)}\pr_1(f)(x) = \pr_1(g)(y)
\] Define $\mu_{f,g} : A \to \Phi(\pr_1(f),\pr_1(g))$ by $ \mu_{f,g}(a) \coloneqq  \left(\pr_2(X)(a), \pr_2(Y)(a), \pr_2(f)(a) \cdot \pr_2(g)(a)^{-1} \right)$. Now we have a cone over $\S$:
\[
\label{spb}
\begin{tikzcd}[row sep = 20, column sep = large]
	{\mleft(\Phi(\pr_1(f),\pr_1(g)), \mu_{f,g}\mright)} & Y \\
	X & Z
	\arrow[""{name=0, anchor=center, inner sep=0}, "{\left(\pi_y, \refl_x\right)}", from=1-1, to=1-2]
	\arrow["{\left(\pi_x, \refl_y\right)}"', from=1-1, to=2-1]
	\arrow["g", from=1-2, to=2-2]
	\arrow[""{name=1, anchor=center, inner sep=0}, "f"', from=2-1, to=2-2]
	\arrow["{\left\langle{\left(x,y,p\right) \mapsto p , H_p }\right\rangle}"{description, pos=0.6}, shift right=4, draw=none, from=0, to=1]
\end{tikzcd} \tag{$\texttt{sq}$}
\] Here, $H_p(a)$ denotes the evident path $\left(\pr_2(f)(a) \cdot \pr_2(g)(a)^{-1}\right)^{-1} \cdot \pr_2(f)(a) = \pr_2(g)(a)$ for each $a : A$, and $\pi$ denotes field projection for a $\Sigma$-type. We claim that \eqref{spb} is a pullback square, i.e., the function
$\left( \mathsf{sq} \circ {-} \right) : \left(\left(T, f_T\right) \to_A \left(\Phi(\pr_1(f),\pr_1(g)), \mu_{f,g}\right)\right) \to \mathsf{Cone}(\left(T, f_T\right); \S)$ is an equivalence for each $\left(T, f_T\right) : A/\U$. Indeed, for all cones $K \coloneqq \left(k_1, k_2, \left\langle{q,Q}\right\rangle\right) : \mathsf{Cone}(\left(T, f_T\right); \S)$, over $\S$, the fiber $\fib_{\left( \mathsf{sq} \circ {-} \right)}(K)$ is equivalent to the type of tuples
\begin{align*}
\begin{split}
d & \ : \  T \to \Phi(\pr_1(f),\pr_1(g))
\\ h_1 & \ : \ \pi_x \circ d \sim \pr_1(k_1)
\\ h_2 & \ : \ \pi_y \circ d \sim \pr_1(k_2)
\\ \tau & \ : \ \prod_{t : T}\ap_{\pr_1(f)}(h_1(t)) \cdot q(t) = \pi_p(d(t)) \cdot \ap_{\pr_1(g)}(h_2(t)) 
\end{split}
\begin{split}
d_p & \ : \ d \circ f_T \sim \mu_{f,g}
\\ H_1 & \ : \ \prod_{a :A} \ap_{\pi_x}(d_p(a)) = h_1(f_T(a)) \cdot \pr_2(k_1)(a)
\\ H_2 & \ : \ \prod_{a :A} \ap_{\pi_y}(d_p(a)) = h_2(f_T(a)) \cdot \pr_2(k_2)(a)
\\ \nu & \ : \ \prod_{a : A} \Lambda(\tau, H_1, H_2,  d_p, a) = Q(a)
\end{split}
\end{align*} 
where $\Lambda(\tau, H_1, H_2, d_p, a)$ is the path $\ap_{\pr_1(f)}(\pr_2(k_1)(a)) \cdot \pr_2(f)(a) = q(f_T(a)) \cdot \ap_{\pr_1(g)}(\pr_2(k_2)(a)) \cdot \pr_2(g)(a)$ obtained via $\tau(f_T(a))$, $H_1(a)$, $H_2(a)$, and $d_p(a)$. The four left-hand fields make up the fiber of $\mleft( \F(\mathsf{sq}) \circ {-} \mright)$ over $\F(K)$, which is contractible since $\F(\mathsf{sq})$ is the pullback of $\F(\S)$. As dependent sums preserve truncation level, it suffices to show that the four right-hand fields are contractible for each choice $\mleft(d, h_1, h_2, \tau\mright)$ of left-hand fields. We have two cone morphisms $\F(\mathsf{sq}) \circ \mu_{f,g} \to \F(\mathsf{sq})$ as follows. (A \emph{cone morphism $\K_1 \to \K_2$} consists of a function $\textsf{m-tip} : \tip(\K_1) \to \tip(\K_2)$, commuting triangles $\textsf{sq-left}$ and $\textsf{sq-right}$ witnessing that $\textsf{m-tip}$ commutes with the cones' left legs and right legs, respectively, and a path $\textsf{sq-coh}(x)$ for each $x : \tip(\K_1)$ witnessing that $\textsf{sq-left}(x)$ and $\textsf{sq-right}(x)$ fit into a commuting square with the cones' square homotopies.) The function $\mu_{f,g}$ immediately induces such a cone morphism. In addition, the function $d \circ f_T$ has the following cone morphism structure:
\[\begin{tikzcd}[column sep = small, /tikz/column 1/.append style={column sep=60pt},  /tikz/column 4/.append style={column sep=60pt}]
	A && \Phi(\pr_1(f),\pr_1(g)) & A && \Phi(\pr_1(f),\pr_1(g)) \\
	& {\pr_1(X)} &&& {\pr_1(Y)}
	\arrow[""{name=0, anchor=center, inner sep=0}, "{d \circ f_T}", from=1-1, to=1-3]
	\arrow["{\pr_2(X)}"', from=1-1, to=2-2]
	\arrow["{\pi_x}", from=1-3, to=2-2]
	\arrow[""{name=1, anchor=center, inner sep=0}, "{d \circ f_T}", from=1-4, to=1-6]
	\arrow["{\pr_2(Y)}"', from=1-4, to=2-5]
	\arrow["{\pi_y}", from=1-6, to=2-5]
	\arrow["{h_1(f_T(a)) \cdot \pr_2(k_1)(a)}"{description}, shift left=3, draw=none, from=0, to=2-2]
	\arrow["{h_2(f_T(a)) \cdot \pr_2(k_2)(a)}"{description}, shift left=3, draw=none, from=1, to=2-5]
\end{tikzcd}\] Here, the coherence $\textsf{sq-coh}$ is obtained easily from the data of $K$ along with $\pr_2(f)(a)$, $h_1(f_T(a))$, $h_2(f_T(a))$, and $\tau(f_T(a))$.
Now, by \cref{SIP}, the right-hand fields are collectively equivalent to paths between these two cone morphisms, and the type of cone morphisms into $\F(\mathsf{sq})$ is contractible by the definition of pullback. It follows that the right-hand fields are contractible, as desired.
\end{note}

\begin{remark}
The wild category $A/\U$ is usually \emph{not} LCC. Indeed, it is not LCC whenever $A$ is connected. In this case, suppose, for example, that $\Gamma$ is the discrete graph on $\2$ and define the $A$-diagram $F$ over $\Gamma$ by $F_i \coloneqq  \left(A, \idd_A\right)$ for each $i : \2$. By a direct calculation using \cref{coscoprodeqv}, we have that
\[
 \pr_1(\colimm^A(F) \times_{\1} \2) \simeq A + A  \not\simeq A + A + A \simeq \pr_1(\colimm^A(F \times_{\1} \2)) 
\] where $A \to \2$ is defined by, say, $a \mapsto 0$.
 
By the classical adjoint functor theorem, a locally presentable $\infty$-category is LCC if and only if all its colimits are universal. In this light, \cref{univcolim} may be seen as a lower bound on how close $A/\U$ is to being LCC.
\end{remark}

\section{Coslice colimits preserve connected maps} \label{Presv}

 The central result of this section is that $\colimm^A$ preserves the connected maps of an OFS on $\U$.

Let $\Gamma$ be a graph. Consider the wild category $\D_{\Gamma}$ of diagrams over $\Gamma$ valued in $\U$. The object type is $\sum_{F : \Gamma_0 \to \U}\prod_{i,j : \Gamma_0}\Gamma_1(i,j) \to F_i \to F_j$, and the morphisms from $F$ to $G$ are the natural transformations $F \Rightarrow G$, i.e., functions $\alpha : \prod_{i : \Gamma_0}F_i \to G_i$ equipped with a homotopy $G_{i,j,g} \circ \alpha_i \sim \alpha_j \circ F_{i,j,g}$ for each edge $g : \Gamma_1(i,j)$. The identity natural transformation is
$ \idd_F \coloneqq \left(\lambda{i}.\idd_{F_i}, \lambda{i}\lambda{j}\lambda{g}\lambda{x}.\refl_{F_{i,j,g}(x)} \right)$,
and the composition of natural transformations is defined as
\begin{align*}
 & \circ \ : \ \left(G \Rightarrow H\right) \to \left(F \Rightarrow G\right)  \to \left(F \Rightarrow H\right)
\\  & \left(\rho, q\right) \circ  \left(\alpha, p\right)  \  \coloneqq \  \mleft(\lambda{i}.\rho_i \circ \alpha_i, \left(q \ast p\right)(i,j,g,x)\mright)
\\ & \quad \textit{where $\left(q \ast p\right)(i,j,g,x) \coloneqq  q_{i,j,g}(\alpha(x)) \cdot \ap_{\rho_j}(p_{i,j,g}(x))$}
\end{align*}

\begin{lemma} \label{inddiag}
For all $\left(\alpha,p\right), \left(\rho, q\right) : F \Rightarrow G$, the canonical function $\happly_{\Gamma} : \mleft(\left(\alpha,p\right) = \left(\rho, q\right)\mright) \to \mleft(\left(\alpha,p\right) \sim_d \left(\rho,q\right)\mright)$ is an equivalence. Here, the latter type denotes the type of \emph{homotopies} between $\left(\alpha,p\right)$ and $\left(\rho,q\right)$, i.e., functions $W :\prod_{i : \Gamma_0}\alpha_i \sim \rho_i$ equipped with a commuting square
\[\begin{tikzcd}[column sep = huge]
	{G_{i,j,g}(\alpha_i(x))} & {G_{i,j,g}(\rho_i(x))} \\
	{\alpha_j(F_{i,j,g}(x))} & {\rho_j(F_{i,j,g}(x))}
	\arrow["{\ap_{G_{i,j,g}}(W_i(x))}", equals, from=1-1, to=1-2]
	\arrow["{p_{i,j,g}(x)}"', equals, from=1-1, to=2-1]
	\arrow["{q_{i,j,g}(x)}", equals, from=1-2, to=2-2]
	\arrow["{W_j(F_{i,j,g}(x))}"', equals, from=2-1, to=2-2]
\end{tikzcd}\]
for all $g : \Gamma_1(i,j)$ and $x : F_i$. 
\end{lemma} 
\begin{proof}
By \cref{SIP}.
\end{proof}

With this notion of homotopy between natural transformations, the identity and associativity laws for $\D_{\Gamma}$ are easily defined with path induction.

\begin{notation} $ $
\begin{itemize}
\item Define $\left\langle{W,C}\right\rangle \coloneqq \happly_{\Gamma}^{-1}(W,C)$.
\item Variables of the form $\alpha^* : F \Rightarrow G$ are abbreviations of pairs $\left(\alpha, \alpha_p\right)$.
\end{itemize}
\end{notation}

\begin{lemma}\label{SIPnattr}
Let $\left(\alpha,p\right), \left(\rho, q\right) : F \Rightarrow G$. For all $\left(W_1, C_1\right), \left(W_2, C_2\right) : \left(a, p\right) \sim_d \left(\rho, q\right)$, the type $\left(W_1, C_1\right) = \left(W_2, C_2\right)$ is equivalent to the type of  $H : W_1 \sim W_2$ equipped with a commuting triangle
\[\begin{tikzcd}[column sep = 45, row sep = large]
	& {\ap_{G_{i,j,g}}(W_2(i,x)) \cdot q_{i,j,g}(x) \cdot W_2(j,F_{i,j,g}(x))^{-1}} \\
	{p_{i,j,g}(x)} & {\ap_{G_{i,j,g}}(W_1(i,x)) \cdot q_{i,j,g}(x) \cdot W_1(j,F_{i,j,g}(x))^{-1}}
	\arrow["{\textit{via $H(i,x)$ and $H(j, F_{i,j,g}(x))$}}", Rightarrow, no head, from=1-2, to=2-2]
	\arrow["{C_2(i,j,g,x)}", Rightarrow, no head, from=2-1, to=1-2]
	\arrow["{C_1(i,j,g,x)}"', Rightarrow, no head, from=2-1, to=2-2]
\end{tikzcd}\] for all $g : \Gamma_1(i,j)$ and $x : F_i$.
\end{lemma}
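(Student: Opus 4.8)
The plan is to treat the target type $\mleft(\alpha,p\mright)\sim\mleft(\beta,q\mright)$ as the $\Sigma$-type $\sum_{W}\Phi(W)$, where $\Phi(W)$ is the product over $i,j,g,x$ of the identity types $p_{i,j,g}(x) = \ap_{G_{i,j,g}}(W(i,x)) \cdot q_{i,j,g}(x) \cdot W(j,F_{i,j,g}(x))^{-1}$, and then to characterize the identity type of two of its points by the structure identity principle (\cref{SIP}), exactly as $\happly_{\Gamma}$ was built in \cref{inddiag}. First I would apply the standard description of paths in a $\Sigma$-type to obtain $\mleft(W_1,K_1\mright) = \mleft(W_2,K_2\mright) \simeq \sum_{e : W_1 = W_2}\transport^{\Phi}(e, K_1) = K_2$.

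Next I would dispatch the base component. Since $W : \prod_{i}\alpha_i \sim \beta_i$ is an iterated product of identity types, function extensionality gives $\mleft(W_1 = W_2\mright) \simeq \mleft(W_1 \sim W_2\mright)$, i.e.\ the data of a family $H$ with $H(i,x) : W_1(i,x) = W_2(i,x)$. This produces the homotopy $H$ appearing in the statement. Then I would analyze the transported fibre condition: because $\Phi(W)$ is itself a product of identity types, funext turns $\transport^{\Phi}(e,K_1) = K_2$ into a pointwise family indexed by $i,j,g,x$, reducing the problem to identifying each component.

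Here the only part of the codomain that depends on $W$ is the right endpoint $\ap_{G_{i,j,g}}(W(i,x)) \cdot q_{i,j,g}(x) \cdot W(j,F_{i,j,g}(x))^{-1}$; the left endpoint $p_{i,j,g}(x)$ is constant in $W$. Transporting the path $K_1(i,j,g,x)$ along $e$ therefore right-whiskers it by the induced path between the two endpoints, so the condition becomes $K_1(i,j,g,x) \cdot \omega = K_2(i,j,g,x)$, where $\omega$ runs from the $W_1$-endpoint to the $W_2$-endpoint. Identifying $\omega$ with the edge labelled ``via $H(i,x)$ and $H(j,F_{i,j,g}(x))$'' yields precisely the asserted commuting triangle.

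The main obstacle is the transport-in-a-family-of-identity-types computation of the previous paragraph: showing that $\transport^{\Phi}(e,K_1)$ right-whiskers $K_1$ by exactly the two-sided whiskering of $q_{i,j,g}(x)$ by $\ap_{G_{i,j,g}}(H(i,x))$ on the left and $H(j,F_{i,j,g}(x))^{-1}$ on the right. I would handle this by path induction on the components of $H$ (reducing each $W_2(i,x)$ to $W_1(i,x)$): in the reflexivity case $\omega$ degenerates to $\refl$ and the triangle collapses to $K_1(i,j,g,x) = K_2(i,j,g,x)$, matching the $\rid$-normalized basepoint used for the identity system in \cref{inddiag}. The remaining bookkeeping—that the endpoint $W(i,x)$ sits under $\ap_{G_{i,j,g}}$ while $W(j,F_{i,j,g}(x))$ appears inverted—is routine once the induction is set up, so the whole equivalence follows from \cref{SIP}.
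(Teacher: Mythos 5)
Your proposal is correct and is essentially the paper's own argument: the paper proves this lemma by a direct appeal to the structure identity principle (\cref{SIP}) together with function extensionality, and your Σ-type path characterization, funext reduction, and transport-whiskering computation are exactly the unfolding of that SIP application (your final step legitimately discharges the transport analysis by homotopy induction on $H$, which funext licenses). The only difference is expository: you make explicit the transport-in-identity-types calculation that the paper leaves implicit in its one-line proof.
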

\begin{proof}
By \cref{SIP}.
\end{proof}

\Cref{inddiag} lets us ``path induct'' on homotopies of transformations (see \cref{idsys}), thereby making \cref{precompdg,postcomp2,concat2} (stated next) provable with simple path algebra.

\begin{lemma}[Left whiskering of $\sim_d$] \label{precompdg}
Let $\zeta^{\ast} : F  \Rightarrow G$. Let $\alpha^{\ast},\rho^{\ast} : G \Rightarrow H$. For every $\left(W,C\right) : \alpha^{\ast} \sim_d \rho^{\ast}$, we have a path
$\ap_{{-} \circ \zeta^{\ast}}\mleft(\left\langle{W,C}\right\rangle\mright)  =  \left\langle{\lambda{i}.W_i(\zeta_i({-})),  \tau_{W, C}}\right\rangle$
between elements of the identity type
\[\begin{tikzcd}[column sep = large, /tikz/column 3/.append style={column sep=15pt}, row sep = large]
	{F_i} && {F_j} && {F_i} && {F_j} \\
	{H_i} && {H_j} && {H_i} && {H_j}
	\arrow[""{name=0, anchor=center, inner sep=0}, "{F_{i,j,g}}", from=1-1, to=1-3]
	\arrow[""{name=1, anchor=center, inner sep=0}, "{\alpha_j \circ \zeta_j}"{description}, from=1-3, to=2-3]
	\arrow["{\alpha_i \circ \zeta_i}"{description}, from=1-1, to=2-1]
	\arrow[""{name=2, anchor=center, inner sep=0}, "{H_{i,j,g}}"', from=2-1, to=2-3]
	\arrow[""{name=3, anchor=center, inner sep=0}, "{\rho_i \circ \zeta_i}"{description}, from=1-5, to=2-5]
	\arrow[""{name=4, anchor=center, inner sep=0}, "{H_{i,j,g}}"', from=2-5, to=2-7]
	\arrow["{\rho_j \circ \zeta_j}"{description}, from=1-7, to=2-7]
	\arrow[""{name=5, anchor=center, inner sep=0}, "{F_{i,j,g}}", from=1-5, to=1-7]
	\arrow["{\alpha_p \ast \zeta_p}"{description}, draw=none, from=0, to=2]
	\arrow[shorten <=26pt, shorten >=26pt, Rightarrow, no head, from=1, to=3]
	\arrow["{\rho_p \ast \zeta_p}"{description}, draw=none, from=5, to=4]
\end{tikzcd}\]
where the path $\tau_{W,C}(i,j,g,x)$ is obtained via $C_{i,j,g}(\zeta_i(x))$ and $\zeta_p(i,j,g,x)$.
\end{lemma}

\begin{lemma}[Right whiskering of $\sim_d$] \label{postcomp2}
Let $\zeta^{\ast} : G \Rightarrow H$. Let $\alpha^{\ast}, \rho^{\ast} : F \Rightarrow G$. For every $\left(W, C\right) : \alpha^{\ast} \sim_d \rho^{\ast}$, we have a path
$\ap_{ \zeta^{\ast} \circ {-}}\mleft(\left\langle{W, C}\right\rangle\mright)  =   \left\langle{\lambda{i}.\ap_{\zeta_i}(W_i({-})), \tau_{W,C}}\right\rangle$
between elements of the identity type
\[\begin{tikzcd}[row sep = large, column sep = large, /tikz/column 3/.append style={column sep=15pt}]
	{F_i} && {F_j} && {F_i} && {F_j} \\
	{H_i} && {H_j} && {H_i} && {H_j}
	\arrow[""{name=0, anchor=center, inner sep=0}, "{F_{i,j,g}}", from=1-1, to=1-3]
	\arrow[""{name=1, anchor=center, inner sep=0}, "{\zeta_j \circ \alpha_j}"{description}, from=1-3, to=2-3]
	\arrow["{\zeta_i \circ \alpha_i}"{description}, from=1-1, to=2-1]
	\arrow[""{name=2, anchor=center, inner sep=0}, "{H_{i,j,g}}"', from=2-1, to=2-3]
	\arrow[""{name=3, anchor=center, inner sep=0}, "{\zeta_i \circ \rho_i}"{description}, from=1-5, to=2-5]
	\arrow[""{name=4, anchor=center, inner sep=0}, "{H_{i,j,g}}"', from=2-5, to=2-7]
	\arrow["{\zeta_j \circ \rho_j}"{description}, from=1-7, to=2-7]
	\arrow[""{name=5, anchor=center, inner sep=0}, "{F_{i,j,g}}", from=1-5, to=1-7]
	\arrow["{\zeta_p \ast \alpha_p}"{description}, draw=none, from=0, to=2]
	\arrow[shorten <=26pt, shorten >=26pt, Rightarrow, no head, from=1, to=3]
	\arrow["{\zeta_p \ast \rho_p}"{description}, draw=none, from=5, to=4]
\end{tikzcd}\]
where the path $\tau_{W,C}(i,j,g,x)$ is obtained via $W_i(x)$ and $C_{i,j,g}(x)$.
\end{lemma} 

\begin{lemma}[Composition of $\sim_d$] \label{concat2}
Let $\alpha^{\ast},\rho^{\ast}, \epsilon^{\ast} : F \Rightarrow G$. Let $\left(W, C\right): \alpha^{\ast} \sim_d \rho^{\ast}$ and $\left(Y, D \right) : \rho^{\ast} \sim_d \epsilon^{\ast}$. We have a path
\[
 \left\langle{W,C}\right\rangle \cdot  \left\langle{Y,D}\right\rangle \ =_{\alpha^{\ast} = \epsilon^{\ast}} \  \left\langle{\lambda{i}.W_i({-}) \cdot Y_i({-}),  \tau_{C,D}  }\right\rangle
\]
where the path $\tau_{C,D}(i,j,g,x)$ is obtained via $C_{i,j,g}(x)$ and $D_{i,j,g}(x)$.
\end{lemma}

\begin{lemma} \label{diagbic}
The wild category $\D_{\Gamma}$ is a bicategory.
\end{lemma}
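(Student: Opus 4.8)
The wild-category structure on $\D_{\Gamma}$ is already in place: composition and identities are given above, and the identity and associativity laws are obtained by path induction. So the plan is to equip this wild category with the pentagon and triangle $2$-cells required by \cref{bistr}. The key simplification is that, at the level of underlying objectwise functions, composition of natural transformations is \emph{strictly} associative and unital, since it is built from composition of functions in $\U$, which is definitionally associative and unital. Consequently the associator $\assoc$ and the unitors $\lid$, $\rid$ of $\D_{\Gamma}$ have trivial (reflexivity) first components; all of their content sits in the naturality-witness (second) component, and it is there that the coherence identities must be checked.

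Each coherence law is an equality of $2$-cells in a hom-type $F \Rightarrow H$. To handle it I would first transport the equation across the equivalence $\happly_{\Gamma}$ of \cref{inddiag}, reducing a path between paths of natural transformations to a path between homotopies of natural transformations, which \cref{SIPnattr} characterizes by a family of homotopies $H$ together with a triangle of path-coherences. Next I would rewrite the two $\ap$-terms appearing in the pentagon, namely $\ap_{{-}\circ f}(\assoc(k,g,h))$ and $\ap_{k\circ{-}}(\assoc(g,h,f))$, using the left- and right-whiskering computations of \cref{precomp,postcomp2}, and then fuse the resulting concatenations into a single $\langle W, C\rangle$ on each side by \cref{concat2}. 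The triangle law is treated identically but with $\lid$ and $\rid$ in place of the inner associators. After these rewrites both sides present as $\langle W, C\rangle$ for explicit $W$ and $C$, so it remains to compare the $W$-data and the $C$-data separately.

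Because the underlying functions compose strictly, every $W$-datum produced above is a reflexivity family, so the two sides agree on their $W$-components immediately; the genuine content is the equality of the $C$-components. The cleanest way to finish is induction on the naturality witnesses: using \cref{inddiag} I would reduce the composable morphisms $k,g,h,f$ (respectively $g,h$) to the case in which each naturality witness is the reflexivity family. This is legitimate since the entire coherence statement is a family indexed by this naturality data, and the two $C$-expressions are assembled from canonical ($\PI$) path algebra. In the reduced case composition is strictly reflexive in both components, so $\assoc$, $\lid$, $\rid$ and all the whiskering and interchange terms collapse to $\refl$, and both sides of the pentagon (respectively triangle) become $\refl$; this proves the required $2$-cell identities, and hence that $\D_{\Gamma}$ is a bicategory.

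The main obstacle is the interchange bookkeeping in the middle step: tracking how the ``star'' operation ${q}\ast{p}$ on second components threads through $\ap_{{-}\circ f}$ and $\ap_{k\circ{-}}$ and matching it against the canonical paths emitted by \cref{precomp,postcomp2,concat2}. This is purely a $2$-dimensional coherence living over strict $1$-cells, so no new coherence beyond that already available in $\U$ is needed; the induction of \cref{inddiag} is exactly what flattens the problem to reflexivity.
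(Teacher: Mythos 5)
Your pipeline up to the last step is exactly the paper's: exploit that composition in $\D_{\Gamma}$ is definitionally associative and unital on first components, rewrite the whiskered associators with \cref{precomp,postcomp2}, fuse concatenations with \cref{concat2}, and compare the resulting $\mleft\langle{W,C}\mright\rangle$ forms via \cref{SIPnattr}, noting that all $W$-components are reflexivity families. The paper then dismisses the remaining $C$-comparison as routine path algebra; your proposal instead tries to discharge it by citing \cref{inddiag}, and this is where it breaks. \cref{inddiag} is an induction principle for a \emph{free homotopy between two natural transformations}: it contracts a variable pair $\mleft(\beta, q\mright)$ together with a variable homotopy $\mleft(W, C\mright) : \mleft(\alpha,p\mright) \sim \mleft(\beta, q\mright)$ to the diagonal case in which the homotopy $W$ is reflexivity --- the naturality witness $p$ of the base transformation stays arbitrary throughout. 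It does not allow you to assume that the naturality witness of a given transformation is a reflexivity family, and that assumption is not even well-typed: for $\mleft(\alpha, \alpha_p\mright) : F \Rightarrow G$ the witness has type $\alpha_p(i,j,g) : G_{i,j,g} \circ \alpha_i \sim \alpha_j \circ F_{i,j,g}$, whose two endpoints are definitionally distinct functions, so $\lambda x.\refl$ does not inhabit it. Your appeal to ``the statement is a family indexed by this naturality data'' does not rescue this: homotopy induction requires one endpoint of the homotopy to be a free variable, whereas here both endpoints are composites pinned down by the diagrams and by $\alpha$. Moreover, the pentagon and triangle contain no universally quantified homotopy at all --- every $2$-cell occurring in them ($\assoc$, $\lid$, $\rid$, and their whiskerings) is a specific canonical path --- so there is nothing for \cref{inddiag} to induct over.

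The legitimate version of your last step is pointwise, not witness-wise. After the \cref{SIPnattr} reduction, the goal for each fixed $i,j : \Gamma_0$, $g : \Gamma_1(i,j)$ and $x$ is an equality between two concatenations assembled by $\ap$, inversion, and composition from finitely many \emph{path values} of the witnesses, e.g.\ $f_p(i,j,g,x)$, $h_p(i,j,g,f_i(x))$, $g_p(i,j,g,h_i(f_i(x)))$, $k_p(i,j,g,g_i(h_i(f_i(x))))$. For such a fixed instance, each of these is a single path whose codomain endpoint occurs in the goal only through the path itself, so it can be generalized to a fresh variable and ordinary path induction applies; iterating over the (at most four) constituent paths collapses both sides to $\refl$. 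This iterated $\PI$-argument is what the paper's ``routine path algebra'' means, and it is the correct replacement for your reduction; with it in place of the \cref{inddiag} step, your proof goes through and coincides with the paper's.
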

\begin{proof}
The unit and associativity laws of maps hold definitionally in $\U$. Thus, by \cref{precompdg,postcomp2,concat2} combined with \cref{SIPnattr}, verifying that $\D_{\Gamma}$ is a bicategory reduces to simple path algebra, which we omit here.
\end{proof}

\begin{lemma}\label{diaguniv}
Assuming the univalence axiom, $\D_{\Gamma}$ is univalent.
\end{lemma}
\begin{proof}
Let $F, G : \D_{\Gamma}$ and $F \xRightarrow{\sim} G$ denote the type of natural transformations that are levelwise equivalences in $\U$. By univalence, \cref{SIP} implies that the canonical function $F = G  \to F \xRightarrow{\sim} G$ is an equivalence. Further, by the associated induction principle for $\xRightarrow{\sim}$, we see that every levelwise equivalence is an equivalence in $\D_{\Gamma}$, and the converse implication is clear. This biimplication is between propositions, so we have an equivalence $F \xRightarrow{\sim} G \to F \simeq_{\D_{\Gamma}} G$ that sends the identity to the identity. The composite equivalence $\mleft(F = G \mright) \simeq \mleft(F \simeq_{\D_{\Gamma}} G\mright)$ witnesses that $\D_{\Gamma}$ is univalent.
\end{proof}

\subsubsection*{Lifting an OFS to diagrams}

Let $\left(\L, \RI\right)$ be an OFS on $\U$. We lift this OFS to one on $\D_{\Gamma}$ as follows.

\begin{theorem} \label{unfact}
For all $F, G : \D_{\Gamma}$ and $\left(h, \alpha\right): F \Rightarrow G$, define the predicates
$\widehat{\L}(h, \alpha)  \coloneqq  \prod_{i : \Gamma_0}\L(h_i)$ and $\widehat{\RI}(h, \alpha) \coloneqq \prod_{i : \Gamma_0}\RI(h_i)$.
 Let $\left(h, \alpha\right) : F \Rightarrow G$. The following type is contractible: 
\[
\fact_{\widehat{\L},\widehat{\RI}}(h, \alpha) \ \coloneqq \ \sum_{A : \D_{\Gamma}}\sum_{S : F \Rightarrow A}\sum_{T : A \Rightarrow G}\left(T \circ S \sim_d \left(h, \alpha\right)\right) \times \widehat{\L}(S) \times \widehat{\RI}(T)
\] 
\end{theorem}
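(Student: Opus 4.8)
The plan is to recognize $(\widehat{\L}, \widehat{\RI})$ as the pointwise orthogonal factorization system on $\D_{\Gamma}$ induced by $(\L,\RI)$ on $\U$, so that the claimed contractibility is precisely its factorization axiom. The strategy is to build the factorization node-by-node and then collapse the entire $\Sigma$-type using two sources of contractibility: the factorization axiom of $(\L,\RI)$ at each node of $\Gamma$, and the unique lifting property (\cref{ULP}) at each edge.

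First I would construct the center. For each $i : \Gamma_0$ apply the factorization axiom on $\U$ to $h_i$, obtaining $\im_{\L,\RI}(h_i)$ together with $s_i \in \L$, $t_i \in \RI$, and $p_i : t_i \circ s_i = h_i$. For each edge $g : \Gamma_1(i,j)$ the naturality of $(h,\alpha)$ yields the commuting square with left leg $s_i$ and right leg $t_j$ displayed just before the theorem, and the lifting property of $s_i \in \L$ against $t_j \in \RI$ produces the diagonal filler \eqref{diagfill}, i.e.\ a transition map $d_{i,j,g}$ equipped with the two homotopies $H_{i,j,g}$ and $L_{i,j,g}$. Setting $A \coloneqq \mleft(\lambda{i}.\im_{\L,\RI}(h_i), d\mright)$, $S \coloneqq \mleft(s, H\mright)$ and $T \coloneqq \mleft(t, L\mright)$ gives an element of $\mathsf{fact}_{\widehat{\L},\widehat{\RI}}(h,\alpha)$, with $\widehat{\L}(S)$ and $\widehat{\RI}(T)$ immediate and the homotopy $T \circ S \sim \mleft(h,\alpha\mright)$ assembled from the $p_i$ and the filler coherence.

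For contractibility I would not contract against this fixed center directly but instead reassociate the defining $\Sigma$-type. Using function extensionality together with the structure identity principle (in the form of \cref{SIPnattr} for the homotopy $T \circ S \sim \mleft(h,\alpha\mright)$), the type $\mathsf{fact}_{\widehat{\L},\widehat{\RI}}(h,\alpha)$ is equivalent to a $\Sigma$-type whose outer layer records, for each $i : \Gamma_0$, the tuple $\mleft(\im_{\L,\RI}(h_i), s_i, t_i, p_i, \L(s_i), \RI(t_i)\mright)$ — that is, an element of $\prod_{i : \Gamma_0}\mathsf{fact}_{\L,\RI}(h_i)$ — and whose inner layer records, for each edge $g : \Gamma_1(i,j)$, the transition map $d_{i,j,g}$ together with the naturality squares of $S$ and $T$ at $g$ and their coherence with $\alpha_{i,j,g}$, $p_i$, and $p_j$. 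The key point is that, once the pointwise factorizations are fixed, this inner datum at an edge is \emph{exactly} a diagonal filler of the corresponding naturality square, so the inner layer is $\prod_{i,j : \Gamma_0}\prod_{g : \Gamma_1(i,j)}\mathsf{fill}(\ldots)$. The outer layer is a product of types contractible by the factorization axiom on $\U$, and each inner filler type is contractible by \cref{ULP} because $s_i \in \L$ and $t_j \in \RI$. Since contractibility is closed under $\prod$ and $\Sigma$, the whole type is contractible.

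The main obstacle will be the reassociation in the third step: unfolding the homotopy $T \circ S \sim \mleft(h,\alpha\mright)$ via \cref{SIPnattr} into pointwise homotopies plus one coherence $2$-cell per edge, and then checking that the edge coherence — after absorbing the naturality squares of $S$ and $T$ — matches, on the nose, the filler-coherence condition appearing in the definition of $\mathsf{fill}$. This requires identifying the square of the lifting problem with the $2$-cell $\ap_{G_{i,j,g}}(p_i) \cdot \alpha_{i,j,g} \cdot p_j(F_{i,j,g})^{-1}$ and threading the associators of $\U$ through the composites. It is pure path algebra, but it is where essentially all the bookkeeping lives; everything else is a direct application of the two contractibility inputs and the closure of contractibility under dependent sums and products.
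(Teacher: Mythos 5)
Your proposal is correct and follows essentially the same route as the paper: both decompose $\mathsf{fact}_{\widehat{\L},\widehat{\RI}}(h,\alpha)$ into pointwise factorization data, contracted using the contractibility of each $\mathsf{fact}_{\L,\RI}(h_i)$, plus per-edge data identified (after path algebra) with diagonal fillers of the naturality squares, handled by \cref{ULP}, with \cref{SIP} performing the reassociation. The only difference is bookkeeping: the paper argues via inhabitedness plus propositionality, comparing two arbitrary elements of the edge-data through explicit identification data $\Delta^1_{i,j,g},\Delta^2_{i,j,g},\Delta^3_{i,j,g}$ extracted from \cref{ULP} and \cref{SIP}, whereas you distribute $\Pi$ over $\Sigma$ to exhibit the edge layer directly as a product of contractible filler types and invoke closure of contractibility under $\Sigma$ and $\Pi$.
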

\begin{proof}
By its definition, $\fact_{\widehat{\L},\widehat{\RI}}(h, \alpha)$ is equivalent to the type of tuples
\begin{align*}
\begin{split}
A_0 & \ : \ \Gamma_0 \to \U 
\\   S_0 & \ : \ \prod_{i : \Gamma_0}F_i \to A_0(i) 
\\ T_0 & \ : \ \prod_{i : \Gamma_0}A_0(i) \to G_i
\\ P & \ : \ \prod_{i : \Gamma_0} T_0(i) \circ S_0(i) \sim h_i
\\ L & \ : \ \prod_{i : \Gamma_0}\L(S_0(i))
\\ R & \ : \ \prod_{i : \Gamma_0}\RI(T_0(i))
\end{split}
\qquad
\begin{split}
 A_1 & \ : \ \prod_{i,j : \Gamma_0}\prod_{g : \Gamma_1(i,j)}A_0(i) \to  A_0(j)
\\ S_1 & \ : \ \prod_{i,j,g} A_1(i,j,g) \circ S_0(i) \sim S_0(j) \circ F_{i,j,g}
\\ T_1 & \ : \ \prod_{i,j,g} G_{i,j,g} \circ T_0(i) \sim T_0(j) \circ A_1(i,j,g)
\\ C & \ : \ \prod_{i,j,g}\prod_{x : F_i} \left(T_1 \ast S_1\right)(i,j,g,x) \cdot P_j(F_{i,j,g}(x)) = \ap_{G_{i,j,g}}(P_i(x)) \cdot \alpha_{i,j,g}(x)
\end{split}
\end{align*}
We can contract the six left-hand fields because $\fact_{\L, \RI}(h_i)$ is contractible for each $i : \Gamma_0$. Let $\left(A_0, S, T, P, L, R\right)$ be the unique tuple of the first six fields and consider the type of the last four fields $\mathsf{coher}_{\L, \RI}(A_0, S, T, P, L, R)$. We want to prove that $\mathsf{coher}_{\L, \RI}(A_0, S, T, P, L, R)$ is contractible. Since $\Pi$-types distribute over $\Sigma$-types, it is equivalent to the type of dependent functions taking an edge $g : \Gamma_1(i,j)$ to a diagonal filler $f : A_i \to A_j$ equipped with a pair of commuting subtriangles like so
\[\begin{tikzcd}[row sep = large, column sep = large]
	{F_i} && {A_j} \\
	{A_i} && {G_j}
	\arrow[""{name=0, anchor=center, inner sep=0}, "{{{S_j \circ F_{i,j,g}}}}", from=1-1, to=1-3]
	\arrow["{{{S_i}}}"', from=1-1, to=2-1]
	\arrow["{{{T_j}}}", from=1-3, to=2-3]
	\arrow[""{name=1, anchor=center, inner sep=0}, "{{{f}}}"{description}, from=2-1, to=1-3]
	\arrow[""{name=2, anchor=center, inner sep=0}, "{{{G_{i,j,g} \circ T_i}}}"', from=2-1, to=2-3]
	\arrow["{{{s}}}"{description}, shift right=5, draw=none, from=0, to=1]
	\arrow["{{ t}}"{description}, shift right=5, draw=none, from=2, to=1]
\end{tikzcd}\]
as well as a path $c(x) : t(S_i(x)) \cdot \ap_{T_j}(s(x)) \cdot P_j(F_{i,j,g}(x))  =  \ap_{G_{i,j,g}}( P_i(x)) \cdot \alpha_{i,j,g}(x)$ for each $x : F_i$. For each $g : \Gamma_1(i,j)$, letting $W(x) \coloneqq \ap_{G_{i,j,g}}(P_i(x)) \cdot \alpha_{i,j,g}(x) \cdot P_j(F_{i,j,g}(x))^{-1}$ for all $x : F_i$, this type is clearly equivalent to the type of diagonal fillers of the square
\[\begin{tikzcd}
	{F_i} && {A_j} \\
	{A_i} && {G_j}
	\arrow["{{{{S_j \circ F_{i,j,g}}}}}", from=1-1, to=1-3]
	\arrow[""{name=0, anchor=center, inner sep=0}, "{{{{S_i}}}}"', from=1-1, to=2-1]
	\arrow[""{name=1, anchor=center, inner sep=0}, "{{{{T_j}}}}", from=1-3, to=2-3]
	\arrow["{{{{G_{i,j,g} \circ T_i}}}}"', from=2-1, to=2-3]
	\arrow["{W}"{description}, draw=none, from=0, to=1]
\end{tikzcd}\] which is contractible by \cref{ULP}.
\end{proof}

\begin{corollary}[{\cite[\href{https://github.com/PHart3/colimits-agda/blob/v0.4.0/HoTT-Agda/core/lib/wild-cats/Diag-ty-OFS.agda}{Diag-ty-OFS}]{agda-colim-TR}}]
Every OFS on $\U$ lifts levelwise to $\D_{\Gamma}$.
\end{corollary}

The wild adjunction $\colimm({-}) \dashv \const_{\Gamma} : \D_{\Gamma} \rightleftarrows \U$ satisfies the coherence condition \eqref{natscohadj}~\cite[{\href{https://github.com/PHart3/colimits-agda/blob/v0.4.0/HoTT-Agda/theorems/homotopy/ColimAdjoint-hex.agda}{ColimAdjoint-hex}}]{agda-colim-TR}. Further, $\const_{\Gamma} : \U \to \D_{\Gamma}$ clearly takes $\RI$ to $\widehat{\RI}$. It follows that $\colimm({-})$ takes $\widehat{\L}$ to $\L$ by \cref{fspres} (which applies here thanks to \cref{diagbic,diaguniv}). 

Let $A : \U$. For all $X, Y : A/\U$, consider the predicate $\L_A(f,p) \coloneqq \L(f)$ on $X \to_A Y$. Then the wild functor $\colimm^A$ takes $\widehat{\L}_A$ (the class of maps levelwise in $\L_A$) to $\L_A$~\cite[\href{https://github.com/PHart3/colimits-agda/blob/v0.4.0/Colimit-coslice/OFS-Preserve/CosColim-lftclass.agda}{CosColim-lftclass}]{agda-colim-TR}. Indeed, for each map $\delta : \A \Rightarrow \B$ of $A$-diagrams, the underlying function of $\colimm^A(\delta)$ is induced by the span map
\[\begin{tikzcd}
	A & {\colimm{A}} & {\colimm(\F(\A))} \\
	A & {\colimm{A}} & {\colimm(\F(\B))}
	\arrow["\idd"', from=1-1, to=2-1]
	\arrow["\idd", from=1-2, to=2-2]
	\arrow[from=1-2, to=1-1]
	\arrow[from=2-2, to=2-1]
	\arrow[from=1-2, to=1-3]
	\arrow[from=2-2, to=2-3] 
	\arrow["{\hat{\delta}}", from=1-3, to=2-3]
\end{tikzcd}\]  The left and middle legs here belong to $\L$ because $\L$ contains all identities. Since $\colimm({-})$ takes $\widehat{\L}$ to $\L$, the right leg is also in $\L$ when $\delta$ is in $\widehat{\L}_A$. Therefore, the commuting square \eqref{pocolcsq} implies that $\colimm^A(\delta)$ belongs to $\L_A$ when $\delta$ is in $\widehat{\L}_A$.

In particular, if $F$ is a $\U^{\ast}$-valued (or \emph{pointed}) diagram over $\Gamma$ such that each $\pr_1(F_i)$ is $\left(\L,\RI\right)$-connected, then the type $\colimm^{\ast}(F)$ is also $\left(\L,\RI\right)$-connected. (A type $X :\U$ is \emph{$\left(\L, \RI\right)$-connected} if the function $X \to \1$ belongs to $\L$.) Indeed, since $\colimm^{\ast}{\1}$ is contractible, $\colimm^{\ast}$ takes the unique map $F \Rightarrow_{\ast} \1$ of pointed diagrams to the terminal map $\left(c, c_p\right) : \colimm^{\ast}(F) \to_{\ast} \1$ such that $c \in \L$.

\begin{example}\label{npres} $ $
\begin{exmpenum}
\item \label{npres:p1} 
For each truncation level $n$, if each $\pr_1(F_i)$ is $n$-connected, then so is $\pr_1(\colimm^{\ast}(F))$. (In fact, if $F$ is an $A$-diagram with each $\pr_1(F_i)$ $n$-connected and $A$ is $n$-connected, then \cref{coeqcop} shows that the underlying type of $\colimm^A(F)$ is also $n$-connected.)
\item \label{npres:p2} 
Let $\Gamma$ be the graph with a single point $\ast$ and a single edge from $\ast$ to $\ast$. Define the diagram $F$ over $\Gamma$ by $F(\ast) \coloneqq \1$ and $F_{\ast, \ast, \ast} \coloneqq \idd_{\1}$. Then $\colimm(F) = S^1 $, which proves that $\colimm$ does not preserve $n$-connectedness when $n \geq 1$, unlike $\colimm^{\ast}$.
\end{exmpenum}
\end{example}

\subsection{Colimits of higher groups}\label{cocomp}

Consider truncation levels ${-2} \leq n \leq \infty$ and ${-1} \leq k < \infty$. Recall from \cite{Higher} the wild category ${\left(n,k\right)\mathsf{GType}}$ of \emph{$k$-tuply groupal $n$-groupoids}---an object of which, called a \emph{higher group}, is a pointed $n$-type equipped with a $k$-fold delooping. This category is isomorphic to the full subcategory $\U^{\ast}_{ \geq {k-1}, \leq n+k}$ of $\U^{\ast}$ on $\left(k-1\right)$-connected, $\left(n+k\right)$-truncated pointed types. Consider the full subcategory $\U^{\ast}_{\geq {k-1}}$ of $\U^{\ast}$ on those objects whose underlying types are $\left(k-1\right)$-connected. By \cref{npres:p1}, this subcategory inherits colimits from $\U$. For each truncation level $m$, note that truncations preserve $m$-connectedness and that the function $A \to \lVert{A}\rVert_m$ is an equivalence when $A$ is $m$-truncated. By \cref{modpres}, we now see that ${\left(n,k\right)\mathsf{GType}}$ (the full subcategory of $\U^{\ast}_{\geq {k-1}}$ on $\left(n+k\right)$-truncated types) is reflective in $\U^{\ast}_{\geq {k-1}}$, in the sense of \cref{reflin}. This gives us a way to build colimits in ${\left(n,k\right)\mathsf{GType}}$ from those in $\U^{\ast}$.

Let $\left(\L, \RI\right)$ be an OFS on $\U$ and $A: \U$. We know from \cref{coprodcos} that coproducts in $A/\U$ are definable as ordinary colimits over trees. This means that our pushout-coproduct construction of coslice colimits (\cref{coeqcop}) forms $\colimm^A$ from ordinary colimits over trees, which preserve $\left(\L, \RI\right)$-connectedness by \cref{treecolim}. Thus, if $A$ is $\left(\L, \RI\right)$-connected, the full subcategory of $A/\U$ on $\left(\L,\RI\right)$-connected types has colimits. This closure property is crucial for our next application, which builds colimits of \emph{higher pointed abelian groups}.

Let $Q : \U \to \mathsf{Prop}$. Consider types $A$ and $B$ in the subuniverse $\U_Q$ and a function $\varphi : A \to B$. Define the \emph{coslice-coslice} wild category $\left(B, \varphi\right)/\left(A/\U_Q\right)$ as follows. The objects are commuting triangles of the form $\left(\left(Z, \underline{\hspace{2mm}}\right), g_Z, k, \alpha\right)$
\[\begin{tikzcd}[ampersand replacement=\&]
	\& A \\
	B \&\& Z
	\arrow["\varphi"', from=1-2, to=2-1]
	\arrow["{g_Z}", from=1-2, to=2-3]
	\arrow[""{name=0, anchor=center, inner sep=0}, "k"', from=2-1, to=2-3]
	\arrow["\alpha"{description}, draw=none, from=1-2, to=0]
\end{tikzcd}\]
and the morphisms $\left(Z_1, g_{Z_1}, k_1, \alpha_1\right) \to_{\varphi} \left(Z_2, g_{Z_2}, k_2, \alpha_2\right)$ are tuples
\begin{align*}
f & \ : \ Z_1 \to Z_2
\\ p & \ : \ f \circ g_{Z_1} \sim g_{Z_2}
\\ H & \ : \ f \circ k_1 \sim k_2
\\ K & \ : \ \prod_{a : A}   \ap_f(\alpha_1(a)) \cdot p(a) = H(\varphi(a)) \cdot \alpha_2(a) 
\end{align*} 
We have evident identity morphisms, and composition $\circ$ of morphisms is defined by
\begin{align*}
& \left(f_2, p_2, H_2, K_2\right) \circ \left(f_1, p_1, H_1, K_1\right) \ : \  \left(Z_1, g_{Z_1}, k_1, \alpha_1\right) \to_{\varphi} \left(Z_3, g_{Z_3}, k_3, \alpha_3\right)
\\ & \left(f_2, p_2, H_2, K_2\right) \circ \left(f_1, p_1, H_1, K_1\right)  \ \coloneqq \ \left(f_2 \circ f_1 , \lambda{a}.\ap_{f_2}(p_1(a)) \cdot p_2(a), \lambda{b}.\ap_{f_2}(H_1(b)) \cdot H_2(b), \sigma(K_2, K_1) \right) 
\end{align*}
where $\sigma(K_2, K_1,a)$ denotes the following chain of paths for each $a : A$:
\[\begin{tikzcd}
	{\ap_{f_2 \circ f_1}(\alpha_1(a)) \cdot \ap_{f_2}(p_1(a)) \cdot p_2(a)} \\
	{ \ap_{f_2}(H_1(\varphi(a)) \cdot \alpha_2(a)) \cdot p_2(a)} \\
	{ \left(\ap_{f_2}(H_1(\varphi(a))) \cdot H_2(\varphi(a))\right) \cdot \alpha_3(a)}
	\arrow["{{{\emph{via $K_1(a)$}}}}", equals, from=1-1, to=2-1]
	\arrow["{{{\emph{via $K_2(a)$}}}}", equals, from=2-1, to=3-1]
\end{tikzcd}\]

Next, we define  $0$-functors 
\[
\begin{tikzcd}
	{\left(B, \varphi\right)/\left(A/\U_Q\right)} & {B/\U_Q}
	\arrow["\gamma", shift left=3, from=1-1, to=1-2]
	\arrow["\xi", shift left=3, from=1-2, to=1-1]
\end{tikzcd}
\] as follows. Define $\gamma_0 : \obb(\left(B, \varphi\right)/\left(A/\U_Q\right)) \to \obb(B/\U_Q)$ by
$\gamma_0(Z, g_Z, k, \alpha) \coloneqq \left(Z, k\right)$. Conversely, define $\xi_0 : \obb(B/\U_Q) \to \obb(\left(B, \varphi\right)/\left(A/\U_Q\right))$ by $\xi_0(Z,k) \coloneqq \left(Z, k \circ \varphi, k, \refl_{k(\varphi({-}))}\right)$.
Next, define
\begin{align*}
& \gamma_1 \ : \  \homm_{\left(B, \varphi\right)/\left(A/\U_Q\right)}(\left(Z_1, g_{Z_1}, k_1, \alpha_1\right),\left(Z_2, g_{Z_2}, k_2, \alpha_2\right)) \to  \homm_{B/\U_Q}(\left(Z_1, k_1\right), \left(Z_2, k_2\right))
\\ & \gamma_1(f, p, H, k) \ \coloneqq \ \left(f, H\right)
\\ & \xi_1 \ : \ \homm_{B/\U_Q}(\left(Z_1, k_1\right), \left(Z_2, k_2\right)) \to \homm_{\left(B, \varphi\right)/\left(A/\U_Q\right)}(\left(Z_1, k_1 \circ \varphi, k_1, \refl_{k_1(\varphi({-}))}\right), \left(Z_2, k_2 \circ \varphi, k_2, \refl_{k_2(\varphi({-}))}\right))
\\ & \xi_1(f, H) \ \coloneqq \  \left(f, H \circ \varphi, H, \rid(H(\varphi({-}))) \right)
\end{align*} 
Note that $\xi$ preserves composition as follows:
\begin{align*}
& \ \xi_1(g \circ f , \ap_g(H_1) \cdot H_2)  
\\ \equiv & \ \left(g \circ f , \mleft(\ap_g(H_1) \cdot H_2 \mright) \circ \varphi , \ap_g(H_1) \cdot H_2 , \rid(\ap_g(H_1(\varphi({-}))) \cdot H_2(\varphi({-}))) \right)
\\ = & \ \left(g \circ f , \mleft(\ap_g(H_1) \cdot H_2 \mright) \circ \varphi , \ap_g(H_1) \cdot H_2 , \sigma(\rid(H_2(\varphi({-}))), \rid(H_1(\varphi({-})))) \right) \tag{$\textit{in the final component: for each $a : A$, $\PI(H_1(\varphi(a)), H_2(\varphi(a)))$}$}
\\ \equiv & \  \left(g, H_2 \circ \varphi, H_2, \rid(H_2(\varphi({-}))) \right) \circ \left(f, H_1 \circ \varphi, H_1, \rid(H_1(\varphi({-}))) \right)
\\ \equiv & \ \xi_1(g , H_2) \circ \xi_1(f , H_1)
\end{align*}

\begin{lemma}\label{coscosadj}
We have that $\xi$ is a $2$-coherent left adjoint to $\gamma$.
\end{lemma}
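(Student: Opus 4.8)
The plan is to supply the adjunction data $\mleft(\alpha, V_1, V_2\mright)$ of \cref{adjdef} for the $0$-functors $L \coloneqq \xi$ and $R \coloneqq \gamma$. Fix $\mleft(Z_1, k_1\mright) : \obb(B/\U_Q)$ and $W \coloneqq \mleft(Z_2, g_{Z_2}, k_2, \alpha_2\mright) : \obb(\mleft(B, \varphi\mright)/\mleft(A/\U_Q\mright))$, so that $\gamma_0(W) = \mleft(Z_2, k_2\mright)$. A morphism $\xi_0(Z_1, k_1) \to_{\varphi} W$ unfolds to a tuple $\mleft(f, p, H, K\mright)$ with $f : Z_1 \to Z_2$, $p : f \circ k_1 \circ \varphi \sim g_{Z_2}$, $H : f \circ k_1 \sim k_2$, and $K : \prod_{a : A} p(a) = H(\varphi(a)) \cdot \alpha_2(a)$, where we have used $\ap_f(\refl) \equiv \refl$ because the fourth field of the source object $\xi_0(Z_1,k_1)$ is $\lambda a.\refl$. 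I would define the transpose $\alpha$ on this hom-type to be $\mleft(f, p, H, K\mright) \mapsto \mleft(f, H\mright)$ into $\homm_{B/\U_Q}\mleft(\mleft(Z_1,k_1\mright), \mleft(Z_2,k_2\mright)\mright)$, i.e. the restriction of $\gamma_1$ that reads off the underlying $B/\U_Q$-morphism and discards $(p, K)$.

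Next I would show $\alpha$ is an equivalence. Reassociating the nested $\Sigma$-type, the hom-type $\homm_{\mleft(B, \varphi\mright)/\mleft(A/\U_Q\mright)}(\xi_0(Z_1,k_1), W)$ is equivalent to
\[
\sum_{\mleft(f, H\mright) : \homm_{B/\U_Q}(\mleft(Z_1,k_1\mright), \mleft(Z_2,k_2\mright))} \ \sum_{p : f \circ k_1 \circ \varphi \sim g_{Z_2}} \prod_{a : A} \mleft(p(a) = H(\varphi(a)) \cdot \alpha_2(a)\mright),
\]
and $\alpha$ is exactly the first projection. By function extensionality the inner type $\sum_{p}\prod_{a} \mleft(p(a) = H(\varphi(a)) \cdot \alpha_2(a)\mright)$ is a based path space, hence contractible, so this projection is an equivalence. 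Its inverse sends $\mleft(f, H\mright)$ to $\mleft(f, \lambda a.H(\varphi(a)) \cdot \alpha_2(a), H, \lambda a.\refl\mright)$, which on objects of the form $\xi_0(Z_2,k_2)$, where $\alpha_2 \equiv \lambda a.\refl$, reduces (up to the unit path witnessed by $\rid$) to $\xi_1(f, H)$. This confirms that $\alpha^{-1}$ agrees with the action of $\xi$ on morphisms.

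Finally I would produce $V_1$ and $V_2$. Since $\alpha$ merely forgets $(p, K)$, and since the third component $\lambda b.\ap_{f_2}(H_1(b)) \cdot H_2(b)$ of the coslice-coslice composition is literally the $H$-component of the $B/\U_Q$-composite, the map $\gamma_1$ strictly preserves composition; dually, precomposition with $\xi_1(f)$ only alters the $(p,K)$ data that $\alpha$ discards. Computing both sides of $R g \circ \alpha(h) = \alpha(g \circ h)$ and of $\alpha(h) \circ f = \alpha(h \circ L f)$ therefore yields definitionally equal pairs, so I would take $V_1$ and $V_2$ to be $\refl$ throughout. The only genuine content is the contractibility step above; the remainder is bookkeeping of the composition formulas, and the main thing to watch is that the coherence fields $K$ and $\sigma(K_2, K_1)$ live over the same underlying equalities, so that the $\Sigma$-reassociation and the $\refl$-witnesses typecheck.
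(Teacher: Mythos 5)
Your proposal is correct and follows essentially the same route as the paper: the transpose is the forgetful map $\mleft(f,p,H,K\mright) \mapsto \mleft(f,H\mright)$, its equivalence follows by contracting away the $\mleft(p,K\mright)$ data (the paper computes the fibers of this map directly, you reassociate the $\Sigma$-type and contract a based path space — the same idea), and naturality ($V_1$, $V_2$) holds by inspection of the composition formulas, which the paper likewise dismisses as easy to check.
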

\begin{proof}
Let $\left(Z_1, g_{Z_1}, k_1, \alpha\right) : \obb(\left(B, \varphi\right)/\left(A/\U_Q\right))$ and $\left(Z_2, k_2\right) : \ob(B/\U_Q)$.
Define 
\begin{align*}
& \mu \ : \ \homm_{\left(B, \varphi\right)/\left(A/\U_Q\right)}(\left(Z_2, k_2 \circ \varphi, k_2, \refl_{k_2(\varphi({-}))} \right), \left(Z_1, g_{Z_1}, k_1, \alpha\right)) \to \homm_{B/\U_Q}(\left(Z_2, k_2\right), \left(Z_1, k_1\right))
\\ & \mu(f, p, H, K)  \ \coloneqq \ \left(f, H\right)
\end{align*}
We claim that $\mu$ is contractible, hence an equivlanece. Indeed, for each $\left(g, I\right) : \homm_{B/\U_Q}(\left(Z_2, k_2\right), \left(Z_1, k_1\right))$,
\begin{align*}
& \fib_{\mu}(g, I)
\\ \simeq \ \ &  \sum_{f : Z_2 \to Z_1}\sum_{p :  f \circ k_2 \circ \varphi \sim g_{Z_1}}\sum_{H : f \circ k_2 \sim k_1}\sum_{K : \prod_{a :A}p(a) = H(\varphi(a)) \cdot \alpha(a)}\sum_{U : f \sim g}\prod_{b : B} H(b) = U(k_2(b)) \cdot I(b) 
\\ \simeq \ \ & \1 
\end{align*}
Now, $\mu$ is trivially natural in both variables, so that $2$-coherence in this case is easy to check.
\end{proof}

\begin{lemma}
The wild category $\left(B, \varphi\right)/\left(A/\U_Q\right)$ is reflective in $B/\U_Q$, i.e., it admits a $2$-coherent left adjoint from $B/\U_Q$ whose counit is an isomorphism.
\end{lemma}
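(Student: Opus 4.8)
The plan is to feed the adjunction $\xi \dashv \gamma$ of \cref{coscosadj} into \cref{reflin}, taking $\mathcal{I} \coloneqq \gamma$ as the inclusion and $L \coloneqq \xi$ as the left adjoint. The definition of ``reflective in'' then asks for two things: that $\xi$ be $2$-coherent, and that the counit of $\xi \dashv \gamma$ be an isomorphism. The first was already recorded at the end of the proof of \cref{coscosadj}, where the resulting adjunction was checked to be $2$-coherent. So the only genuine work is to show that the counit $\epsilon : \xi \circ \gamma \to \idd_{\mleft(B,\varphi\mright)/\mleft(A/\U_Q\mright)}$ is a natural isomorphism, i.e.\ that each component is an internal equivalence in the sense of \cref{equivwild}.

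First I would identify the counit explicitly. For an object $X \coloneqq \mleft(Z, g_Z, k, \alpha\mright)$ we have $\gamma(X) = \mleft(Z, k\mright)$, and hence $\xi(\gamma(X)) = \mleft(Z, k \circ \varphi, k, \lambda a.\refl_{k(\varphi(a))}\mright)$, which agrees with $X$ on its underlying type $Z$ and on $k$, differing only in the middle witness $g_Z$ and the coherence $\alpha$. The counit component $\epsilon_X$ is obtained by applying the inverse of the adjunction equivalence $\mu$ of \cref{coscosadj} to $\idd_{\gamma X}$. Since $\mu$ acts by $\mleft(f, p, H, K\mright) \mapsto \mleft(f, H\mright)$, its inverse preserves the underlying function of types; reading off the center of the contraction $\fib_{\mu}(\idd_Z, \lambda b.\refl_{k(b)}) \simeq \1$ from that proof shows that the underlying function of $\epsilon_X$ is $\idd_Z$.

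It then remains to promote this to biinvertibility of $\epsilon_X$ inside $\mleft(B,\varphi\mright)/\mleft(A/\U_Q\mright)$. I would argue, just as in the plain coslice case (cf.\ \cref{Aiso}), that a morphism of the coslice-coslice category is an internal equivalence exactly when its underlying function of types is an equivalence; since $\idd_Z$ is an equivalence, $\epsilon_X$ is an internal equivalence. Naturality of $\epsilon$ comes for free from the general adjunction machinery of \cref{adjdef}, so $\epsilon$ is a natural isomorphism, and both conditions of \cref{reflin} hold.

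The main obstacle I anticipate is the bookkeeping in the middle step: extracting the precise underlying data of $\epsilon_X$ from the contractible-fiber description of $\mu$, and then checking that an underlying equivalence upgrades to a biinvertible coslice-coslice morphism with the two required composite-identity coherences. Both are routine but notation-heavy; conceptually nothing is hard once the counit has been recognized as the identity on underlying types, which is the whole point of the $\xi_0 \circ \gamma_0$ construction only altering $g_Z$ and the base-point witness.
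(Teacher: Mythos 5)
Your proposal is correct and follows essentially the same route as the paper: the $2$-coherence is quoted from the end of the proof of \cref{coscosadj}, and the only real content is that the counit component at $\mleft(Z, g_Z, k, \alpha\mright)$ has underlying function $\idd_Z$ and is therefore an internal equivalence. The paper just short-circuits your fiber-contraction step by writing the counit component explicitly as $\mleft(\idd_Z, \alpha, \lambda{b}.\refl_{k(b)}, \lambda{a}.\refl_{\alpha(a)}\mright)$ and observing directly that it is an equivalence, which is the same argument with the bookkeeping you deferred carried out.
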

\begin{proof}
The counit $\epsilon : \xi \circ \gamma \to \idd_{\left(B, \varphi\right)/\left(A/\U_Q\right)}$ of the adjunction of \cref{coscosadj} is an equivalence in each component: 
\begin{align*}
&  \epsilon_{Z^{\ast}} \ : \ \left(Z, k \circ \varphi, k, \refl_{k(\varphi({-}))}\right) \xrightarrow{\simeq}_{\varphi} \left(Z, g_Z, k, \alpha\right)
 \\ & \epsilon_{Z^{\ast}} \ \coloneqq \ \left(\idd_Z,\alpha,\refl_{k({-})}, \refl_{\alpha({-})} \right)
\end{align*}
This means that $\epsilon$ is an isomorphism.
\end{proof}

\begin{corollary}\label{pthicol}
Consider truncation levels ${-2} \leq n \leq \infty$ and ${-1} \leq k < \infty$. For each pointed type $G$ in ${\left(n,k\right)\mathsf{GType}}$, the coslice $G/{\left(n,k\right)\mathsf{GType}}$ is cocomplete.
\end{corollary}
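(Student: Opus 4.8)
The plan is to realize $G/{\mleft(n,k\mright)\mathsf{GType}}$ as a reflective subcategory of a coslice we already know to be cocomplete, using the coslice-coslice machinery of this subsection. Concretely, take $Q : \U \to \mathsf{Prop}$ to be the predicate ``$X$ is $\mleft(k-1\mright)$-connected and $\mleft(n+k\mright)$-truncated'', so that ${\mleft(n,k\mright)\mathsf{GType}} \cong \U^{\ast}_{\geq k, \leq n+k}$ is identified with $\1/\U_Q$ (here $\1$ lies in $\U_Q$ since it is contractible, hence $\mleft(k-1\mright)$-connected and $\mleft(n+k\mright)$-truncated). Writing $g : \1 \to G$ for the basepoint of $G$ and regarding $G$ as an object of $\U_Q$, I would identify the coslice $G/{\mleft(n,k\mright)\mathsf{GType}}$ with the coslice-coslice category $\mleft(G, g\mright)/\mleft(\1/\U_Q\mright)$: an object of the latter is exactly an object $Z$ of $\1/\U_Q$ equipped with a morphism $G \to Z$ in $\1/\U_Q$, i.e.\ a pointed $Q$-type under $G$ together with the coherence $2$-cell expressing pointedness. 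With this identification, the lemma immediately preceding this corollary shows that $G/{\mleft(n,k\mright)\mathsf{GType}}$ is reflective in $G/\U_Q$.

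It then remains to prove that $G/\U_Q$ is cocomplete on graphs. First I would use the evident equivalence $G/\U_Q \simeq \mleft(G/\U\mright)_Q$ identifying $G/\U_Q$ with the full subcategory of $G/\U$ on $Q$-types. Because $G$ is an object of ${\mleft(n,k\mright)\mathsf{GType}}$, its underlying type is $\mleft(k-1\mright)$-connected, so the observation supplied by our second construction (\cref{coeqcop}, as used in \cref{npres}(a)) applies with $A \coloneqq G$ and the $\mleft(\mleft(k-1\mright)\text{-connected}, \mleft(k-1\mright)\text{-truncated}\mright)$ OFS on $\U$: the full subcategory $\mleft(G/\U\mright)_{\geq k}$ of $G/\U$ on $\mleft(k-1\mright)$-connected types is closed under colimits over graphs, hence cocomplete. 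Next, since truncation preserves $\mleft(k-1\mright)$-connectedness, the $\mleft(n+k\mright)$-truncation $\lN{-}\rN_{n+k}$ restricts to a reflector of $\mleft(G/\U\mright)_{\geq k, \leq n+k} = \mleft(G/\U\mright)_Q$ inside $\mleft(G/\U\mright)_{\geq k}$; by \cref{modpres} this reflector preserves colimits, so colimits in $\mleft(G/\U\mright)_Q$ are computed by truncating the colimit formed in $\mleft(G/\U\mright)_{\geq k}$. Thus $G/\U_Q$ is cocomplete on graphs.

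Finally, I would invoke the standard fact that a reflective subcategory of a cocomplete category is again cocomplete. Given a diagram $F$ in $G/{\mleft(n,k\mright)\mathsf{GType}}$, I include it into $G/\U_Q$ along the fully faithful right adjoint $\gamma$, form its colimit there, and apply the reflector $\xi : G/\U_Q \to G/{\mleft(n,k\mright)\mathsf{GType}}$; the universal property of this reflected colimit is immediate from the adjunction $\xi \dashv \gamma$ of \cref{coscosadj} together with the fact that its counit is an isomorphism (equivalently, $\gamma$ is fully faithful), as established just above, and $\xi$ preserves the colimit by \cref{LAPC}. Hence $G/{\mleft(n,k\mright)\mathsf{GType}}$ is cocomplete on graphs. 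I expect the main obstacle to be purely the bookkeeping of the first two paragraphs---verifying that $G/{\mleft(n,k\mright)\mathsf{GType}}$ genuinely coincides, as a wild category, with $\mleft(G, g\mright)/\mleft(\1/\U_Q\mright)$ by matching all morphism data (in particular the pointedness $2$-cell $\alpha$), and checking that the equivalence $G/\U_Q \simeq \mleft(G/\U\mright)_Q$ is compatible both with the truncation reflection and with the OFS-connectedness argument. The homotopy-theoretic content is entirely carried by \cref{coeqcop,npres,modpres} and the reflective structure proved immediately above.
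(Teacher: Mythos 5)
Your proposal is correct and takes essentially the same route as the paper: the paper's one-line proof identifies $G/{\mleft(n,k\mright)\mathsf{GType}}$ with the coslice-coslice $\mleft(\pr_1(G), g\mright)/\mleft(\1/\U_Q\mright)$, applies the immediately preceding reflectivity lemma to embed it reflectively in $\pr_1(G)/\U_{\geq k, \leq n+k}$, and relies on the section's earlier observations (\cref{coeqcop} for cocompleteness of the connected full subcategory of $\pr_1(G)/\U$, \cref{modpres} for the truncation reflector, and \cref{LAPC} for transferring colimits along the $2$-coherent reflection of \cref{coscosadj}). Your write-up simply makes explicit the bookkeeping that the paper leaves implicit.
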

\begin{proof}
As a wild category, $G/\U^{\ast}_{\geq {k-1}, \leq n+k}$ is reflective in $\pr_1(G)/\U_{\geq {k-1}, \leq n+k}$, which has colimits.
\end{proof}

For example, let $n: \N$ with $n >0$ and $m < n$. The Eilenberg-MacLane space $K(\Z, n+m)$ is the free $\left(n,m\right)$-group on one generator in the category ${\left(n,m\right)\mathsf{GType}}$, for which we view $\Omega^{n+m} : \left(n,m\right)\mathsf{GType} \to \mathbf{Set}$ as the forgetful functor. Indeed, letting $d \coloneqq n +m$, for all $\left(Y,y\right) : \U^{\ast}_{\geq {m-1},\leq d}$, we have the composite equivalence
\begin{align*}
& K(\Z,d) \to_{\ast} \left(Y,y\right)
\\  \equiv \ & \lN{\Sigma^{d-1}(K(\Z, 1))}\rN_d \to_{\ast} \left(Y,y\right)
\\ \simeq \ & S^d \to_{\ast} \left(Y,y\right)
\\ \simeq \ & \Omega^d(Y,y)
\end{align*}
Thus, when $m >0$, $K(\Z, d)/\U^{\ast}_{\geq {m-1},\leq d}$ is a higher version of the category of \emph{pointed abelian groups}~\cite{pab}. By \cref{pthicol}, we know how to build colimits of such higher pointed abelian groups.

\section{Weak continuity of cohomology} \label{contcohom}

For this section, we need the notion of \emph{finite graph}. 

\begin{definition} $ $
\begin{itemize}
\item We say that a type $X$ is \textit{finite} if it is merely equivalent to a standard finite type. (The word ``merely'' here refers to propositional truncation.)
\item We say that a graph $\Gamma$ is \textit{finite} if $\Gamma_0$ is finite and for all $i, j : \Gamma_0$, $\Gamma_1(i,j)$ is finite.
\end{itemize}
\end{definition}

\begin{lemma}\label{finprop}
If $\Gamma$ is a finite graph, then the type $\sum_{i,j : \Gamma_0}\Gamma_1(i,j)$ is finite.
\end{lemma}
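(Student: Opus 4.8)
The plan is to reduce the statement to the standard fact that finite types are closed under dependent sums, and then to prove that closure fact by induction on the cardinality of the base. I would first record the auxiliary claim: if $A$ is finite and $B : A \to \U$ is a family with $B(a)$ finite for every $a : A$, then $\sum_{a : A} B(a)$ is finite. Granting this, the lemma follows immediately. The type $\sum_{i,j : \Gamma_0} \Gamma_1(i,j)$ is by definition $\sum_{i : \Gamma_0}\bigl(\sum_{j : \Gamma_0}\Gamma_1(i,j)\bigr)$, so applying the closure claim with base $\Gamma_0$ (finite by hypothesis) and fibers $i \mapsto \sum_{j:\Gamma_0}\Gamma_1(i,j)$ reduces the goal to showing that each such fiber is finite. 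But each fiber is itself a dependent sum over the finite type $\Gamma_0$ with fibers $\Gamma_1(i,j)$, which are finite by hypothesis, so a second application of the closure claim finishes the argument.

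To prove the closure claim, the key observation is that being finite is a mere proposition, since the cardinality $n$ is uniquely determined and each component $\lVert A \simeq \mathsf{Fin}(n)\rVert$ is a proposition. Because the goal ``$\sum_{a:A}B(a)$ is finite'' is therefore a proposition, I may eliminate the propositional truncation witnessing the finiteness of $A$ and assume an honest equivalence $e : A \simeq \mathsf{Fin}(n)$. Transporting the family along $e$ yields $\sum_{a:A}B(a) \simeq \sum_{k:\mathsf{Fin}(n)}B(e^{-1}(k))$, and I would then induct on $n$. The base case $n \equiv 0$ gives a sum over $\0$, which is equivalent to $\0 \simeq \mathsf{Fin}(0)$ and hence finite. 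For the step $n \equiv m+1$, the equivalence $\mathsf{Fin}(m+1)\simeq \mathsf{Fin}(m) + \1$ splits the sum as $\bigl(\sum_{k:\mathsf{Fin}(m)}B(e^{-1}(\inl(k)))\bigr) + B(e^{-1}(\inr(\ast)))$; the left summand is finite by the inductive hypothesis and the right summand is finite by assumption, so the total space is a coproduct of two finite types and is therefore finite.

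The only real subtlety, and thus the step I would treat most carefully, is the truncation bookkeeping: I must first verify that finiteness is a proposition in order to strip the $\lVert - \rVert$ around $A \simeq \mathsf{Fin}(n)$ before inducting, and I must likewise know the elementary facts that $\0$ and $\1$ are finite, that coproducts of finite types are finite, and that finiteness is preserved along equivalences. These are all standard, and could instead simply be cited from the formalized library of finite types in \cite{agda-unimath}; everything beyond them is the routine induction just sketched. Since none of the arithmetic content is delicate, I would keep the write-up short, invoking the closure of finite types under $\Sigma$ as a known result and applying it twice as above.
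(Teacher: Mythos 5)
Your proposal is correct and matches the paper's approach: the paper proves this lemma by directly citing the agda-unimath result that a dependent sum of finite types indexed by a finite type is finite, which is exactly the closure claim you reduce to (applied twice for the nested sum), and which you note could simply be cited from \cite{agda-unimath}. Your additional cardinality-induction proof of that closure fact is a correct but optional elaboration of what the paper leaves to the library.
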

\begin{proof}
By \cite[\href{https://unimath.github.io/agda-unimath/univalent-combinatorics.dependent-pair-types.html\#a-dependent-sum-of-finite-types-indexed-by-a-finite-type-is-finite}{A dependent sum of finite types indexed by a finite type is finite}]{agda-unimath}.
\end{proof}

Let $\Gamma$ be a finite graph. We claim that every Eilenberg-Steenrod cohomology theory $H : \left(\U^{\ast}\right)^{\op} \to \mathbf{Ab}$ takes pointed colimits over $\Gamma$ to weak limits in $\mathbf{Set}$, in the sense that the universal map from the limit is an epi in $\mathbf{Ab}$. If $H$ is additive (e.g., induced by an $\Omega$-spectrum), then this holds when $\Gamma$ is a \emph{projective} graph, i.e., both $\Gamma_0$ and $\Gamma_1(i,j)$ satisfy the set-level axiom of choice~\cite[Definition 6.1]{cohom}.

\subsection{Eilenberg-Steenrod cohomology}
Let $H$ be a $\Z$-indexed famility of functors $\left(\U^{\ast}\right)^{\op} \to \mathbf{Ab}$. We say that $H$ is an \textit{(Eilenberg-Steenrod) cohomology theory} if it satisfies the following two axioms.

\begin{itemize}
\item For all $n : \Z$, we have a natural isomorphism $H^{n+1}(\Sigma{-}) \xrightarrow{\sigma_n} H^n({-})$ of functors $\U^{\ast} \to \mathbf{Ab}$.
\item For all maps $f : X \to_{\ast} Y$, the following sequence is exact (where $\sfrac{Y}{X}$ is the cofiber of $f$):
\[
H^n(\sfrac{Y}{X}) \xrightarrow{H^n(\cofiber(f))} H^n(Y) \xrightarrow{H^n(f)} H^n(X)
\]
\end{itemize}

\begin{example}
Suppose that $E : \Z \to \U^{\ast}$ is a prespectrum, with structure maps $\epsilon_n : E_n \to_{\ast} \Omega{E_{n+1}} $. For each $n : \Z$, we have a sequence
\[ 
\lN{X \to_{\ast} \Omega^k{E_{n+k}}}\rN_0 \xrightarrow{\lN{\Omega^k(\epsilon_{n+k}) \circ {-}}\rN_0} \lN{X \to_{\ast} \Omega^{k+1}{E_{n+\left(k +1\right)}}}\rN_0 \tag{$\mathtt{seq}_E$} \label{seqE}
\]
of abelian groups. (We assume that addition $+ : \Z \times \N \to \Z$ is defined by pattern matching on the second argument.) For each $n : \Z$, define
\begin{align*}
& \widetilde{E}^n  \ : \ \U^{\ast} \to \mathbf{Ab}
\\ & \widetilde{E}^n(X) \ \coloneqq \  \colimm_{k : \mathbb{N}}\lN{X \to_{\ast} \Omega^k{E_{n+k}}}\rN_0
\end{align*}
where the colimit $\colimm(G)$ of a sequence of abelian groups has underlying set $\colimm_{k:\N}(\pr_1(G_k))$ and has abelian group structure defined by induction on sequentual colimits.
This is a cohomology theory acting on maps via the contravariant hom-action. The suspension axiom is easy to verify by using the identity $\left(n+1\right) + k = n + \left(k + 1\right)$. We now turn to verifying the exactness axiom.
\begin{definition}
Let $\left(A, a \right)$ be a $\U$-valued sequential digram.
Let $n : \mathbb{N}$ and $x : A_n$. Define the \emph{lifting} function $x^{\left({-}\right)} : \prod_{m : \mathbb{N}} A_{n + m}$ by $x^0  \coloneqq x$ and $x^{m+1} \coloneqq a_{n+m}(x^m)$,
where $+ : \N \to \N \to \N$ is defined by pattern matching on the second argument.
\end{definition}
\begin{lemma}
Consider a levelwise exact sequence 
\[
\left(A, a\right) \xrightarrow{\left(m_1, M_1\right)} \left(B,b\right) \xrightarrow{\left(m_2, M_2\right)} \left(C,c\right)
\] of sequential diagrams valued in $\mathbf{Ab}$. Then the following sequence of abelian groups is exact:
\[
\colimm{A} \xrightarrow{\colimm(m_1)} \colimm(B) \xrightarrow{\colimm(m_2)} \colimm(C)
\]
\end{lemma}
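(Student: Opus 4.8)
The plan is to reduce the statement to the two standard facts about sequential colimits of sets, both available from the treatment in \cite{SDR}: that the coprojections $\iota_k$ are jointly surjective onto $\colimm_{k : \N}(C_k)$, and that two representatives become equal in the colimit exactly when they agree after finitely many applications of the structure maps (the identity characterization of a sequential colimit). Since the excerpt already takes $\colimm_k(\pr_1(G_k))$ as the \emph{underlying set} of the colimit group, the set-level form of these facts is what I need. Throughout I write $b^m : B_{k+m}$ for the lift of $b : B_k$, and I record that the naturality squares $M_1$, $M_2$ iterate to $m_{1,k+m}(a^m) = (m_{1,k}(a))^m$ and $m_{2,k+m}(b^m) = (m_{2,k}(b))^m$, because the structure maps are the group homomorphisms $a_k$, $b_k$, $c_k$.

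For the inclusion $\im(\colimm(m_1)) \subseteq \ker(\colimm(m_2))$, I note that levelwise exactness gives $m_{2,k} \circ m_{1,k} = 0$ for every $k$, so by functoriality of $\colimm_{k : \N}$ the composite $\colimm(m_2) \circ \colimm(m_1) = \colimm(m_2 \circ m_1)$ is the zero homomorphism, placing every element of the image in the kernel. The content is the reverse inclusion. Membership in $\im(\colimm(m_1))$ is a mere proposition, so I may eliminate the propositional truncations that arise. Given $\beta : \colimm_{k : \N}(B_k)$ with $\colimm(m_2)(\beta) = 0$, surjectivity supplies $k$ and $b : B_k$ with $\iota_k(b) = \beta$. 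Then $\iota_k(m_{2,k}(b)) = \colimm(m_2)(\beta) = 0 = \iota_k(0)$, so the identity characterization yields some $m$ with $(m_{2,k}(b))^m = 0$ in $C_{k+m}$. By iterated naturality this says $m_{2,k+m}(b^m) = 0$, i.e. $b^m \in \ker(m_{2,k+m})$, whence levelwise exactness at stage $k+m$ produces $a : A_{k+m}$ with $m_{1,k+m}(a) = b^m$. Finally $\colimm(m_1)(\iota_{k+m}(a)) = \iota_{k+m}(m_{1,k+m}(a)) = \iota_{k+m}(b^m) = \iota_k(b) = \beta$, the middle equality being the lift-invariance path $\iota_{k+m}(b^m) = \iota_k(b)$ assembled from the glue constructors of the colimit. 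This exhibits $\beta$ in $\im(\colimm(m_1))$.

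The main obstacle is purely infrastructural: pinning down and correctly invoking the identity characterization for sequential colimits of sets, in particular the special case that an element whose image vanishes in the colimit already vanishes at a finite stage. This is exactly where \cite{SDR} enters; once it is in hand, the homomorphism, naturality, and lift-invariance bookkeeping is routine. A minor but genuine point is that all three existence claims — a representative $b$, a stage $m$, and a preimage $a$ — are propositionally truncated, but since the goal $\beta \in \im(\colimm(m_1))$ is itself a proposition, each truncation is discharged without difficulty.
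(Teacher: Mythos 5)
Your proof is correct and follows essentially the same route as the paper's: both reduce to a representative $b : B_k$ (the paper does this implicitly by working with $\iota_k(x)$, you via joint surjectivity of the coprojections), invoke the characterization of paths in sequential colimits from \cite{SDR} to find a stage $k+m$ where the image of $b$ vanishes, and then apply naturality plus levelwise exactness at that stage to produce a preimage, concluding with the path $\iota_{k+m}(b^m)=\iota_k(b)$. Your explicit remark that all existence claims are truncated but land in a propositional goal is exactly the role played in the paper by inducting on the colimit of path types from Theorem 7.4 of \cite{SDR}.
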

\begin{proof} 
For each $k : \N$ and $x : A_k$,
$\colimm(m_2 \circ m_1 )(\iota_k(x))  \equiv  \iota_k(m_2(k,(m_1(k,x))))  =  0$ because $m_2(k) \circ m_1(k)$ is the zero map by levelwise exactness.

Next, let $k : \N$ and $x : B_k$. Suppose that $\colimm(m_2)(\iota_k(x)) = 0$ (where $\colimm(m_2)(\iota_k(x)) \equiv \iota_k(m_2(k,x))$). We want to show that the fiber of $\colimm(m_1)$ over $\iota_k(x)$ is inhabited. 
By \cite[Theorem 7.4]{SDR}, we have an equivalence
\[
\left(\iota_k(0)  =_{\colimm(C)} \iota_k(m_2(k,x))\right) \  \simeq \  \colimm_{n : \mathbb{N}}(0^{+n} =_{C_{k+n}} m_2(k,x)^{+n})
\] As $\iota_k(0) = 0$, it thus suffices to prove that the fiber is inhabited given an element of $\colimm_n(0^{+n} = m_2(k,x)^{+n})$. We proceed by induction on sequential colimits. Let $n : \N$ and $p : 0^{+n} = m_2(k,x)^{+n}$. By naturality of $m_2$, we see that $m_2(k,x)^{+n} = m_2(k+n, x^{+n})$. As $0^{+n} = 0$, it follows that $x^{+n}$ belongs to the kernel of $m_2(k+n)$. By levelwise exactness, this gives us an element $\left(d, q\right) : \fib_{m_1(k+n)}(x^{+n})$. We have that
\[
\colimm(m_1)(\iota_{k+n}(d)) \ \equiv \ \iota_{k+n}(m_1(k+n,d)) \ = \ \iota_{k+n}(x^{+n}) \ = \ \iota_k(x).
\] This proves that the fiber over $\iota_k(x)$ is inhabited.
\end{proof}
For exactness of $\widetilde{E}$, it now suffices to observe that when $k \geq 1$, the sequence
\[  
\lN{\sfrac{Y}{X} \to_{\ast} \Omega^k{E_{n+k}}}\rN_0 \xrightarrow{\lN{{-} \circ \cofiber(f)}\rN_0} \lN{Y \to_{\ast} \Omega^k{E_{n+k}}}\rN_0  \xrightarrow{\lN{{-} \circ f}\rN_0} \lN{X \to_{\ast} \Omega^k{E_{n+k}}}\rN_0
\] is exact for every $f : X \to_{\ast} Y$ (see \cite[Section 3.2.2]{Cav}).
\end{example}

\begin{remark}
For each $n : \Z$, the functor $\widetilde{E}^n({-})$ computes the $\text{-}2n$-th degree $\left[\Sigma^{\infty}({-}), E\right]_{-2n}$  of the graded hom-group in the category of prespectra~\cite[Proposition 2.8]{BLUE}, where $\Sigma^{\infty}(X)$ denotes the suspension prespectrum of a pointed type $X$. For example, if $E$ is the sphere spectrum, then $\widetilde{E}^{-n}({-})$ is precisely the $2n$-th homotopy group functor $\pi_{2n}({-})$ on prespectra.
\end{remark}

\smallskip

\noindent We say that a cohomology theory $H$ is \emph{ordinary} if it satisfies
$H^n(S^0) \cong \1$ for all $n \ne 0$.
We say that it is \emph{additive} if the map
$\prod_{i : I}H^n(\inr \circ \left(i, {-}\right)) : H^n\mleft(\bigvee_{i : I}F_i\mright) \to \prod_{i : I}H^n(F_i)$
is an isomorphism for every set $I$ satisfying the set-level axiom of choice and every family $F: I \to \U^{\ast}$ of pointed types.

\begin{example}
Any $\Omega$-spectrum $E$ induces an additive cohomology theory $\widetilde{E}$. Indeed, the triangle
\[\begin{tikzcd}
	{\Omega^k{E_{n+k}}} && {\Omega^{k+1}{E_{n+\left(k +1\right)}}} \\
	& {E_n}
	\arrow["{\Omega^k(\epsilon_{n+k})}", from=1-1, to=1-3]
	\arrow["\simeq"', from=1-1, to=2-2]
	\arrow["\simeq", from=1-3, to=2-2]
\end{tikzcd}\] commutes by induction on $k$ like so: 
\[\begin{tikzcd}[row sep = 25]
	{\Omega^{k}(\Omega(E_{n+\left(k+1\right)}))} && {\Omega^{k+1}(\Omega(E_{n+\left(k +2\right)}))} \\
	{\Omega^k(E_{n+k})} && {\Omega^{k+1}(E_{n+\left(k+1\right)})} \\
	& {E_n}
	\arrow["{\Omega^{k}(\Omega(\epsilon_{n+\left(k+1\right)}))}", from=1-1, to=1-3]
	\arrow["{\Omega^k(\epsilon_{n+k})^{-1}}"', from=1-1, to=2-1]
	\arrow["{\Omega^{k}(\Omega(\epsilon_{n+\left(k+1\right)}))^{-1}}", from=1-3, to=2-3]
	\arrow["{\Omega^{k}(\epsilon_{n+k})}"{description}, from=2-1, to=2-3]
	\arrow[from=2-1, to=3-2]
	\arrow[from=2-3, to=3-2]
\end{tikzcd}\]
Thus, $\eqref{seqE}$ becomes the constant diagram at $\lN{X \to_{\ast} E_n}\rN_0$.  The induced theory is ordinary when $E$ is an Eilenberg-MacLane spectrum. 
\end{example}

\subsection{Cohomology sends finite colimits to weak limits}\label{cohomwklim}
Suppose that $H^{\ast}$ is a cohomology theory. Consider a pushout of pointed types and pointed maps:
\[\begin{tikzcd}
	Z & Y \\
	X & P
	\arrow["f"', from=1-1, to=2-1]
	\arrow["g", from=1-1, to=1-2]
	\arrow[from=2-1, to=2-2]
	\arrow[from=1-2, to=2-2]
	\arrow["\lrcorner"{anchor=center, pos=0.125, rotate=180}, draw=none, from=2-2, to=1-1]
\end{tikzcd}\]
Cavallo has constructed, within HoTT, the Mayer-Vietoris long exact sequence for cohomology~\cite[Section 4.5]{Cav}:
\[ 
\begin{tikzcd}[column sep =52,  /tikz/column 1/.append style={column sep=6pt},/tikz/column 5/.append style={column sep=6pt}] 
	{\cdots } & {H^{n-1}(Z)} & {H^n(P)} &[5 mm] {H^n(X) \times H^n(Y)} &[2 mm] {H^n(Z)} & \cdots
	\arrow[from=1-1, to=1-2]
	\arrow["{\partial}", from=1-2, to=1-3]
	\arrow["{\left(H^n(\inl) , H^n(\inr) \right)}", from=1-3, to=1-4]
	\arrow["{H^n(f) - H^n(g)}", from=1-4, to=1-5]
	\arrow[from=1-5, to=1-6]
\end{tikzcd} \]
Note that the type $\ker(H^n(f) - H^n(g))$ is by definition the pullback
\[\begin{tikzcd}
	{H^n(X) \times_{H^n(Z)}H^n(Y)} & {H^n(Y)} \\
	{H^n(X)} & {H^n(Z)}
	\arrow[from=1-1, to=1-2]
	\arrow[from=1-1, to=2-1]
	\arrow["\lrcorner"{anchor=center, pos=0.125}, draw=none, from=1-1, to=2-2]
	\arrow["{H^n(g)}", from=1-2, to=2-2]
	\arrow["{H^n(f)}"', from=2-1, to=2-2]
\end{tikzcd}\] Exactness states that the canonical map $ H^n(P) \to H^n(X) \times_{H^n(Z)} H^n(Y)$ induced by $\left(H^n(\inl) , H^n(\inr) \right)$ is surjective. Now, suppose that $\Gamma$ is a finite graph. It is known that cohomology preserves finite coproducts inside HoTT~\cite[Section 4.2]{Cav}.
By \cref{coeqcop,finprop}, Cavallo's long exact sequence induces an exact sequence
\[\begin{tikzcd}[column sep =30]
	{H^n(\colimm^{\ast}(F))} & {\prod_{i,j,g}H^n(F_i) \times \prod_{i}H^n(F_i)} & {\prod_{i,j,g}H^n(F_i) \times \prod_{i,j,g}H^n(F_i) }
	\arrow["{\zeta_n}", from=1-1, to=1-2]
	\arrow["{\mu_n -\nu_n}", from=1-2, to=1-3]
\end{tikzcd} \label{eq:ext} \tag{$\mathtt{ES}$} \]
for all $n : \N$. If $H^{\ast}$ is additive, this holds when $\Gamma$ is just a projective graph. (Projective types are also closed under $\Sigma$-types.) Here, $\zeta_n$ is the composite
\[\begin{tikzcd}[row sep = 25]
	{H^n(\colimm^{\ast}(F))} & {\prod_{i,j,g}H^n(F_i) \times \prod_{i}H^n(F_i)} \\
	{H^n\mleft(\bigvee_{i,j,g}F_i\mright) \times H^n\mleft(\bigvee_iF_i\mright)}
	\arrow["{{\zeta_n}}", dashed, from=1-1, to=1-2]
	\arrow["{\left(H^n(\inl) , H^n(\inr) \right)}"', from=1-1, to=2-1]
	\arrow["{{\cong \times \cong}}"', from=2-1, to=1-2]
\end{tikzcd}  \] and $\mu_n$ and $\nu_n$ are defined as
\begin{align*}
& \mu_n  \ : \ \prod_{i,j,g}H^n(F_i) \to  \prod_{i,j,g}H^n(F_i) \times \prod_{i,j,g}H^n(F_i)
\\ & \mu_n(h) \ \coloneqq \ \left(h   ,  h \right)
\\
& \nu_n   \ : \ \prod_{i}H^n(F_i) \to  \prod_{i,j,g}H^n(F_i) \times \prod_{i,j,g}H^n(F_i)
\\ & \nu_n(h)  \  \coloneqq \ \left(\lambda{i}\lambda{j}\lambda{g}.h_i   , \lambda{i}\lambda{j}\lambda{g}.H^n(F_{i,j,g})(h_j)  \right)
\end{align*} We have a cone $\mathcal{M}_{F, H, n}$
\[\begin{tikzcd}
	& {H^n(\colimm^{\ast}_{\Gamma}(F))} \\
	{H^n(F_j)} && {H^n(F_i)}
	\arrow["{H^n(F_{i,j,g})}"', from=2-1, to=2-3]
	\arrow["{H^n(\iota_j)}"', from=1-2, to=2-1]
	\arrow["{H^n(\iota_i)}", from=1-2, to=2-3]
\end{tikzcd}\]
over $H^n(F)$ and thus a commuting diagram
\[\begin{tikzcd}
	{H^n(\colimm^{\ast}_{\Gamma}(F))} && {\limm(H^n(F))} \\
	& {H^n(F_i)}
	\arrow["{\Delta^n_F}", dashed, from=1-1, to=1-3]
	\arrow["{H^n(\iota_i)}"', from=1-1, to=2-2]
	\arrow["{\pr_i}", from=1-3, to=2-2]
\end{tikzcd}\]
induced by the universal property of limits in $\mathbf{Ab}$. In fact, $\Delta^n_F$ is induced by the cone
\[\begin{tikzcd}
	{H^n(\colimm^{\ast}(F))} \\
	& {\limm(H^n(F))} & {\prod_iH^n(F_i)} \\
	\\
	& {\prod_{i,j,g}H^n(F_i)} & {\prod_{i,j,g}H^n(F_i) \times \prod_{i,j,g}H^n(F_i)}
	\arrow[dashed, from=1-1, to=2-2]
	\arrow["{\pr_2 \circ \zeta_n}", curve={height=-12pt}, from=1-1, to=2-3]
	\arrow["{\pr_1 \circ \zeta_n}"', curve={height=12pt}, from=1-1, to=4-2]
	\arrow[from=2-2, to=2-3]
	\arrow[from=2-2, to=4-2]
	\arrow["\lrcorner"{anchor=center, pos=0.125}, draw=none, from=2-2, to=4-3]
	\arrow["{\nu_n}", from=2-3, to=4-3]
	\arrow["{\mu_n}"', from=4-2, to=4-3]
\end{tikzcd}\]
The exactness of \eqref{eq:ext} states that $\Delta^n_F$ is surjective (equivalently, an epi in $\mathbf{Set}$). Classically, this implies that $\Delta^n_F$ has a section, so that $H^n(\colimm^{\ast}_{\Gamma}(F))$ is a \emph{weak limit} of $H^n(F)$ in $\mathbf{Set}$. If we assume the axiom of choice inside HoTT~\cite[Section 3.8]{Uni13}, then $\Delta^n_F$ \emph{merely} has a section (i.e., up to propositional truncation). In this case, we conclude that $H^n(\colimm^{\ast}_{\Gamma}(F))$ is merely a weak limit in $\mathbf{Set}$. In other words, the following function is surjective (not necessarily split) for each set $X$:
\[
\left( \mathcal{M}_{F, H, n} \circ {-}\right) \ : \  \left(X \to H^n(\colimm^{\ast}_{\Gamma}(F))\right) \to \mathsf{Cone}_{H^n(F)}(X)
\]

\appendix

\section{Structure identity principle}

We record our main tool for characterizing path spaces of structured types.

\begin{definition} \label{idsysdef}
Let $\left(A,a\right)$ be a pointed type. Consider a type family $B$ over $A$ and an element $b : B(a)$. We say that $\left(B, b\right)$ is an \textit{identity system} on $\left(A,a\right)$ if the total space $\sum_{x : A}B(x)$ is contractible.
\end{definition}

\begin{theorem}[{\cite[Theorem 11.2.2]{FTID}}] \label{idsys}
The following are logically equivalent.
\begin{itemize}
\item The family $B$ is an identity system on $\left(A,a\right)$.
\item The family $f : \prod_{x :A}\left(a = x\right) \to B(x)$ defined by $f(a, \refl_a) \coloneqq b$ is a family of equivalences.
\item For each family of types $P : \prod_{a :A}B(x) \to \U$, the following function has a section:
\[
h \mapsto h(a,b) \ : \ \left( \prod_{x :A}\prod_{y : B(x)}P(x,y)\right) \to P(a,b)
\] 
\end{itemize}
\end{theorem}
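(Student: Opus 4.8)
The plan is to prove the three conditions logically equivalent by establishing the cycle $(1) \Rightarrow (2) \Rightarrow (3) \Rightarrow (1)$, writing $(1)$, $(2)$, $(3)$ for the three bullets in the order listed. Throughout I write $f_x \coloneqq f(x, -) : (a = x) \to B(x)$ for the components of the canonical family, so that $f_a(\refl_a) \equiv b$ holds by definition.

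For $(1) \Rightarrow (2)$, I would invoke the standard fact (the fundamental theorem of identity types) that a fiberwise map is a family of equivalences exactly when its induced map on total spaces is an equivalence. Here $f$ induces the map $\sum_{x : A}(a = x) \to \sum_{x : A}B(x)$ sending $(x, p) \mapsto (x, f_x(p))$. Its domain is contractible, being a based path space, and its codomain is contractible by hypothesis $(1)$; since any map between contractible types is an equivalence, this induced map is one, and hence each $f_x$ is an equivalence.

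For $(2) \Rightarrow (3)$, fix $P$ and set $Q(x, p) \coloneqq P(x, f_x(p))$, noting $Q(a, \refl_a) \equiv P(a, b)$. Based path induction $\J$ furnishes a section of the evaluation map $\big(\prod_{x : A}\prod_{p : a = x}Q(x, p)\big) \to Q(a, \refl_a)$. Because each $f_x$ is an equivalence, precomposition yields an equivalence $\prod_{y : B(x)}P(x, y) \simeq \prod_{p : a = x}Q(x, p)$ for every $x$; taking the product over $x$ gives an equivalence under which the evaluation maps at $(a, b)$ and at $(a, \refl_a)$ correspond. Transporting the section of the latter across this equivalence produces the desired section for $P$.

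Finally, for $(3) \Rightarrow (1)$ I would instantiate $(3)$ at $P(x, y) \coloneqq \big((a, b) =_{\sum_{x : A}B(x)} (x, y)\big)$. The term $\refl_{(a, b)} : P(a, b)$ is carried by the section to a function $\prod_{x : A}\prod_{y : B(x)}(a, b) = (x, y)$, which is precisely a contraction of $\sum_{x : A}B(x)$ with center $(a, b)$; hence $B$ is an identity system. The only step demanding real care is $(2) \Rightarrow (3)$: one must verify that the equivalence of $\Pi$-types is compatible with evaluation so that the transported right inverse is genuinely a section (its computation rule holding propositionally), and that the definitional identification $Q(a, \refl_a) \equiv P(a, b)$ goes through cleanly. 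Every other step is either a direct appeal to contractibility or a single application of $\J$.
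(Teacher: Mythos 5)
The paper does not prove this theorem at all---it cites it as Theorem 11.2.2 of the referenced text (Rijke's fundamental theorem of identity types)---so there is no internal proof to compare against; your cycle $(1)\Rightarrow(2)\Rightarrow(3)\Rightarrow(1)$ is correct and is essentially the standard argument from the cited source (total-space contractibility plus the fiberwise/total equivalence criterion, path induction transported along the fiberwise equivalence, and instantiating the induction principle at $P(x,y)\coloneqq\bigl((a,b)=(x,y)\bigr)$). The one subtlety you flag in $(2)\Rightarrow(3)$ in fact goes through on the nose: since $f_a(\refl_a)\equiv b$, precomposition with $f$ intertwines the two evaluation maps definitionally, so the section obtained from $\J$ transports to a genuine (propositional) section for $P$.
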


\begin{theorem}[{\cite[\href{https://unimath.github.io/agda-unimath/foundation.structure-identity-principle.html\#the-structure-identity-principle-1}{The structure identity principle}]{agda-unimath}}] \label{SIP}
Let $\left(A, a\right)$ be a pointed type, $\left(B,b\right)$ a pointed type family over $A$, and $\left(C,c\right)$ an identity system on $\left(A,a\right)$. Let $D  : \prod_{x :A}B(x) \to C(x) \to \U$ and $d  : D(a,b,c)$. If $\sum_{y : B(a)}D(a,y,c)$ is contractible, then the type family
\[
\left(x,y\right) \ \mapsto \ \sum_{z : C(x)}D(x,y,z)
\]
is an identity system on $\left(\sum_{x:A}B(x), \left(a,b\right)\right)$.
\end{theorem}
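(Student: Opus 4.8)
The plan is to unfold the definition of identity system (\cref{idsysdef}) and reduce the claim to the contractibility of a single total space, then to collapse the $C$-coordinate using the hypothesis that $(C,c)$ is itself an identity system. Concretely, by \cref{idsysdef} the family $(x,y) \mapsto \sum_{z : C(x)}D(x,y,z)$, pointed at $(c,d) : \sum_{z : C(a)}D(a,b,z)$, is an identity system on $\mleft(\sum_{x:A}B(x), (a,b)\mright)$ exactly when the total space
\[
\sum_{x : A}\sum_{y : B(x)}\sum_{z : C(x)}D(x,y,z)
\]
is contractible, and the natural candidate for its center is $(a,b,c,d)$.

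First I would reassociate and commute the iterated $\Sigma$-type, using the standard equivalences for dependent sums, to rewrite the total space as
\[
\sum_{(x,z) : \sum_{x : A}C(x)}\ \sum_{y : B(x)}D(x,y,z).
\]
This reordering is the crux of the argument: it pulls the $C$-coordinate to the front so that the whole type appears as a dependent sum over $\sum_{x:A}C(x)$. Next I would invoke the hypothesis that $(C,c)$ is an identity system on $(A,a)$, which by \cref{idsysdef} says precisely that $\sum_{x:A}C(x)$ is contractible with center $(a,c)$. Since a dependent sum over a contractible base is equivalent to its fibre over the center, the displayed type is equivalent to $\sum_{y : B(a)}D(a,y,c)$. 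By assumption this fibre is contractible, and contractibility transports across the chain of equivalences, so the total space is contractible, as required. Equivalently, one may apply the induction principle of \cref{idsys} (third clause) to the identity system $C$ to perform based path induction directly, thereby reducing the goal immediately to the contractibility of $\sum_{y:B(a)}D(a,y,c)$.

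I expect no serious obstacle here: the argument is pure $\Sigma$-type bookkeeping together with one application of the identity-system hypothesis for $C$. The only point requiring care is tracking the basepoint through the equivalences---verifying that the center $(a,c)$ of $\sum_{x:A}C(x)$ together with the center of the contractible fibre $\sum_{y:B(a)}D(a,y,c)$ (which must agree with $(b,d)$ since the fibre is a proposition) reassembles into the expected center $(a,b,c,d)$ of the original total space---but this is routine.
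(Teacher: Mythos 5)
Your proposal is correct. The paper gives no inline proof of \cref{SIP}, deferring to the cited agda-unimath formalization, and your argument---rewriting the total space $\sum_{x:A}\sum_{y:B(x)}\sum_{z:C(x)}D(x,y,z)$ by interchanging the $B$- and $C$-coordinates, contracting the base $\sum_{x:A}C(x)$ via the identity-system hypothesis on $(C,c)$, and then invoking the assumed contractibility of the fibre $\sum_{y:B(a)}D(a,y,c)$---is essentially the same $\Sigma$-type interchange argument used in that formalization; note also that your final basepoint-tracking concern is moot, since by \cref{idsysdef} being an identity system asks only that the total space be contractible (any of its points serves as a center), so no compatibility of centers needs to be checked.
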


\section{Wild left adjoints and colimits} 

We briefly review the main results of \cite{2coherUF}, in particular that \emph{$2$-coherent} left adjoints between wild categories preserve colimits.

\begin{definition} \label{2coher}
Let $L : \C \to \D$ be a functor of wild categories and let $\left(\alpha, V_1, V_2\right) : L \dashv R$ be an adjunction (\cref{adjdef}). We say that $L$ is \emph{2-coherent} if for all $h_1 : \homm_{\D}(L(X), Y)$, $h_2 : \homm_{\C}(Z, X)$, and $h_3 : \homm_{\C}(W, Z)$, the following diagram commutes:
\[\begin{tikzcd}[column sep = huge]
	{\left( \alpha(h_1) \circ h_2\right) \circ h_3} && {\alpha(h_1) \circ \left(h_2 \circ h_3\right)} \\
	{\alpha(h_1 \circ L(h_2)) \circ h_3} && {\alpha(h_1 \circ L(h_2 \circ h_3))} \\
	{\alpha(\left(h_1 \circ L(h_2)\right) \circ L(h_3))} && {\alpha(h_1 \circ \left(L(h_2) \circ L(h_3) \right))}
	\arrow["{{\assoc(\alpha(h_1), h_2, h_3)}}", Rightarrow, no head, from=1-1, to=1-3]
	\arrow["{{V_2(h_2 \circ h_3, h_1)}}", Rightarrow, no head, from=1-3, to=2-3]
	\arrow["{{\ap_{{-} \circ h_3}(V_2(h_2,h_1))}}", Rightarrow, no head, from=2-1, to=1-1]
	\arrow["{{\ap_{\alpha}(\ap_{h_1 \circ {-}}(L_{\circ}(h_2,h_3)))}}", Rightarrow, no head, from=2-3, to=3-3]
	\arrow["{{V_2(h_3,h_1 \circ L(h_2))}}", Rightarrow, no head, from=3-1, to=2-1]
	\arrow["{{\ap_{\alpha}(\assoc(h_1, L(h_2), L(h_3)))}}"', Rightarrow, no head, from=3-1, to=3-3]
\end{tikzcd}\]
\end{definition}

Let $\C$ be a wild category and $\Gamma$ be a graph. Let $F$ be a $\Gamma$-shaped diagram in $\C$.

\begin{definition} \label{colimwildc}
A cocone $\left(C, r, K\right)$ under $F$ is \emph{colimiting} if for all $X : \obb(\C)$, the following function is an equivalence:
\begin{align*}
& \postcomp(C,r, K,X) \ : \ \homm_{\C}(C, X) \to \limm_{i : \Gamma^{\op}}(\homm_{\C}(F_i, X))
\\ & \postcomp(C,r,K,X,f) \ \coloneqq \ \left(\lambda{i}.f \circ r_i, \lambda{j}\lambda{i}\lambda{g}.\assoc(f, r_j, F_{i,j,g}) \cdot \ap_{f \circ {-}}(K_{i,j,g})  \right)
\end{align*} 
\end{definition}

Let $\D$ be a wild category, Let $L : \C \to \D$ and $R : \D \to \C$ be wild functors. Suppose that $\left(\alpha, V_1, V_2\right) : L\dashv R$ and that $\left(C,r,K\right)$ is a colimiting cocone under $F$. We have an induced cocone
\[\begin{tikzcd}[row sep = large]
	{L_0(F_i)} && {L_0(F_j)} \\
	& {L_0(C)}
	\arrow[""{name=0, anchor=center, inner sep=0}, "{L_1(F_{i,j,g})}", from=1-1, to=1-3]
	\arrow["{L_1(r_i)}"', from=1-1, to=2-2]
	\arrow["{L_1(r_j)}", from=1-3, to=2-2]
	\arrow["{L(K_{i,j,g})}"{description}, draw=none, from=0, to=2-2]
\end{tikzcd}\] under $L(F)$. Here, $L(K_{i,j,g}) \coloneqq L_{\circ}(r_j, F_{i,j,g})^{-1} \cdot \ap_{L_1}(K_{i,j,g})$. 

\begin{theorem}\label{LAPC}
If $L$ is $2$-coherent, then the cocone $\left(L_0(C), L_1(r), L(K)\right)$ under $L(F)$ is colimiting. 
\end{theorem}

\begin{theorem} 
The suspension $\Sigma : \U^{\ast} \to \U^{\ast}$ is a $2$-coherent left adjoint to the loop space functor.
\end{theorem}

\begin{corollary}\label{suspcol}
Suspension preserves colimits.
\end{corollary}

\bibliographystyle{plainurl}
\bibliography{colimits-papers}

\end{document}